\definecolor{ForestGreen}{rgb}{0.1333,0.5451,0.1333}
\definecolor{DarkRed}{rgb}{0.8,0,0}
\definecolor{Red}{rgb}{1,0,0}
\newtheorem{theorem}{Theorem}[section]
\newtheorem{corollary}[theorem]{Corollary}
\newtheorem{lemma}[theorem]{Lemma}
\newtheorem{observation}[theorem]{Observation}
\newtheorem{fact}[theorem]{Fact}
\newtheorem{definition}[theorem]{Definition}
\newtheorem{remark}[theorem]{Remark}
\newtheorem*{theorem*}{Theorem}
\newtheorem*{corollary*}{Corollary}
\newtheorem*{conjecture*}{Conjecture}
\newtheorem*{lemma*}{Lemma}
\newtheorem*{thm*}{Theorem}
\newtheorem*{prop*}{Proposition}
\newtheorem*{obs*}{Observation}
\newtheorem*{definition*}{Definition}
\newtheorem*{remark*}{Remark}
\newtheorem*{rec*}{Recommendation}
\newenvironment{fminipage}%
  {\begin{Sbox}\begin{minipage}}%
  {\end{minipage}\end{Sbox}\fbox{\TheSbox}}
\def\ceil#1{\left\lceil #1 \right\rceil}
\def\norm#1{\left\| #1 \right\|}
\def\aa{\pmb{\mathit{a}}}
\newcommand\bb{\boldsymbol{\mathit{b}}}
\newcommand\dd{\boldsymbol{\mathit{d}}}
\newcommand\ee{\boldsymbol{\mathit{e}}}
\newcommand\vv{\boldsymbol{\mathit{v}}}
\newcommand\yy{\boldsymbol{\mathit{y}}}
\newcommand\xx{\boldsymbol{\mathit{x}}}
\newcommand\veczero{\boldsymbol{0}}
\newcommand\vecone{\boldsymbol{1}}
\renewcommand\AA{\boldsymbol{\mathit{A}}}
\newcommand\BB{\boldsymbol{\mathit{B}}}
\newcommand\CC{\boldsymbol{\mathit{C}}}
\newcommand\DD{\boldsymbol{\mathit{D}}}
\newcommand\EE{\boldsymbol{\mathit{E}}}
\newcommand\HH{\boldsymbol{{H}}}
\newcommand\II{\boldsymbol{\mathit{I}}}
\newcommand\NN{\boldsymbol{\mathit{N}}}
\newcommand\MM{\boldsymbol{\mathit{M}}}
\newcommand\LL{\boldsymbol{\mathit{L}}}
\newcommand\RR{\boldsymbol{\mathit{R}}}
\newcommand\UU{\boldsymbol{\mathit{U}}}
\newcommand\VV{\boldsymbol{\mathit{V}}}
\newcommand\AAtil{\boldsymbol{\mathit{\tilde{A}}}}
\newcommand\AAn{\boldsymbol{\mathcal{A}}}
\newcommand\ZZ{\boldsymbol{\mathit{Z}}}
\newcommand\Otil{\widetilde{O}}
\newcommand\R{\mathbb{R}}
\newcommand{\schurto}[2]{\ensuremath{\textsc{Sc}\!\left[#1\right]_{#2}}}
\DeclareMathOperator{\nnz}{nnz}
\DeclareMathOperator*{\diag}{diag}
\newcommand{\vol}{\operatorname{vol}}
\DeclareMathOperator*{\im}{im}
\newcommand{\dir}{\protect\underrightarrow}
\newcommand{\Exp}{\operatorname{Exp}}
\newcommand\chain{\textsc{ChainConstruction}}
\newcommand\sparsify{\textsc{GlobalSparsification}}
\newcommand\sparsesquare{\textsc{SparseSquare}}
\newcommand\sparseproduct{\textsc{SparseProduct}}
\newcommand\sparsebipartite{\textsc{SparseBipartite}}
\newcommand\patchbipartite{\textsc{BipartitePatching}}
\newcommand\expdecomp{\textsc{ExpDecomp}}
\newcommand\sparsedir{\textsc{SparsifyDirected}}
\newcommand\specspardeg{\textsc{SpectralSparsifyDeg}}
\newcommand\squarechain{\textsc{SquareChain}}
\newcommand\solve{\textsc{SolveRecursive}}
\newcommand\solveEulerian{\textsc{SolveEulerian}}
\newcommand\peel{\textsc{Peel}}
\newcommand\richardson{\textsc{PreconRichardson}}
\title{Derandomizing Directed Random Walks in Almost-Linear Time}
\newcommand*\samethanks[1][\value{footnote}]{\footnotemark[#1]}
\author{Rasmus Kyng\thanks{The research leading to these results has received funding from grant no. 200021 204787 of the Swiss National Science Foundation.}, \\ ETH Zurich \\ kyng@inf.ethz.ch \\  \and Simon Meierhans\samethanks, \\ ETH Zurich \\ mesimon@inf.ethz.ch \and Maximilian Probst Gutenberg\samethanks, \\ ETH Zurich \\ maximilian.probst@inf.ethz.ch}
\begin{document}

\maketitle
\pagenumbering{gobble}

\begin{abstract}

    In this article, we present the first $\emph{deterministic}$ directed Laplacian $\LL$ systems solver that runs in  time almost-linear in the number of non-zero entries of $\LL$. Previous reductions imply the first deterministic almost-linear time algorithms for computing various fundamental quantities on directed graphs including stationary distributions, personalized PageRank, hitting times and escape probabilities. 
   
    We obtain these results by introducing \emph{partial symmetrization}, 
   a new technique that 
   makes the Laplacian of an Eulerian directed graph ``less directed''
   in a useful sense, which may be of independent interest.
   The usefulness of this technique comes from two key observations:
   Firstly, the partially symmetrized Laplacian preconditions the original Eulerian Laplacian well in Richardson iteration, enabling us to construct a solver for the original matrix from a solver for the partially symmetrized one.
   Secondly, the undirected structure in the partially symmetrized Laplacian makes it possible to sparsify the matrix \emph{very crudely}, i.e. with large 
   spectral error, and still show that Richardson iterations convergence
   when using the sparsified matrix as a preconditioner. 
   This allows us to develop deterministic sparsification tools for the partially symmetrized Laplacian.
   
   Together with previous reductions from directed Laplacians to Eulerian Laplacians, our technique results in the first deterministic almost-linear time algorithm for solving linear equations in directed Laplacians. To emphasize the generality of our new technique, we show that two prominent existing (randomized) frameworks for solving linear equations in Eulerian Laplacians can be derandomized in this way: the squaring-based framework of \cite{cohen2016almostlineartimeconference} and the sparsified Cholesky-based framework of \cite{peng2021sparsifiedconference}.

\end{abstract}

\clearpage
\pagenumbering{arabic}

\section{Introduction}

The development of spectral graph sparsification and nearly-linear time solvers for Laplacian linear equations initiated by the seminal article of Spielman and Teng \cite{storiginal}, that has since been split into three parts \cite{st_part_1, st_part_2, st_part_3}, is foundational for algorithmic spectral graph theory and one of the success stories in the design of graph algorithms. Until recently, spectral techniques were mainly used for analysing undirected graphs. While many of the techniques developed in algorithmic spectral graph theory heavily use that Laplacians of undirected graphs are symmetric, positive semi-definite matrices, Cohen-Kelner-Peebles-Peng-Sidford-Vladu \cite{cohen16fasterconference} first demonstrated that the structure of directed Laplacians can be used to accelerate linear equation solvers. Together with Rao they later introduced the first notion of spectral approximation for directed Laplacians \cite{cohen2016almostlineartimeconference}, which enabled them to develop sparsification tools in the directed setting. These tools were used to build the first almost-linear time directed Laplacian solver, operating in the framework of an undirected Laplacian solver by Peng and Spielman \cite{ps14}. The runtime was improved to nearly-linear time by Cohen-Kelner-Kyng-Peebles-Peng-Rao-Sidford  \cite{cohen2018solving} and further improved by Peng and Song \cite{peng2021sparsifiedconference} very recently. Both operate within sparsified-Cholesky frameworks for solving undirected Laplacian linear equations: \cite{cohen2018solving} uses the framework of Kyng and Sachdeva \cite{ks16} and \cite{peng2021sparsifiedconference} uses the framework of Kyng-Lee-Peng-Sachdeva-Spielman \cite{kyng2015sparsifiedconference}. 

All previous fast algorithms for solving directed Laplacian linear equations rely on sampling for globally sparsifying directed graphs while retaining a spectral $(1 \pm \epsilon)$-approximation guarantee for some  $\epsilon < 1$, according to the notion of approximation by $\cite{cohen2016almostlineartimeconference}$\footnote{When we refer to $(1 \pm \epsilon)$-approximations for Eulerian Laplacians in the introduction and overview we mean $\epsilon$-approximations as in \Cref{def:mat_approx}. We use this naming for convenience when comparing to undirected approximations.}. This suggests two main approaches to derandomizing these solvers: (1) developing a fast deterministic $(1 \pm \epsilon)$-approximate spectral sparsification routine or (2) introducing cruder forms of approximation and adapting the algorithms to cope with weaker guarantees. However, even for undirected graphs, no deterministic almost-linear time algorithms that achieve $(1 \pm \epsilon)$-approximate spectral sparsification are known, and this is a major obstacle to approach (1). 
We circumvent this issue by instead taking the route (2): we develop a new way of measuring approximation via the suitability as a preconditoner in Richardson, which allows cruder guarantees. 
We achieve this by introducing a ``robustification'' step before sparsification, which we call $\beta$-partial symmetrization.
Partial symmetrization counteracts the fragility of directed approximations and is at the core of our new crude deterministic sparsification procedure for Eulerian Laplacians. It lets us derandomize the frameworks of \cite{cohen2016almostlineartimeconference} and \cite{peng2021sparsifiedconference} with an almost-linear runtime. 

\subsection{Prior Work}

\paragraph{Undirected Laplacian Solvers. } The first nearly-linear time Laplacian solver \cite{st_part_1, st_part_2, st_part_3} sparked the development of a field producing a whole host of distinct algorithms for solving Laplacian linear equations \cite{kmp14_1, kmp11, kos13, CohenKPPR14, ps14, kyng2015sparsifiedconference, ks16, js21}. Of these, our algorithm is most similar to \cite{ps14} and its directed counterpart \cite{cohen2016almostlineartimeconference}, which works by repeatedly squaring the adjacency matrix. In \Cref{sec:spars_cholesky} we consider a Cholesky-factorisation based framework, which is motivated by \cite{kyng2015sparsifiedconference} in the undirected setting and was recently translated to the directed setting by \cite{peng2021sparsifiedconference}. 

\paragraph{Spectral Sparsification. } A crucial building block for spectral graph algorithms, including linear equation solvers, is the ability to sparsify undirected graphs while preserving their spectral properties. This is a stronger notion of sparsification than cut-sparsifiers, which only preserve the approximate size of cuts. Such spectral sparsifiers were first introduced in \cite{st_part_2}, and later strengthened and simplified by Spielman and Srivastava \cite{ss11}. It is known that for every $n$-vertex graph there exists a spectral sparsifier with $O(n)$-edges, and such sparsifiers can be constructed deterministically, as shown by Batson, Spielman, and  Srivastava \cite{bss12}. However, no deterministic almost-linear time algorithms are known for $(1 \pm \epsilon)$-spectral sparsification. Recently, Chuzhoy-Gao-Li-Nanongkai-Peng-Saranurak \cite{chuzhoy20detconf} presented an almost-linear time algorithm achieving $n^{o(1)}$-spectral sparsifiers for undirected graphs via deterministic expander decompositions. Our algorithm relies on both their sparsification and their expander decomposition results. 

\paragraph{Directed Laplacian Solvers. } Before \cite{cohen16fasterconference, cohen2016almostlineartimeconference} it was unclear that directed Laplacians, which are a natural generalization of undirected Laplacians to directed graphs, also exhibit properties that allow for  useful notions of sparsification and/or accelerated solving of linear equations. The reduction to strongly connected Eulerian Laplacians recovered some of the spectral properties of undirected Laplacians, and allowed for the development of a useful notion of sparsification. Current fast algorithms for solving Eulerian Laplacian linear equations either follow a squaring \cite{cohen2016almostlineartimeconference} or a sparsified-Cholesky approach \cite{cohen2018solving, peng2021sparsifiedconference}. Both rely on spectral sparsification techniques developed in \cite{cohen2016almostlineartimeconference}. 

\paragraph{Low Space Algorithms. } Recently, the first deterministic $\tilde{O}(\log N)$-space solver for Eulerian Laplacians was introduced by Ahmadinejad-Kelner-Murtagh-Peebles-Sidford-Vadhan \cite{akm20}. They show that the Rozeman and Vadhan \cite{rv05} deterministic squaring conserves approximation under squaring for a new, stronger measure of approximation. The directed to Eulerian reduction remains a major obstacle to solving directed Laplacian linear equations in small space. 

\paragraph{Applications of Directed Laplacian Solvers. }

There are numerous applications of directed Laplacian solvers given in Section 7 of \cite{cohen16faster}, most of which are deterministic reductions to directed Laplacian system solving. The deterministic ones include 
\begin{itemize}
    \item solving large classes of linear systems,
    \item computing personalised PageRank vectors,
    \item estimating the stationary distribution,
    \item and simulating random walks. 
\end{itemize}
Since the reductions are deterministic, we obtain deterministic almost-linear time algorithms for all these problems\footnote{Note that we require an upper bound $\kappa$ on the condition number, and hence mixing time.}.

\subsection{Our Contributions}

We introduce the first notion of crude approximation for Eulerian Laplacians $\LL$. It is defined via the suitability as a preconditioner in the Richardson iteration and sparse approximations are constructed using \emph{partial symmetrization} to increase robustness. This technique allows us to trade off additional Richardson iterations for a behaviour more akin to the undirected setting. The obtained \emph{deterministic} crude global sparsification routine ultimately allows us to derandomize two prominent frameworks for solving directed Laplacian linear equations. We summarize our main result assuming polynomially bounded edge weights and condition number.

\begin{theorem}[Informal Version of \Cref{thm:eul_square_solver}]\label{thm:mainTheoremSketchEulerian}
Given an Eulerian Laplacian $\LL_{\dir{G}}$ associated with a strongly connected Eulerian Graph $\dir{G}$ with $n$ vertices and $m$ edges, a vector $\bb \in \im(\LL_{\dir{G}})$, and a parameter $\epsilon \in (0, 1)$ the algorithm $\solveEulerian(\LL_{\dir{G}}, \epsilon)$ in time $m^{1 + o(1)} \log \epsilon^{-1}$ computes a vector $\xx$ satisfying 
\begin{align*}
    \norm{\xx - \LL_{\dir{G}}^+\bb}_{\UU_{\LL_{\dir{G}}}} \leq \epsilon \norm{\LL_{\dir{G}}^+\bb}_{\UU_{\LL_{\dir{G}}}} 
\end{align*}
where $\UU_{\AA} = (\AA + \AA^T)/2$ for any square matrix $\AA$.
\end{theorem}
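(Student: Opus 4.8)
The plan is to assemble the solver from three pieces: (i) $\beta$-partial symmetrization, which turns the Eulerian Laplacian $\LL_{\dir{G}}$ into a matrix $\LL_\beta$ whose symmetric part dominates its skew-symmetric part by a factor governed by $\beta$; (ii) a \emph{deterministic} crude global sparsification routine \sparsify\ for such partially symmetrized matrices, built on the deterministic expander decomposition and $n^{o(1)}$-quality spectral sparsification of \cite{chuzhoy20detconf}; and (iii) a recursive squaring chain in the style of \cite{cohen2016almostlineartimeconference}, with levels glued together by preconditioned Richardson iteration \richardson. The output $\xx$ will be produced by running \richardson\ at the top level of this chain, each step calling the next level recursively, with the innermost (well-mixed) level solved in closed form.

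First I would isolate the one-level reduction. The core quantitative claim — which I expect to be proved as a standalone lemma before this theorem — is that if $\MM$ is the partially symmetrized, crudely sparsified matrix obtained from an Eulerian Laplacian $\LL$, then $\MM^{+}$ preconditions $\LL$ in the sense that $\LL^{\top}\UU_{\MM}^{+}\LL \preceq n^{o(1)}\,\UU_{\LL}$ together with a matching reverse inequality, so that $n^{o(1)}$ steps of \richardson\ with preconditioner $\MM^{+}$ shrink the relative $\UU_{\LL}$-error by a constant factor; repeating gives $\epsilon$-accuracy at the cost of an extra $\log\epsilon^{-1}$ factor. The point of partial symmetrization is exactly that the usual directed spectral approximation of \cite{cohen2016almostlineartimeconference} is too fragile to survive crude (constant-or-worse multiplicative) error, whereas once the symmetric part is made to dominate, the skew-symmetric ("circulation") part only needs to be preserved in a much weaker sense, since it no longer drives the Richardson dynamics; this is what lets us plug in the deterministic but lossy sparsifier of \cite{chuzhoy20detconf}.

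Next I would build the chain. Starting from $\LL_{\dir{G}}$, alternate: (a) square the random-walk matrix (replace the walk by its square, equivalently pass to the Eulerian Laplacian of the squared walk, halving the mixing scale), and (b) $\beta$-partially symmetrize and call \sparsify\ to restore a near-linear edge count. After $O(\log\kappa)$ levels the walk has mixed, so the top matrix is $\Otil(1)$-conditioned and solvable directly. Each level has $m^{1+o(1)}$ edges and is processed in $m^{1+o(1)}$ time, and each level also needs the squaring step to interact correctly with partial symmetrization — that squaring a partially symmetrized, sparsified matrix still tracks the squared walk of the original up to the tolerated error — which I expect to be the second delicate lemma. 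Composing the per-level \richardson\ reductions from the top down yields a solver for $\LL_{\dir{G}}$; the total iteration count is the product of $O(\log\kappa)$ factors of $n^{o(1)}$, i.e. still $m^{1+o(1)}$ for $\kappa$ quasi-polynomially bounded, and in particular for the polynomially bounded regime of the statement. Multiplying through by the $\log\epsilon^{-1}$ from the outer Richardson loop gives the claimed $m^{1+o(1)}\log\epsilon^{-1}$ runtime, and tracking the $\UU$-norm errors through the composition gives the stated guarantee $\norm{\xx-\LL_{\dir{G}}^{+}\bb}_{\UU_{\LL_{\dir{G}}}}\le\epsilon\norm{\LL_{\dir{G}}^{+}\bb}_{\UU_{\LL_{\dir{G}}}}$.

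The main obstacle is step (i): proving the two-sided preconditioning bound with only an $n^{o(1)}$ loss despite crude sparsification. This requires (1) a clean inequality relating $\UU_{\LL_\beta}$ to $\UU_{\LL}$ with a $\mathrm{poly}(\beta)$-type dependence, (2) showing the skew-symmetric part survives sparsification in the weak sense that suffices once the symmetric part is inflated, and (3) choosing $\beta=n^{o(1)}$ so that the inflation of iteration count per level stays subpolynomial while the robustness gained is enough to absorb the $n^{o(1)}$ spectral error of \cite{chuzhoy20detconf}; the interplay of these three knobs across $O(\log\kappa)$ levels is where the analysis is most delicate. Once (i) and the squaring-compatibility lemma are in hand, the runtime and error bookkeeping is a routine induction over the chain, and the identical scaffolding derandomizes the sparsified-Cholesky framework of \cite{peng2021sparsifiedconference} by replacing the squaring step (a) with an approximate-Cholesky elimination step whose Schur complements are handled by the same deterministic crude sparsifier.
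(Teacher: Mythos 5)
Your high-level architecture (partial symmetrization $\to$ crude deterministic sparsification on expander decompositions $\to$ squaring chain glued by preconditioned Richardson, with a well-conditioned base case) is the paper's architecture. But the specific interleaving you propose --- ``alternate: (a) square, (b) partially symmetrize and call \sparsify'' --- does not work, for two independent reasons, and fixing it requires an ingredient your sketch omits.

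First, the progress accounting fails. One squaring improves the smallest normalized eigenvalue by only a constant factor (cf.\ \Cref{lem:chain_prop49}), whereas one invocation of the crude global sparsifier costs a factor of $\Exp(O((\log n)^{\gamma}))$ in condition number (\Cref{def:quadruple}, item 2, and \Cref{lem:quadruple_loss}); with one sparsification per squaring the loss dominates the gain and the chain never mixes. Second, even granting mixing after $O(\log\kappa)$ levels, your claim that ``the product of $O(\log\kappa)$ factors of $n^{o(1)}$ is still $m^{1+o(1)}$'' is false as arithmetic: each crude sparsification is only a $\bigl(1-\Exp(-O((\log n)^{\gamma}))\bigr)$-approximate pseudoinverse, so correcting it costs $\Exp(\Theta((\log n)^{\gamma}))$ Richardson iterations (a recursion branching factor), and $\Exp(\Theta((\log n)^{\gamma}))^{\Theta(\log n)} = \Exp(\Theta((\log n)^{1+\gamma}))$ is super-polynomial for any $\gamma>0$ --- and $\gamma>0$ is forced, since the deterministic sparsifier of \cite{chuzhoy2020deterministic} only achieves $(\log m)^{\omega(1)}$ quality in almost-linear time. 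The paper's resolution is a two-tier design you did not separate out: a \emph{high-accuracy} deterministic sparsified-squaring primitive (\Cref{lem:squaring}, built from degree-preserving bipartite product sparsifiers, entirely distinct from the crude sparsifier) is applied $d=\Theta((\log n)^{1/3})$ times in a row with \emph{no} branching, only a density growth of $\Exp(O(d^2))=n^{o(1)}$; the crude global sparsification is invoked only once per block of $d$ squarings, so there are only $k=\Theta((\log n)^{2/3})$ branching points and the total branching $\Exp(O((\log n)^{2/3+1/10}))$ stays $n^{o(1)}$, while the per-block eigenvalue gain $1.125^{d}$ now dominates the per-sparsification loss. Separately, your proposed preconditioning certificate (a two-sided Loewner bound on $\LL^{\top}\UU_{\MM}^{+}\LL$ against $\UU_{\LL}$) is not by itself sufficient for asymmetric $\MM$; the paper works with the operator-norm condition $\norm{\II_{\im(\MM)}-\ZZ\MM}_{\UU\to\UU}\le 1-\Exp(-O((\log n)^{\gamma}))$ and obtains it only because the error $\LL_{\dir{G}_1}-\LL_{\dir{G}_0}$ is purely symmetric (\Cref{lem:partial_sym_approx}); \Cref{sec:cycle} shows that for general Eulerian pairs no step size makes Richardson converge, so this structural fact cannot be skipped.
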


Previous reductions allow us to reduce general directed Laplacian solvers to $\log^{O(1)}(n\kappa^{-1}\epsilon^{-1})$ Eulerian solves with polynomially bounded condition number and edge weights. We state the main theorem as a corollary. 

\begin{theorem}\label{thm:mainTheoremSketch}
Given a directed Laplacian $\LL_{\dir{G}} = \DD_{\dir{G}} - \AA_{\dir{G}}^T$ associated with a directed Graph $\dir{G}$ with $n$ vertices and $m$ edges, a vector $\bb \in \im(\LL_{\dir{G}})$, a parameter $\epsilon \in (0, 1)$ and an upper bound
$\kappa \geq \max(\kappa(\DD_{\dir{G}}), \kappa(\LL_{\dir{G}}))$ there is an algorithm $\textsc{SolveFull}(\LL_{\dir{G}}, \epsilon)$ that in time $m^{1 + o(1)} \log^{O(1)}(\kappa \epsilon^{-1})$ computes a vector $\xx$ satisfying 
\begin{align*}
    \norm{\xx - \LL_{\dir{G}}^+\bb}_2 \leq \epsilon \norm{\LL_{\dir{G}}^+\bb}_2
\end{align*}
where $\kappa(\AA) = \norm{\AA}_2 \cdot \norm{\AA^+}_2$ denotes the condition number of a given matrix $\AA$. 
\end{theorem}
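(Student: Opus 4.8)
The plan is to derive \Cref{thm:mainTheoremSketch} from \Cref{thm:mainTheoremSketchEulerian} by invoking the known black-box reductions from solving general directed Laplacian systems to solving strongly connected Eulerian Laplacian systems, as developed in \cite{cohen16faster, cohen2018solving}. These reductions are purely deterministic once the Eulerian solver they call is deterministic, so the only new ingredient needed is our deterministic almost-linear time Eulerian solver $\solveEulerian$ from \Cref{thm:mainTheoremSketchEulerian}; everything else is plumbing.

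Concretely, I would recall the three-stage structure of the reduction and instantiate it. Stage (a): given $\LL_{\dir{G}} = \DD_{\dir{G}} - \AA_{\dir{G}}^T$ and the bound $\kappa$, compute an approximate left null vector (stationary distribution) $\ss$ with $\LL_{\dir{G}}^T \ss \approx \veczero$; the chicken-and-egg between needing a solver to find $\ss$ and needing $\ss$ to build the Eulerian instance is resolved by the standard bootstrapping / iterative-refinement scheme of \cite{cohen2018solving}, which only invokes the Eulerian solver $\log^{O(1)}(\kappa\epsilon^{-1})$ times. Stage (b): form the rescaled matrix $\LL_{\dir{G}}\Diag(\ss)$, which is approximately Eulerian, and patch it to an exactly Eulerian Laplacian $\LL'$ on the same support (up to low-order additions) while controlling the induced perturbation; each such $\LL'$ has $\mathrm{poly}(\kappa)$-bounded edge weights and $\kappa(\UU_{\LL'}) = \mathrm{poly}(\kappa)$. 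Stage (c): solve systems in $\LL'$ with $\solveEulerian(\LL', \cdot)$ in time $m^{1+o(1)}\log^{O(1)}(\kappa\epsilon^{-1})$, wrap the patched-matrix error in a constant number of preconditioned Richardson steps against $\LL_{\dir{G}}\Diag(\ss)$, undo the diagonal rescaling by $\Diag(\ss)^{-1}$, and convert the $\UU_{\LL'}$-norm guarantee of \Cref{thm:mainTheoremSketchEulerian} into an $\ell_2$-norm guarantee for $\LL_{\dir{G}}$. Since there are $\log^{O(1)}(\kappa\epsilon^{-1})$ Eulerian solves, each costing $m^{1+o(1)}\log^{O(1)}(\kappa\epsilon^{-1})$, the total running time is $m^{1+o(1)}\log^{O(1)}(\kappa\epsilon^{-1})$ as claimed.

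The only points requiring care are error and norm bookkeeping, all of which are already handled in \cite{cohen16faster, cohen2018solving} independently of how the Eulerian solver is implemented: (i) because $\ss$ is only approximate, $\LL'$ differs from the ideal rescaling, and this gap must be absorbed into extra Richardson iterations, costing a further $\log^{O(1)}(\kappa\epsilon^{-1})$ accuracy overhead; (ii) converting between the $\UU_{\LL'}$-norm in which \Cref{thm:mainTheoremSketchEulerian} is stated and the $\ell_2$-norm demanded here loses factors of $\kappa(\UU_{\LL'})$ and $\kappa(\Diag(\ss))$, each $\mathrm{poly}(\kappa)$, again only affecting the logarithmic overhead; and (iii) one must check that $\bb \in \im(\LL_{\dir{G}})$ is mapped to a vector in $\im(\LL')$ so that the pseudoinverse solves are meaningful, which follows since the rescaling and patching preserve the relevant kernels/images for strongly connected graphs. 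Thus the genuinely hard part — the deterministic almost-linear time Eulerian solver — is exactly \Cref{thm:mainTheoremSketchEulerian}, which we assume; the remaining work is to state the reduction of \cite{cohen2018solving} in a form matching the $\solveEulerian$ interface and verify the logarithmic overheads compose.
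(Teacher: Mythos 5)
Your proposal matches the paper's treatment: the paper proves this theorem purely by citing the deterministic reductions from general directed Laplacians to $\log^{O(1)}(n\kappa\epsilon^{-1})$ Eulerian solves with polynomially bounded condition number and edge weights (Appendices C, D and F of \cite{cohen2016almostlineartime} and Sections 5, 7.1 and 7.3 of \cite{cohen16faster}), and then invokes \Cref{thm:mainTheoremSketchEulerian} as the black box. Your more detailed three-stage unpacking of that reduction (approximate stationary distribution, Eulerian rescaling/patching, norm conversion) is consistent with what those references establish, so the argument is the same in substance.
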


A key application of these results is the deterministic simulation of random walks in almost-linear time as presented in Sections 7.4 and 7.5 of \cite{cohen16faster}. 

\paragraph{Crude Global Sparsification. } Our global sparsification routine is based on first increasing the robustness via $\beta$-partial symmetrization. Then we bucket by edge weight and layer partitions of undirected and unweighted graphs into expanders. First, the directed part in such an expander can be sparsified in a crude way that conserves the degrees. Then, the undirected part can be replaced with a sparse expander. Since the undirected part is scaled with a factor $\beta$ by $\beta$-partial symmetrization it dominates the directed part which allows us to bound the error of the first step. 

\paragraph{Open Questions. } Our techniques, in particular $\beta$-partial symmetrization, might be of interest for deterministically simulating random walks in small space. Further, the development of a composable notion of crude approximation for Eulerian Laplacians is an interesting task. However, there are Eulerian Laplcians that do not precondtion each other regardless of stepsize, which is a major obstruction (See \Cref{sec:cycle}).

\section{Overview}

Existing randomized algorithms for solving linear equations in directed Eulerian Laplacians can be classified as belonging to one of two general frameworks: squaring-solvers and sparsified-cholesky-solvers. Both heavily rely on $(1 \pm \epsilon)$-approximate graph sparsification techniques for two separate tasks: 1) \emph{globally sparsifying} a directed graph $\dir{G}$ with $m$-edges in time close to $m$ and 2) \emph{sparsifying the square graph} $\dir{G}^2$ in time close to $m$. The square graph $\dir{G}^2$ is the graph with adjacency matrix $\AA_{\dir{G}} \DD^{-1}_{\dir{G}}\AA_{\dir{G}}$ where $\AA_{\dir{G}}$ denotes the adjacency matrix of $\dir{G}$ and $\DD_{\dir{G}}$ denotes the diagonal matrix containing the degrees of the Eulerian graph $\dir{G}$. For some graphs the running time of 2) is sublinear in the number of edges of the sparsified graph $\dir{G}^2$ since squaring can drastically increase the density. We provide derandomized algorithms for both tasks, allowing us to derandomize two prominent frameworks. We focus on the squaring-solver \cite{cohen2016almostlineartimeconference}, and sketch the sparsified-Cholesky solver \cite{peng2021sparsifiedconference} in \Cref{sec:spars_cholesky}\footnote{\cite{peng2021sparsifiedconference} also relies on randomness for finding a linear sized almost independent subset. We provide a simple procedure to derandomize this step in \Cref{sec:rho_rcdd}.}. 

While we match the $(1 \pm \epsilon)$-approximation and roughly the runtime of randomized routines for 2), our algorithm for globally sparsifying directed graphs only achieves a crude $n^{o(1)}$-approximation guarantee. This is not surprising, since deterministic $(1 \pm \epsilon)$-approximate sparsification in almost linear time is an open problem even for undirected graphs. However, the current notions of spectral approximation for directed graphs due to \cite{cohen2016almostlineartimeconference} break down for approximation factor $\epsilon$ larger than $1$.
To address this problem, 
we directly define approximation via the suitability as a preconditioner in the Richardson iteration.
In particular, suppose we want to solve a linear equation in $\MM$, by preconditioning with a matrix $\NN$.
Roughly speaking, if, in an appropriate norm $\norm{\cdot}$, we have $\norm{\II_{\im(\MM)} - \NN^+ \MM} \leq 1-\alpha$, then we can solve the linear equation in $\MM$ to high accuracy using $\Otil(1/\alpha)$ preconditioned Richardson iterations where we apply $\MM$ and $\NN^+$ once per iteration.
A priori, it may be surprising that our approach relies on distinguishing the case ${1-\norm{\II_{\im(\MM)} - \NN^+ \MM} \geq n^{o(-1)}}$ from the case ${1-\norm{\II_{\im(\MM)} - \NN^+  \MM} \approx 0}$, however, our partial symmetrization technique and a careful choice of norm makes this possible.

Our notion of high-error approximation is not directly composable, but it does algorithmically compose: If $\AA$ is precondioned by $\BB$ and $\BB$ is preconditioned by $\CC$, then a solver for $\AA$ can be constructed with access to $\BB$ and $\CC^+$ using two layers of preconditioned Richardson.
The development of a directly composable notion of spectral approximation in directed graphs for approximation factors larger than $1$ remains an interesting open problem.

\subsection{Global Sparsification}

Given an Eulerian graph $\dir{G} = \dir{G}_0$ our sparsification routine does not directly produce a sparse graph $\tilde{\dir{G}} = \dir{G}_3$, but does so via a detour involving two other graphs $\dir{G}_1$ and $\dir{G}_2$:
\begin{itemize}
    \item $\dir{G}_0$ is the initial graph $\dir{G}$.
    \item $\dir{G}_1$ is obtained by $\beta$-partially-symmetrizing $\dir{G}_0$. It has a directed and a undirected part.
    \item $\dir{G}_2$ is obtained from $\dir{G}_1$ by sparsifying its directed part without touching the undirected part. 
    \item $\dir{G}_3$ is obtained by also sparsifying the undirected part. 
\end{itemize} 
We call such a bundle of four directed graphs a quadruple. Our construction ensures that $\LL^+_{\dir{G}_{i + 1}}$ is a suitable preconditioner for $\LL_{\dir{G}_i}$ with respect to the norm $\norm{\cdot}_{\UU_{\LL_{\dir{G}_{i + 1}}} \rightarrow \UU_{\LL_{\dir{G}_{i + 1}}}}$ for $i = 0, 1, 2$, meaning that $\LL_{\dir{G}_i}^+$ can be (approximately) applied by applying $\LL_{\dir{G}_{i+1}}^+$ for $N =  Exp(O((\log n)^{1/10}))$ times via the preconditioned Richardson iteration\footnote{Because not all approximations refer to the same norm, this does unfortunately not ensure that $\LL^+_{\dir{G}_3}$ is a suitable preconditioner for $\LL_{\dir{G}_0}$.}. Given an oracle for applying $\LL_{\dir{G}_3}^+$ we can use it $N$ times to  apply $\LL_{\dir{G}_2}^+$, and then in turn apply $\LL_{\dir{G}_1}^+$ via $N$ applications of $\LL_{\dir{G}_2}^+$, until we can finally apply $\LL_{\dir{G}_0}^+$. This yields a procedure for applying $\LL_{\dir{G}_0}^+$ that relies on $N^3 =  Exp(O((\log n)^{1/10}))$ applications of $\LL_{\dir{G}_3}^+$. Next we describe the way the graphs $\dir{G}_1, \dir{G}_2$ and $\dir{G}_3$ are constructed in more detail. Our full global sparsification results are presented in \Cref{sec:globspars}.

\begin{figure}[h]
    \centering
    \includegraphics[width=12cm]{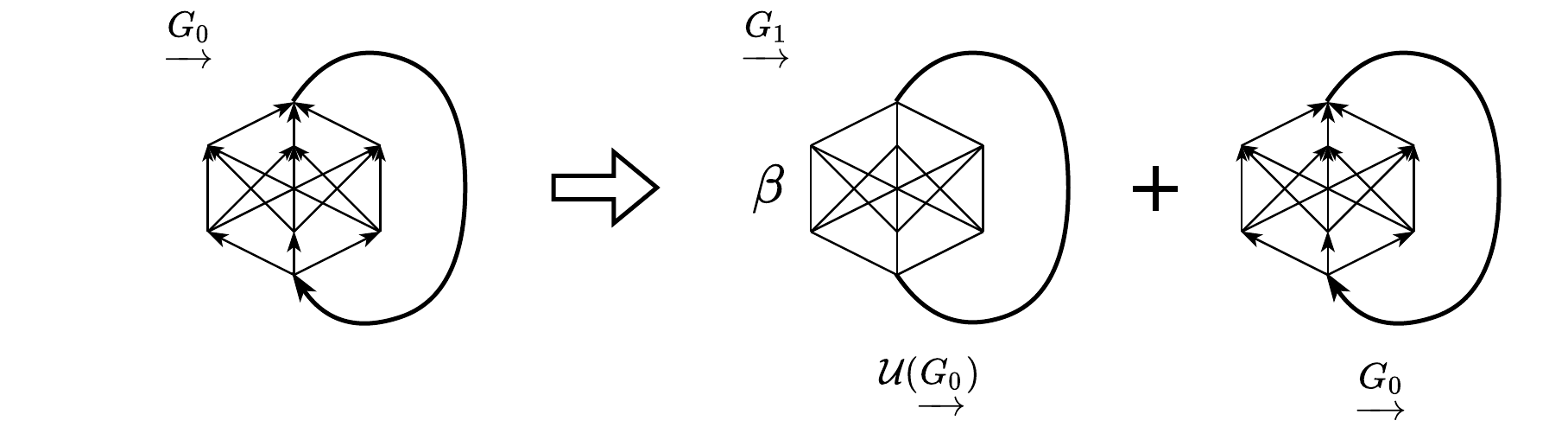}
    \caption{Obtaining $\dir{G}_1$ from $\dir{G}_0$ via $\beta$-partial symmetrization. Given the graph $\dir{G}_0$ as depicted on the left hand side, we add $\beta$ times its undirectification. This could double the amount of edges, but
    the graph $\dir{G}_1$ has much less directed structure, which we will use to sparsify it in the next steps. }
    \label{fig:g0g1}
\end{figure}

\paragraph{From $\dir{G}_0$ to $\dir{G}_1$: Robustness through Partial Symmetrization. }  Let $\mathcal{U}(\dir{G}_0)$ denote the undirectification of the graph $\dir{G}_0$, i.e. the graph obtained by replacing each directed edge with an undirected edge of half the weight. Then we simply obtain $\dir{G}_1 = \beta \cdot \mathcal{U}(\dir{G}_0) + \dir{G}_0$ where $\beta = n^{o(1)}$ is a sub-polynomial factor. Surprisingly, even though $\dir{G}_1$ removes a lot of directed structure, $\LL^+_{\dir{G}_0}$ can be applied via $O(\beta)$ applications of $\LL^+_{\dir{G}_1}$. Although $\dir{G}_1$ might seem similar to $\beta \cdot \mathcal{U}(\dir{G}_0)$, in fact, they behave very differently.
The key technical observation is that \begin{align*}
    \norm{\LL_{\mathcal{U}(\dir{G}_1)}^{+/2}(\LL_{\dir{G}_0} - \LL_{\dir{G}_1})\LL_{\mathcal{U}(\dir{G}_1)}^{+/2}}_2 = \norm{\LL_{(1 + \beta) \cdot \mathcal{U}(\dir{G}_0)}^{+/2}\LL_{\beta \cdot \mathcal{U}(\dir{G}_0)}\LL_{(1 + \beta) \cdot \mathcal{U}(\dir{G}_0)}^{+/2}}_2 = \frac{1}{\beta + 1}
\end{align*}
which relies on the fact that the directed graphs \emph{cancel out} additively, only leaving us with symmetric Laplacians. See \Cref{fig:g0g1} for an illustration.

\begin{figure}[h]
    \centering
    \includegraphics[width=14cm]{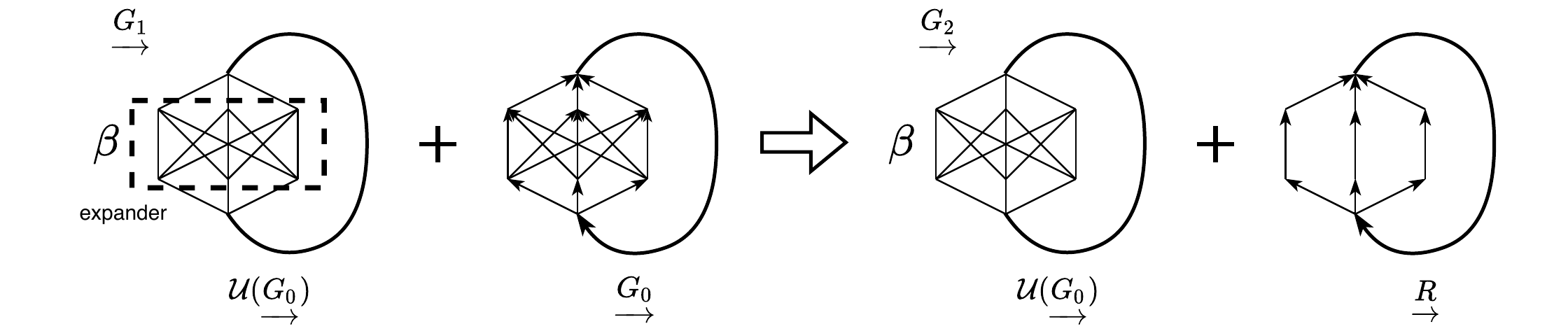}
    \caption{Obtaining $\dir{G}_2$ from $\dir{G}_1$ via patching. The dashed box contains the complete bipartite graph, which is a good expander. We use this information to drastically sparsify the directed part on the same vertex set $V'$, only ensuring that the degrees match up by increasing the weight of the edges. Since the induced subgraph on $V'$ remains a good expander when summing up, this does not alter the spectral structure of the graph by much.}
    \label{fig:g1g2}
\end{figure}

\paragraph{From $\dir{G}_1$ to $\dir{G}_2$: Patching Expander Parts. } The graph $\dir{G}_1 = \beta \cdot \mathcal{U}(\dir{G}_0) + \dir{G}_0$ is made up of two parts: an undirected graph $\beta \cdot \mathcal{U}(\dir{G}_0)$ and a directed graph $\dir{G}_0$. In this step we aim to sparsify the directed part by constructing a sparse graph $\dir{R}$ with the same in- and out-degrees as $\dir{G}_0$. Then $\dir{G}_2 = \beta \cdot \mathcal{U}(\dir{G}_0) + \dir{R}$. Our strategy for obtaining $\dir{R}$ relies on the structure of the undirected graph $\mathcal{U}(\dir{G}_0)$ and deterministic expander decompositions as presented in \cite{chuzhoy20detconf}. Our main technical observation here is that given a vertex set $V'$ so that $\mathcal{U}(\dir{G}_0)[V']$ is a good enough expander, we can replace the directed graph $\dir{G}_0[V']$ in $\dir{G}_1$ with \emph{any other} directed graph $\dir{R}'$, as long as $\dir{G}_0[V']$ and $\dir{R}'$ have exactly the same in- and out-degrees, retaining a close approximation between $\dir{G}_1 - \dir{G}_0[V'] + \dir{R}'$ and $\dir{G}_1$. Our algorithm for constructing $\dir{R}$ first removes the weighted structure from $\mathcal{U}(\dir{G}_0)$ by bucketing by edge weight, and then layers deterministic undirected and unweighted expander decompositions. These are partitions of undirected and unweighted graphs into expanding parts and a remainder, and the layering is achieved by recursing on the remainder. A single expanding part can be greedily sparsified using any sparse greedily constructed graph $\dir{R}'$ as introduced above. Summing up all of these yields $\dir{R}$. Crucially, while the error sums up over layers and buckets, the disjointness of the expander parts ensures this is not the case within a single decomposition. See \Cref{fig:g1g2} for an illustration. 

\begin{figure}[h]
    \centering
    \includegraphics[width=14cm]{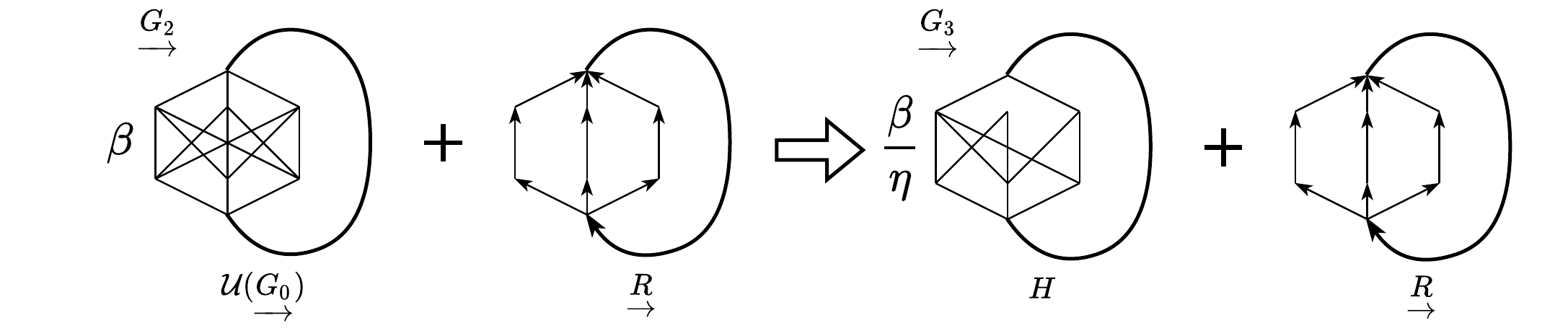}
    \caption{Obtaining $\dir{G}_3$ from $\dir{G}_2$ via undirected sparsification. The undirected part $\beta \cdot \mathcal{U}(\dir{G})$ of $\dir{G}_2$ is sparsified using previously known spectral sparsification routines for undirected graphs. Since both remaining parts are sparse, their sum is a sparse graph.}
    \label{fig:g2g3}
\end{figure}

\paragraph{From $\dir{G}_2$ to $\dir{G}_3$: Scaling an Undirected Sparsifier. } Finally we sparsify the undirected part $\beta \cdot \mathcal{U}(\dir{G}_0)$ obtaining a sparse undirected graph $H$ and let $\dir{G}_3 = H + \dir{R}$. This can be achieved via known theorems for deterministic low-accuracy undirected graph sparsification provided by \cite{chuzhoy2020deterministic}. To obtain a good preconditioner we have to scale $H$, but not $\dir{R}$, with the inverse of an appropriate rate $\eta$.
This relies crucially on our observation that when the approximation error is only on the undirected part, learning rates can be leveraged more effectively than for general Eulerian approximation.
We are left with a sum of two sparse graphs $H$ and $\dir{R}$, which is a sparse directed graph. See \Cref{fig:g2g3} for an illustration.  

\subsection{Sparsified Squaring}

Given a directed graph $\dir{G}$ with Laplacian $\LL_{\dir{G}} = \DD_{\dir{G}} - \AA^T_{\dir{G}} \in \R^{n \times n}$, the Laplacian of its square $\dir{G}^2$ is given by 
\begin{align*}
    \LL_{\dir{G}^2} = \DD_{\dir{G}} - \underbrace{\AA^T_{\dir{G}}\DD_{\dir{G}}^{-1}\AA^T_{\dir{G}}}_{\text{adjacency matrix}} = \DD_{\dir{G}} - \sum_{i = 1}^n \underbrace{\frac{1}{\DD_{\dir{G}}(i,i)} (\AA(i, :))^T \cdot (\AA(:,i))^T}_{\AA_i^T}.
\end{align*}
We consider the directed product graphs $\LL_i = \DD_i - \AA_i^T$ with adjacency matrix $\AA_i$. Consider the matrix 
\begin{align*}
    \LL = \begin{pmatrix}
    \diag(\AA_i \vecone) & \veczero \\
    \veczero & \diag(\AA_i^T \vecone) 
    \end{pmatrix}
    - 
    \begin{pmatrix}
        \veczero & \AA_i \\
        \AA_i^T & \veczero
    \end{pmatrix}
\end{align*}
which is the Laplacian of a bipartite product graph $G$. Such graphs are constant expanders, which allows for a simple trick. First, we sparsify the undirected bipartite product graph $\LL$ to high accuracy $\epsilon$, by adapting a procedure from \cite{kyng2015sparsified} to be degree preserving. We call the sparse bipartite graph we obtain $\tilde{\LL}$. Then we obtain a sparsified version $\tilde{\AA}_i^T$ of $\AA_i^T$ by simply taking the bottom left block of $\tilde{\LL}$. We show that $\tilde{\LL}_i = \DD_i - \tilde{\AA}_i^T$ is a $\epsilon/\Phi^2$-approximation of $\LL_i$, where $\Phi$ is the expansion of $G$. Using the fact that bipartite product graphs are constant expanders lets us directly translate the approximation guarantee, up to a constant overhead in runtime. We obtain an $(1 \pm \epsilon)$-approximation $\LL_{\tilde{\dir{G}}^2} = \sum_{i = 1}^n \tilde{\LL}_i$ of $\LL_{\dir{G}^2}$ with $\nnz(\LL_{\tilde{\dir{G}}^2}) \leq O(\nnz(\LL_{\dir{G}}) \epsilon^{-4})$ in almost linear time. Similar deterministic squaring techniques were developed for small space algorithms \cite{rv05, akm20}, albeit the slightly different guarantees. We believe its likely that their approach to be adapted to our setting, but for convenience we adapt a more direct approach.

\subsection{The Squaring Framework}

Consider an Eulerian Laplacian $\LL_{\dir{G}} = \DD_{\dir{G}} - \AA_{\dir{G}}^T$ where $\AA_{\dir{G}}$ is the adjacency matrix of an Eulerian graph $\dir{G}$ and $\DD_{\dir{G}}$ is the diagonal matrix containing the degrees. Then the normalised Laplacian is given by $\NN = \II - \AAn^T$ where $\AAn = \DD_{\dir{G}}^{+/2} \AA_{\dir{G}}^T \DD_{\dir{G}}^{+/2}$, having the property that $\norm{\AAn}_2 \leq 1$. The Neumann series expansion yields
\begin{align*}
    (\II - \AAn)^+\bb = \sum_{i = 0}^{\infty} \AAn^i \bb = \prod_{i = 0}^{\infty} \left(\II + \AAn^{2^k}\right) \bb
\end{align*}
for $\bb$ orthogonal to the kernel of $(\II - \AAn)^+$. Given a $1/\operatorname{poly}(n)$ lower bound on the smallest eigenvalue $\lambda_*$ of $\II - \AAn$, truncating the product expansion after $\Theta(\log n)$ factors (and hence squarings) yields a constant relative error. A very convenient equality in the same spirit is given by
\begin{align}
    \label{eq:main_id}
    (\II - \AAn)^+ \bb = (\II - \AAn^2)^+ (\II + \AAn) \bb
\end{align}
for $\bb$ orthogonal to the kernel of $(\II - \AAn)^+$ which is at the center of the squaring mechanism of \cite{cohen2016almostlineartime}, the algorithm our squaring solver resembles most. Their squaring scheme is in turn inspired by the squaring solver for symmetric Laplacians presented in \cite{ps14}. 

This leaves us with the task of solving linear equations in $\II - \AAn^2$, which is another normalised Laplacian. However, this is the normalised Laplacian of the square graph $\dir{G}^2$, and it can be shown that squaring drastically improves the condition number of the problem, such that after $k = \Theta(\log n)$ squaring steps linear equations can be solved to high accuracy quickly via a simple iterative scheme. To avoid periodic behaviours we consider the normalised adjacency matrix $\AAn_l^{(\alpha)} := \alpha \II + (1 - \alpha) \AAn_j$, which can be interpreted as adding self loops proportional to the out-degrees of $\dir{G}$. 

\paragraph{Sparsified Squaring. } Since squaring not only improves the condition number, but may also quickly increase the density of the graph, we let $\AAn_0 = \AAn$ and iteratively obtain $\AAn_{j + 1}$  by implicitly sparsifying $(\AAn_j^{(\alpha)})^2$ using our sparsified squaring technique with accuracy parameter $\epsilon$. We can conclude from \eqref{eq:main_id} that
\begin{align*}
    (\II - \AAn)^+ \bb \approx \underbrace{(1 - \alpha)^{d - 1} \left( \II - \AAn_d \right)^+ \left(\II + (\AA_{d - 1}^{(\alpha)})^2 \right) \cdots \left( \II - (\AA_0^{(\alpha)})^2 \right)}_{:= \ZZ} \bb 
\end{align*}
as we can ensure that $\bb$ is orthogonal to the known kernel of $(\II - \AAn)^+$. However, the repeated sparsification accumulates an error proportional to $\epsilon e^d$, and it is imperative that it stays below $1$ such that $\ZZ$ is an approximate pseudoinverse of $\II - \AAn$. Therefore, we have to choose $\epsilon$ proportional to $e^{-d}$. We conclude from the previous subsection that $\nnz(\AA_d) = O(\nnz(\AAn)e^{4d^2})$. Unfortunately, we cannot set $d = \Theta(\log n)$ without ending up with potentially dense matrices. Therefore, we set $d = \Theta((\log n)^{1/3})$ and have $e^{4d^2} = n^{o(1)}$.

\paragraph{Global Sparsification and Chains of Sparse Matrices. } Since $d = \Theta((\log n)^{1/3})$ squarings do not sufficiently decrease the condition number, we globally sparsify after $d$ sparsified squarings and repeat. Given $\AAn_0^{(0)} = \DD_{\dir{G}}^{+/2} \AA_{\dir{G}}\DD_{\dir{G}}^{+/2}$ for $i = 0, ..., \Theta((\log n)^{2/3})$ we iteratively construct:
\begin{itemize}
    \item Given $\AAn_0^{(i)}$, construct $\AAn_0^{(i)}, ..., \AAn_d^{(i)}$ by sparsified squaring as described in the previous paragraph. 
    \item Let $\dir{H}$ be the graph with adjacency matrix $\DD_{\dir{G}}^{1/2}\AAn_d^{(i)}\DD_{\dir{G}}^{1/2}$. Globally sparsify $\dir{H}$ obtaining $\tilde{\dir{H}}$. Then let $\AAn_0^{(i+1)} = \DD_{\dir{G}}^{+/2}\AA_{\tilde{\dir{H}}}\DD_{\dir{G}}^{+/2}$.
\end{itemize}

We discuss these collections of squaring chains linked by global sparsification in \Cref{sec:chains}. For our algorithm to run in almost-linear time, it is imperative that these chains are constructed once, and then our algorithm operates recursively on them.  

\paragraph{The Recursive Algorithm. } Our global sparsification routine allows us to solve linear equations in $\II - \AAn_0^{(i)}$ by solving $\Exp(O((\log n)^{1/10}))$ linear equations in $\II - \AAn_0^{(i+1)}$. Since linear equations in $\II - \AAn_0^{(\Theta(\log n)^{2/3})}$ are easy to solve using standard iterative procedures, this is the depth of our recursion. Therefore, the total amount of branches is $\Exp(O((\log n)^{2/3 + 1/10})) = n^{o(1)}$. Since all involved matrices contain an almost linear amount of entries, this gives an almost linear time deterministic algorithm for solving linear equations involving Eulerian Laplacians. That is, because preconditioned Richardson only does matrix vector multiplications with the matrix and the preconditioner.

\subsection{The Sparsified-Cholesky Framework}

Very recently \cite{peng2021sparsifiedconference} showed that the framework of \cite{kyng2015sparsifiedconference} directly works for Eulerian Laplacians by developing new tools for analysing the accumulation of error. Our sparsification tools can also be used to derandomize this algorithm. Unlike the squaring framework, which makes progress by improving the condition number, sparsified-Cholesky frameworks operate by eliminating rows and columns like Gaussian elimination. Such elimination steps can be directly interpreted as deleting a vertex from the graph and adding a weighted clique. 

Some algorithms eliminate one vertex at a time \cite{ks16, cohen2018solving}, but \cite{kyng2015sparsifiedconference} and \cite{peng2021sparsifiedconference} eliminate a large set of $\Omega(n)$ vertices together. To do so, a linear sized $\rho$-\emph{row-column-diagonally-dominant} ($\rho$-RCDD) subset $V'$ of the vertices is chosen for some constant $\rho$. A set of vertices is $\rho$-RCDD, if for each vertex a $\rho$-fraction of the weighted in-edges come from $V \setminus V'$ and a $\rho$-fraction of the weighted out-edges go to $V \setminus V'$\footnote{Notice that \cite{kyng2015sparsified} is concerned with the undirected case.}. Then the vertices belonging to this set can be eliminated using $O(\log \log n)$ sparsified squaring operations. While randomized algorithms using this paradigm can afford to globally sparsify after each squaring step, we have to allow for some build up of the edge count. Namely, we wait for a subpolynomial number of elimination rounds, and then globally sparsify and recurse. As previously, global sparsifications correspond to branching points when applying the inverse. We give a more detailed description in \Cref{sec:spars_cholesky}. 

\subsection{Reduction to the Eulerian Setting with bounded Condition Number}

Previous work \cite{cohen16faster, cohen2016almostlineartime} reduced solving linear equations in directed Laplacians $\LL = \DD - \AA^T$ to solving $\log^{O(1)}(n\kappa^{-1}\epsilon^{-1})$ systems involving Eulerian Laplacians with polynomially bounded condition number and edge weights to constant accuracy, where $\kappa$ is an upper bound on the maximum of $\kappa(\DD)$ and $\kappa(\LL)$ (See Appendix D and F of \cite{cohen2016almostlineartime} and Sections 5, 7.1 and 7.3 of \cite{cohen16faster}). They use that edge weights are polynomially bounded in the proof of Lemma C.3 in Appendix C  of \cite{cohen2016almostlineartime}. Different reductions to the Eulerian case were presented by Ahmadinejad-Jambulapati-Saberi-Sidford \cite{doi:10.1137/1.9781611975482.85} and in the thesis of Peebles \cite{Peebles19:thesis}.

\section{Preliminaries}

\subsection{Linear Algebra}

\paragraph{Matrices.} We denote matrices as bold upper case letters $\AA$. For a matrix $\AA \in \mathbb{R}^{n \times n}$, we let $\nnz(\AA)$ denote its number of non-zero entries and for $X \subseteq [n]$,  $Y \subseteq [n]$ we let $\AA(X, Y) = \AA_{XY}$ denote the $|X| \times |Y|$ submatrix containing the entries with  index in $X \times Y$. If $X = Y$, we write $\AA[X]$ as a shorthand for $\AA(X, X)$. When selecting submatrices, we let $l:u$ denote the set $\{l, l + 1, \dots , u\}$ for $u \geq l$ and $:$ the set of all columns/rows, e.g. $\AA(1:3, :)$ denotes the submatrix of $\AA$ consisting of the first $3$ rows of $\AA$. Further, we let $\AA^+$ denote the Moore-Penrose-Pseudoinverse of matrix $\AA$. Finally, $\II$ denotes the identity matrix. Sometimes we denote its dimension with $\II_n$.

\paragraph{Vectors. } We denote vectors as bold lower case letters $\vv$. Further, we let $\vecone$ denote the all ones vector, $\veczero$ the zero vector and $\ee_i$ denotes the $i$-th vector of the standard basis. Sometimes we indicate the dimension with $\vecone_n$ and $\veczero_n$.

\paragraph{The Loewner-order. } For symmetric matrices $\AA$ and $\BB$, we let $\AA \preceq \BB$ iff for all vectors $\xx$: $\xx^T \AA \xx \leq \xx^T \BB \xx$. We define $\prec, \succeq$ and $\succ$ analogously. If a symmetric matrix $\AA$ satisfies $\veczero \preceq \AA$ we call it \emph{PSD}. For a \emph{PSD} matrix $\AA$, we let $\AA^{1/2}$ denote the unique matrix so that $\AA^{1/2} \AA^{1/2} = \AA$.

\paragraph{Norms. } For every vector $\xx$, we let $\norm{\xx}_{\HH} := \sqrt{\xx^T \HH \xx}$ for a \emph{PSD} matrix $\HH$. We let $\norm{\MM}_{\HH \rightarrow \HH} := \max_{\xx \neq 0} \frac{\norm{\MM \xx}_{\HH}}{\norm{\xx}_{\HH}}$ for a PSD matrix $\HH$. Notice that $\norm{\MM}_{\HH \rightarrow \HH} = \norm{\HH^{1/2} \MM \HH^{+/2}}_2$. Further, we let $\norm{\MM}_1$ and $\norm{\MM}_{\infty}$  denote the maximum $\ell_1$ norm of a column and row of $\MM$ respectively.

\paragraph{Condition Number. } For a matrix $\AA \in \R^{n \times n}$ we let $\kappa(\AA) := \norm{\AA}_2 \norm{\AA^+}_2$ denote its condition number. Further, for PSD matrices $\AA$ and $\BB$ with the same kernel we let $\kappa(\AA, \BB) := \kappa(\AA^{+/2} \BB \AA^{+/2})$. 

\paragraph{Misc. } We let $\Exp(x) := e^x$. In this paper $\tilde{O}(\cdot)$ suppresses poly-logarithmic factors in $n$. For a PSD matrix $\AA$, we let $\lambda_*(\AA)$ denote its smallest nonzero eigenvalue.  

\subsection{Graphs}

\paragraph{General Notation. } We let $G = (V, E, \omega)$ denote an undirected graph where $\omega(e) = \omega(u, v)$ denotes the weight of edge $e = (u, v)$. Further, we let $\dir{G} = (V, E, \omega)$, where the edge weight $\omega(e) = \omega(u, v)$ of edge $e$ now depends on its direction. Sometimes we omit $\omega$ for unit weight (aka unweighted) graphs. When sometimes also write $V(\dir{G}), E(\dir{G})$ and $\omega_{\dir{G}}$ to avoid ambiguity. We let $\omega^{\max}$ and $\omega^{\min}$ denote the maximum and minimum edge weight respectively. 

\paragraph{Undirectification. } For a directed graph $\dir{G} = (V, E, \omega)$, we let $\mathcal{U}(\dir{G}) = (V, E', \omega)$, with $\{u, v \} \in E'$ iff $(u, v) \in E$ or $(v, u) \in E$ and $\omega{\{u, v \}} = \frac{1}{2}(\omega(u, v) + \omega(u, v))$, denote its undirectification (where we use the convention $\omega(u, v) = 0$ for $(u,v) \notin E$).

\paragraph{Induced Subgraphs. } For $G = (V, E, \omega)$ and $X \subseteq V$ we let $G[X]$ denote the induced subgraph on $X$. For a directed graph $\dir{G}$ we define $\dir{G}[X]$ analogously.

\subsection{Graph Laplacians}

\paragraph{General Notation.} For a undirected graph $G = (V, E, \omega)$ we denote its (graph) Laplacian as $\LL_G = \DD_G - \AA_G$ where $\DD_G$ is the diagonal matrices containing the degrees and $\AA_G(i,j) = \AA_G(j, i) = \omega((i, j))$. Generalizing this notation to directed graphs $\dir{G} = (V, E, \omega)$, we let $\LL_{\dir{G}} = \DD_{\dir{G}} - \AA^T_{\dir{G}}$ where the adjacency matrix is given by $\AA_{\dir{G}}(i, j) = \omega((i, j))$ and the diagonal matrix $\DD_{\dir{G}}(i, i) = \sum_j \AA_{\dir{G}}(i, j)$ contains the out degrees. Naturally $\vecone^T \LL_{\dir{G}} = \veczero$. For an undirected graph $G$, we let $\deg_G(v)$ be the (weighted) degree of vertex $v$. For directed graphs we let $\deg^-_G(v)$ and $\deg^+_G(v)$ denote in- and out-degree respectively.

\paragraph{Eulerian Laplacians. } If $\LL_{\dir{G}} \vecone = \veczero$ we call a Laplacian Eulerian. This correspond to the underlying directed graph $\dir{G}$ being Eulerian, i.e. having equal in- and out-degree for each vertex. 

\paragraph{Symmetrization. } For any matrix $\AA$ we denote $\UU_{\AA} = \UU(\AA) := \frac{1}{2}(\AA + \AA^T)$. Notice that for an Eulerian Laplacian $\LL_{\dir{G}}$ we have $\UU_{\LL_{\dir{G}}} = \LL_{\mathcal{U}(\dir{G})}$. This is a crucial fact exploited by all algorithms for directed Laplacians including ours. Symmetric Laplacians are \emph{PSD}.

\paragraph{Induced subgraphs and submatrices. } The reader should note that for a Laplacian $\LL_{\dir{G}}$ the matrices $\LL_{\dir{G}[X]}$ and $\LL_{\dir{G}}[X]$ are not equivalent unless there is no edge from $X$ to $V \setminus X$ or vice versa. Specifically, while the off-diagonal entries are equal, we have $\DD_{\dir{G}[X]}(i,i) \leq \DD_{\dir{G}}[X](i, i)$. 

\subsection{Directed Graph Approximation}

We will use the notions of approximation for directed graphs introduced in \cite{cohen2016almostlineartime}. 

\begin{definition}[Asymmetric Matrix Approximation, Definition 3.1 in \cite{cohen2016almostlineartime}]
A (possibly asymmetric) matrix $\AAtil$ is said to be an $\epsilon$-matrix-approximation of $\AA$ if
\begin{enumerate}
    \item $\UU_{\AA}$ is a symmetric PSD matrix, with $\ker(\UU_{\AA}) \subseteq \ker(\AAtil - \AA) \cap \ker((\AAtil - \AA)^T)$.
    \item $\norm{\UU_{\AA}^{+/2}(\AAtil - \AA)\UU_{\AA}^{+/2}}_2 \leq \epsilon$ 
\end{enumerate}
\label{def:mat_approx}
\end{definition}

\begin{remark}
Notice that \Cref{def:mat_approx} is not symmetric. 
\end{remark}

\subsection{Expanders}

Expander graphs, or expanders for short, play a central role in both the workings and analysis of our algorithms. Notice that while our algorithm operates on directed graphs, we only use expanders in the context of undirected graphs. Whenever we use the term expander, we mean expanders with respect to \emph{conductance}. We follow the notational conventions of \cite{chuzhoy2020deterministic}.

\begin{definition}[Conductance]
For a weighted but undirected graph $G= (V, E, \omega)$, given a set $\emptyset \subset S \subset V$ we let $\delta_G(S) := \sum_{(u,v) \in E: u \in S, v \notin S} \omega(u,v)$ and $\vol_G(S) := \sum_{v \in S} \sum_{u \in V} \omega(u, v)$. Then we define the conductance
\begin{align*}
    \Phi_G(S) = \frac{\delta_G(S)}{\min\{\vol_G(S), \vol_G(V \setminus S)\}}.
\end{align*}
\end{definition}

\begin{definition}[Expander]
We call a graph $G = (V, E, \omega)$ a $\Phi$-expander (with respect to conductance) if $\min_{S: \emptyset \subset S \subset V} \Phi_G(S) \geq \Phi$. We further let $\Phi_G = \min_{S: \emptyset \subset S \subset V} \Phi_G(S)$.
\end{definition}

Given a vector $\dd \in \mathbb{R}^n_{>0}$ we let $G(\dd)$ denote the weighted and undirected graph on $n$ vertices with $\omega(i,j) = \frac{\dd(i) \cdot \dd(j)}{\norm{\dd}_1}$. Note that $\LL_{G(\dd)} = \diag(\dd) - \frac{\dd \dd^T}{\norm{d}_1}$.

\begin{fact}[Observation 6.6 in \cite{chuzhoy2020deterministic}]
$G(\dd)$ is a $1/2$-expander.
\label{fact:exander}
\end{fact}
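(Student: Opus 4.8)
The plan is to verify the conductance lower bound directly from the definition. Fix any nonempty proper subset $S \subsetneq V$, and let $a = \sum_{i \in S} \dd(i)$ and $b = \sum_{i \notin S} \dd(i)$, so that $a + b = \norm{\dd}_1$. First I would compute the relevant quantities for the complete-graph-like structure with weights $\omega(i,j) = \dd(i)\dd(j)/\norm{\dd}_1$. The cut size is
\begin{align*}
    \delta_{G(\dd)}(S) = \sum_{i \in S} \sum_{j \notin S} \frac{\dd(i)\dd(j)}{\norm{\dd}_1} = \frac{a \cdot b}{\norm{\dd}_1}.
\end{align*}
The volume of $S$ is $\vol_{G(\dd)}(S) = \sum_{i \in S} \sum_{j \in V} \frac{\dd(i)\dd(j)}{\norm{\dd}_1} = \sum_{i \in S} \dd(i) = a$, and similarly $\vol_{G(\dd)}(V \setminus S) = b$. (Here one should note the diagonal/self-loop convention implicit in $\LL_{G(\dd)} = \diag(\dd) - \dd\dd^T/\norm{\dd}_1$; the volume as defined sums $\omega(u,v)$ over all $u \in V$, and the formula $\vol(S) = a$ holds cleanly regardless of whether one counts $i = j$ terms, since the degree of $i$ in this graph is exactly $\dd(i)$.)

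Next I would plug these into the conductance definition. Assume without loss of generality that $a \le b$, so $\min\{\vol(S), \vol(V\setminus S)\} = a$. Then
\begin{align*}
    \Phi_{G(\dd)}(S) = \frac{ab/\norm{\dd}_1}{a} = \frac{b}{\norm{\dd}_1} = \frac{b}{a+b} \ge \frac{1}{2},
\end{align*}
where the last inequality uses $a \le b$. Since $S$ was an arbitrary nonempty proper subset, this gives $\Phi_{G(\dd)} = \min_S \Phi_{G(\dd)}(S) \ge 1/2$, which is exactly the claim.

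I do not expect any real obstacle here — the statement is essentially a one-line computation once the cut weight and volume are written out. The only point requiring a little care is bookkeeping around the self-loop/diagonal convention in the definition of $G(\dd)$ and of $\vol$, to make sure the identity $\vol_{G(\dd)}(S) = \sum_{i \in S}\dd(i)$ is applied consistently; but since both numerator and denominator scale the same way, the bound $\Phi \ge 1/2$ is robust to that choice, and it is attained (e.g. for a balanced split of a uniform $\dd$), so the constant $1/2$ is tight.
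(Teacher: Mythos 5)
Your computation is correct: with $a=\sum_{i\in S}\dd(i)$ and $b=\sum_{i\notin S}\dd(i)$ one gets $\delta(S)=ab/\norm{\dd}_1$ and $\vol(S)=a$, so $\Phi(S)=b/(a+b)\geq 1/2$ when $a\leq b$, and your remark about the self-loop convention (degree of $i$ equals $\dd(i)$, self-loops not affecting the Laplacian quadratic form) is exactly the right bookkeeping. Note that the paper itself gives no proof of this fact — it is imported verbatim as Observation 6.6 of the cited work — and your argument is the standard direct verification one would find there.
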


We will frequently use that Laplacians $\LL_H = \DD_H - \AA_H^T$ of good expanders $H$ are well approximated by $\LL_{G(\diag(\DD_H))}$. To establish this, we need the following lemma. 

\begin{lemma}[Lemma 6.7 in \cite{chuzhoy2020deterministic}]
Let $G$ and $H$ be two undirected weighted n-vertex graphs on the same vertex set, such that $\DD_G = \DD_H$. Assume further that $\Phi(G), \Phi(H) \geq \Phi$ for some threshold $\Phi$. Then for any real vector $\xx \in \R^n$: $ \frac{\Phi^2}{4} \xx^T \LL_G \xx \leq \xx^T \LL_H \xx^T \leq \frac{4}{\Phi^2} \xx^T \LL_G \xx$.
\label{lem:expand_approx}
\end{lemma}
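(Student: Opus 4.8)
The plan is to reduce the two-sided bound to a single Loewner inequality by symmetry: the hypotheses on $G$ and $H$ are symmetric, so it suffices to prove $\LL_G \preceq \tfrac{4}{\Phi^2}\LL_H$, and applying this with the roles of $G$ and $H$ exchanged gives $\LL_H \preceq \tfrac{4}{\Phi^2}\LL_G$, i.e. $\tfrac{\Phi^2}{4}\xx^\trp\LL_G\xx \le \xx^\trp\LL_H\xx$ for all $\xx$; the two inequalities together are exactly the claim. Note that $\Phi>0$ forces both graphs to be connected, so $\ker(\LL_G)=\ker(\LL_H)=\Span(\vecone)$, and we may write $\DD := \DD_G = \DD_H$.

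To prove $\LL_G \preceq \tfrac{4}{\Phi^2}\LL_H$ I would pass to normalized Laplacians. Set $\NN_G := \DD^{-1/2}\LL_G\DD^{-1/2}$ and $\NN_H := \DD^{-1/2}\LL_H\DD^{-1/2}$; both are PSD with kernel exactly $\Span(\DD^{1/2}\vecone)$, and let $\PPi := \II - \tfrac{\DD^{1/2}\vecone\vecone^\trp\DD^{1/2}}{\vecone^\trp\DD\vecone}$ be the orthogonal projection onto the complement of that line. Conjugating by $\DD^{1/2}$, the target inequality is equivalent to $\NN_G \preceq \tfrac{4}{\Phi^2}\NN_H$. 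For the upper bound on $\NN_G$: for any weighted undirected graph the signless Laplacian $\DD+\AA_G$ is PSD (its quadratic form is $\sum_{(i,j)\in E}\omega_{ij}(x_i+x_j)^2 \ge 0$), so $\LL_G = \DD-\AA_G \preceq 2\DD$; hence the eigenvalues of $\NN_G$ lie in $[0,2]$ and, since $\ker(\NN_G)$ is spanned by $\DD^{1/2}\vecone$, we get $\NN_G \preceq 2\PPi$. For the lower bound on $\NN_H$: by the discrete Cheeger inequality in its conductance form, the smallest nonzero eigenvalue of the normalized Laplacian satisfies $\lambda_2(\NN_H) \ge \tfrac{\Phi_H^2}{2} \ge \tfrac{\Phi^2}{2}$, so $\NN_H \succeq \tfrac{\Phi^2}{2}\PPi$. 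Chaining these, $\NN_G \preceq 2\PPi = \tfrac{4}{\Phi^2}\cdot\tfrac{\Phi^2}{2}\PPi \preceq \tfrac{4}{\Phi^2}\NN_H$, and conjugating back by $\DD^{1/2}$ yields $\LL_G \preceq \tfrac{4}{\Phi^2}\LL_H$.

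The only nonroutine ingredient is the Cheeger lower bound $\lambda_2(\NN_H) \ge \Phi_H^2/2$; this is the classical discrete Cheeger inequality, which I would either cite directly or prove by the standard sweep-cut argument (order the vertices by the coordinates of $\DD^{-1/2}$ applied to a Fiedler vector of $\NN_H$, and lower-bound the Rayleigh quotient by the conductance of the best threshold cut). Everything else is elementary linear algebra. The one bookkeeping point to be careful about is that the kernels of $\LL_G$, $\LL_H$, $\NN_G$, $\NN_H$ all coincide as claimed — which is exactly where connectivity, and hence the hypothesis $\Phi>0$, is used — so that the projection $\PPi$ can be shared between the upper and lower bounds.
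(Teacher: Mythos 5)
Your proof is correct. The paper does not prove this lemma itself — it is imported verbatim as Lemma 6.7 of \cite{chuzhoy2020deterministic} — and your argument (reduce by symmetry to one Loewner inequality, bound the normalized Laplacians between $\frac{\Phi^2}{2}\PPi$ and $2\PPi$ via Cheeger and the PSD-ness of the signless Laplacian, then conjugate back by the shared $\DD^{1/2}$) is the standard route to exactly these constants; the only external ingredient is the discrete Cheeger inequality, which you correctly flag, and you rightly observe that $\DD_G=\DD_H$ is what lets the single projection $\PPi$ serve both bounds.
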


We state a direct corollary in a form that turns out to be convenient for our arguments. 

\begin{corollary}[Non Uniform Degree Bound]
Given a $\Phi$-expander $G$ with Laplacian $\LL_G = \DD_G - \AA^T_G$ we have
\begin{align*}
    \frac{\Phi^2}{4} (\DD_G - \frac{\dd_G \dd_G^T}{\norm{\dd_G}_1}) \preceq \LL_G \preceq \frac{4}{\Phi^2} (\DD_G - \frac{\dd_G \dd_G^T}{\norm{\dd_G}_1})
\end{align*}
for $\dd_G = \diag(\DD_G)$ and $\Phi \leq 1/2$.
\label{col:approximation_via_expander}
\end{corollary}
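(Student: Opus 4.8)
The plan is to instantiate \Cref{lem:expand_approx} with $H = G$ and $G_{\text{lem}} = G(\dd_G)$ for $\dd_G = \diag(\DD_G)$, and then rewrite the resulting scalar inequalities in Loewner-order form. First I would check that the hypotheses of the lemma are met. The degree-matrix condition $\DD_{G(\dd_G)} = \DD_G$ holds because the weighted degree of vertex $i$ in $G(\dd_G)$ is $\sum_j \frac{\dd_G(i)\dd_G(j)}{\norm{\dd_G}_1} = \dd_G(i)$, which is exactly $\DD_G(i,i)$. The conductance condition requires $\Phi(G), \Phi(G(\dd_G)) \geq \Phi$: the first holds by the hypothesis that $G$ is a $\Phi$-expander, and the second holds by \Cref{fact:exander}, which gives $\Phi(G(\dd_G)) \geq 1/2 \geq \Phi$ (using the stated assumption $\Phi \leq 1/2$).

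With the hypotheses verified, \Cref{lem:expand_approx} yields, for every $\xx \in \R^n$,
\begin{align*}
    \frac{\Phi^2}{4}\, \xx^T \LL_{G(\dd_G)} \xx \;\leq\; \xx^T \LL_G \xx \;\leq\; \frac{4}{\Phi^2}\, \xx^T \LL_{G(\dd_G)} \xx.
\end{align*}
Since this holds for all $\xx$, by the definition of the Loewner order it is equivalent to $\frac{\Phi^2}{4} \LL_{G(\dd_G)} \preceq \LL_G \preceq \frac{4}{\Phi^2} \LL_{G(\dd_G)}$. Finally I would substitute the closed form $\LL_{G(\dd_G)} = \diag(\dd_G) - \frac{\dd_G \dd_G^T}{\norm{\dd_G}_1}$ recorded just before \Cref{fact:exander}, and note $\diag(\dd_G) = \DD_G$, to obtain exactly the claimed chain of inequalities.

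There is no real obstacle here: the corollary is a direct repackaging of \Cref{lem:expand_approx}. The only points requiring any care are (i) correctly identifying which graph plays the role of ``$G$'' and which plays ``$H$'' in the lemma (here $H := G$ is the given expander and the lemma's ``$G$'' is the model graph $G(\dd_G)$, and the symmetry of the two-sided bound means the assignment does not actually matter), and (ii) observing that the hypothesis $\Phi \leq 1/2$ is precisely what is needed to reconcile the $1/2$-expansion of $G(\dd_G)$ from \Cref{fact:exander} with the quantitative bound stated in terms of $\Phi$. One should also note that $\LL_G$ here is an \emph{undirected} Laplacian (so $\LL_G = \DD_G - \AA_G^T = \DD_G - \AA_G$ is symmetric and the Loewner order is meaningful), matching the setting of \Cref{lem:expand_approx}; the notation $\LL_G = \DD_G - \AA_G^T$ in the statement is just the directed-Laplacian convention specialized to the symmetric case.
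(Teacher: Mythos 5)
Your proposal is correct and matches the paper's proof, which simply states that the corollary follows directly from \Cref{fact:exander} and \Cref{lem:expand_approx}; you have just spelled out the details (degree matching, the role of $\Phi \leq 1/2$, and the Loewner-order rewriting) that the paper leaves implicit.
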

\begin{proof}
Directly follows from \Cref{fact:exander} and \Cref{lem:expand_approx}.
\end{proof}

Next we state the definition of the expander decomposition for unweighted and undirected graphs in the notation of \cite{chuzhoy2020deterministic}. 

\begin{definition}[See Section 6 of \cite{chuzhoy2020deterministic}, Proposed by \cite{kvv2004clusterings, gr1999bipartiteness}]
A $(\epsilon, \Phi)$-expander decomposition of a undirected and unweighted graph $G = (V, E)$ is a partition $\mathcal{P} = \{V_1, ..., V_k\}$ of the vertex set $V$ such that for all $i \in [k]$ the conductance of $G[V_i]$ is at least $\Phi$ and $\sum_{i = 1}^k \delta_G(V_i) \leq \epsilon \vol_G(V)$.
\end{definition}

The next theorem shows that expander decompositions can be computed in almost linear time. 

\begin{theorem}[Corollary 7.7 in \cite{chuzhoy2020deterministic}]
There is a deterministic algorithm that, given an undirected and unweighted graph $G = (V, E)$ with $m$ edges and parameters $\epsilon \in (0,1]$ and $1 \leq r \leq O(\log m)$, computes a $(\epsilon, \Phi)$-expander decomposition of $G$ with $\Phi \geq \Omega(\epsilon/(\log m)^{O(r^2)})$ in time $O\left(m^{1 + O(1/r) + o(1)}\cdot (\log m)^{O(r^2)}\right)$. 
\label{thm:exp_dec_det}
\end{theorem}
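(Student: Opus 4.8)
The plan is to reduce the construction of a $(\epsilon, \Phi)$-expander decomposition to repeated calls of an approximate \emph{most-balanced sparse cut} primitive, and then to obtain that primitive deterministically in almost-linear time through an $r$-level recursive scheme. Concretely, fix a conductance threshold $\phi_0 = \Theta(\epsilon / (\log m)^{O(r)})$. Starting from $G$ we maintain the current partition piece by piece: for a piece $H$ we ask whether $H$ has a cut of conductance below $\phi_0$; if it is already a $\phi_0$-expander we keep it, and otherwise we want to cut and recurse on both sides.

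The first ingredient is the standard recursion-with-trimming argument. Naively recursing on an arbitrary sparse cut is too slow and removes too many edges, so instead we (approximately) compute a \emph{most balanced} sparse cut $(S, \bar S)$ of $H$. If this cut is balanced (both sides have a constant fraction of the volume), we recurse on both sides; balance forces the recursion to have depth $O(\log m)$, so the boundary edges it produces sum to $O(\phi_0 \log m) \cdot \vol_G(V) \le \epsilon \cdot \vol_G(V)$ for an appropriate choice of $\phi_0$. If instead every sparse cut of $H$ is very unbalanced, then the large side is a \emph{near-expander}, and a trimming step removes a small-volume subset to turn it into a genuine $\Phi$-expander; the edges cut by trimming are charged against the (few) edges already removed, so the total boundary stays within budget. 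After trimming, the expansion degrades by at most a polylog factor, giving $\Phi = \Omega(\phi_0 / \text{polylog}(m)) = \Omega(\epsilon / (\log m)^{O(r^2)})$.

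The technical heart is a deterministic, near-linear-time algorithm for the most-balanced-sparse-cut primitive, since the classical fast algorithms here are all randomized (spectral/Cheeger rounding, the cut-matching game with random bisections, or random-projection-based flow). The approach I would follow is the recursive one of \cite{chuzhoy2020deterministic}: a level-$1$ routine uses only slow but deterministic flow/matching, and a level-$j$ routine speeds this up by replacing its expensive global routing steps with recursive level-$(j-1)$ calls on vertex-weighted ``sparsified'' instances, using the lower-level expanders already produced as the communication backbone in place of randomness. Each extra level lowers the running-time exponent from $m^{1+O(1/(j-1))}$ to $m^{1+O(1/j)}$ at the price of an additional $(\log m)^{O(j)}$ factor in the approximation quality of the cut; carrying this out to $r$ levels gives running time $m^{1+O(1/r)+o(1)} (\log m)^{O(r^2)}$ and cut approximation $(\log m)^{O(r^2)}$, which is exactly what feeds the threshold $\phi_0$ and hence $\Phi$ above.

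The step I expect to be the main obstacle is precisely this derandomized recursive primitive together with the bookkeeping of how errors compound. One has to show that (i) a routing/embedding scheme built from level-$(j-1)$ expanders still certifies conductance lower bounds with only a polylog-per-level loss, so the loss stays $(\log m)^{O(r)}$ per decomposition level rather than something worse; (ii) the vertex-weight and contraction reductions used to keep each recursive call near-linear preserve both the balance and the conductance guarantees; and (iii) the depth-$O(\log m)$ decomposition recursion and the depth-$r$ cut-primitive recursion interact so that the exponent of $\log m$ is $O(r^2)$ and not larger. Once the most-balanced-cut primitive is available as a black box with the stated trade-off, the decomposition recursion and the trimming/charging analysis are comparatively routine, so I would prove the primitive as a standalone lemma first and only then assemble the decomposition.
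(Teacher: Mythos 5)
This statement is an external result imported verbatim from \cite{chuzhoy2020deterministic} (their Corollary 7.7); the paper gives no proof of it and uses it as a black box. Your sketch correctly reproduces the architecture of the original proof — reduce to an approximate most-balanced sparse cut primitive, recurse with the standard balance/trimming/charging analysis to keep the boundary within $\epsilon \vol_G(V)$ and the conductance loss to $(\log m)^{O(r^2)}$, and obtain the primitive deterministically via the $r$-level recursive derandomization of the cut-matching game. The one caveat is that your argument for the technical heart, the deterministic balanced-cut primitive, consists of saying you would follow \cite{chuzhoy2020deterministic}, which is the very paper whose theorem you are proving; as a justification for citing an imported result this is fine and you have correctly identified where all the difficulty lives, but it is a proof outline of the cited work rather than an independent derivation.
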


Finally, we adapt the previous theorem to interface more conveniently with our statements.

\begin{corollary}[Expander Decomposition]
There is a deterministic algorithm \expdecomp($G, \gamma$) that, given an undirected and unweighted graph $G = (V, E)$ with $m$ edges and a constant $\gamma \in (0,1)$, computes a $(1/2, \frac{1}{\Exp((\log n)^{\gamma})})$-expander decomposition in time $m^{1 + o(1)}$.
\label{col:exp_dec_det}
\end{corollary}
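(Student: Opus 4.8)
The plan is to derive Corollary~\ref{col:exp_dec_det} from Theorem~\ref{thm:exp_dec_det} by choosing the free parameter $r$ appropriately so that the conductance lower bound $\Phi \geq \Omega(\epsilon/(\log m)^{O(r^2)})$ becomes $1/\Exp((\log n)^{\gamma})$ while the runtime $O(m^{1+O(1/r)+o(1)}\cdot(\log m)^{O(r^2)})$ stays $m^{1+o(1)}$. First I would fix $\epsilon = 1/2$ in Theorem~\ref{thm:exp_dec_det}, which immediately gives the $(1/2, \Phi)$-decomposition demanded by the corollary (note $\sum_i \delta_G(V_i) \le \frac12\vol_G(V)$ is exactly the $\epsilon=1/2$ guarantee). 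The remaining work is entirely the parameter balancing.

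The key computation is the choice of $r$ as a slowly growing function of $m$ (equivalently of $n$, since we may assume $m \le n^2$ so $\log m = \Theta(\log n)$). I would set $r = r(m) = (\log m)^{\gamma'}$ for a suitable constant $\gamma' \in (0, \gamma)$ — concretely something like $\gamma' = \gamma/3$ — so that $r$ grows but $r = O(\log m)$ holds for large $m$ as required by the theorem. Then I would verify the two bounds: (i) for the runtime, $m^{O(1/r)} = m^{O(1/(\log m)^{\gamma'})} = \Exp(O((\log m)^{1-\gamma'})) = m^{o(1)}$ since $1-\gamma' < 1$, and $(\log m)^{O(r^2)} = \Exp(O(r^2 \log\log m)) = \Exp(O((\log m)^{2\gamma'}\log\log m)) = m^{o(1)}$ provided $2\gamma' < 1$ (true for $\gamma' = \gamma/3 < 1/6$); multiplying these factors with $m^{1+o(1)}$ keeps the total at $m^{1+o(1)}$. (ii) For the conductance, $\Phi \ge \Omega\big(\tfrac12/(\log m)^{O(r^2)}\big) = 1/\Exp(O((\log m)^{2\gamma'}\log\log m))$; I then need $(\log m)^{2\gamma'}\log\log m \le (\log n)^{\gamma}$ for all sufficiently large $n$, which holds because $(\log m)^{2\gamma'}\log\log m = O((\log n)^{2\gamma'}\log\log n) = o((\log n)^{\gamma})$ once $2\gamma' < \gamma$, i.e. $\gamma' < \gamma/2$. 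Hence any $\gamma' < \min(\gamma/2, 1/2)$ works, and $\gamma' = \gamma/3$ is a safe explicit choice.

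The only remaining subtlety is the regime of small $m$ (or $n$), where $r(m) = (\log m)^{\gamma'}$ might fall below the lower threshold $r \ge 1$ of Theorem~\ref{thm:exp_dec_det}, or where the asymptotic inequalities above have not yet kicked in. I would handle this in the standard way: for $n$ below a constant $n_0$ (depending only on $\gamma$), the graph has $O(1)$ size, so the trivial partition into singletons (or any brute-force expander decomposition) runs in $O(1) = m^{1+o(1)}$ time and trivially satisfies the conductance bound since $1/\Exp((\log n)^\gamma)$ is bounded below by a positive constant there; for $n \ge n_0$ we apply the argument above with $r = \max(1, \lceil(\log m)^{\gamma/3}\rceil)$, which is $\ge 1$ and $= (\log m)^{\gamma/3}(1+o(1))$.

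I do not expect a genuine obstacle here — this is a routine ``tune the knob'' corollary. The one place to be careful is making sure the \emph{same} choice of $r$ simultaneously kills the $m^{O(1/r)}$ factor (which wants $r$ \emph{large}) and keeps $(\log m)^{O(r^2)}$ subpolynomial and the conductance at the target level (which want $r$ \emph{small}); the sweet spot is precisely $r$ polylogarithmic in a fractional power, and the inequalities $2\gamma' < \gamma$ and $2\gamma' < 1$ are exactly the constraints that make everything fit. I would also double check that the hidden constants in the $O(\cdot)$ of the exponent $O(r^2)$ are absolute (they are, per \cite{chuzhoy2020deterministic}), so that the comparison $(\log m)^{O(r^2)} \le \Exp((\log n)^\gamma)$ is legitimate for large $n$.
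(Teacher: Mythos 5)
Your proposal is correct and matches the paper's (unstated) intent: the paper's proof of this corollary is literally the one-liner ``Follows directly from \Cref{thm:exp_dec_det}'', and your parameter choice $\epsilon = 1/2$, $r = \Theta((\log m)^{\gamma'})$ with $\gamma' < \min(\gamma/2, 1/2)$ is exactly the routine tuning that makes it go through. The only nit is the parenthetical claim ``$\gamma' = \gamma/3 < 1/6$'', which need not hold for $\gamma$ close to $1$; but the condition you actually need, $2\gamma' < 1$, is satisfied by $\gamma' = \gamma/3 < 1/3$, so nothing breaks.
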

\begin{proof}
Follows directly from \Cref{thm:exp_dec_det}. 
\end{proof}

\subsection{Preconditioned Richardson}

The next lemma analyses preconditioned Richardson (\Cref{alg:richardson}) for asymmetric matrices. 

\begin{lemma}[Preconditioned Richardson, Lemma 4.2 in \cite{cohen2016almostlineartime}]
Let $\bb \in \mathbb{R}^n$ and $\MM, \ZZ, \UU \in \mathbb{R}^{n \times n}$ such that $\UU$ is symmetric positive definite, $\ker(\UU) \subseteq \ker(\MM) = \ker(\MM^T) = \ker(\ZZ) = \ker(\ZZ^T)$, and $\bb \in \im(\MM)$. Then $N$ iterations of preconditioned Richardson with step size $\eta > 0$, result in a vector $\xx_N = \richardson(\MM, \ZZ, \bb, \eta, N)$ so that 
\begin{align*}
    \norm{\xx_N - \MM^+ \bb}_{\UU} \leq \norm{\II_{\im(\MM)} - \eta \ZZ \MM}^N_{\UU \rightarrow \UU}\norm{\MM^+ \bb}_{\UU}.
\end{align*}
Furthermore preconditioned Richardson implements a linear operator, in the sense that $\xx_N = \ZZ_N \bb$ for some matrix $\ZZ_N$ only depending on $\ZZ, \MM, \eta$ and $N$.
\label{lem:richardson}
\end{lemma}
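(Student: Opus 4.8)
The plan is to unroll the Richardson recursion explicitly and identify the error as a power of the iteration matrix $\II_{\im(\MM)} - \eta \ZZ \MM$ acting on the initial error. First I would write down Algorithm \ref{alg:richardson}: starting from $\xx_0 = \veczero$ (or any starting point in $\im(\MM)$), the update is $\xx_{k+1} = \xx_k + \eta \ZZ(\bb - \MM \xx_k)$. Subtracting the fixed point $\xx^\star := \MM^+ \bb$, and using that $\bb \in \im(\MM)$ so that $\MM \xx^\star = \bb$ (here one needs $\bb \in \im(\MM) = \im(\MM^{+T})$, which follows from the kernel hypotheses), I get the clean recursion
\begin{align*}
    \xx_{k+1} - \xx^\star = (\II - \eta \ZZ \MM)(\xx_k - \xx^\star).
\end{align*}
The one subtlety is that $\II - \eta\ZZ\MM$ must be replaced by $\II_{\im(\MM)} - \eta \ZZ \MM$: one has to check that the whole iteration stays inside $\im(\MM)$ (equivalently, orthogonal complement of $\ker(\MM^T)=\ker(\UU)$-adapted complement), so that the component of $\II$ outside $\im(\MM)$ never acts on anything. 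Since $\xx_0 \in \im(\MM)$, $\ZZ$ maps into $\im(\ZZ) = \im(\MM)$ (by $\ker(\ZZ^T) = \ker(\MM^T)$, after transposing, $\im(\ZZ) = \ker(\ZZ^T)^\perp = \ker(\MM^T)^\perp = \im(\MM)$), and $\MM\xx^\star = \bb$, every iterate and every error vector lives in $\im(\MM)$; hence replacing $\II$ by $\II_{\im(\MM)}$ changes nothing along the orbit. This justifies the statement in terms of $\II_{\im(\MM)} - \eta\ZZ\MM$.

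Iterating the recursion $N$ times gives $\xx_N - \xx^\star = (\II_{\im(\MM)} - \eta \ZZ \MM)^N (\xx_0 - \xx^\star)$. Now I would pass to the $\UU$-norm and apply submultiplicativity of the operator norm $\norm{\cdot}_{\UU \to \UU}$:
\begin{align*}
    \norm{\xx_N - \xx^\star}_{\UU} \leq \norm{(\II_{\im(\MM)} - \eta \ZZ \MM)^N}_{\UU \to \UU} \norm{\xx_0 - \xx^\star}_{\UU} \leq \norm{\II_{\im(\MM)} - \eta \ZZ \MM}_{\UU \to \UU}^N \norm{\xx^\star}_{\UU},
\end{align*}
where in the last step I take $\xx_0 = \veczero$ so $\norm{\xx_0 - \xx^\star}_\UU = \norm{\MM^+\bb}_\UU$. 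Here I should double-check that $\norm{\cdot}_{\UU\to\UU}$ is genuinely submultiplicative even though $\UU$ may only be PSD rather than PD on the relevant subspace — but since $\UU$ is assumed symmetric positive definite in the hypothesis, $\norm{\MM}_{\UU\to\UU} = \norm{\UU^{1/2}\MM\UU^{-1/2}}_2$ is an honest matrix norm and submultiplicativity is immediate. That yields exactly the claimed bound.

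For the "furthermore" part, I would simply observe that each update $\xx_{k+1} = \xx_k + \eta\ZZ\bb - \eta\ZZ\MM\xx_k = (\II - \eta\ZZ\MM)\xx_k + \eta\ZZ\bb$ is an affine function of $\xx_k$ with the affine term linear in $\bb$; composing $N$ such maps starting from $\xx_0 = \veczero$ gives $\xx_N = \left(\sum_{j=0}^{N-1}(\II - \eta\ZZ\MM)^j \eta \ZZ\right)\bb =: \ZZ_N \bb$, and $\ZZ_N$ depends only on $\ZZ, \MM, \eta, N$. I do not expect any real obstacle here; the only place requiring genuine care is the bookkeeping with kernels and images to justify writing $\II_{\im(\MM)}$ in place of $\II$ and to ensure $\bb \in \im(\MM)$ is consistent with $\MM\MM^+\bb = \bb$ — i.e. that the pseudoinverse really inverts $\MM$ on the relevant subspace under the stated kernel conditions. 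Everything else is a short unrolling argument.
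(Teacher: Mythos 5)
Your proof is correct and is essentially the standard argument: the paper itself imports this lemma from \cite{cohen2016almostlineartime} without reproving it, and the cited proof is exactly your unrolling of the error recursion $\xx_{k+1}-\MM^+\bb=(\II_{\im(\MM)}-\eta\ZZ\MM)(\xx_k-\MM^+\bb)$ followed by submultiplicativity of $\norm{\cdot}_{\UU\to\UU}$. Your care in checking that all iterates stay in $\im(\MM)$ (via $\im(\ZZ)=\ker(\ZZ^T)^\perp=\ker(\MM^T)^\perp=\im(\MM)$) is precisely the bookkeeping the kernel hypotheses are there for.
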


The previous lemma leads to the notion of an approximate pseudoinverse by measuring the suitability of a matrix as a preconditioner.

\begin{definition}[Approximate Pseudoinverse, Definition 4.3 in \cite{cohen2016almostlineartime}]
Matrix $\ZZ$ is an $\epsilon$-approximate-pseudoinverse of matrix $\MM$ with respect to a \emph{PSD} matrix $\UU$, if $\ker(\UU) \subseteq \ker(\MM) = \ker(\MM)^T = \ker(\ZZ) = \ker(\ZZ^T)$, and
\begin{align*}
    \norm{\II_{\im(\MM)} - \ZZ \MM}_{\UU \rightarrow \UU} \leq \epsilon. 
\end{align*}
\end{definition}

\begin{algorithm}
\caption{\textsc{PreconRichardson}($\MM, \ZZ, \bb, \eta, N$)}
\label{alg:richardson}
$\xx_0 = \veczero$ \\
\For{i = 0, ..., N - 1}{
    $\xx_{i + 1} = \xx_i + \eta \ZZ(\bb - \MM \xx_i)$ \\
}
\Return{$\xx_N$}
\end{algorithm}

Finally, we state three more lemmas that are crucial for arguing about approximate pseudoinverses. The first two are often applied consecutively to upper bound $\norm{\II_{\im(\MM)} - \ZZ \MM}_{\UU_{\ZZ} \rightarrow \UU_{\ZZ}}$ with a variational form. 

\begin{lemma}[Part of Lemma B.9 in \cite{cohen2016almostlineartime}]
If $\LL$ is a matrix with $\ker(\LL) = \ker(\LL^T) = \ker(\UU_{\LL})$, and $\UU_{\LL}$ is positive semidefinite, then for any matrix $\AA$ with the same left and right kernels as $\LL$ we have
\begin{align*}
    \norm{\AA}_{\UU_{\LL} \rightarrow \UU_{\LL}} \leq  \norm{\UU_{\LL}^{+/2} \LL \AA \UU_{\LL}^{+/2}}_2
\end{align*}
\label{lem:b9}
\end{lemma}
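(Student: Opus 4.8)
\textbf{Proof proposal for Lemma~\ref{lem:b9}.}

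The plan is to exploit the assumption that $\LL$, $\LL^T$ and $\UU_{\LL}$ all share the same kernel, so that $\UU_{\LL}^{+/2}\LL$ acts as an ``almost orthogonal'' map on $\im(\UU_{\LL})$. Concretely, I would first restrict attention to the subspace $W = \im(\UU_{\LL})$, on which $\UU_{\LL}$ is positive definite; since $\AA$ has the same left and right kernels as $\LL$, every vector of the form $\AA\UU_{\LL}^{+/2}\yy$ already lies in $W$, so we lose nothing by working there. Recalling the identity $\norm{\AA}_{\UU_{\LL}\to\UU_{\LL}} = \norm{\UU_{\LL}^{1/2}\AA\UU_{\LL}^{+/2}}_2$ from the Norms paragraph, the goal becomes to show $\norm{\UU_{\LL}^{1/2}\AA\UU_{\LL}^{+/2}}_2 \le \norm{\UU_{\LL}^{+/2}\LL\AA\UU_{\LL}^{+/2}}_2$.

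The key step is the observation that the operator $\PP := \UU_{\LL}^{+/2}\LL\,\UU_{\LL}^{+/2}$, restricted to $W$, has operator norm at most $1$. This is because for any $\xx$, writing $\LL = \UU_{\LL} + \SS$ with $\SS = \tfrac12(\LL - \LL^T)$ skew-symmetric, we get $\xx^T\UU_{\LL}^{+/2}\LL\UU_{\LL}^{+/2}\xx = \xx^T(\II_W + \UU_{\LL}^{+/2}\SS\UU_{\LL}^{+/2})\xx = \norm{\xx}_2^2$ on $W$ since the skew part contributes nothing to the quadratic form; then a standard argument (e.g. via the field-of-values / numerical-range bound, or by noting that $\II_W + K$ for skew-symmetric $K$ is a normal matrix with eigenvalues $1 + i\lambda$, all of modulus $\ge 1$, so its inverse has norm $\le 1$) shows $\norm{\PP^{-1}}_2 \le 1$ on $W$. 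Hence on $W$ we can write $\UU_{\LL}^{1/2}\AA\UU_{\LL}^{+/2} = \PP^{-1}\cdot\UU_{\LL}^{+/2}\LL\UU_{\LL}^{1/2}\cdot\UU_{\LL}^{1/2}\AA\UU_{\LL}^{+/2}\UU_{\LL}^{-1/2}$... more cleanly: $\UU_{\LL}^{1/2}\AA\UU_{\LL}^{+/2} = \PP^{-1}\,(\UU_{\LL}^{+/2}\LL\AA\UU_{\LL}^{+/2})$, so taking norms and using submultiplicativity with $\norm{\PP^{-1}}_2\le1$ gives exactly the claimed inequality.

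The main obstacle I anticipate is making the inversion of $\PP$ on the subspace $W$ rigorous: one must check that $\LL$ maps $W$ bijectively onto $W$ (which follows from $\ker(\LL) = \ker(\LL^T) = \ker(\UU_{\LL})$, since $\ker(\LL^T) = \im(\LL)^\perp$ forces $\im(\LL) = W$), and that the factorization $\UU_{\LL}^{1/2}\AA\UU_{\LL}^{+/2} = \PP^{-1}(\UU_{\LL}^{+/2}\LL\AA\UU_{\LL}^{+/2})$ is valid as an identity of operators on $W$ — here one uses $\UU_{\LL}^{+/2}\UU_{\LL}^{1/2} = \II_W$ and that all the matrices in sight annihilate $W^\perp$. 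The bound $\norm{\PP^{-1}}_2 \le 1$ itself is the crux: the slick way is to note $\PP$ restricted to $W$ equals $\II_W$ plus a skew-symmetric operator, which is normal with spectrum in $1 + i\R$, and normality gives $\norm{\PP^{-1}}_2 = 1/\min_j|1+i\lambda_j| \le 1$. Everything else is bookkeeping about kernels and the identity $\norm{\MM}_{\HH\to\HH} = \norm{\HH^{1/2}\MM\HH^{+/2}}_2$.
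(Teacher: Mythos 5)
Your argument is correct. Note that the paper does not prove this statement at all—it imports it verbatim from Lemma B.9 of \cite{cohen2016almostlineartime}—so there is no in-paper proof to compare against; your proof (split $\LL = \UU_{\LL} + \SS$ with $\SS$ skew, observe that $\PP = \UU_{\LL}^{+/2}\LL\UU_{\LL}^{+/2}$ restricted to $W = \im(\UU_{\LL})$ is $\II_W$ plus a skew-symmetric operator and hence has $\norm{\PP^{-1}}_2 \le 1$, then factor $\UU_{\LL}^{1/2}\AA\UU_{\LL}^{+/2} = \PP^{-1}\bigl(\UU_{\LL}^{+/2}\LL\AA\UU_{\LL}^{+/2}\bigr)$ using $\im(\AA) = W$) is essentially the standard proof given in that reference, and your bookkeeping of the kernel/image conditions needed to invert $\PP$ on $W$ and to validate the factorization is accurate.
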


\begin{lemma}[Part of Lemma B.2 in \cite{cohen2016almostlineartime}]
For all $\AA \in \mathbb{R}^{n \times n}$ and symmetric \emph{PSD} $\MM, \NN \in \mathbb{R}^{n \times n}$ such that $\ker(\MM) \subseteq \ker(\AA^T)$ and $\ker(\NN) \subseteq \ker(\AA)$ we have
\begin{align*}
    \norm{\MM^{+/2} \AA \NN^{+/2}}_2 = 2 \max_{\xx, \yy \neq 0} \frac{\xx^T \AA \yy}{\xx^T \MM \xx + \yy^T \NN \yy}
\end{align*}
where we define $0/0$ to be $0$.
\label{lem:b2}
\end{lemma}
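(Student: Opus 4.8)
Write $\BB := \MM^{+/2}\AA\NN^{+/2}$. I will first rewrite $\norm{\BB}_2$ as a bilinear form, then change variables to turn it into the quotient over $\xx,\yy$ appearing on the right, and finally dispatch a scalar AM–GM identity to insert the factor $2$ and the symmetric denominator. For the first step, use the standard identity $\norm{\BB}_2 = \max_{\uu,\vv\neq 0}\frac{\uu^T\BB\vv}{\norm{\uu}_2\norm{\vv}_2}$, which follows from applying Cauchy–Schwarz to $\max_{\vv\neq 0}\norm{\BB\vv}_2/\norm{\vv}_2$. Since $\MM^{+/2}$ and $\NN^{+/2}$ annihilate $\ker(\MM)$ and $\ker(\NN)$, kernel components of $\uu,\vv$ leave $\uu^T\BB\vv$ unchanged but only enlarge $\norm{\uu}_2,\norm{\vv}_2$, so the maximum is attained with $\uu\in\im(\MM)$, $\vv\in\im(\NN)$.

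Next I substitute $\xx := \MM^{+/2}\uu$ and $\yy := \NN^{+/2}\vv$. Restricted to $\im(\MM)$ the map $\MM^{+/2}$ is a bijection onto $\im(\MM)$ with inverse $\MM^{1/2}$, so $\xx$ ranges over $\im(\MM)\setminus\{0\}$ (and similarly $\yy$), and $\uu = \MM^{1/2}\xx$ gives $\norm{\uu}_2^2 = \xx^T\MM\xx$, $\norm{\vv}_2^2 = \yy^T\NN\yy$. Using $\MM^{1/2}\MM^{+/2} = \PPi_{\im(\MM)}$ together with the hypotheses $\ker(\MM)\subseteq\ker(\AA^T)$ and $\ker(\NN)\subseteq\ker(\AA)$ to strip the projections off of $\AA$, one checks $\uu^T\BB\vv = \xx^T\AA\yy$. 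Hence
\[
\norm{\BB}_2 = \max_{\xx\in\im(\MM)\setminus\{0\},\ \yy\in\im(\NN)\setminus\{0\}}\frac{\xx^T\AA\yy}{\sqrt{\xx^T\MM\xx}\,\sqrt{\yy^T\NN\yy}}.
\]
I then observe that enlarging the domain to all nonzero $\xx,\yy\in\R^n$ leaves this maximum unchanged: a component of $\xx$ in $\ker(\MM)\subseteq\ker(\AA^T)$ (or of $\yy$ in $\ker(\NN)\subseteq\ker(\AA)$) kills the numerator and does not affect the denominator, and the only genuinely new case with zero denominator, $\xx\in\ker(\MM)$ and $\yy\in\ker(\NN)$, also has zero numerator, which is exactly the $0/0:=0$ convention.

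The remaining task is the scalar claim $\max\frac{c}{ab} = 2\max\frac{c}{a^2+b^2}$, where $(a,b,c) = (\sqrt{\xx^T\MM\xx},\sqrt{\yy^T\NN\yy},\xx^T\AA\yy)$ ranges over attainable triples; both maxima are nonnegative since flipping the sign of $\xx$ makes $c\geq 0$. From $a^2+b^2\geq 2ab$ we get $\frac{c}{a^2+b^2}\leq\frac{c}{2ab}$ whenever $ab>0$ and $c\geq 0$, hence $2\max\frac{c}{a^2+b^2}\leq\max\frac{c}{ab}$; conversely, rescaling $\xx\mapsto(b/a)\xx$ (valid when $a,b>0$) sends $(a,b,c)$ to $(b,b,(b/a)c)$ and gives $\frac{c}{ab} = 2\cdot\frac{(b/a)c}{2b^2}\leq 2\max\frac{c}{a^2+b^2}$, while triples with $a=0$ or $b=0$ have $c=0$ and are irrelevant. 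Chaining this identity with the previous display yields the lemma.

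\textbf{Main obstacle.} There is no analytic difficulty here; the work is entirely the linear-algebra bookkeeping of the change of variables — verifying that $\uu\mapsto\MM^{+/2}\uu$ is a bijection on $\im(\MM)$, that the maximum may be taken over the images rather than all of $\R^n$, and that the two kernel-containment hypotheses are precisely what is needed to remove $\PPi_{\im(\MM)}$ and $\PPi_{\im(\NN)}$ from around $\AA$ so that $\uu^T\BB\vv$ becomes literally $\xx^T\AA\yy$. Once that is set up, the remaining identity is a one-line AM–GM plus a one-parameter rescaling.
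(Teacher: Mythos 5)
The paper imports this statement from prior work (Lemma B.2 of \cite{cohen2016almostlineartime}) without reproducing a proof, so there is no in-source argument to compare against; your proof is correct and is the standard one for this identity. The three ingredients — the bilinear variational form of the spectral norm, the change of variables through $\MM^{+/2},\NN^{+/2}$ where the hypotheses $\ker(\MM)\subseteq\ker(\AA^T)$ and $\ker(\NN)\subseteq\ker(\AA)$ are exactly what lets you strip the projections $\MM^{1/2}\MM^{+/2}$ and $\NN^{+/2}\NN^{1/2}$ off of $\AA$, and the AM--GM inequality made tight by the rescaling $\xx\mapsto (b/a)\xx$ — are all present and correctly justified, including the $0/0$ convention for the kernel directions.
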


\begin{lemma}[Part of Lemma B.4 in \cite{cohen2016almostlineartime}]
For a \emph{PSD} diagonal matrix $\DD$ and any matrix $\MM \in \mathbb{R}^{n \times n}$
\begin{align*}
    \norm{\DD^{-1/2} \MM \DD^{-1/2}}_2 \leq \max\{ \norm{\DD^{-1} \MM}_{\infty} , \norm{\DD^{-1} \MM^T}_{\infty} \} = \max\{ \norm{\MM^T \DD^{-1}}_{1} , \norm{\MM \DD^{-1}}_{1} \}.
\end{align*}
\label{lem:b4}
\end{lemma}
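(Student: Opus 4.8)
The plan is to deduce the inequality from the weighted Schur test, applied to $\AA := \DD^{-1/2}\MM\DD^{-1/2}$ with weights $p_i := \DD_{ii}^{1/2}$, where $\DD_{ii}$ denotes the $i$-th diagonal entry of $\DD$. First I would dispose of the right-hand equality, which carries no real content: for any matrix $\BB$ the largest $\ell_1$-norm of a row of $\BB$ equals the largest $\ell_1$-norm of a column of $\BB^T$, i.e. $\norm{\BB}_{\infty} = \norm{\BB^T}_1$. Applying this with $\BB = \DD^{-1}\MM$ and with $\BB = \DD^{-1}\MM^T$, and using that $\DD$ is symmetric, gives $\norm{\DD^{-1}\MM}_{\infty} = \norm{\MM^T\DD^{-1}}_1$ and $\norm{\DD^{-1}\MM^T}_{\infty} = \norm{\MM\DD^{-1}}_1$, which is the claimed identity. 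So it remains to prove $\norm{\AA}_2 \le \max\{\norm{\DD^{-1}\MM}_{\infty}, \norm{\DD^{-1}\MM^T}_{\infty}\}$.

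For the main step, assume first that $\DD$ is positive definite. Writing $\AA_{ij} = \DD_{ii}^{-1/2}\MM_{ij}\DD_{jj}^{-1/2}$ and $p_i = \DD_{ii}^{1/2}$, a one-line computation gives, for every row index $i$,
\begin{align*}
\sum_j |\AA_{ij}|\,p_j = \DD_{ii}^{-1/2}\sum_j |\MM_{ij}| = p_i\sum_j |(\DD^{-1}\MM)_{ij}| \le p_i\,\norm{\DD^{-1}\MM}_{\infty},
\end{align*}
and symmetrically $\sum_i |\AA_{ij}|\,p_i = p_j\sum_i |(\DD^{-1}\MM^T)_{ji}| \le p_j\,\norm{\DD^{-1}\MM^T}_{\infty}$ for every column index $j$. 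I would then feed these two bounds into the Schur test: by Cauchy--Schwarz, for any $\xx \in \R^n$, $\bigl|\sum_j \AA_{ij}\xx_j\bigr|^2 \le \bigl(\sum_j |\AA_{ij}|\,p_j\bigr)\bigl(\sum_j |\AA_{ij}|\,p_j^{-1}\xx_j^2\bigr)$; summing over $i$, bounding the first factor by $p_i\,\norm{\DD^{-1}\MM}_{\infty}$, then swapping the order of summation and using the second bound, yields $\norm{\AA\xx}_2^2 \le \norm{\DD^{-1}\MM}_{\infty}\,\norm{\DD^{-1}\MM^T}_{\infty}\,\norm{\xx}_2^2$. Hence $\norm{\AA}_2 \le \sqrt{\norm{\DD^{-1}\MM}_{\infty}\,\norm{\DD^{-1}\MM^T}_{\infty}} \le \max\{\norm{\DD^{-1}\MM}_{\infty}, \norm{\DD^{-1}\MM^T}_{\infty}\}$.

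Finally I would handle the degenerate case in which $\DD$ has vanishing diagonal entries, reading $\DD^{-1}$ and $\DD^{-1/2}$ as pseudoinverses: the rows and columns of $\DD^{-1/2}\MM\DD^{-1/2}$, $\DD^{-1}\MM$, and $\DD^{-1}\MM^T$ indexed outside $S := \{i : \DD_{ii} > 0\}$ then vanish, so the statement reduces to the positive-definite case applied to the principal submatrices on $S$. I do not expect a genuine obstacle: the analytic heart is the single weighted Cauchy--Schwarz (Schur test) estimate above, and the only things requiring care are the choice of the weighting $p_i = \DD_{ii}^{1/2}$ (which is what makes the bounds land on $\norm{\DD^{-1}\MM}_{\infty}$ rather than on the cruder $\norm{\DD^{-1/2}\MM\DD^{-1/2}}_{\infty}$), the bookkeeping around the pseudoinverse convention, and keeping straight that $\norm{\cdot}_1$ is the maximum column $\ell_1$-norm while $\norm{\cdot}_{\infty}$ is the maximum row $\ell_1$-norm.
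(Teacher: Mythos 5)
Your proof is correct. Note that the paper itself does not prove this statement: it is imported verbatim as part of Lemma B.4 of \cite{cohen2016almostlineartime}, so there is no in-paper argument to compare against. Your weighted Schur test with weights $p_i = \DD_{ii}^{1/2}$ is a clean and standard way to get exactly this bound; the computation $\sum_j |\AA_{ij}|p_j = p_i \sum_j |(\DD^{-1}\MM)_{ij}|$ is the right choice of weighting, the Cauchy--Schwarz step and the interchange of summation are valid, and the final bound $\sqrt{ab}\le\max\{a,b\}$ closes the argument. The right-hand equality is indeed just $\norm{\BB}_\infty = \norm{\BB^T}_1$ under the paper's row/column conventions, and your reduction of the pseudoinverse case to the principal submatrix on $S=\{i:\DD_{ii}>0\}$ is fine. (For reference, the original source argues slightly differently, via $\norm{\AA}_2^2 = \rho(\AA\AA^T)\le\norm{\DD^{-1}\MM\DD^{-1}\MM^T}_\infty$ using similarity and submultiplicativity, but this is the same Schur-test idea in another guise; neither route buys anything over the other here.)
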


\section{Global Sparsification for Directed Laplacians}
\label{sec:globspars}
In this section we describe our low accuracy global sparsification routine. This constitutes the backbone of our algorithm. We first formally define the $\beta$-partial-symmetrization of an Eulerian graph $\dir{G}$. See \Cref{fig:almost_symmetrization} for an illustration. 

\begin{definition}
For an Eulerian directed graph $\dir{G}$ we call $\mathcal{U}^{(\beta)}(\dir{G}) := \beta \cdot \mathcal{U}(\dir{G}) +  \dir{G}$ its $\beta$-partial-symmetrization.
\end{definition}
\begin{remark}
$\LL_{\mathcal{U}^{(\beta)}(\dir{G})} = \beta \UU_{\LL_{\dir{G}}} + \LL_{\dir{G}}$. 
\label{rem:partial_sym_eul}
\end{remark}

There are three steps in our sparsification procedure. 

\begin{enumerate}
    \item The first step relies on what may be the most crucial observation. Given an Eulerian directed graph $\dir{G}= \dir{G}_0$, we let $\dir{G}_1 = \mathcal{U}^{(\beta)}(\dir{G}) = \beta \cdot \mathcal{U} (\dir{G}) + \dir{G}$ be the graph obtained from $\dir{G}_0$ by $\beta$-partial symmetrization. Then, surprisingly, $\LL_{\dir{G}_1}$ can be used as a preconditioner for solving linear equations in $\LL_{\dir{G}_0}$ in time $O(\beta)$ using Richardson. We call $\mathcal{U}^{(\beta)}(\dir{G})$ the $\beta$-partial-symmetrization of $\dir{G}$.
    \item A partial-symmetrization is naturally interpreted as the sum of an undirected graph $\beta \cdot \mathcal{U}(\dir{G})$ and a directed graph $\dir{G}$. We expander decompose the undirected graph $\beta \cdot \mathcal{U}(\dir{G})$ into parts $V_1, V_2, ..., V_k$. Then a simple greedy patching scheme can be used to sparsify the induced subgraphs $\dir{G}[V_i]$ leveraging the expander structure for error control. In our actual algorithm we additionally bucket by edge weight and layer expander decompositions. The former allows us to treat the graph as unweighted and the latter ensures that every edge is in an expander after $O(\log n)$ layers. The Laplacian of the obtained graph $\dir{G}_2 = \beta \cdot \mathcal{U} (\dir{G}) + \dir{R}$ can then be used as a preconditioner for $\LL_{\dir{G}_1}$
    \item Lastly, the undirected graph $\beta \cdot \mathcal{U} (\dir{G})$ can be sparsified via previously known deterministic algorithms presented in \cite{chuzhoy2020deterministic}.  We obtain $\dir{G}_3 = \frac{\beta}{\eta} \tilde{G} + \dir{R}$ which in turn is a preconditioner for $\dir{G}_2$. 
\end{enumerate}
 
\begin{figure}[h]
    \centering
    \includegraphics[width=13cm]{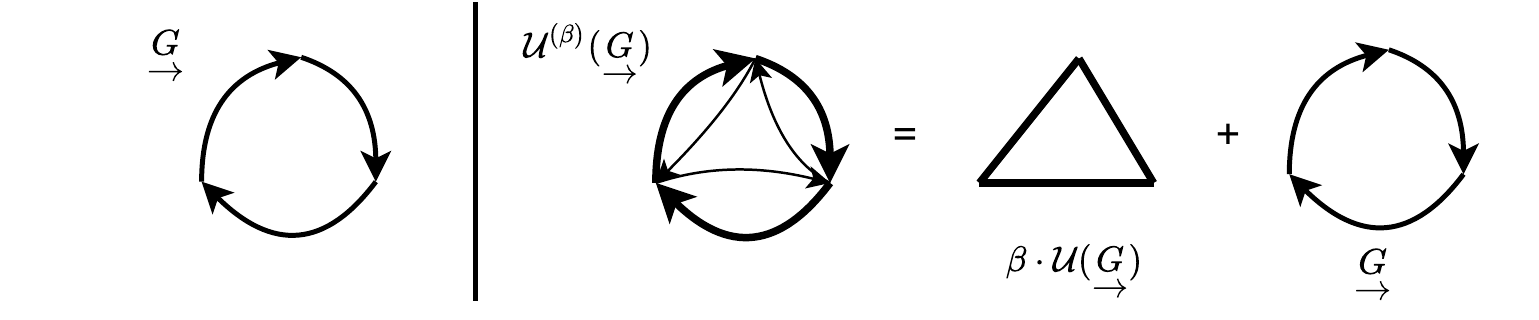}
    \caption{This drawing illustrates the concept of an $\beta$-partial-symmetrization. We use boldness to roughly indicate edge weights. On the left hand side an Eulerian graph $\dir{G}$ is depicted. On the right hand side its $\beta$-partial-symmetrization is drawn, alongside its natural split into an undirected and directed part.}
    \label{fig:almost_symmetrization}
\end{figure}

We define pseudoinverse sparsification quadruples. Constructing these is at the core of our global sparsification routine. 

\begin{definition}(Sparsification quadruple)
We call strongly connected $n$-vertex Eulerian graphs $\dir{G}_0, \dir{G}_1, \dir{G}_2, \dir{G}_3$ a $(\gamma, \beta, \eta)$-quadruple for some constant $\gamma \in (0,1)$ if 
\begin{enumerate}
    \item $\LL_{\dir{G}_i}^+$ is a $\left(1 - \frac{1}{\Exp(O((\log n)^{\gamma}))} \right)$-approximate pseudoinverse of $\LL_{\dir{G}_{i - 1}}$ with respect to $\UU_{\LL_{\dir{G}_{i}}}$ for $i = 1, 2, 3$. 
    \item $\frac{1}{\Exp(O((\log n)^{\gamma}))} \UU_{\LL_{\dir{G}_{i-1}}} \preceq \UU_{\LL_{\dir{G}_{i}}} \preceq \Exp(O((\log n)^{\gamma})) \UU_{\LL_{\dir{G}_{i- 1}}}$ for $i \in \{1, 2, 3\}$. 
    \item $|E(\dir{G}_3)| = \tilde{O}(n)$. $|E(\dir{G}_i)| \leq 2 |E(\dir{G}_0)| + \tilde{O}(n)$ for $i = 1,2$.
    \item For all vertices $v$: $\deg_{\dir{G}_0}^{+}(v) = \deg_{\dir{G}_0}^{-}(v) = (1 + \beta) \deg_{\dir{G}_i}^{+}(v) = (1 + \beta) \deg_{\dir{G}_i}^{-}(v)$ for $i = 1, 2$ and $\deg_{\dir{G}_0}^{+}(v) = (1 + \frac{\beta}{\eta})  \deg_{\dir{G}_3}^{+}(v) = (1 + \frac{\beta}{\eta}) \deg_{\dir{G}_3}^{-}(v)$. 
\end{enumerate}
\label{def:quadruple}
\end{definition}

We then state the main lemma of this section. It shows that it is possible to construct a ($\gamma, \beta, \eta$)-quadruple in almost linear time. 

\begin{lemma}[Global Sparsification]
For every $m$-edge, strongly connected Eulerian graph $\dir{G} = \dir{G}_0$ and a constant $\gamma \in (0,1)$ the routine $\dir{G}_1, \dir{G}_2, \dir{G}_3 = \sparsify(\dir{G}, \gamma)$ yields a $(\gamma, \beta = \Exp(O((\log n)^{\gamma})) , \eta = \Exp(-3(\log n)^{\gamma}))$-quadruple $\dir{G}_0, \dir{G}_1, \dir{G}_2, \dir{G}_3$. The runtime is $m^{1 + o(1)}$. 
\label{lem:globalspars}
\end{lemma}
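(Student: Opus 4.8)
The plan is to prove Lemma~\ref{lem:globalspars} by constructing the quadruple in three explicit stages, mirroring the $\dir{G}_0 \to \dir{G}_1 \to \dir{G}_2 \to \dir{G}_3$ decomposition, and verifying the four properties of Definition~\ref{def:quadruple} along the way. First I would set $\dir{G}_1 = \mathcal{U}^{(\beta)}(\dir{G}_0)$ with $\beta = \Exp(O((\log n)^\gamma))$, so that $\LL_{\dir{G}_1} = \beta\,\UU_{\LL_{\dir{G}_0}} + \LL_{\dir{G}_0}$ by Remark~\ref{rem:partial_sym_eul}. The key computation here is the one flagged in the overview: since $\UU_{\LL_{\dir{G}_0}} = \UU_{\LL_{\dir{G}_1}}/(1+\beta)$ (the directed parts cancel in the symmetrization of $\dir{G}_1$) and $\LL_{\dir{G}_0} - \LL_{\dir{G}_1} = -\beta\,\LL_{\mathcal{U}(\dir{G}_0)} = -\frac{\beta}{1+\beta}\UU_{\LL_{\dir{G}_1}}$, I get that $\II_{\im} - \eta\,\LL_{\dir{G}_1}^{+}\LL_{\dir{G}_0}$ has $\UU_{\LL_{\dir{G}_1}}\to\UU_{\LL_{\dir{G}_1}}$ norm controlled via Lemmas~\ref{lem:b9}--\ref{lem:b2}; choosing $\eta = \Theta(1)$ appropriately drives this below $1 - \frac{1}{O(\beta)}$, giving Property~1 for $i=1$. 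Property~2 for $i=1$ is immediate since $\UU_{\LL_{\dir{G}_1}} = (1+\beta)\UU_{\LL_{\dir{G}_0}}$ exactly. Property~3 holds because $\mathcal{U}(\dir{G}_0)$ at most doubles the edge count, and Property~4 for $i=1$ is the elementary degree bookkeeping $\deg^\pm_{\dir{G}_1} = (1+\beta)^{-1}\deg^\pm_{\dir{G}_0}$ after we interpret $\dir{G}_1$ with the right normalization.

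Second, for $\dir{G}_1 \to \dir{G}_2$, I would bucket the edges of $\mathcal{U}(\dir{G}_0)$ by weight into $O(\log n)$ classes (using polynomially bounded weights), treat each bucket as unweighted, and apply the deterministic expander decomposition of Corollary~\ref{col:exp_dec_det} with parameter $\gamma$, layering by recursing on the inter-cluster edges so that after $O(\log n)$ layers every edge lies inside some expanding piece. On each expanding piece $V'$, the undirected graph $\beta\cdot\mathcal{U}(\dir{G}_0)[V']$ is an $\frac{1}{\Exp((\log n)^\gamma)}$-expander, so by Corollary~\ref{col:approximation_via_expander} it is spectrally comparable (up to a $\Exp(O((\log n)^\gamma))$ factor) to $\beta(\DD - \dd\dd^T/\|\dd\|_1)$, which depends only on degrees. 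A greedy degree-preserving patching replaces $\dir{G}_0[V']$ with a sparse $\dir{R}'$ having identical in/out degrees; since the degree-determined quantity is unchanged and dominates the perturbation $\dir{R}' - \dir{G}_0[V']$ after scaling by $\beta$, the error on $\UU$ stays within $\Exp(O((\log n)^\gamma))$ multiplicatively, and the asymmetric perturbation's $\UU_{\LL_{\dir{G}_2}}$-norm stays below $1 - \frac{1}{\Exp(O((\log n)^\gamma))}$ via the variational bound of Lemma~\ref{lem:b2}. Disjointness of expander pieces means errors do not compound within a decomposition; they sum only over the $O(\log^2 n)$ bucket-layer pairs, which is absorbed into $\Exp(O((\log n)^\gamma))$. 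Since we never touch the undirected part, $\UU_{\LL_{\dir{G}_2}} = \UU_{\LL_{\dir{G}_1}}$ exactly, giving Property~2 trivially; Properties~3 and~4 follow from the sparsity and exact degree-preservation of the patching.

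Third, for $\dir{G}_2 \to \dir{G}_3$, I would invoke the deterministic low-accuracy undirected sparsifier of \cite{chuzhoy2020deterministic} on $\beta\cdot\mathcal{U}(\dir{G}_0)$ to get a sparse $\tilde G$ with $\frac{\beta}{\eta}\LL_{\tilde G} \approx \beta\,\LL_{\mathcal{U}(\dir{G}_0)}$ up to a $\Exp(O((\log n)^\gamma))$ spectral factor, set $\eta = \Exp(-3(\log n)^\gamma)$, and let $\dir{G}_3 = \frac{\beta}{\eta}\tilde G + \dir{R}$. The crucial point—flagged in the overview as "learning rates can be leveraged more effectively when error is only on the undirected part"—is that because $\dir{G}_2$ and $\dir{G}_3$ differ \emph{only} in their undirected components (both keeping $\dir{R}$), the preconditioned-Richardson analysis reduces essentially to a symmetric-matrix spectral comparison, where an appropriate step size $\eta$ (not just $\Theta(1)$) lets us tolerate the large spectral error: one shows $\|\II_{\im} - \eta'\LL_{\dir{G}_3}^{+}\LL_{\dir{G}_2}\|_{\UU_{\LL_{\dir{G}_3}}\to\UU_{\LL_{\dir{G}_3}}} \le 1 - \frac{1}{\Exp(O((\log n)^\gamma))}$. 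Properties~2--4 for $i=3$ follow from the sparsifier's spectral guarantee, its $\tilde O(n)$ edge bound, and the scaling of degrees by $(1+\frac{\beta}{\eta})^{-1}$. Finally, the runtime is $m^{1+o(1)}$ since each stage runs in $m^{1+o(1)}$: partial symmetrization is linear, the $O(\log^2 n)$ expander decompositions are each $m^{1+o(1)}$ by Corollary~\ref{col:exp_dec_det}, greedy patching is near-linear, and the undirected sparsifier is near-linear.

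I expect the main obstacle to be the $\dir{G}_1\to\dir{G}_2$ step—specifically, proving that replacing $\dir{G}_0[V']$ by an arbitrary degree-matched $\dir{R}'$ preserves the approximate-pseudoinverse relation on a single expander, and then correctly accounting for how errors aggregate across buckets and layers without compounding within a decomposition. This requires carefully combining the degree-domination argument (that $\beta$ times the expander's Laplacian absorbs the directed perturbation) with the variational characterization of Lemma~\ref{lem:b2}, while tracking that the relevant norm throughout is $\UU_{\LL_{\dir{G}_2}}$ and that the $\Exp(O((\log n)^\gamma))$ losses from Corollary~\ref{col:approximation_via_expander} only enter multiplicatively once per layer. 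The $\dir{G}_2\to\dir{G}_3$ step is subtler conceptually but technically lighter, since it is almost a purely symmetric computation; the $\dir{G}_0\to\dir{G}_1$ step is the cleanest, resting entirely on the exact cancellation identity.
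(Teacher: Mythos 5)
Your three-stage construction, the choice of $\beta$ and $\eta$, the bucketing-plus-layered-expander-decomposition strategy for the directed part, the degree-matched greedy patching, and the symmetric spectral analysis for the final undirected sparsification all match the paper's proof (which assembles exactly these pieces via \Cref{lem:partial_sym_approx}, \Cref{lem:global_patching}, \Cref{lem:undir_spars_part}, \Cref{obs:partial_sym_cond}, \Cref{lem:dirspars_cond}, and \Cref{lem:undirspars_cond}). However, there is one concrete error in your second stage: the claim that ``since we never touch the undirected part, $\UU_{\LL_{\dir{G}_2}} = \UU_{\LL_{\dir{G}_1}}$ exactly, giving Property~2 trivially.'' This is false. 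While the component $\beta\cdot\mathcal{U}(\dir{G}_0)$ is indeed unchanged, the symmetrization of the \emph{whole} Laplacian also picks up the symmetrization of the directed part: $\UU_{\LL_{\dir{G}_2}} = \beta\,\UU_{\LL_{\dir{G}_0}} + \UU_{\LL_{\dir{R}}}$, whereas $\UU_{\LL_{\dir{G}_1}} = (1+\beta)\,\UU_{\LL_{\dir{G}_0}}$. Since $\dir{R}$ is a sparse graph with completely different edge support from $\dir{G}_0$, $\UU_{\LL_{\dir{R}}} \neq \UU_{\LL_{\dir{G}_0}}$, so the two symmetrizations differ.

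Property~2 for $i=2$ therefore requires a genuine argument: one must show $\UU_{\LL_{\dir{R}}} \preceq \Exp(O((\log n)^{\gamma}))\,\UU_{\LL_{\dir{G}_0}}$ (the lower bound $\veczero \preceq \UU_{\LL_{\dir{R}}}$ is free). The paper does this (\Cref{lem:dirspars_cond}) by writing $\UU_{\LL_{\dir{R}}} - \UU_{\LL_{\dir{G}}}$ as a symmetrization of $\LL_{\dir{R}} - \LL_{\dir{G}} = \sum_{i,j}(\LL_{\tilde{\dir{G}}^{(i,j)}} - \LL_{\dir{G}^{(i,j)}})$ and summing the per-bucket-per-layer bounds of \Cref{lem:expander_patching_error} over the $\tilde{O}(1)$ pairs $(i,j)$ — so the ingredient you need is already in your toolbox, you just have to apply it here rather than declare the step trivial. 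Note also that this same observation matters for Property~1 at $i=2$: the approximate-pseudoinverse bound is measured in the $\UU_{\LL_{\dir{G}_2}}$-norm, and the paper's proof of \Cref{lem:global_patching} explicitly drops the PSD term $\UU_{\LL_{\dir{R}}}$ from the denominator of the variational form of \Cref{lem:b2} to reduce to $\beta\,\UU_{\LL_{\dir{G}_0}}$; your write-up glosses over this but the step is benign since dropping a PSD term only increases the quantity being bounded. Everything else in your proposal is consistent with the paper's argument.
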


\begin{remark}
While in the description of our algorithm, $\beta$ scales linearly in $O\left(\log \left( \frac{\omega^{max}_{\dir{G}}}{\omega^{min}_{\dir{G}}}\right)\right)$, we assume throughout the paper that $\beta = \tilde{O}(1) \cdot \Exp(2(\log n)^{\gamma})$ is fixed to a global upper bound as $\log \left(\frac{\omega^{max}_{\dir{G}}}{\omega^{min}_{\dir{G}}}\right) = \tilde{O}(1)$ for all graphs ${\dir{G}}$ we work with. This avoids clutter in the analysis. 
\end{remark}

\begin{algorithm}
\caption{\sparsify($\dir{G}, \gamma$)}
\label{alg:global_spars}
$\beta = L \cdot \Exp(2 \cdot (\log n)^{\gamma})$ for $L = 128 \cdot 20 \cdot P \cdot \log n$ and $P = \ceil{\log\left(\frac{\omega^{\max}_{\dir{G}}}{\omega^{\min}_{\dir{G}}}\right)}$. \\
$\eta = \Exp(-3\cdot (\log n)^\gamma)$ \\
$\dir{G}_1 = \beta \mathcal{U}(\dir{G}) + \dir{G}$ \tcp*{Note that $\dir{G}_1 = \mathcal{U}^{(\beta)}(\dir{G})$.}
$\dir{R} = \sparsedir(\dir{G} , \gamma)$ \\
$\dir{G}_2 =  \beta \mathcal{U}(\dir{G}) + \dir{R}$ \\
$\tilde{G} = \specspardeg(\mathcal{U}(\dir{G}), \gamma)$ \\
$\dir{G}_3 = \frac{\beta}{\eta} \tilde{G} + \dir{R}$\\
\Return{$\dir{G}_1, \dir{G}_2, \dir{G}_3$}
\end{algorithm}

\subsection{Preconditioning with the Partial-Symmetrization}

Our next lemma shows that $\LL_{\mathcal{U}^{(\beta)}(\dir{G})}$ is a good preconditioner in terms of \Cref{lem:richardson}.

\begin{lemma}
For every Eulerian Laplacian $\LL_{\dir{G}_0}$ the matrix $\LL_{\dir{G}_1}^+$ is an $(1 - \frac{1}{1 + \beta})$-approximate pseudoinverse of $\LL_{\dir{G}_0}$ with respect to $\UU_{\LL_{\dir{G}_1}}$ if $\dir{G}_1 = \mathcal{U}^{(\beta)}(\dir{G}_0)$.
\label{lem:partial_sym_approx}
\end{lemma}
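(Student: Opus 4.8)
The plan is to reduce the statement to a one-line computation built on \Cref{rem:partial_sym_eul} and \Cref{lem:b9}. Throughout write $\LL_0 := \LL_{\dir{G}_0}$, $\LL_1 := \LL_{\dir{G}_1}$ and $\UU_0 := \UU_{\LL_0}$. By \Cref{rem:partial_sym_eul} we have $\LL_1 = \beta\UU_0 + \LL_0$, so $\LL_1 - \LL_0 = \beta\UU_0$, and taking symmetric parts (using that $\UU_0$ is symmetric) gives the crucial identity $\UU_{\LL_1} = \beta\UU_0 + \UU_0 = (1+\beta)\,\UU_0$. Since $\dir{G}_1 = \mathcal{U}^{(\beta)}(\dir{G}_0)$ is a sum of Eulerian graphs it is Eulerian, so $\LL_1$ is an Eulerian Laplacian; hence $\ker(\LL_1) = \ker(\LL_1^T) = \ker(\UU_{\LL_1}) = \ker(\UU_0) = \ker(\LL_0) = \ker(\LL_0^T)$ — the standard kernel identities for Eulerian Laplacians, applied within each weakly connected component — and in particular $\im(\LL_0) = \im(\LL_1)$, and the kernel hypotheses of the approximate-pseudoinverse definition hold for $\MM = \LL_0$, $\ZZ = \LL_1^+$, $\UU = \UU_{\LL_1}$.

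First I would identify the error matrix. Because $\ker(\LL_1) = \ker(\LL_1^T)$, the matrix $\LL_1^+\LL_1$ is the orthogonal projection onto $\im(\LL_1) = \im(\LL_0)$, i.e.\ $\LL_1^+\LL_1 = \II_{\im(\LL_0)}$; subtracting $\LL_1^+\LL_0$ and using $\LL_1 - \LL_0 = \beta\UU_0$ yields
\[
    \II_{\im(\LL_0)} - \LL_1^+\LL_0 \;=\; \LL_1^+(\LL_1 - \LL_0) \;=\; \beta\,\LL_1^+\UU_0 .
\]
Then I would bound its $\UU_{\LL_1}\rightarrow\UU_{\LL_1}$ operator norm with \Cref{lem:b9}, applied with the matrix ``$\LL$'' of that lemma equal to $\LL_1$ (which has $\ker(\LL_1)=\ker(\LL_1^T)=\ker(\UU_{\LL_1})$ and PSD $\UU_{\LL_1}$) and ``$\AA$'' equal to $\beta\LL_1^+\UU_0$ (which one checks has the same left and right kernels as $\LL_1$). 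This gives
\[
    \norm{\beta\,\LL_1^+\UU_0}_{\UU_{\LL_1}\rightarrow\UU_{\LL_1}} \;\le\; \norm{\UU_{\LL_1}^{+/2}\,\LL_1\,\big(\beta\,\LL_1^+\UU_0\big)\,\UU_{\LL_1}^{+/2}}_2 \;=\; \beta\,\norm{\UU_{\LL_1}^{+/2}\,\PPi\,\UU_0\,\UU_{\LL_1}^{+/2}}_2 ,
\]
where $\PPi = \LL_1\LL_1^+$ is the orthogonal projection onto $\im(\LL_1) = \im(\UU_0)$, so $\PPi\UU_0 = \UU_0$. Substituting $\UU_{\LL_1} = (1+\beta)\UU_0$ (hence $\UU_{\LL_1}^{+/2} = (1+\beta)^{-1/2}\UU_0^{+/2}$) and cancelling, the right-hand side equals $\tfrac{\beta}{1+\beta}\norm{\UU_0^{+/2}\UU_0\UU_0^{+/2}}_2 = \tfrac{\beta}{1+\beta}$, since $\UU_0^{+/2}\UU_0\UU_0^{+/2}$ is the orthogonal projection onto $\im(\UU_0)$ and thus has spectral norm $1$. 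Combining the two displays, $\norm{\II_{\im(\LL_0)} - \LL_1^+\LL_0}_{\UU_{\LL_1}\rightarrow\UU_{\LL_1}} \le \tfrac{\beta}{1+\beta} = 1 - \tfrac{1}{1+\beta}$, which is the claim.

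I do not expect a genuine obstacle: once \Cref{rem:partial_sym_eul} and the identity $\UU_{\LL_1} = (1+\beta)\UU_0$ are written down, the content is the purely algebraic cancellation ``the asymmetric part of $\LL_0$ disappears in $\UU_{\LL_1}$, leaving a rescaled symmetric Laplacian'' — exactly the phenomenon advertised in the overview. The only things needing care are the kernel/image bookkeeping (so that $\II_{\im(\LL_0)} = \LL_1^+\LL_1$ and the hypotheses of \Cref{lem:b9} are met). If one prefers to avoid \Cref{lem:b9}, an equivalent route is to note that the norm in question equals $\beta\,\norm{\UU_0^{1/2}\LL_1^+\UU_0^{1/2}}_2 = \beta\,\norm{\NN^+}_2$ for $\NN := \UU_0^{+/2}\LL_1\UU_0^{+/2}$, and that $\norm{\NN^+}_2 \le \tfrac{1}{1+\beta}$ because the symmetric part of $\NN$ is $(1+\beta)$ times the projection onto $\im(\UU_0)$, so for every unit $\xx \in \im(\UU_0)$ we get $\norm{\NN\xx}_2 \ge \xx^T\NN\xx = 1+\beta$ by Cauchy--Schwarz, i.e.\ the smallest singular value of $\NN$ restricted to $\im(\UU_0)$ is at least $1+\beta$.
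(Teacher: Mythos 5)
Your proof is correct and follows essentially the same route as the paper's: apply \Cref{lem:b9} to reduce the $\UU_{\LL_{\dir{G}_1}}\rightarrow\UU_{\LL_{\dir{G}_1}}$ norm of the error matrix to $\norm{\UU_{\LL_{\dir{G}_1}}^{+/2}(\LL_{\dir{G}_1}-\LL_{\dir{G}_0})\UU_{\LL_{\dir{G}_1}}^{+/2}}_2$, then use $\LL_{\dir{G}_1}-\LL_{\dir{G}_0}=\beta\UU_{\LL_{\dir{G}_0}}$ and $\UU_{\LL_{\dir{G}_1}}=(1+\beta)\UU_{\LL_{\dir{G}_0}}$ (from \Cref{rem:partial_sym_eul}) to evaluate this as $\beta/(1+\beta)$. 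Your write-up is merely more explicit about the kernel and image bookkeeping and adds an optional singular-value variant at the end, but the substance is identical to the paper's argument.
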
 
\begin{proof}
First recall that $\LL_{\dir{G}_1}^+ =\LL_{\mathcal{U}^{(\beta)}(\dir{G})}^+$. We have 
\begin{align*}
\norm{\II_{\im(\LL_{\dir{G}})} - \LL_{\mathcal{U}^{(\beta)}(\dir{G})}^+ \LL_{\dir{G}}}_{\UU_{\LL_{\mathcal{U}^{(\beta)}(\dir{G})}} \rightarrow \UU_{\LL_{\mathcal{U}^{(\beta)}(\dir{G})}}} 
    \leq \norm{\UU_{\LL_{\mathcal{U}^{(\beta)}(\dir{G})}}^{+/2} (\LL_{\mathcal{U}^{(\beta)}(\dir{G})} - \LL_{\dir{G}}) \UU_{\LL_{\mathcal{U}^{(\beta)}(\dir{G})}}^{+/2}}_2
\end{align*}
by \Cref{lem:b9}. With \Cref{rem:partial_sym_eul} we conclude
\begin{align*}
    \norm{\UU_{\LL_{\mathcal{U}^{(\beta)}(\dir{G})}}^{+/2} (\LL_{\mathcal{U}^{(\beta)}(\dir{G})} - \LL_{\dir{G}}) \UU_{\LL_{\mathcal{U}^{(\beta)}(\dir{G})}}^{+/2}}_2 &= \beta  \norm{\UU_{\LL_{\mathcal{U}^{(\beta)}(\dir{G})}}^{+/2} \UU_{\LL_{\dir{G}}}  \UU_{\LL_{\mathcal{U}^{(\beta)}(\dir{G})}}^{+/2}}_2 \\
    &= \frac{\beta}{1 + \beta} \norm{\UU_{\LL_{\dir{G}}}^{+/2} \UU_{\LL_{\dir{G}}}  \UU_{\LL_{\dir{G}}}^{+/2}}_2 \\
    &= \frac{\beta}{1 + \beta} = 1 - \frac{1}{1 + \beta}
\end{align*}
where we use that $\UU_{\LL_{\mathcal{U}^{(\beta)}(\dir{G})}} = \LL_{\mathcal{U}(\mathcal{U}^{(\beta)}(\dir{G}))} = (1 + \beta)\UU_{\LL_{\dir{G}}}$. The lemma follows from chaining the calculations.
\end{proof}

\subsection{Sparsifying the Directed Part}

Given $\LL_{\dir{G}_1} = \beta \UU_{\LL_{\dir{G}}} + \LL_{\dir{G}}$, we aim to obtain a sparse directed graph $\dir{R}$ with the same in- and  out-degrees as $\dir{G}$ so that the directed Laplacian $\LL_{\dir{G}_2} = \beta \UU_{\LL_{\dir{G}}} + \LL_{\dir{R}}$ preconditions $\LL_{\dir{G}}$. Our strategy closely follows common strategies for sparsifying undirected graphs via expander decompositions. First we get rid of most of the weighted structure by bucketing by edge weight. We obtain $\tilde{O}(1)$ graphs $\dir{G}^{(i)}$ with close to uniform edge weight such that $\sum_i \dir{G}^{(i)} = \dir{G}$. 

We let $H^{(i)}$ denote the unweighted and undirected graph with the same edges as $\dir{G}^{(i)}$. Then we layer $j = 1, ...,O(\log n)$ undirected and unweighted expander decompositions on this graph, where each of them peels of at least $1/2$ of the remaining edges $E_r^{(i,j)}$. This procedure computes $O(\log n)$ partitions $V^{(i, j)}_1, ..., V^{(i, j)}_{k(i,j)}$ of the vertex set, such that for each component $V^{(i, j)}_p$ the graph $H^{(i)}[V^{(i, j)}_p]$ is an expander. In the $j$-th layer, we put the remaining edges of the directed graph $\dir{G}^{(i)}$ that do not go across sets in the partition $V^{(i, j)}_1, ..., V^{(i, j)}_{k(i,j)}$ into the graph $\dir{G}^{(i,j)}$ and remove them from the set of remaining edges. 

The expander structure allows us to sparsify $\dir{G}^{(i,j)}$ via a greedy patching scheme obtaining $\tilde{\dir{G}}^{(i,j)}$. Finally, we sum up across layers and buckets and obtain $\dir{R} = \sum_{i, j} \tilde{\dir{G}}^{(i,j)}$. We leverage the robustness introduced by partial symmetrization to bound the error. See Algorithm \ref{alg:sparsepatchglobal} for detailed pseudocode. We first state the main lemma of this subsection, which analyses this algorithm. 

\begin{lemma}
Let $\dir{R} = \sparsedir(\dir{G}, \gamma)$ for $\gamma \in (0,1)$ constant. Then $\LL^+_{\dir{G}_2} = (\beta \UU_{\LL_{\dir{G}}} + \LL_{\dir{R}})^+$ is a $1/2$-approximate pseudoinverse of $\LL_{\dir{G}_1} = \LL_{\mathcal{U}^{\beta}(\dir{G})}$ with respect to $\UU_{\LL_{\dir{G}_2}}$ for $\beta = \tilde{O}(1) \cdot \Exp(2 \cdot (\log n)^{\gamma})$. Further, the graph $\dir{R}$ has $\tilde{O}(n)$ edges and the same in- and out-degrees as $\dir{G}$. 
\label{lem:global_patching}
\end{lemma}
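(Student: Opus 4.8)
The plan is to combine the greedy patching guarantee with the expander-decomposition structure in exactly the layered way sketched in the overview, and to crucially exploit that the undirected part $\beta \UU_{\LL_{\dir G}}$ is scaled up by the partial-symmetrization factor $\beta$, so that it dominates the error introduced on the directed part. First I would set up notation: bucket the edges of $\dir G$ by weight into $P = \tilde O(1)$ classes, get unit-weight undirected skeletons $H^{(i)}$, and layer $O(\log n)$ expander decompositions via \Cref{col:exp_dec_det}, peeling at least half the remaining edges each time, so that every edge of $\dir G^{(i)}$ lands in some $\dir G^{(i,j)}$ whose vertex partition $V^{(i,j)}_1,\dots$ has each induced undirected piece a $\Phi = 1/\Exp((\log n)^\gamma)$-expander. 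The core per-piece claim is: replacing $\dir G^{(i,j)}[V^{(i,j)}_p]$ by any degree-matching sparse $\tilde{\dir G}^{(i,j)}[V^{(i,j)}_p]$ changes $\LL_{\dir G_1}$ by an additive Laplacian-difference $\Delta$ supported on that piece, and since both $\dir G^{(i,j)}[V^{(i,j)}_p]$ and its replacement have the same in/out-degrees (hence their symmetrizations have the same degree vector $\dd_p$), the key is to bound $\norm{\UU_{\LL_{\dir G_2}}^{+/2}\Delta\,\UU_{\LL_{\dir G_2}}^{+/2}}_2$.

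Next I would make that bound precise. Using \Cref{lem:b2} the spectral norm of $\UU^{+/2}\Delta\UU^{+/2}$ equals $2\max_{\xx,\yy}\frac{\xx^T\Delta\yy}{\xx^T\UU\xx+\yy^T\UU\yy}$ where $\UU = \UU_{\LL_{\dir G_2}} = (1+\beta)\UU_{\LL_{\dir G}} + \UU_{\LL_{\dir R}} - \UU_{\LL_{\dir G}}$; more usefully, $\UU_{\LL_{\dir G_2}} \succeq \beta\,\UU_{\LL_{\dir G}} \succeq \beta\,\UU_{\LL_{\dir G}[V^{(i,j)}_p]}$ restricted to the relevant coordinates, and since the undirected skeleton on $V^{(i,j)}_p$ is a $\Phi$-expander, \Cref{col:approximation_via_expander} gives $\UU_{\LL_{\dir G}[V^{(i,j)}_p]} \succeq \frac{\Phi^2}{4}\big(\DD_p - \dd_p\dd_p^T/\norm{\dd_p}_1\big)$ where $\DD_p$ is the relevant degree diagonal of the bucket/layer piece (up to the uniform-weight factor from bucketing, absorbed into $\beta$'s constant $L$). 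On the numerator side, $\Delta = \LL_{\dir G^{(i,j)}[V_p]} - \LL_{\tilde{\dir G}^{(i,j)}[V_p]}$ is a difference of two directed Laplacians whose rows/columns sum to the same degrees, so $\xx^T\Delta\yy$ is controlled by $\norm{\DD_p^{-1}\AA}_\infty$-type quantities via \Cref{lem:b4} after the change of variables $\DD_p^{1/2}$; each of the two directed graphs has entries bounded by $\DD_p$, giving a universal $O(1)$ bound there. Dividing, each piece contributes at most $O(1/(\beta\Phi^2))$ to the spectral error. Summing: within a single decomposition the pieces are vertex-disjoint so their $\Delta$'s act on orthogonal coordinate blocks and the errors do not add — we take a max, not a sum; across the $O(\log n)$ layers and $P = \tilde O(1)$ buckets the errors do add, giving total error $O\!\big(\tfrac{\tilde O(1)\cdot\log n}{\beta\Phi^2}\big) = O\!\big(\tfrac{\tilde O(1)\log n\cdot\Exp(2(\log n)^\gamma)}{\beta}\big)$, which is $\le 1/4$ by the choice $\beta = 128\cdot 20\cdot P\cdot\log n\cdot\Exp(2(\log n)^\gamma)$. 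Then \Cref{lem:b9} converts this into $\norm{\II_{\im} - \LL_{\dir G_2}^+\LL_{\dir G_1}}_{\UU_{\LL_{\dir G_2}}\to\UU_{\LL_{\dir G_2}}} \le 1/4 < 1/2$, giving the approximate-pseudoinverse claim; the edge count $\tilde O(n)$ and the degree-preservation follow by construction since each greedy replacement is a sparse $\tilde O(|V_p|)$-edge degree-matching graph and there are $\tilde O(1)\cdot O(\log n)$ of them.

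The main obstacle I anticipate is the error-accounting across the hierarchy — in particular making rigorous that (i) the vertex-disjointness of expander pieces really does let one avoid summing errors within a decomposition, which requires checking that $\UU_{\LL_{\dir G_2}}^{+/2}$ does not mix the disjoint blocks in a harmful way (one wants a block-diagonal lower bound $\UU_{\LL_{\dir G_2}} \succeq \beta\sum_p \UU_{\LL_{\dir G}[V_p]}$, valid because deleting cross-edges only decreases a Laplacian in Loewner order on the induced blocks — but one must be careful that $\LL_{\dir G[V_p]}$ vs.\ $\LL_{\dir G}[V_p]$ differ exactly on the boundary, cf.\ the preliminaries), and (ii) handling the edges that never get decomposed — the $O(\log n)$-layer peeling leaves a vanishing but nonzero remainder, which must either be shown empty after $O(\log n)$ halvings (true for unit-weight graphs of polynomial size) or be absorbed into $\dir R$ directly without sparsification. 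A secondary subtlety is that the greedy replacement graph $\tilde{\dir G}^{(i,j)}$ must have in- and out-degrees matching $\dir G^{(i,j)}$ restricted to each piece \emph{and} be supported inside the pieces (no new cross-edges), which constrains the greedy construction; I would state this as a small self-contained patching sublemma and verify the degree bookkeeping there.
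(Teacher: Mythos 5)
Your proposal is correct and follows essentially the same route as the paper's proof: reduce via \Cref{lem:b9} and \Cref{lem:b2} to $\frac{1}{\beta}\norm{\UU_{\LL_{\dir{G}}}^{+/2}(\LL_{\dir{R}}-\LL_{\dir{G}})\UU_{\LL_{\dir{G}}}^{+/2}}_2$, bound each bucket/layer term by passing to the block-diagonal unweighted expander Laplacian (so disjoint parts contribute a max, not a sum), control each block via \Cref{col:approximation_via_expander} and \Cref{lem:b4} using only the matched degrees, and sum over the $\tilde{O}(1)$ buckets and $O(\log n)$ layers against the choice of $\beta$. The subtleties you flag (block mixing, the empty remainder after $O(\log n)$ halvings, and the per-part degree-matching patching sublemma) are exactly the points the paper resolves in \Cref{lem:expander_patching_error} and \Cref{patch_edges}.
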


\begin{algorithm}
\caption{\sparsedir($\dir{G} , \gamma$) and subroutines \textsc{Sparsify}() and \textsc{Patch}()}
\label{alg:sparsepatchglobal}
\SetKwFunction{algo}{\sparsedir}\SetKwFunction{procsp}{\textsc{Sparsify}}\SetKwFunction{procpa}{\textsc{Patch}} 
\SetKwProg{myalg}{Algorithm}{}{}
\SetKwProg{myproc}{Procedure}{}{}
\myalg{\algo{$\dir{G}, \gamma$}}{
$P = \ceil{\log\left(\frac{\omega^{\max}_{\dir{G}}}{\omega^{\min}_{\dir{G}}}\right)}$ \\
\For{$i = 1, ..., P$}{
    $E^{(i)} = \{e \in E(\dir{G}): \omega_{\dir{G}}^{min} \cdot 2^{i-1} \leq \omega_{\dir{G}}(e)  < \omega_{\dir{G}}^{min} \cdot 2^{i} \}$ \\
    $\tilde{\dir{G}}^{(i)} = \textsc{Sparsify}(\dir{G}^{(i)} = (V(\dir{G}), E^{(i)}, \omega_{\dir{G}}))$\\
}
\Return{$\dir{R} = \sum_{i = 1}^P \tilde{\dir{G}}^{(i)}$}
}

\myproc{\procsp{$\dir{G}^{(i)}$}}{
Let $H^{(i)}$ denote the unweighted and undirected graph with the same edges as $\mathcal{U}(\dir{G}^{(i)})$ \\
\For{$j = 1, ..., 10\log n$}{
    $E_r^{(i,j)} = E({H}^{(i)}) - \bigcup_{l = 1}^{j - 1} E({H}^{(i,j)})$; $\dir{E}_r^{(i,j)} = E(\dir{G}^{(i)}) - \bigcup_{l = 1}^{j - 1} E(\dir{G}^{(i,j)})$ \\
    $V^{(i,j)}_1, ..., V_{k(i,j)}^{(i,j)} = \expdecomp((\VV({H}^{(i)}), E_r^{(i,j)}), \gamma)$ \\
    $E({H}^{(i,j)}) = \bigcup_{p = 1}^{k(i,j)} \{(u,v) \in E_r^{(i,j)}: u \in V^{(i,j)}_p  \land v \in V^{(i,j)}_p \}$ \\
    $E(\dir{G}^{(i,j)}) = \bigcup_{p = 1}^{k(i,j)} \{(u,v) \in \dir{E}_r^{(i,j)}: u \in V^{(i,j)}_p  \land v \in V^{(i,j)}_p \}$\\ $H^{(i,j)} = (V({H}^{(i)}), E({H}^{(i,j)}))$; 
    $\dir{G}^{(i,j)} = (V(\dir{G}^{(i)}), E(\dir{G}^{(i,j)}), \omega_{\dir{G}^{(i)}})$ \\
    $\tilde{\dir{G}}^{(i,j)} = \sum_{p = 1}^{k(i,j)} \textsc{Patch}(\dir{G}^{(i,j)}[V^{(i,j)}_p])$
}
\Return{$\tilde{\dir{G}}^{(i)} = \sum_{j = 1}^{10 \log n} \tilde{\dir{G}}^{(i,j)}$}
}

\myproc{\procpa{$\dir{H}$}}{
Let $\aa,\bb \in \RR^n_{\geq 0}$ so that $\aa(v) = \deg^+_{\dir{H}}(v)$ and $\bb(v) = \deg^-_{\dir{H}}(v)$. \tcp{Note $\norm{\aa}_1 = \norm{\bb}_1$.} 
$E(\tilde{\dir{H}}) = \emptyset$; $\omega_{\tilde{\dir{H}}}(e) = 0$ for all $e$. \\
\While{$\norm{\aa} \neq 0$}{
Let $i$ and $j$ be arbitrary such that ${\aa}(i) > 0$ and ${\bb}(j) > 0$. \\
$w = \min\{\aa(i), \bb(j)\}$;  $\aa(i) = \aa(i) - w$; $\bb(j) = \bb(j) - w$\\
$E(\tilde{\dir{H}}) = E(\tilde{\dir{H}}) \cup \{(i, j)\}$; $\omega(i , j) = w$
}
\Return{$\tilde{\dir{H}}$}
}
\end{algorithm}

The proof of \Cref{lem:global_patching} relies on analysing the error incurred by sparsifying each individual expander decomposition, i.e. the cost of replacing $\dir{G}^{(i,j)}$ with $\tilde{\dir{G}}^{(i,j)}$. Then we conclude by just summing up the error. The next lemma carefully analyses the amount of error sparsifying such an expander decomposition creates. It crucially relies on the fact that the expander parts form a disjoint partition, and therefore the error does not scale in the number of expanders. 

\begin{lemma}
In the context of \textsc{Sparsify}() in \Cref{alg:sparsepatchglobal} we have
\begin{align*}
    \norm{\UU_{L_{\dir{G}}}^{+/2}(\dir{G}^{(i,j)} - \tilde{\dir{G}}^{(i,j)}) \UU_{L_{\dir{G}}}^{+/2}}_2 \leq 128 \cdot \Exp(2 \cdot (\log n)^{\gamma})
\end{align*}
for every edge weight bucket $i$ and expander decomposition layer $j$. 
\label{lem:expander_patching_error}
\end{lemma}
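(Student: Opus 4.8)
The plan is to turn the global spectral estimate into a sum of independent per-expander estimates, exploiting that the parts of an expander decomposition are vertex-disjoint. Write $\dir{H}_p := \dir{G}^{(i,j)}[V^{(i,j)}_p]$ and $\tilde{\dir{H}}_p := \textsc{Patch}(\dir{H}_p)$; then $\dir{G}^{(i,j)}$ is the (vertex-disjoint) sum of the $\dir{H}_p$ and $\tilde{\dir{G}}^{(i,j)}$ the sum of the $\tilde{\dir{H}}_p$, so $\EE := \LL_{\dir{G}^{(i,j)}} - \LL_{\tilde{\dir{G}}^{(i,j)}} = \sum_p \EE_p$ with $\EE_p := \LL_{\dir{H}_p} - \LL_{\tilde{\dir{H}}_p}$ supported on $V^{(i,j)}_p \times V^{(i,j)}_p$. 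Since \textsc{Patch} reproduces the in- and out-degrees of its input exactly, $\DD_{\dir{H}_p} = \DD_{\tilde{\dir{H}}_p}$, so $\EE_p = \AA^T_{\tilde{\dir{H}}_p} - \AA^T_{\dir{H}_p}$ is a difference of two nonnegative matrices with the same row sums (the in-degrees $\bb_p$) and the same column sums (the out-degrees $\aa_p$); hence $\EE_p$, and thus $\EE$, has $\vecone$ in both its left and right kernel, so $\ker(\UU_{\LL_{\dir{G}}}) \subseteq \ker(\EE) \cap \ker(\EE^T)$. Then \Cref{lem:b2} with $\MM = \NN = \UU_{\LL_{\dir{G}}}$ applies, reducing the claim to showing $\xx^T \EE \yy \le 32 \cdot \Exp(2(\log n)^{\gamma}) \cdot (\xx^T \UU_{\LL_{\dir{G}}} \xx + \yy^T \UU_{\LL_{\dir{G}}} \yy)$ for all $\xx,\yy$, which yields the stated bound with an extra factor $2$ from \Cref{lem:b2}.

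Next I would bound $\xx^T \EE_p \yy$ for one non-trivial part (singleton parts have no internal edges, so $\EE_p = 0$). As $\EE_p$ and the Laplacian $\LL_{H^{(i,j)}[V_p]}$ of the undirected expander on that part both annihilate $\vecone_{V_p}$ from either side, I may freely replace $\xx,\yy$ on $V_p$ by their shifts orthogonal to $\dd_p := \diag(\DD_{H^{(i,j)}[V_p]})$ without changing $\xx^T\EE_p\yy$ or $\xx^T\LL_{H^{(i,j)}[V_p]}\xx$. For any nonnegative matrix $\AA$ with row sums $\aa$ and column sums $\bb$, the AM-GM inequality gives $|\xx^T\AA^T\yy| \le \tfrac12 \xx^T\Diag(\bb)\xx + \tfrac12 \yy^T\Diag(\aa)\yy$ (equivalently a block-matrix instance of \Cref{lem:b4} and \Cref{lem:b2}, using that $\Diag(\bb)^{-1}\AA^T$ and $\Diag(\aa)^{-1}\AA$ are row-stochastic). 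Applying this to both $\AA_{\dir{H}_p}$ and $\AA_{\tilde{\dir{H}}_p}$, which share the degree vectors $\aa_p,\bb_p$, gives $|\xx^T\EE_p\yy| \le \xx^T\Diag(\aa_p + \bb_p)\xx + \yy^T\Diag(\aa_p + \bb_p)\yy$. Now the construction enters: every directed edge of $\dir{H}_p$ has weight less than $\omega^{\min}_{\dir{G}}2^i$ (it lies in weight bucket $i$) and projects to an undirected edge of the expander $H^{(i,j)}[V_p]$, with at most two directed edges per undirected edge (the two orientations collapse), so $\Diag(\aa_p + \bb_p) \preceq 2\,\omega^{\min}_{\dir{G}}2^i\, \DD_{H^{(i,j)}[V_p]}$. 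Since $H^{(i,j)}[V_p]$ is a $\Phi$-expander with $\Phi \ge \Exp(-(\log n)^{\gamma})$ by \Cref{col:exp_dec_det}, and our shift makes $\xx$ orthogonal to $\dd_p$, \Cref{col:approximation_via_expander} gives $\xx^T\DD_{H^{(i,j)}[V_p]}\xx \le \tfrac{4}{\Phi^2}\xx^T\LL_{H^{(i,j)}[V_p]}\xx$ and likewise for $\yy$. Chaining, $|\xx^T\EE_p\yy| \le \tfrac{8\,\omega^{\min}_{\dir{G}}2^i}{\Phi^2}\big(\xx^T\LL_{H^{(i,j)}[V_p]}\xx + \yy^T\LL_{H^{(i,j)}[V_p]}\yy\big)$.

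Finally I would sum over $p$. Vertex-disjointness gives $\sum_p \LL_{H^{(i,j)}[V_p]} = \LL_{H^{(i,j)}}$ as matrices, and the shifts do not affect these quadratic forms, so $\xx^T\EE\yy \le \tfrac{8\,\omega^{\min}_{\dir{G}}2^i}{\Phi^2}\big(\xx^T\LL_{H^{(i,j)}}\xx + \yy^T\LL_{H^{(i,j)}}\yy\big)$. It remains to pass from $\LL_{H^{(i,j)}}$ to $\UU_{\LL_{\dir{G}}} = \LL_{\mathcal{U}(\dir{G})}$: $H^{(i,j)}$ is a subgraph of the unweighted graph $H^{(i)}$, each unit-weight edge of $H^{(i)}$ corresponds to an edge of $\mathcal{U}(\dir{G}^{(i)})$ of weight at least $\tfrac14\omega^{\min}_{\dir{G}}2^i$ (undirectification halves weights and the bucket lower bound is $\omega^{\min}_{\dir{G}}2^{i-1}$), and $\mathcal{U}(\dir{G}^{(i)})$ is a subgraph of $\mathcal{U}(\dir{G})$; hence $\LL_{H^{(i,j)}} \preceq \LL_{H^{(i)}} \preceq \tfrac{4}{\omega^{\min}_{\dir{G}}2^i}\LL_{\mathcal{U}(\dir{G}^{(i)})} \preceq \tfrac{4}{\omega^{\min}_{\dir{G}}2^i}\UU_{\LL_{\dir{G}}}$. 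Substituting cancels the factor $\omega^{\min}_{\dir{G}}2^i$ and leaves $\xx^T\EE\yy \le \tfrac{32}{\Phi^2}(\xx^T\UU_{\LL_{\dir{G}}}\xx + \yy^T\UU_{\LL_{\dir{G}}}\yy)$; with $\tfrac{1}{\Phi^2} \le \Exp(2(\log n)^{\gamma})$ and the factor $2$ from \Cref{lem:b2} this is the claim (with room relative to the constant $128$).

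I expect the conceptual content to be precisely this localization — that the greedy replacement inside an expander part costs only a constant-factor blow-up of that part's undirected quadratic form — and the difficulties to be bookkeeping: (i) checking that the directed edge-peeling in \textsc{Sparsify} tracks the undirected peeling, so that $E(\dir{G}^{(i,j)})$ restricted to $V_p$ really does project into $E(H^{(i,j)})$ restricted to $V_p$; (ii) assembling the various constants (the $\tfrac12$ from undirectification, the $2$ from the two orientations, the dyadic bucket width, the $\tfrac{4}{\Phi^2}$ from the expander corollary, and the $2$ from \Cref{lem:b2}) into one clean absolute constant; and (iii) handling the self-loop corner case — \textsc{Patch} may introduce self-loops, but these affect no Laplacian, and $\dir{H}_p$ is itself loopless as an induced subgraph of the (w.l.o.g.\ loopless) input, so the ``$\le 2$ directed edges per undirected edge'' count is safe.
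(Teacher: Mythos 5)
Your proposal is correct and follows essentially the same route as the paper's proof: both reduce to the bilinear form via \Cref{lem:b2}, localize to the disjoint expander parts, exploit that \textsc{Patch} preserves in- and out-degrees exactly so each part's error is controlled by its degree diagonal, pass to the part's expander Laplacian via \Cref{col:approximation_via_expander} after the shift orthogonal to the degree vector, and return to $\UU_{\LL_{\dir{G}}}$ using the dyadic weight bucket. The only differences are cosmetic — you sum bilinear forms over parts where the paper takes a max over blocks of a block-diagonal matrix, and you use AM-GM where the paper invokes \Cref{lem:b4} — and the bookkeeping points you flag (synchronized directed/undirected peeling, constants, self-loops) are exactly the right ones and all check out.
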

\begin{proof} $\dir{G}^{(i,j)}$ and $\tilde{\dir{G}}^{(i,j)}$ are directed graphs with the same in and out degrees since \textsc{Patch}() in \Cref{alg:sparsepatchglobal} preserves degrees exactly. Therefore $\vecone$ is in both the left and right kernel of $\dir{G}^{(i,j)} - \tilde{\dir{G}}^{(i,j)}$. We apply \Cref{lem:b2} twice and obtain 
\begin{align}
    \norm{\UU_{L_{\dir{G}}}^{+/2}(\LL_{\dir{G}^{(i,j)}} - \LL_{\tilde{\dir{G}}^{(i,j)}}) \UU_{L_{\dir{G}}}^{+/2}}_2 &= 2 \max_{\xx, \yy \neq \veczero} \frac{\xx^T(\LL_{\dir{G}^{(i,j)}} - \LL_{\tilde{\dir{G}}^{(i,j)}})\yy}{ \xx^T \UU_{L_{\dir{G}}} \xx +  \yy \UU_{L_{\dir{G}}} \yy^T} \nonumber \\
    &\stackrel{i)}{\leq}  2 \max_{\xx, \yy \neq \veczero} \frac{\xx^T(\LL_{\dir{G}^{(i,j)}} - \LL_{\tilde{\dir{G}}^{(i,j)}})\yy}{ \xx^T \LL_{\mathcal{U}(\dir{G}^{(i,j)})} \xx +  \yy  \LL_{\mathcal{U}(\dir{G}^{(i,j)})} \yy^T} \nonumber \\
    &\stackrel{ii)}{\leq} \frac{2}{\omega^{\min}_{\dir{G}} \cdot 2^{i - 2}} \max_{\xx, \yy \neq \veczero} \frac{\xx^T(\LL_{\dir{G}^{(i,j)}} - \LL_{\tilde{\dir{G}}^{(i,j)}})\yy}{ \xx^T \LL_{H^{(i,j)}} \xx +  \yy  \LL_{H^{(i,j)}} \yy^T} \nonumber \\
    &= \frac{1}{\omega^{\min}_{\dir{G}} \cdot 2^{i - 2}} \norm{\LL_{H^{(i,j)}}^{+/2}(\LL_{\dir{G}^{(i,j)}} - \LL_{\tilde{\dir{G}}^{(i,j)}})\LL_{H^{(i,j)}}^{+/2}}_2 \label{eq:sparsepatch1}
\end{align}
where i) follows since $\LL_{\mathcal{U}(\dir{G}^{(i,j)})} \preceq \UU_{L_{\dir{G}}}$ because $\mathcal{U}(\dir{G}^{(i,j)})$ is a subgraph of $\mathcal{U}(\dir{G})$ and ii) uses that all edge weights in $\mathcal{U}(\dir{G}^{i,j})$ are in $[\omega^{min}_{\dir{G}} \cdot 2^{i - 2}, \omega^{min}_{\dir{G}} \cdot 2^i]$. Next we can use that the expander parts $V^{(i,j)}_p$ are disjoint in both $\dir{G}^{(i,j)}$ and $\tilde{\dir{G}}^{(i,j)}$ to bound
\begin{align}
    \norm{\LL_{H^{(i,j)}}^{+/2}(\LL_{\dir{G}^{(i,j)}} - \LL_{\tilde{\dir{G}}^{(i,j)}})\LL_{H^{(i,j)}}^{+/2}}_2 &\leq  \max_{p} \norm{\LL_{H^{(i,j)}[V^{(i,j)}_p]}^{+/2}(\LL_{\dir{G}^{(i,j)}[V^{(i,j)}_p]} - \LL_{\tilde{\dir{G}}^{(i,j)}[V^{(i,j)}_p]})\LL_{H^{(i,j)}[V^{(i,j)}_p]}^{+/2}}_2
    \label{eq:sparsepatch2}
\end{align}
since the spectral norm of a block diagonal matrix is upper bounded by the maximum spectral norm of a block (See \Cref{fact:blockdiag}). Then, for every $p \in \{1, ... , k(i,j)\}$ we have 
\begin{align}
    \norm{\LL_{H^{(i,j)}[V^{(i,j)}_p]}^{+/2}(\LL_{\dir{G}^{(i,j)}[V^{(i,j)}_p]} - \LL_{\tilde{\dir{G}}^{(i,j)}[V^{(i,j)}_p]})\LL_{H^{(i,j)}[V^{(i,j)}_p]}^{+/2}}_2 \stackrel{i)}{=} 2\max_{\xx, \yy \neq \veczero}  \frac{\xx^T(\LL_{\dir{G}^{(i,j)}[V^{(i,j)}_p]} - \LL_{\tilde{\dir{G}}^{(i,j)}[V^{(i,j)}_p]})\yy}{ \xx^T \LL_{H^{(i,j)}[V^{(i,j)}_p]} \xx +  \yy  \LL_{H^{(i,j)}[V^{(i,j)}_p]} \yy^T} \nonumber \\
    \stackrel{ii)}{\leq} 8 \cdot 2^{2 (\log n)^{\gamma}} \max_{\xx, \yy \neq \veczero}  \frac{\xx^T(\LL_{\dir{G}^{(i,j)}[V^{(i,j)}_p]} - \LL_{\tilde{\dir{G}}^{(i,j)}[V^{(i,j)}_p]})\yy}{ \xx^T \LL_{G(\dd^{(i,j)}_p)} \xx +  \yy  \LL_{G(\dd^{(i,j)}_p)} \yy^T} 
    \label{eq:sparsepatch3}
\end{align} 
for $\dd^{(i,j)}_p$ being the degree vector of $H^{(i,j)}[V^{(i,j)}_p]$, where i) is by \Cref{lem:b2} and ii) is by the expansion of $H^{(i,j)}[V^{(i,j)}_p]$ and \Cref{col:approximation_via_expander}. Let $\xx, \yy \perp \vecone$ be maximising the right hand side of the previous inequality. Then, also $\xx' = \xx - \frac{\xx^T \dd^{(i,j)}_p}{\norm{\dd^{(i,j)}_p}_2}\vecone$ and $\yy' = \yy - \frac{\yy^T \dd^{(i,j)}_p}{\norm{\dd^{(i,j)}_p}_2}\vecone$ are maximizing. We obtain
\begin{align}
     2 \frac{\xx^T(\LL_{\dir{G}^{(i,j)}[V^{(i,j)}_p]} - \LL_{\tilde{\dir{G}}^{(i,j)}[V^{(i,j)}_p]})\yy}{ \xx^T \LL_{G(\dd^{(i,j)}_p)} \xx +  \yy  \LL_{G(\dd^{(i,j)}_p)} \yy^T} &= 2\frac{\xx'^T(\LL_{\dir{G}^{(i,j)}[V^{(i,j)}_p]} - \LL_{\tilde{\dir{G}}^{(i,j)}[V^{(i,j)}_p]})\yy'}{ \xx'^T \LL_{G(\dd^{(i,j)}_p)} \xx' +  \yy'  \LL_{G(\dd^{(i,j)}_p)} \yy'^T} \nonumber \\
     &= 2\frac{\xx'^T(\LL_{\dir{G}^{(i,j)}[V^{(i,j)}_p]} - \LL_{\tilde{\dir{G}}^{(i,j)}[V^{(i,j)}_p]})\yy'}{ \xx'^T \DD_{H^{(i,j)}[V^{(i,j)}_p]} \xx' +  \yy'  \DD_{H^{(i,j)}[V^{(i,j)}_p]} \yy'^T} \nonumber \\
     &= \norm{\DD_{H^{(i,j)}[V^{(i,j)}_p]}^{+/2}(\LL_{\dir{G}^{(i,j)}[V^{(i,j)}_p]} - \LL_{\tilde{\dir{G}}^{(i,j)}[V^{(i,j)}_p]})\DD_{H^{(i,j)}[V^{(i,j)}_p]}^{+/2}} 
     \label{eq:sparsepatch4}
\end{align}
where the last equality is by \Cref{lem:b2}. Next we upper bound
\begin{align*}
    \norm{\DD_{H^{(i,j)}[V^{(i,j)}_p]}^{+/2}\LL_{\dir{G}^{(i,j)}[V^{(i,j)}_p]}\DD_{H^{(i,j)}[V^{(i,j)}_p]}^{+/2}} 
\end{align*}
and 
\begin{align*}
    \norm{\DD_{H^{(i,j)}[V^{(i,j)}_p]}^{+/2}\LL_{\tilde{\dir{G}}^{(i,j)}[V^{(i,j)}_p]}\DD_{H^{(i,j)}[V^{(i,j)}_p]}^{+/2}}.
\end{align*}
By \Cref{lem:b4} we have
\begin{align*}
    \norm{\DD_{H^{(i,j)}[V^{(i,j)}_p]}^{+/2}\LL_{\dir{G}^{(i,j)}[V^{(i,j)}_p]}\DD_{H^{(i,j)}[V^{(i,j)}_p]}^{+/2}} \leq \max\left \{\norm{\LL_{\dir{G}^{(i,j)}[V^{(i,j)}_p]} \DD_{H^{(i,j)}[V^{(i,j)}_p]}^{+}}_{1}, \norm{\LL_{\dir{G}^{(i,j)}[V^{(i,j)}_p]}^T \DD_{H^{(i,j)}[V^{(i,j)}_p]}^{+}}_{1} \right \}.
\end{align*}
Since for every $v$, the undirected graph $\omega^{\min}_{\dir{G}} \cdot 2^{i + 1} \cdot H^{(i,j)}[V^{(i,j)}_p]$ satisfies
\begin{align*}
    \deg_{\omega^{\min}_{\dir{G}} \cdot 2^{i + 1} \cdot H^{(i,j)}[V^{(i,j)}_p]}(v) \geq \max\left\{\deg^+_{\dir{G}^{(i,j)}[V^{(i,j)}_p]}(v), \deg^-_{\dir{G}^{(i,j)}[V^{(i,j)}_p]}(v) \right\}
\end{align*}
we have 
\begin{align}
    \norm{\DD_{H^{(i,j)}[V^{(i,j)}_p]}^{+/2}\LL_{\dir{G}^{(i,j)}[V^{(i,j)}_p]}\DD_{H^{(i,j)}[V^{(i,j)}_p]}^{+/2}} \leq \omega^{\min}_{\dir{G}} \cdot 2^{i + 2}.  
    \label{eq:sparsepatch5}
\end{align}
Since we only used the in and out degrees of $\dir{G}^{(i,j)}[V^{(i,j)}_p]$ in the above, and $\tilde{\dir{G}}^{(i,j)}[V^{(i,j)}_p]$ has exactly the same degrees, we analogously conclude
\begin{align}
     \norm{\DD_{H^{(i,j)}[V^{(i,j)}_p]}^{+/2}\LL_{\tilde{\dir{G}}^{(i,j)}[V^{(i,j)}_p]}\DD_{H^{(i,j)}[V^{(i,j)}_p]}^{+/2}} \leq \omega^{\min}_{\dir{G}} \cdot 2^{i + 2}.  
    \label{eq:sparsepatch6}
\end{align}
Chaining inequalities \eqref{eq:sparsepatch1},  \eqref{eq:sparsepatch2}, \eqref{eq:sparsepatch3},  \eqref{eq:sparsepatch4},  \eqref{eq:sparsepatch5} and  \eqref{eq:sparsepatch6} yields 
\begin{align*}
    \norm{(\UU_{L_{\dir{G}}})^{+/2}(\dir{G}^{(i,j)} - \tilde{\dir{G}}^{(i,j)})( \UU_{L_{\dir{G}}})^{+/2}}_2 \leq 128 \cdot \Exp(2 \cdot (\log n)^{\gamma})
\end{align*}
which concludes our proof. 
\end{proof} 

Next we analyse the number of edges of $\tilde{\dir{G}}^{(i, j)}$

\begin{lemma}
$|E(\tilde{\dir{G}}^{(i, j)})| = O(n)$.
\label{patch_edges}
\end{lemma}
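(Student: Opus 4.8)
The plan is to trace through how $\tilde{\dir{G}}^{(i,j)}$ is built in the procedure \textsc{Sparsify}() and bound the edge contribution of each expander part separately. Recall that $\tilde{\dir{G}}^{(i,j)} = \sum_{p=1}^{k(i,j)} \textsc{Patch}(\dir{G}^{(i,j)}[V^{(i,j)}_p])$, and the sets $V^{(i,j)}_1, \dots, V^{(i,j)}_{k(i,j)}$ form a \emph{partition} of the vertex set. So first I would analyze a single call $\textsc{Patch}(\dir{H})$ on an $n'$-vertex graph $\dir{H}$: in each iteration of the \textbf{while} loop the algorithm picks $i$ with $\aa(i) > 0$ and $j$ with $\bb(j) > 0$, adds a single edge $(i,j)$, and decreases $\min\{\aa(i), \bb(j)\}$ from one of the two entries, so at least one of $\aa(i), \bb(j)$ becomes zero and is never revisited. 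Hence the number of iterations — and thus the number of edges added — is at most the number of distinct indices that ever had positive $\aa$ or $\bb$ value, which is at most $2n'$; so $|E(\textsc{Patch}(\dir{H}))| \le 2 n' = O(n')$.

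Next I would sum over the parts: $|E(\tilde{\dir{G}}^{(i,j)})| = \sum_{p=1}^{k(i,j)} |E(\textsc{Patch}(\dir{G}^{(i,j)}[V^{(i,j)}_p]))| \le \sum_{p=1}^{k(i,j)} 2 |V^{(i,j)}_p|$. Since $\{V^{(i,j)}_p\}_p$ is a partition of $V({H}^{(i)}) = V(\dir{G})$, we have $\sum_p |V^{(i,j)}_p| = n$, and therefore $|E(\tilde{\dir{G}}^{(i,j)})| \le 2n = O(n)$, as claimed. (One should also note the edge case where a part $V^{(i,j)}_p$ has all-zero in/out degrees in $\dir{G}^{(i,j)}$, i.e. contains no internal edges of $\dir{G}^{(i,j)}$; then \textsc{Patch}() immediately returns the empty graph since $\norm{\aa} = 0$, contributing nothing, so the bound is unaffected.)

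I do not expect any real obstacle here — the only subtlety is making sure the counting argument for \textsc{Patch}() is airtight, namely that each \textbf{while}-iteration permanently kills at least one coordinate of $\aa$ or $\bb$, so the loop runs at most $|V(\dir{H})|$ times for $\aa$-coordinates plus at most $|V(\dir{H})|$ times for $\bb$-coordinates; equivalently one can observe that over the whole run each vertex appears as the tail of at most one new edge beyond the "last" one and similarly as a head, giving the $2n'$ bound. After that, disjointness of the partition does all the work and the sum telescopes to $O(n)$.
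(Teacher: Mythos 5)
Your proof is correct and follows essentially the same argument as the paper: each iteration of \textsc{Patch}() zeroes out an entry of $\aa$ or $\bb$ (i.e.\ repairs a vertex's in- or out-degree), giving at most $2|V^{(i,j)}_p|$ edges per part, and summing over the disjoint partition yields $O(n)$. No issues.
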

\begin{proof}
It is easy to see that the patching routine adds at most $2|V^{(i, j)}_p|$ edges to graph $\tilde{\dir{G}}^{(i, j)}[V^{(i, j)}_p]$, since each added edge repairs either the desired in-degree or the desired out-degree of a vertex. The result follows since $\sum_{p}|V^{(i, j)}_p| = n$.  
\end{proof}

We show the main lemma of this subsection by summing up the parts. 

\begin{proof}[Proof of \Cref{lem:global_patching}]
Notice that each expander decomposition peels of half of the edges, and thus every edge is part of an unique expander part by the end of the procedure \textsc{Sparsify}() in Algorithm \ref{alg:sparsepatchglobal}. Thus our algorithm exactly preserves the in- and out-degrees of $\dir{G}_1 = \beta \cdot \mathcal{U}(\dir{G}) + \dir{G}$, since each individual patching exactly preserves degrees. Since the graph $\dir{G}_2 = \beta \cdot \mathcal{U}(\dir{G}) + \dir{R}$ remains connected the null-spaces are unaltered. Further, since $\dir{R}$ is the sum of $\tilde{O(1)}$ graphs with $O(n)$ edges the total amount of edges of $\dir{R}$ is bounded by $\tilde{O}(n)$. 

Finally, we show the approximation bound. By \Cref{lem:b9} we have
\begin{align*}
        \norm{\II_{\im(\mathcal{U}^{(\beta)}(\dir{G}}) - \LL_{\dir{G}_2}^+\LL_{\dir{G}_1}}_{\UU_{\LL_{\dir{G}_2}} \rightarrow \UU_{\LL_{\dir{G}_2}}} \leq  \norm{\UU_{\LL_{\dir{G}_2}}^{+/2} (\LL_{\dir{R}} - \LL_{\dir{G}}) \UU_{\LL_{\dir{G}_2}}^{+/2}}. 
\end{align*}
We use \Cref{lem:b2} to obtain 
\begin{align*}
    \norm{\UU_{\LL_{\dir{G}_2}}^{+/2} (\LL_{\dir{R}} - \LL_{\dir{G}}) \UU_{\LL_{\dir{G}_2}}^{+/2}} &= 2 \max_{\xx, \yy \neq 0} \frac{\xx^T (\LL_{\dir{R}} - \LL_{\dir{G}}) \yy}{\xx^T (\beta \UU_{\LL_{\dir{G}}} + \UU_{\LL_{\dir{R}}})\xx + \yy^T(\beta \UU_{\LL_{\dir{G}}} + \UU_{\LL_{\dir{R}}}) \yy} \\ 
    &\leq 2\max_{\xx, \yy \neq 0} \frac{\xx^T (\LL_{\dir{R}} - \LL_{\dir{G}}) \yy}{\beta \xx^T \UU_{\LL_{\dir{G}}}\xx + \beta \yy^T \UU_{\LL_{\dir{G}}}\yy}. 
\end{align*}
Using \Cref{lem:b2} again we have 
\begin{align*}
     2\max_{\xx, \yy \neq 0} \frac{\xx^T (\LL_{\dir{R}} - \LL_{\dir{G}}) \yy}{\beta \xx^T \UU_{\LL_{\dir{G}}}\xx + \beta \yy^T \UU_{\LL_{\dir{G}}}\yy} &= \frac{1}{\beta} \norm{\UU_{\LL_{\dir{G}}}^{+/2}(\LL_{\dir{R}} - \LL_{\dir{G}})\UU_{\LL_{\dir{G}}}^{+/2}}_2 \\
     &= \frac{1}{\beta} \norm{\UU_{\LL_{\dir{G}}}^{+/2}\sum_{i,j}(\LL_{\dir{\tilde{G}}^{(i,j)}} - \LL_{\dir{G}^{(i,j)}})\UU_{\LL_{\dir{G}}}^{+/2}}_2 \\
     &\leq  \frac{1}{\beta} \sum_{i,j} \norm{\UU_{\LL_{\dir{G}}}^{+/2}(\LL_{\dir{\tilde{G}}^{(i,j)}} - \LL_{\dir{G}^{(i,j)}})\UU_{\LL_{\dir{G}}}^{+/2}}_2 \\
     & \stackrel{i)}{\leq} \frac{10 \cdot P \cdot \log n \cdot 128 \cdot \Exp(2 \cdot (\log n)^{\gamma})}{\beta} \stackrel{ii)}{\leq} \frac{1}{2}
\end{align*}
where i) follows from \Cref{lem:expander_patching_error} and ii) is by $\beta \geq 128 \cdot 20 \cdot P \cdot \log n \cdot \Exp(2 \cdot (\log n)^{\gamma})$. Chaining the inequalities shows the approximation statement and concludes our proof. 
\end{proof} 

\subsection{Sparsifying the Undirected Part}

The final task we have left is to sparsify the undirected graph $\mathcal{U}(\dir{G})$. This can be more or less directly achieved by employing a sparsification theorem presented in \cite{chuzhoy2020deterministic}. Mainly for notational convenience in our algorithm, we adapt this sparsifcation technique to be degree preserving in \Cref{sec:deg_pres_undir_spars} and state the resulting lemma here. See \Cref{lem:sparsify_undir_app} for the proof.

\begin{lemma}(Degree Preserving Sparsification)
There is a deterministic algorithm \\ $\specspardeg(G, \gamma)$ (\Cref{alg:specsparsdeg}) that given a parameter $\gamma \in (0,1)$ and an undirected graph $G = (V, E, \omega)$ with $n$ vertices and $m$ edges such that that $P := \frac{\max_{e \in E} \omega(e)}{\min_{e \in E} \omega(e)} = poly(n)$ computes $\tilde{G}$ satsifying
\begin{enumerate}
    \item $\Exp(-(\log n)^{\gamma}) \LL_G \preceq \LL_{\tilde{G}} \preceq \Exp((\log n)^{\gamma}) \LL_G$
    \item $\nnz(\AA) = \tilde{O}(n)$
\end{enumerate}
in time
\begin{align*}
    \tilde{O}(m^{1 + O(1/(\log n)^{\gamma/2})}\cdot (\log m)^{O((\log n)^\gamma)}) = m^{1 + o(1)}.    
\end{align*}
The graph $\tilde{G}$ has self loops and exactly the same degrees as $G$. 
\label{lem:sparsify_undir}
\end{lemma}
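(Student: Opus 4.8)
The plan is to obtain $\tilde G$ by running an off-the-shelf deterministic low-accuracy undirected spectral sparsifier and then \emph{repairing the degrees} by padding each vertex with a self loop, exploiting that self loops do not change the Laplacian.

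First I would invoke the deterministic low-accuracy undirected spectral sparsification routine of \cite{chuzhoy2020deterministic} on $G$ with an internal accuracy parameter $\gamma' \in (0, \gamma)$ chosen so that $2(\log n)^{\gamma'} \le (\log n)^{\gamma}$ for all sufficiently large $n$ (the finitely many small instances being trivial). If the cited routine is stated for unweighted graphs only, I would first split $E$ into $O(\log P) = O(\log n)$ geometric weight classes $G = \sum_k G_k$ (using $P = \mathrm{poly}(n)$), sparsify each $G_k$, and take the union; since each $G_k$ is approximated to within the \emph{same} factor $\Exp(\pm(\log n)^{\gamma'})$, so is the union $G$, with the edge count and running time growing by only an $O(\log n)$ factor, absorbed into $\tilde O(\cdot)$. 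This yields $\tilde G_0$ with $\tilde O(n)$ edges and $\Exp(-(\log n)^{\gamma'})\LL_G \preceq \LL_{\tilde G_0} \preceq \Exp((\log n)^{\gamma'})\LL_G$; instantiating the underlying expander decomposition (\Cref{thm:exp_dec_det}) with parameter $r = \Theta((\log n)^{\gamma/2})$ delivers the claimed running time $\tilde O(m^{1 + O(1/(\log n)^{\gamma/2})}(\log m)^{O((\log n)^{\gamma})}) = m^{1+o(1)}$.

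Next I would \emph{scale down}: set $\LL_{\tilde G_1} := \Exp(-(\log n)^{\gamma'})\LL_{\tilde G_0}$, i.e.\ multiply every edge weight of $\tilde G_0$ by $\Exp(-(\log n)^{\gamma'})$, so that $\LL_{\tilde G_1}\preceq \LL_G$. Testing this against $\ee_v$ gives $\deg_{\tilde G_1}(v) = \ee_v^T \LL_{\tilde G_1}\ee_v \le \ee_v^T\LL_G\ee_v = \deg_G(v)$ for every $v$ (assuming $G$ and $\tilde G_0$ are self-loop-free; in general, strip self loops before sparsifying and re-add $G$'s self loops at the end, neither of which affects any Laplacian). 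Then I would \emph{pad}: let $\tilde G$ be $\tilde G_1$ together with a self loop of weight $\deg_G(v) - \deg_{\tilde G_1}(v) \ge 0$ at each vertex $v$. Self loops contribute nothing to the Laplacian, so $\LL_{\tilde G} = \LL_{\tilde G_1}$, while $\deg_{\tilde G}(v) = \deg_G(v)$ exactly, and $\nnz(\AA_{\tilde G}) \le \nnz(\AA_{\tilde G_0}) + n = \tilde O(n)$. Combining, $\LL_{\tilde G} = \LL_{\tilde G_1} \preceq \LL_G \preceq \Exp((\log n)^{\gamma})\LL_G$ and $\LL_{\tilde G} = \Exp(-(\log n)^{\gamma'})\LL_{\tilde G_0} \succeq \Exp(-2(\log n)^{\gamma'})\LL_G \succeq \Exp(-(\log n)^{\gamma})\LL_G$ by the choice of $\gamma'$, which is property 1; property 2 and the runtime are already in hand.

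The single delicate point is that self-loop padding needs the \emph{one-sided} domination $\LL_{\tilde G_1}\preceq \LL_G$ so that all padding weights are nonnegative, whereas a black-box sparsifier only supplies a two-sided bound; the scaling step is exactly what turns the two-sided bound into a one-sided one, at the harmless cost of a factor $2$ in the exponent, which is why the inner sparsification uses $\gamma' < \gamma$. Everything else --- pinning down the precise form of the cited sparsification theorem, the weight-class reduction, and pushing the \Cref{thm:exp_dec_det} parameters through to the stated runtime --- is routine bookkeeping, mirroring what is done for the directed part in \Cref{lem:global_patching}.
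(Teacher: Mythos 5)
Your proposal is correct and follows essentially the same route as the paper's proof: invoke the deterministic sparsifier of \cite{chuzhoy2020deterministic} at roughly half the target accuracy in the exponent, scale the result down by that approximation factor so that $\LL_{\tilde{G}_1} \preceq \LL_G$ and hence $\deg_{\tilde{G}_1}(v) \leq \deg_G(v)$ for all $v$, and then repair the degrees with nonnegative self loops, which leave the Laplacian unchanged. The only quibble is that instantiating \Cref{thm:exp_dec_det} with $r = \Theta((\log n)^{\gamma/2})$ literally yields approximation factor $(\log m)^{\Theta((\log n)^{\gamma})}$, which overshoots your own accuracy requirement $\Exp((\log n)^{\gamma'})$; the paper divides $r$ by $\log\log m$ to land at $\Exp(\tfrac{1}{2}(\log n)^{\gamma})$, a bookkeeping fix of exactly the kind you flag as routine.
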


Next we apply degree preserving sparsification together with an appropriate scaling to sparsify the undirected part. 

\begin{lemma}
There exists a routine $\tilde{G} = \textsc{SpectralSparsifyDeg}(\mathcal{U}(\dir{G}), \gamma)$ (\Cref{alg:specsparsdeg}) that given the undirected graph $\mathcal{U}(\dir{G})$ computes an undirected graph $\tilde{G}$ with $\tilde{O}(n)$ edges so that $\LL^+_{\dir{G}_3} = (\frac{\beta}{\eta} \LL_{\tilde{G}} + \LL_{\dir{R}})^+$ is an $\left(1 - \frac{1}{2 \cdot \Exp(4 \cdot (\log n)^{\gamma})}\right)$-approximate pseudoinverse of $\LL_{\dir{G}_2} = \beta \UU_{\LL_{\dir{G}}} + \LL_{\dir{R}}$ with respect to $\UU_{\LL_{\dir{G}_3}}$. The degrees of $\tilde{G}$ and $\mathcal{U}(\dir{G})$ are the same.
\label{lem:undir_spars_part}
\end{lemma}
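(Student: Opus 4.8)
The plan is to reduce the approximate-pseudoinverse claim to a purely symmetric Loewner-order comparison via \Cref{lem:b9}, and then to verify that comparison using the crude spectral sandwich of \Cref{lem:sparsify_undir} together with the specific choice $\eta = \Exp(-3(\log n)^{\gamma})$. First I would collect the structural facts: we have $\dir{G}_3 = \frac{\beta}{\eta}\tilde{G} + \dir{R}$ with $\tilde{G} = \specspardeg(\mathcal{U}(\dir{G}),\gamma)$, and since $\mathcal{U}(\dir{G})$ has polynomially bounded weight ratio, \Cref{lem:sparsify_undir} yields $\Exp(-(\log n)^{\gamma})\LL_{\mathcal{U}(\dir{G})} \preceq \LL_{\tilde{G}} \preceq \Exp((\log n)^{\gamma})\LL_{\mathcal{U}(\dir{G})}$, $|E(\tilde{G})| = \tilde{O}(n)$, and that $\tilde{G}$ has exactly the same degrees as $\mathcal{U}(\dir{G})$; the lower bound also forces $\ker(\LL_{\tilde{G}}) = \Span(\vecone)$, so $\tilde{G}$ is connected. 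Because $\dir{G}$ is Eulerian and (by \Cref{lem:global_patching}) $\dir{R}$ shares its in- and out-degrees, $\dir{R}$ is Eulerian, hence $\UU_{\LL_{\dir{R}}} = \LL_{\mathcal{U}(\dir{R})} \succeq 0$. Using $\UU_{\LL_{\dir{G}}} = \LL_{\mathcal{U}(\dir{G})}$ this gives the two identities
\begin{align*}
\LL_{\dir{G}_3} - \LL_{\dir{G}_2} = \tfrac{\beta}{\eta}\LL_{\tilde{G}} - \beta\LL_{\mathcal{U}(\dir{G})} \qquad\text{and}\qquad \UU_{\LL_{\dir{G}_3}} = \tfrac{\beta}{\eta}\LL_{\tilde{G}} + \LL_{\mathcal{U}(\dir{R})},
\end{align*}
the first of which is symmetric. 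Since $\dir{G}_2, \dir{G}_3$ are strongly connected, all of $\LL_{\dir{G}_2}, \LL_{\dir{G}_2}^T, \LL_{\dir{G}_3}, \LL_{\dir{G}_3}^T, \UU_{\LL_{\dir{G}_3}}$ have kernel exactly $\Span(\vecone)$, so every kernel hypothesis needed below holds.

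Next I would apply \Cref{lem:b9} with $\LL = \LL_{\dir{G}_3}$ and $\AA = \II_{\im(\LL_{\dir{G}_2})} - \LL_{\dir{G}_3}^+\LL_{\dir{G}_2}$; since $\im(\LL_{\dir{G}_2}) = \im(\LL_{\dir{G}_3}) = \vecone^\perp$ one has $\LL_{\dir{G}_3}\AA = \LL_{\dir{G}_3} - \LL_{\dir{G}_2}$, so
\begin{align*}
\norm{\II_{\im(\LL_{\dir{G}_2})} - \LL_{\dir{G}_3}^+\LL_{\dir{G}_2}}_{\UU_{\LL_{\dir{G}_3}} \rightarrow \UU_{\LL_{\dir{G}_3}}} \leq \norm{\UU_{\LL_{\dir{G}_3}}^{+/2}\big(\tfrac{\beta}{\eta}\LL_{\tilde{G}} - \beta\LL_{\mathcal{U}(\dir{G})}\big)\UU_{\LL_{\dir{G}_3}}^{+/2}}_2.
\end{align*}
As the middle matrix is symmetric with kernel containing $\Span(\vecone) = \ker(\UU_{\LL_{\dir{G}_3}})$, the right-hand side is at most $c$ precisely when $-c\,\UU_{\LL_{\dir{G}_3}} \preceq \frac{\beta}{\eta}\LL_{\tilde{G}} - \beta\LL_{\mathcal{U}(\dir{G})} \preceq c\,\UU_{\LL_{\dir{G}_3}}$, and I want this with $c = 1 - \frac12\Exp(-4(\log n)^{\gamma})$.

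It remains to prove the two Loewner bounds. For the upper one I drop the PSD summand $c\,\LL_{\mathcal{U}(\dir{R})}$ from the right side, reducing to $(1-c)\tfrac{1}{\eta}\LL_{\tilde{G}} \preceq \LL_{\mathcal{U}(\dir{G})}$; by \Cref{lem:sparsify_undir} it suffices that $(1-c)\eta^{-1}\Exp((\log n)^{\gamma}) \leq 1$, and with $\eta = \Exp(-3(\log n)^{\gamma})$ and $1-c = \frac12\Exp(-4(\log n)^{\gamma})$ the left side equals $\frac12$. For the lower one I again drop $c\,\LL_{\mathcal{U}(\dir{R})} \succeq 0$ and reduce to $(1+c)\tfrac{1}{\eta}\LL_{\tilde{G}} \succeq \LL_{\mathcal{U}(\dir{G})}$; by \Cref{lem:sparsify_undir} it suffices that $(1+c)\eta^{-1}\Exp(-(\log n)^{\gamma}) = (1+c)\Exp(2(\log n)^{\gamma}) \geq 1$, which is immediate. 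Chaining the two displays gives the approximate-pseudoinverse statement, and the $\tilde{O}(n)$ edge count and the equality of degrees for $\tilde{G}$ come directly from \Cref{lem:sparsify_undir}.

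\textbf{Main obstacle.} The one delicate step is the upper Loewner bound: this is exactly where the aggressive scaling $\eta^{-1} = \Exp(3(\log n)^{\gamma})$ is essential, since $\tfrac{\beta}{\eta}\LL_{\tilde{G}}$ must dominate $\beta\LL_{\mathcal{U}(\dir{G})}$ by \emph{more} than the crude $\Exp((\log n)^{\gamma})$ spectral distortion of the undirected sparsifier, with room to spare so that $c$ can be pushed strictly below $1$. This is the concrete manifestation of the overview's principle that when the approximation error lives entirely on the undirected part one can afford a much larger learning rate than for general Eulerian approximation; everything else is bookkeeping on top of \Cref{lem:sparsify_undir}.
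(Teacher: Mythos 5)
Your proposal is correct and follows the paper's skeleton for the first half — the reduction via \Cref{lem:b9} to bounding $\smallnorm{\UU_{\LL_{\dir{G}_3}}^{+/2}(\frac{\beta}{\eta}\LL_{\tilde{G}} - \beta\LL_{\mathcal{U}(\dir{G})})\UU_{\LL_{\dir{G}_3}}^{+/2}}_2$ is exactly the paper's first step. Where you diverge is in the finish: the paper applies \Cref{lem:b2} twice to discard $\UU_{\LL_{\dir{R}}}$ from the denominator and reduce to $\smallnorm{\II_{\im(\LL_{\tilde G})} - \eta\MM}_2$ with $\MM = \LL_{\tilde{G}}^{+/2}\LL_{\mathcal{U}(\dir{G})}\LL_{\tilde{G}}^{+/2}$, and then runs the standard gradient-descent-style estimate $\smallnorm{\II - \eta\MM}_2^2 \leq 1 - 2\eta\lambda_*(\MM) + \eta^2\smallnorm{\MM}_2^2$ together with $\sqrt{1-\epsilon}\leq 1-\epsilon/2$; you instead exploit that the middle matrix is symmetric, translate the spectral-norm bound into the two-sided Loewner sandwich $-c\,\UU_{\LL_{\dir{G}_3}} \preceq \frac{\beta}{\eta}\LL_{\tilde{G}} - \beta\LL_{\mathcal{U}(\dir{G})} \preceq c\,\UU_{\LL_{\dir{G}_3}}$, drop the PSD term $c\,\LL_{\mathcal{U}(\dir{R})}$ from whichever side it helps, and verify both inequalities directly from \Cref{lem:sparsify_undir} and the choice of $\eta$. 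Both routes hit the same constant $c = 1 - \frac{1}{2}\Exp(-4(\log n)^{\gamma})$; yours is a bit more elementary (no eigenvalue/square-root manipulation and no second appeal to \Cref{lem:b2}), at the cost of leaning on the symmetry of the error matrix, which the paper's variational argument does not need. Your numerology checks out: the upper bound reduces to $(1-c)\eta^{-1}\Exp((\log n)^{\gamma}) = \frac{1}{2} \leq 1$ and the lower bound to $(1+c)\Exp(2(\log n)^{\gamma}) \geq 1$, and the sparsity and degree claims are, as you say, immediate from \Cref{lem:sparsify_undir}.
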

\begin{proof}
The sparsity and degree preservation follows directly from \Cref{lem:sparsify_undir}. To show the approximation property, we use \Cref{lem:b9} and obtain
\begin{align}
\norm{\II_{\im(\LL_{\dir{G}})} - \LL_{\dir{G}_3}^+ \LL_{\dir{G}_2}}_{\UU_{\LL_{\dir{G}_3}} \rightarrow \UU_{\LL_{\dir{G}_3}}} 
\leq \norm{\UU_{\LL_{\dir{G}_3}}^{+/2}\left(\frac{\beta}{\eta} \LL_{\tilde{G}} - \beta \LL_{\mathcal{U}(\dir{G})}\right)\UU_{\LL_{\dir{G}_3}}^{+/2}}_2. 
\label{eq:undir1}
\end{align} 
Then we use \Cref{lem:b2} twice with $\frac{\beta}{\eta} \LL_{\tilde{G}} \preceq \frac{\beta}{\eta} \LL_{\tilde{G}} + \UU_{\LL_{\dir{R}}}$ to obtain 
\begin{align}
    \norm{\UU_{\LL_{\dir{G}_3}}^{+/2}\left(\frac{\beta}{\eta} \LL_{\tilde{G}} - \beta \LL_{\mathcal{U}(\dir{G})}\right)\UU_{\LL_{\dir{G}_3}}^{+/2}}_2 &= 2 \max_{\xx, \yy \neq 0} \frac{\xx^T \left(\frac{\beta}{\eta} \LL_{\tilde{G}} - \beta \LL_{\mathcal{U}(\dir{G})}\right) \yy}{ \xx^T (\frac{\beta}{\eta} \LL_{\tilde{G}} + \UU_{\LL_{\dir{R}}}) \xx +  \yy^T (\frac{\beta}{\eta} \LL_{\tilde{G}} + \UU_{\LL_{\dir{R}}}) \yy} \nonumber \\
    &\leq 2 \max_{\xx, \yy \neq 0} \frac{\xx^T \left(\frac{\beta}{\eta} \LL_{\tilde{G}} - \beta \LL_{\mathcal{U}(\dir{G})}\right) \yy}{ \xx^T (\frac{\beta}{\eta} \LL_{\tilde{G}}) \xx +  \yy^T (\frac{\beta}{\eta} \LL_{\tilde{G}}) \yy} \nonumber \\
    &= \norm{ \LL_{\tilde{G}}^{+/2}( \LL_{\tilde{G}} -  \eta \LL_{\mathcal{U}(\dir{G})}) \LL_{\tilde{G}}^{+/2}}. \label{eq:undir2}
\end{align}
Next we compute 
\begin{align}
    \norm{ \LL_{\tilde{G}}^{+/2}( \LL_{\tilde{G}} - \eta \LL_{\mathcal{U}(\dir{G})}) \LL_{\tilde{G}}^{+/2}} 
    = \norm{\II_{\im(\LL_{\tilde{G}})} - \eta \LL_{\tilde{G}}^{+/2}\LL_{\mathcal{U}(\dir{G})}\LL_{\tilde{G}}^{+/2}}_2. \label{eq:undir3}
\end{align}
We use the standard strategy of bounding the square. Let $\MM := \LL_{\tilde{G}}^{+/2}\LL_{\mathcal{U}(\dir{G})}\LL_{\tilde{G}}^{+/2}$ be a shorthand. Then we have
\begin{align*}
    \norm{\II_{\im(\LL_{\tilde{G}})} - \eta  \MM}_2 &= \max_{\xx \in \im(\LL_{\tilde{G})}: \norm{\xx}_2 = 1} \xx^T (\II_{\im{\mathcal{U}(\dir{G})}} - \eta  \MM)^T(\II_{\im{\mathcal{U}(\dir{G})}} - \eta  \MM) \xx \\
    &= 1 + \max_{\xx \in \im{\LL_{\tilde{G}}}: \norm{\xx}_2 = 1} -2\eta \xx^T \MM \xx + \eta^2 \xx^T \MM^2 \xx \\
    &\leq 1 -2\eta \lambda_{*}(\MM) + \eta^2 \norm{\MM}_2^2 
\end{align*}
where $\lambda_{*}(\MM)$ denotes the smallest non-zero eigenvalue of $\MM$. We first lower bound $\lambda_{*}(\MM)$. By \Cref{lem:sparsify_undir} we have
\begin{align*}
    \frac{1}{2^{(\log n)^{\gamma}}}\LL_{\tilde{G}} &\preceq \LL_{\mathcal{U}(\dir{G})} \\
    \frac{1}{2^{(\log n)^{\gamma}}}\II_{\im(\tilde{G})} &\preceq \LL_{\tilde{G}}^{+/2}\LL_{\mathcal{U}(\dir{G})}\LL_{\tilde{G}}^{+/2} 
\end{align*}
and thus $\Exp(-(\log n)^{\gamma}) \leq \lambda_{*}(\MM)$. We obtain $\norm{\MM}_2 \leq \Exp(2 \cdot (\log n)^{\gamma})$ in an analogous way from \Cref{lem:sparsify_undir}. Then, we use $\eta = \Exp(-3(\log n)^{\gamma}) \leq \frac{\lambda_{*}(\MM)}{\norm{M}_2^2}$ to conclude. 
\begin{align}
    \norm{\II_{\im(\LL_{\tilde{G}})} - \eta  \MM}_2 \leq 1 - \frac{1}{2 \cdot \Exp(4 \cdot (\log n)^{\gamma})} \label{eq:undir4}
\end{align}
where we use $\sqrt{1 - \epsilon} \leq 1 - \epsilon/2$ for $\epsilon \in (0,1)$. Chaining inequalities \eqref{eq:undir1}, \eqref{eq:undir2}, \eqref{eq:undir3} and \eqref{eq:undir4} shows the desired approximate pseudoinverse property and finishes the proof. 
\end{proof}

\subsection{Loss in Condition Number}

The overarching goal of the recursion of a squaring solver is to reduce the condition number of the undirectification of the graph. When a constant condition number is reached, the recursion terminates.  However, since our sparsification has relatively low accuracy, we have to ensure that our losses do not outweigh our gains when we call it. In this subsection we bound the loss in condition number incurred by our global sparsification routine. We address the loss in condition number incurred by all of the three steps. To do so we consider spectral bounds.

\paragraph{By Partial Symmetrization.} The first part is the easiest. Since $\UU_{\LL_{\mathcal{U}^{(\beta)}(\dir{G})}} = (1 + \beta) \UU_{\LL_{\dir{G}}}$ $\beta$-partial-symmetrization does not change the condition number. Further we observe

\begin{observation}
$\UU_{\LL_{\dir{G}}} \preceq \UU_{\LL_{\mathcal{U}^{(\beta)}(\dir{G})}} \preceq (1 + \beta) \UU_{\LL_{\dir{G}}}$
\label{obs:partial_sym_cond}
\end{observation}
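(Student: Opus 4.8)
The plan is to reduce the claimed two-sided Loewner bound to a one-line exact identity for $\UU_{\LL_{\mathcal{U}^{(\beta)}(\dir{G})}}$. First I would invoke \Cref{rem:partial_sym_eul}, which gives $\LL_{\mathcal{U}^{(\beta)}(\dir{G})} = \beta \UU_{\LL_{\dir{G}}} + \LL_{\dir{G}}$. Applying the symmetrization operator $\UU(\cdot)$ to both sides and using that it is linear and acts as the identity on symmetric matrices (so $\UU(\beta \UU_{\LL_{\dir{G}}}) = \beta \UU_{\LL_{\dir{G}}}$, while $\UU(\LL_{\dir{G}}) = \UU_{\LL_{\dir{G}}}$ by definition), we obtain the exact equality $\UU_{\LL_{\mathcal{U}^{(\beta)}(\dir{G})}} = (1+\beta)\,\UU_{\LL_{\dir{G}}}$.

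Given this identity the right-hand inequality $\UU_{\LL_{\mathcal{U}^{(\beta)}(\dir{G})}} \preceq (1+\beta)\UU_{\LL_{\dir{G}}}$ holds with equality, so no further work is needed there. For the left-hand inequality I would use that $\UU_{\LL_{\dir{G}}} = \LL_{\mathcal{U}(\dir{G})}$ is a symmetric graph Laplacian, hence PSD (as recorded in the preliminaries), and that $\beta > 0$; then $(1+\beta)\UU_{\LL_{\dir{G}}} - \UU_{\LL_{\dir{G}}} = \beta\,\UU_{\LL_{\dir{G}}} \succeq \veczero$, which is precisely $\UU_{\LL_{\dir{G}}} \preceq \UU_{\LL_{\mathcal{U}^{(\beta)}(\dir{G})}}$. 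Chaining the two bounds yields the statement.

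There is essentially no obstacle here: the only points requiring a moment's care are that the symmetrization operator distributes correctly over the decomposition of \Cref{rem:partial_sym_eul} — in particular that $\LL_{\dir{G}}$ contributes exactly its symmetric part $\UU_{\LL_{\dir{G}}}$ — and that $\beta$ is a fixed positive quantity, so that the two Loewner inequalities point in the stated directions. Both are immediate, so I expect the proof to be a two-sentence computation.
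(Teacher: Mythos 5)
Your proposal is correct and matches the paper's (essentially unstated) argument: the paper also relies on the exact identity $\UU_{\LL_{\mathcal{U}^{(\beta)}(\dir{G})}} = (1+\beta)\UU_{\LL_{\dir{G}}}$, from which both Loewner inequalities follow immediately via PSD-ness of $\UU_{\LL_{\dir{G}}}$ and $\beta \geq 0$.
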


\paragraph{By Sparsifying the Directed Part.}

Next we analyse the loss incurred by patching undirected expanders. This part crucially relies on a spectral upper bound on $\UU_{\LL_{\dir{R}}}$. We show the following lemma

\begin{lemma}
    $\beta \UU_{\LL_{\dir{G}}} \preceq \beta \UU_{\LL_{\dir{G}}} + \UU_{\LL_{\dir{R}}} \preceq  (\beta + \tilde{O}(1) \cdot \Exp(2(\log n)^{\gamma})) \UU_{\LL_{\dir{G}}}$ 
\label{lem:dirspars_cond}
\end{lemma}
\begin{proof}
The first inequality, $\beta \UU_{\LL_{\dir{G}}} \preceq \beta \UU_{\LL_{\dir{G}}} + \UU_{\LL_{\dir{R}}}$, is clear since $\veczero \preceq \LL_{\mathcal{U}(\dir{R})} = \UU_{\LL_{\dir{R}}}$. To upper bound $\UU_{\LL_{\dir{R}}}$ we compute
\begin{align*}
    \norm{\UU_{\LL_{\dir{G}}}^{+/2}(\UU_{\LL_{\dir{R}}} - \UU_{\LL_{\dir{G}}})\UU_{\LL_{\dir{G}}}^{+/2}}_2 &\leq 2 \norm{\UU_{\LL_{\dir{G}}}^{+/2}(\LL_{\dir{G}} - \LL_{\dir{R}})\UU_{\LL_{\dir{G}}}^{+/2}}_2 \\
    &= 2\norm{\UU_{\LL_{\dir{G}}}^{+/2}\sum_{i,j}(\LL_{\dir{G}^{(i, j)}} - \LL_{\dir{\tilde{G}}^{(i,j)}})\UU_{\LL_{\dir{G}}}^{+/2}}_2 \\
    &\leq 2 \sum_{i,j} \norm{\UU_{\LL_{\dir{G}}}^{+/2}(\LL_{\dir{G}^{(i, j)}} - \LL_{\dir{\tilde{G}}^{(i,j)}})\UU_{\LL_{\dir{G}}}^{+/2}}_2 \\
    &\leq \tilde{O}(1) \cdot \Exp(2(\log n)^{\gamma})
\end{align*}
where we use that the sum is over $\tilde{O}(1)$ edge weight buckets $i$ and $\tilde{O}(1)$ expander decomposition layers $j$. We conclude 
\begin{align*}
    \UU_{\LL_{\dir{G}}}^{+/2}(\UU_{\LL_{\dir{R}}} - \UU_{\LL_{\dir{G}}})\UU_{\LL_{\dir{G}}}^{+/2} &\preceq \tilde{O}(1) \cdot \Exp(2(\log n)^{\gamma}) \II_{\im(\LL_{\dir{G}})} \\
    \UU_{\LL_{\dir{G}}}^{+/2}\UU_{\LL_{\dir{R}}}\UU_{\LL_{\dir{G}}}^{+/2} - \II_{\im(\LL_{\dir{G}})} &\preceq \tilde{O}(1) \cdot \Exp(2(\log n)^{\gamma}) \II_{\im(\LL_{\dir{G}})} \\
    \UU_{\LL_{\dir{R}}} &\preceq \tilde{O}(1) \cdot \Exp(2(\log n)^{\gamma}) \UU_{\LL_{\dir{G}}}.
\end{align*}
This concludes our proof. 
\end{proof}

\paragraph{By Sparsifying the Undirected Part.} Finally, we analyse the loss incurred by undirected sparsification. This part directly follows from the approximation guarantee of \Cref{lem:sparsify_undir}. 

\begin{lemma}
For $\tilde{G} = \specspardeg(\mathcal{U}(\dir{G}), \gamma)$ we have 
\begin{align*}
    \Exp(-4(\log n)^{\gamma}) \UU_{\LL_{\dir{G}_2}}  \preceq \UU_{\LL_{\dir{G}_3}}  \preceq \Exp(4(\log n)^{\gamma}) \UU_{\LL_{\dir{G}_2}} 
\end{align*}
\label{lem:undirspars_cond}
\end{lemma}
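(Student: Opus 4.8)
The plan is to write both symmetrized Laplacians over the same building blocks and then push the spectral guarantee of \Cref{lem:sparsify_undir} through the common additive term $\UU_{\LL_{\dir{R}}}$. Since $\dir{G}_2 = \beta\,\mathcal{U}(\dir{G}) + \dir{R}$ and $\dir{G}_3 = \tfrac{\beta}{\eta}\tilde{G} + \dir{R}$, and since $\UU_{\LL_{\dir{H}}} = \LL_{\mathcal{U}(\dir{H})}$ for Eulerian $\dir{H}$ while $\tilde{G}$ is already undirected, taking symmetrizations gives
\[
\UU_{\LL_{\dir{G}_2}} = \beta\,\LL_{\mathcal{U}(\dir{G})} + \UU_{\LL_{\dir{R}}}, \qquad \UU_{\LL_{\dir{G}_3}} = \tfrac{\beta}{\eta}\LL_{\tilde{G}} + \UU_{\LL_{\dir{R}}}.
\]

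Next I would invoke \Cref{lem:sparsify_undir} with $G = \mathcal{U}(\dir{G})$, giving $\Exp(-(\log n)^{\gamma})\LL_{\mathcal{U}(\dir{G})} \preceq \LL_{\tilde{G}} \preceq \Exp((\log n)^{\gamma})\LL_{\mathcal{U}(\dir{G})}$. Multiplying through by $\beta/\eta > 0$ and substituting $1/\eta = \Exp(3(\log n)^{\gamma})$ (from $\eta = \Exp(-3(\log n)^{\gamma})$) yields
\[
\beta\,\Exp(2(\log n)^{\gamma})\,\LL_{\mathcal{U}(\dir{G})} \preceq \tfrac{\beta}{\eta}\LL_{\tilde{G}} \preceq \beta\,\Exp(4(\log n)^{\gamma})\,\LL_{\mathcal{U}(\dir{G})}.
\]
Adding the common PSD matrix $\UU_{\LL_{\dir{R}}}$ to each side, using $\Exp(2(\log n)^{\gamma}) \geq 1$ on the left and $\UU_{\LL_{\dir{R}}} \preceq \Exp(4(\log n)^{\gamma})\UU_{\LL_{\dir{R}}}$ on the right, I obtain
\[
\UU_{\LL_{\dir{G}_2}} \preceq \UU_{\LL_{\dir{G}_3}} \preceq \Exp(4(\log n)^{\gamma})\,\UU_{\LL_{\dir{G}_2}}.
\]
Since $\Exp(-4(\log n)^{\gamma}) \leq 1$, the left inequality already implies $\Exp(-4(\log n)^{\gamma})\UU_{\LL_{\dir{G}_2}} \preceq \UU_{\LL_{\dir{G}_3}}$, which completes the claim.

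A minor point to verify is that all the Loewner comparisons are between Laplacians of connected undirected graphs with kernel exactly $\Span(\vecone)$: $\mathcal{U}(\dir{G})$ is connected because $\dir{G}$ is strongly connected, the sparsifier $\tilde{G}$ inherits $\ker(\LL_{\tilde{G}}) = \Span(\vecone)$ from the lower sandwich bound, and adding $\UU_{\LL_{\dir{R}}} \succeq \veczero$ does not enlarge the kernel, so the order is well defined throughout. There is no real obstacle here — the only care needed is the exponent bookkeeping when moving the factor $1/\eta = \Exp(3(\log n)^{\gamma})$ through the $\Exp(\pm(\log n)^{\gamma})$ sparsification guarantee, and observing that the shared term $\UU_{\LL_{\dir{R}}}$ can only help (absorbed into the scalar on the upper side, and simply dropped on the lower side). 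The lower bound we obtain is in fact slightly stronger than the one stated.
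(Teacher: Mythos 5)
Your proposal is correct and follows essentially the same route as the paper: decompose $\UU_{\LL_{\dir{G}_2}} = \beta\,\UU_{\LL_{\dir{G}}} + \UU_{\LL_{\dir{R}}}$ and $\UU_{\LL_{\dir{G}_3}} = \tfrac{\beta}{\eta}\LL_{\tilde{G}} + \UU_{\LL_{\dir{R}}}$, push the guarantee of \Cref{lem:sparsify_undir} through the scaling $1/\eta = \Exp(3(\log n)^{\gamma})$, and absorb the common term $\UU_{\LL_{\dir{R}}}$. Your version just makes the exponent bookkeeping (and the resulting slightly stronger lower bound $\UU_{\LL_{\dir{G}_2}} \preceq \UU_{\LL_{\dir{G}_3}}$) explicit where the paper leaves it implicit.
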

\begin{proof}
Recall that $\UU_{\LL_{\dir{G}_2}} = \beta \UU_{\LL_{\dir{G}}} + \UU_{\LL_{\dir{R}}}$ and $\UU_{\LL_{\dir{G}_3}} = \frac{\beta}{\eta} \LL_{\tilde{G}} + \UU_{\LL_{\dir{R}}}$. We show the following stronger statement
\begin{align*}
    \Exp(-4(\log n)^{\gamma}) \beta \UU_{\LL_{\dir{G}}} + \UU_{\LL_{\dir{R}}}  \preceq \frac{\beta}{\eta} \LL_{\tilde{G}} + \UU_{\LL_{\dir{R}}} \preceq \Exp(4(\log n)^{\gamma}) \beta \UU_{\LL_{\dir{G}}} + \UU_{\LL_{\dir{R}}}.
\end{align*}
After substracting $\UU_{\LL_{\dir{R}}}$ and dividing by $\beta$ we have  
\begin{align*}
       \Exp(-4(\log n)^{\gamma})  \UU_{\LL_{\dir{G}}} \preceq  \frac{1}{\eta} \LL_{\tilde{G}} \preceq  \Exp(4(\log n)^{\gamma}) \UU_{\LL_{\dir{G}}}
\end{align*}
which directly follows from \Cref{lem:sparsify_undir}.
\end{proof}

\subsection{Proof of \Cref{lem:globalspars}}

Now that we have assembled the pieces we are ready to prove \Cref{lem:globalspars}, the main lemma of this section. 

\begin{proof}[Proof of \Cref{lem:globalspars}]
We show each point in the enumeration seperately. 
\begin{enumerate}
    \item The statement for $i = 1$ is by \Cref{lem:partial_sym_approx} and $\beta = \tilde{O}(1) \cdot \Exp(2(\log n)^{\gamma})$, for $i = 2$ it is by \Cref{lem:global_patching} and for $i = 3$ it is by \Cref{lem:undir_spars_part}.
    \item The statement for $i = 1$ is by \Cref{obs:partial_sym_cond}, for $i = 2$ it is by \Cref{lem:dirspars_cond} and for $i = 3$ it is by \Cref{lem:undirspars_cond}. 

    \item Clearly, $\dir{G}_1$ has at most twice as many edges as $\dir{G} = \dir{G}_0$. Further, $|E(\dir{G}_2)| \leq |E(\dir{G}_1)| + |E(\dir{R})|$ and $|E(\dir{R})| = \tilde{O}(n)$ by \Cref{lem:global_patching}. Finally $|E(\dir{G}_3)| \leq |E(\tilde{G})| + |E(\dir{R})| = \tilde{O}(n)$ by \Cref{lem:undir_spars_part} and \Cref{lem:global_patching}.  
.    \item Adding $\beta$ times the undirectifaction to an Eulerian graph increases the in- and out- degrees by exactly a factor of $(1 + \beta)$. Then the statement follows from the degree preservation of our sparsification routines shown in \Cref{lem:global_patching} and \Cref{lem:undir_spars_part} and the extra scaling by $\frac{1}{\eta}$ of the undirected part of $\dir{G}_3$
\end{enumerate}
Finally, the runtime follows from \Cref{col:exp_dec_det}. 
\end{proof}

\section{Approximately Squaring Directed Laplacians}
\label{sec:squaring}
Given a directed Eulerian Laplacian $\LL_{\dir{G}} = \DD_{\dir{G}} - \AA^T_{\dir{G}}$ we call $\LL_{\dir{G}^2} = \DD_{\dir{G}} - \AA^T_{\dir{G}}\DD_{\dir{G}}^{-1}\AA^T_{\dir{G}}$ its square. The squaring operation not only preserves degrees, but also Eulerianness. In this section we introduce a deterministic algorithm for high accuracy sparsification of the square of an Eulerian Laplacian in roughly $\nnz(\LL_{\dir{G}})$ time. Since squaring can drastically increase the density, the run-time of this routine sometimes has to be sub-linear in the size of the graph it is approximating. Thus it is clear that we can only afford work with implicit representations of $\LL_{\dir{G}^2}$. Our routines are based on and inspired by \cite{kyng2015sparsified} and \cite{peng2021sparsified}. A similar sparisification routine is developed in \cite{akm20}. We first state the main lemma of this section. 

\begin{lemma}
For every strongly connected Eulerian Laplacian $\LL_{\dir{G}} = \DD_{\dir{G}} - \AA_{\dir{G}}^T$ the algorithm \sparsesquare($\dir{G}, \epsilon$) computes a strongly connected Eulerian graph $\tilde{\dir{G}}^2$ in time $O(\nnz(\LL_{\dir{G}})^{1 + o(1)}/\epsilon^4)$ so that $\LL_{\tilde{\dir{G}}^2} = \DD_{\tilde{\dir{G}}^2} - \AA^T_{\tilde{\dir{G}}^2}$ is an $\epsilon$-graph-approximation of $\LL_{\dir{G}^2} = \DD_{\dir{G}} - \AA^T_{\dir{G}}\DD_{\dir{G}}^{-1}\AA^T_{\dir{G}}$, $\nnz(\LL_{\dir{G}^2}) = O(\nnz(\LL_{\tilde{\dir{G}}})/\epsilon^4)$ and degrees are preserved. \label{lem:squaring}
\end{lemma}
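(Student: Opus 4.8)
The plan follows the outline in the overview. First I would split the square: write $\AA^T_{\dir{G}}\DD_{\dir{G}}^{-1}\AA^T_{\dir{G}} = \sum_{i=1}^n \AA_i^T$ with $\AA_i^T = \AA^T_{\dir{G}}\,\frac{\ee_i\ee_i^T}{\DD_{\dir{G}}(i,i)}\,\AA^T_{\dir{G}}$, so each $\AA_i = \frac{1}{d_i}\rr_i\cc_i^T$ is rank one, where $d_i = \DD_{\dir{G}}(i,i)$, $\rr_i = \AA_{\dir{G}}\ee_i$ (weights of the in-edges of $i$) and $\cc_i = \AA_{\dir{G}}^T\ee_i$ (weights of the out-edges of $i$); Eulerianness gives $\norm{\rr_i}_1 = \norm{\cc_i}_1 = d_i$. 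Set $\DD_i = \tfrac12\diag(\rr_i+\cc_i)$ and $\LL_i := \DD_i - \AA_i^T$; then $\sum_i\DD_i = \DD_{\dir{G}}$, so $\LL_{\dir{G}^2} = \sum_i\LL_i$ and $\UU_{\LL_{\dir{G}^2}} = \sum_i \UU_{\LL_i}$, and each $\UU_{\LL_i} = \DD_i - \tfrac12(\AA_i+\AA_i^T)$ is a genuine undirected Laplacian. For each $i$ let $B_i$ be the complete weighted bipartite graph on $[n]\sqcup[n]$ with edge $(a,b)$ of weight $\rr_i(a)\cc_i(b)/d_i$; its Laplacian is the $2n\times 2n$ block matrix $\LL_{B_i}$ from the overview, with off-diagonal block $-\AA_i$ and diagonal $(\diag(\rr_i);\diag(\cc_i))$. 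Writing $J=\binom{\II}{\II}$ for the fold, a one-line block computation gives $J^T\LL_{B_i}J = \diag(\rr_i+\cc_i) - (\AA_i+\AA_i^T) = 2\UU_{\LL_i}$, hence $\UU_{\LL_{\dir{G}^2}} = \tfrac12 J^T\big(\sum_i\LL_{B_i}\big)J$. Moreover, $B_i$ is the bipartite analogue of the graph $G(\dd)$ of \Cref{fact:exander} (degree sequence $(\rr_i;\cc_i)$, product weights), so a short conductance calculation shows $\Phi(B_i)\ge\Phi_0$ for an absolute constant $\Phi_0$, and \Cref{col:approximation_via_expander} yields $\tfrac{\Phi_0^2}{4}\big(\DD^{(i)}-\tfrac{\dd^{(i)}(\dd^{(i)})^T}{2d_i}\big) \preceq \LL_{B_i} \preceq \tfrac{4}{\Phi_0^2}\big(\DD^{(i)}-\tfrac{\dd^{(i)}(\dd^{(i)})^T}{2d_i}\big)$ with $\dd^{(i)}=(\rr_i;\cc_i)$, $\DD^{(i)}=\diag(\dd^{(i)})$.

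\textbf{Deterministic bipartite sparsification.} Next I would invoke the deterministic, degree-preserving sparsifier for bipartite product graphs (the routines \sparsebipartite and \patchbipartite, adapted from the sparsified-Cholesky recursion of \cite{kyng2015sparsified}): given only the implicit description $(\rr_i,\cc_i)$ of $B_i$ — never materializing its $d^{\mathrm{in}}_{\dir{G}}(i)\cdot d^{\mathrm{out}}_{\dir{G}}(i)$ edges — it outputs, in time $\tilde{O}\big((d^{\mathrm{in}}_{\dir{G}}(i)+d^{\mathrm{out}}_{\dir{G}}(i))/\epsilon^{4}\big)$, an undirected bipartite graph $\tilde{B}_i$ with exactly the same left and right degrees as $B_i$, with $O\big((d^{\mathrm{in}}_{\dir{G}}(i)+d^{\mathrm{out}}_{\dir{G}}(i))/\epsilon^{4}\big)$ edges, and with $(1-\epsilon)\LL_{B_i} \preceq \LL_{\tilde{B}_i} \preceq (1+\epsilon)\LL_{B_i}$. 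Let $\tilde\AA_i$ be the off-diagonal block of $\LL_{\tilde B_i}$ and $\tilde\LL_i := \DD_i - \tilde\AA_i^T$; degree preservation makes $\vecone$ a left and right null vector of $\tilde\LL_i-\LL_i = \AA_i^T-\tilde\AA_i^T$ and gives $J^T\LL_{\tilde B_i}J = 2\UU_{\tilde\LL_i}$. Define $\LL_{\tilde{\dir{G}}^2} := \sum_i\tilde\LL_i = \DD_{\dir{G}} - \sum_i\tilde\AA_i^T$. Since every $\tilde\LL_i$ has the same degrees as $\LL_i$, the graph $\tilde{\dir{G}}^2$ has the degrees of $\dir{G}^2$ and is Eulerian; it is strongly connected because (by the next step) it $\epsilon$-approximates the strongly connected $\LL_{\dir{G}^2}$, which pins down its kernel. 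Summing the per-$i$ edge and time bounds and using $\sum_i(d^{\mathrm{in}}_{\dir{G}}(i)+d^{\mathrm{out}}_{\dir{G}}(i)) = 2\,\nnz(\AA_{\dir{G}}) = O(\nnz(\LL_{\dir{G}}))$ gives $\nnz(\LL_{\tilde{\dir{G}}^2}) = O(\nnz(\LL_{\dir{G}})/\epsilon^{4})$ and total running time $O(\nnz(\LL_{\dir{G}})^{1+o(1)}/\epsilon^{4})$.

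\textbf{Translating the guarantee.} It remains to bound $\norm{\UU_{\LL_{\dir{G}^2}}^{+/2}(\LL_{\tilde{\dir{G}}^2}-\LL_{\dir{G}^2})\UU_{\LL_{\dir{G}^2}}^{+/2}}_2$. By \Cref{lem:b2} this equals $2\max_{\xx,\yy\neq 0}\big(\sum_i \xx^T(\AA_i^T-\tilde\AA_i^T)\yy\big)\big/\big(\xx^T\UU_{\LL_{\dir{G}^2}}\xx + \yy^T\UU_{\LL_{\dir{G}^2}}\yy\big)$, and the key identity $2\,\xx^T(\AA_i^T-\tilde\AA_i^T)\yy = \binom{\yy}{\xx}^T(\LL_{\tilde B_i}-\LL_{B_i})\binom{\yy}{\xx}$ lets me feed in the bipartite guarantee, which bounds the right side in absolute value by $\epsilon\,\binom{\yy}{\xx}^T\LL_{B_i}\binom{\yy}{\xx}$. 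Restricting (as one may in \Cref{lem:b2}) to $\xx,\yy$ orthogonal to $\ker\UU_{\LL_i}$ — which makes $\binom{\yy}{\xx}$, $\binom{\xx}{\xx}$ and $\binom{\yy}{\yy}$ orthogonal to $\ker\LL_{B_i}$ — the expansion bound of Step 1 reduces the comparison of the ``cross'' form $\binom{\yy}{\xx}^T\LL_{B_i}\binom{\yy}{\xx}$ against the ``diagonal'' forms $\binom{\xx}{\xx}^T\LL_{B_i}\binom{\xx}{\xx} + \binom{\yy}{\yy}^T\LL_{B_i}\binom{\yy}{\yy} = 2(\xx^T\UU_{\LL_i}\xx+\yy^T\UU_{\LL_i}\yy)$ to an elementary inequality about the rank-one-perturbed diagonal matrix $\DD^{(i)}-\dd^{(i)}(\dd^{(i)})^T/(2d_i)$, at the cost of only a factor $O(1/\Phi_0^2)$. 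Hence $|\xx^T(\AA_i^T-\tilde\AA_i^T)\yy| \le \tfrac{O(\epsilon)}{\Phi_0^2}\big(\xx^T\UU_{\LL_i}\xx+\yy^T\UU_{\LL_i}\yy\big)$. Crucially, summing over $i$ loses no factor of $n$: the denominators add, $\sum_i(\xx^T\UU_{\LL_i}\xx+\yy^T\UU_{\LL_i}\yy) = \xx^T\UU_{\LL_{\dir{G}^2}}\xx + \yy^T\UU_{\LL_{\dir{G}^2}}\yy$, so the ratio above is $O(\epsilon/\Phi_0^2)$ and \Cref{lem:b2} (used in the reverse direction) gives $\norm{\UU_{\LL_{\dir{G}^2}}^{+/2}(\LL_{\tilde{\dir{G}}^2}-\LL_{\dir{G}^2})\UU_{\LL_{\dir{G}^2}}^{+/2}}_2 = O(\epsilon/\Phi_0^2)$. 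The kernel hypotheses of \Cref{def:mat_approx} hold since $\vecone$ lies in the left and right kernels of $\LL_{\tilde{\dir{G}}^2}-\LL_{\dir{G}^2}$ and $\ker\UU_{\LL_{\dir{G}^2}} = \Span(\vecone)$. Finally I would run the sparsifier of Step 2 with accuracy $\Theta(\Phi_0^2\epsilon)$ — absorbed into the $\epsilon^{-4}$ in the size and time bounds — to obtain an $\epsilon$-graph-approximation, which finishes the proof.

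\textbf{Main obstacle.} The technical heart is Step 2: a \emph{deterministic}, degree-preserving, $(1\pm\epsilon)$-accurate sparsifier of the bipartite product graph $B_i$ running in time proportional to $d^{\mathrm{in}}_{\dir{G}}(i)+d^{\mathrm{out}}_{\dir{G}}(i)$ rather than $d^{\mathrm{in}}_{\dir{G}}(i)\cdot d^{\mathrm{out}}_{\dir{G}}(i)$. This forces an implicit, recursive construction in the style of \cite{kyng2015sparsified} (repeatedly Schur-complementing onto the smaller side and lifting the sparsifier back), and — since in the application $\epsilon$ is only subpolynomially small — each recursive level must itself be derandomized (via the deterministic expander-decomposition/sparsification primitives of \cite{chuzhoy2020deterministic}) while still preserving degrees; making all of this fit together is the crux. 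By contrast Step 3 is essentially bookkeeping with \Cref{lem:b2} and \Cref{col:approximation_via_expander}; the one subtlety there is that the cross-versus-diagonal quadratic-form comparison need only be proved on the complement of $\ker\UU_{\LL_i}$ — it fails on all of $\R^n$, e.g.\ at $\xx=-\yy=\vecone$ — which is precisely the regime that \Cref{lem:b2} sees.
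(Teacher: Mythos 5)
Your overall architecture matches the paper's: the same rank-one decomposition of the square, the same passage through the complete bipartite product graph $G(\aa,\bb)$, the same use of its constant expansion (\Cref{lem:bipart_prod_exp}, \Cref{col:approximation_via_expander}) together with \Cref{lem:b2} to transfer a $(1\pm\epsilon)$ undirected bipartite guarantee to an $O(\epsilon)$-graph-approximation of the directed product graph at only a constant-factor loss, and the same observation that summing over $i$ adds numerators and denominators so no factor of $n$ is lost. Steps 1 and 3 of your write-up are essentially the paper's \Cref{lem:product} and the proof of \Cref{lem:squaring}.

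The gap is in your Step 2, which you yourself flag as the crux but do not prove — and the route you sketch for it would not work. You posit a deterministic bipartite sparsifier with \emph{exactly} preserved degrees and propose to build it by a recursive Schur-complement construction whose levels are derandomized via the expander-decomposition machinery of \cite{chuzhoy2020deterministic}. Those primitives only deliver $n^{o(1)}$-accuracy sparsifiers, whereas here you need accuracy $(1\pm\epsilon)$ for $\epsilon$ that is subpolynomially small in the application; no recursive assembly of crude sparsifiers recovers that. The paper instead imports \textsc{WeightedBipartiteExpander} (\Cref{lem:g16}, Lemma G.16 of \cite{kyng2015sparsified}), which is \emph{already deterministic} — it is an explicit construction exploiting the complete-bipartite product structure — so no derandomization is needed for the high-accuracy step. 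What does remain to be done, and what you elide by asserting exact degree preservation of the black box, is that \Cref{lem:g16} only preserves degrees up to $(1\pm\epsilon')$: the paper rescales its output by $1/(1+\epsilon')$ and then greedily patches the degree deficits (\patchbipartite), and the nontrivial point is \Cref{lem:bip_patch_upper}, which bounds the patch graph by $128\epsilon'\LL_{G(\aa,\bb)}$ in the Loewner order using the $1/2$-expansion of $G(\aa,\bb)$ — the same expansion fact you use in Step 3, but applied once more to control the patch. Without that patching argument your claimed primitive does not exist as stated, and without exact degree preservation the diagonal blocks of $\LL_{\tilde B_i}-\LL_{B_i}$ no longer cancel, which breaks both the Eulerianness of $\tilde{\dir{G}}^2$ and the fold identity your Step 3 relies on.
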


As evident from the pseudocode of \sparsesquare() (Algorithm \ref{alg:sparsesquare}) our approach to implicitly sparsifying $\LL_{\dir{G}^2}$ breaks down its adjacency matrix into rank one contributions
\begin{align*}
    \AA^T_{\dir{G}}\DD_{\dir{G}}^{-1}\AA^T_{\dir{G}} = \sum_{i = 1}^n \underbrace{\DD_{\dir{G}}^{-1}(i,i) (\AA_{\dir{G}}(i,:))^T \cdot (\AA_{\dir{G}}(:,i))^T}_{=: \AA_{\dir{G}_i}}
\end{align*}
as commonplace in randomized squaring sparsification algorithms \cite{peng2021sparsified}. Then \sparseproduct() (Algorithm \ref{alg:sparseproduct}) computes $\AA_{\dir{H}_i} $, a suitable sparse approximation of $\AA_{\dir{G}_i}$. The full adjacency matrix is obtained by simply summing up these contributions. Note that \sparseproduct() will preserve in- and out-degrees exactly, ensuring that the resulting approximation preserves degrees, which in turn ensures $\LL_{\dir{G}^2}$ being Eulerian. 

\begin{algorithm}
\caption{\sparsesquare($\dir{G}, \epsilon$)}
\label{alg:sparsesquare}
\For{$i = 1, ..., n$}{
    $\AA_{\dir{H}_i} = \sparseproduct(\AA_{\dir{G}}(i,:), \AA_{\dir{G}}(:,i), \epsilon)$ \\
}
$\AA_{\tilde{\dir{G}}^2} = \sum_{i = 1}^n \AA_{\dir{H}_i}$\\
\Return{$\tilde{\dir{G}}^2$}

\end{algorithm}

\subsection{Deterministically Sparsifying Bipartite Product Graphs}

High accuracy spectral sparsifiers for undirected bipartite product graphs are known to the literature \cite{kyng2015sparsified}. We will first adapt these to our setting by showing that a simple greedy patching scheme can be employed to ensure the conservation of degrees. Then we will directly use sparsified bipartite product graphs to build \sparseproduct() (Algorithm \ref{alg:sparseproduct}), our routine for sparsifying $\LL_{\dir{G}_i}$. We start by defining the central object of this subsection. 
\begin{definition}
For $\aa \in \R^n_{\geq 0}$ and $\bb \in \R^n_{\geq 0}$ positive vectors so that $\norm{\aa}_1 = \norm{\bb}_1 = d$, let 
\begin{align*}
    \LL_{G(\aa, \bb)} := 
    \begin{pmatrix} 
    \diag(\aa) & \veczero \\
    \veczero & \diag(\bb)
    \end{pmatrix}
    - 
    \begin{pmatrix} 
    \veczero &  \frac{ \bb \aa^T}{d} \\
    \frac{\aa \bb^T}{d} & \veczero
    \end{pmatrix}
\end{align*}
be the Laplacian of the undirected bipartite product graph $G(\aa, \bb)$ of $\aa$ and $\bb$.
\label{def:biprod}
\end{definition}

\begin{definition}
We call an undirected graph Laplacian 
\begin{align*}
    \LL_{\tilde{G}(\aa, \bb)} = \begin{pmatrix} 
    \diag(\aa) & \veczero \\
    \veczero & \diag(\bb)
    \end{pmatrix}
    - 
    \begin{pmatrix} 
    \veczero &  \CC \\
    \CC^T & \veczero
    \end{pmatrix}
\end{align*}
with $\CC \in \R^{n \times n}_{\geq 0}$ an $\epsilon$-bipartite-sparsifier of $\LL_{G(\aa, \bb)}$ if 
\begin{align*}
    (1 - \epsilon) \LL_{\tilde{G}(\aa, \bb)} \preceq \LL_{G(\aa, \bb)} \preceq (1 + \epsilon) \LL_{\tilde{G}(\aa, \bb)}
\end{align*}
and $\nnz(\LL_{\tilde{G}(\aa, \bb)}) = O((\nnz(\aa) + \nnz(\bb))/\epsilon^4)$.
\label{def:high_accuracy}
\end{definition}

Next we introduce the previously known spectral sparsification lemma we will use. 

\begin{lemma}[Lemma G.16 in \cite{kyng2015sparsified}]
There is a routine $\textsc{WeightedBipartiteExpander}(\aa, \bb, \epsilon)$ such that for any vectors $\aa, \bb \in \R^n_{\geq 0}$ and a parameter $\epsilon \in (0, 1/2]$, it returns in time $O((\nnz(\aa) + \nnz(\bb)) \epsilon^{-4})$ a bipartite graph $\tilde{G}(\aa, \bb)$ on the same bipartition as $G(\aa, \bb)$ with $O((\nnz(\aa) + \nnz(\bb))\epsilon^{-4})$ edges such that 
\begin{align*}
    (1 - \epsilon)\LL_{\tilde{G}(\aa, \bb)} \preceq \LL_{G(\aa,\bb)} \preceq (1 + \epsilon)\LL_{\tilde{G}(\aa, \bb)}
\end{align*}
\label{lem:g16}
\end{lemma}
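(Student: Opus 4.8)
Since Lemma~\ref{lem:g16} is quoted verbatim from \cite{kyng2015sparsified}, I sketch how I would establish it. The object $G(\aa,\bb)$ is the weighted complete bipartite graph on a left part and a right part in which the edge joining left vertex $i$ to right vertex $j$ has weight $\aa_i\bb_j/d$, so its left degrees are exactly $\aa$, its right degrees exactly $\bb$, and $\vol = 2d$. The first step is to record that $G(\aa,\bb)$ is, after normalizing by degrees, extremely well conditioned: conjugating $\LL_{G(\aa,\bb)}$ by $\DD^{+/2}$ with $\DD=\diag(\aa;\bb)$ turns the off-diagonal block into $\tfrac1d\sqrt{\aa}\,\sqrt{\bb}^{\,T}$ (entrywise square roots), which has rank one with nonzero singular value $\tfrac1d\|\sqrt{\aa}\|_2\|\sqrt{\bb}\|_2 = \tfrac1d\sqrt{d}\sqrt{d}=1$. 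Hence the normalized adjacency matrix has nonzero eigenvalues exactly $\pm1$, the normalized Laplacian has spectrum $\{0,\,1^{(n-2)},\,2\}$, and $G(\aa,\bb)$ is a constant-conductance expander whose kernel is exactly $\vecone$ — this is the ``bipartite product graphs are constant expanders'' fact used throughout the overview.

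\textbf{Construction.} Given this, the plan is to build a \emph{sparse} bipartite graph $\tilde G(\aa,\bb)$ with the \emph{same} left and right degrees $\aa,\bb$ whose normalized Laplacian again has all of its non-kernel spectrum inside $[1-\epsilon,1+\epsilon]$ except for the single eigenvalue near $2$ forced by bipartiteness. One reduces to near-regular pieces by bucketing: partition the left vertices according to $\aa_i\in[(1+\epsilon)^t,(1+\epsilon)^{t+1})$ and the right vertices according to $\bb_j$, so that inside a bucket the degrees are uniform up to a $(1\pm\epsilon)$ factor; rounding degrees to their bucket value costs only a $(1\pm\epsilon)$ factor in the Loewner order. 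Inside a pair of buckets one overlays a $k$-regular bipartite graph with $k=\mathrm{poly}(1/\epsilon)$, taken from an explicit expander family (or, for a randomized proof, a union of $k$ random perfect matchings, which is an expander with high probability), and scales its edge weights so the bucket degrees come out exactly right. A $k$-regular bipartite expander has normalized adjacency eigenvalues $\pm1$ and all others $O(1/\sqrt k)$, so $k=\Theta(\epsilon^{-2})$ makes each piece a $(1\pm\epsilon)$-approximation of the corresponding product piece; summing the pieces using the principle (Lemma~\ref{lem:expand_approx}, Corollary~\ref{col:approximation_via_expander}) that graphs with identical degree sequences and good expansion spectrally approximate one another, and propagating the bucketing error and the $\pm2$-eigenvalue carefully, yields $(1-\epsilon)\LL_{\tilde G(\aa,\bb)}\preceq\LL_{G(\aa,\bb)}\preceq(1+\epsilon)\LL_{\tilde G(\aa,\bb)}$. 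The edge count is $k$ per vertex per relevant bucket and the running time is dominated by writing down those edges; bounding the number of nonempty bucket pairs by $O(\nnz(\aa)+\nnz(\bb))$ and absorbing the $O(\epsilon^{-1}\log(\omega^{\max}/\omega^{\min}))$ buckets and the $\epsilon^{-2}$ regularity into the stated exponent gives the claimed $O((\nnz(\aa)+\nnz(\bb))\epsilon^{-4})$ bounds on both size and time.

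\textbf{Main obstacle.} The hard part is the $(1\pm\epsilon)$, rather than merely $O(1)$, approximation: the clean ``same-degree expanders approximate each other'' statement only yields a constant-factor Loewner bound, so a relative-$\epsilon$ guarantee forces the two refinements above — bucketing so each piece is genuinely near-regular, and choosing an expander whose spectral gap is tuned to $\epsilon$ — while keeping the construction (i) explicit and fast enough to run in time \emph{linear} in $\nnz(\aa)+\nnz(\bb)$, which rules out materializing the up-to-$\nnz(\aa)\cdot\nnz(\bb)$ edges of $G(\aa,\bb)$ and forbids a $\log n$ overhead from generic leverage-score sampling, and (ii) careful about the eigenvalue near $2$, which must be preserved \emph{exactly} and hence requires $\tilde G(\aa,\bb)$ to stay connected and properly bipartite on the same bipartition. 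A secondary nuisance is pinning down the precise exponent $\epsilon^{-4}$, which is just bookkeeping over the logarithmically many buckets and the $\epsilon^{-2}$ regularity parameter.
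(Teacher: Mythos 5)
This statement is not proved in the paper at all: it is imported verbatim as Lemma~G.16 of \cite{kyng2015sparsified} and used as a black box, with the surrounding results (\Cref{lem:runtime_deg}, \Cref{lem:bip_patch_upper}, \Cref{lem:full_bipartite_approx}) layering a degree-preserving patch on top of it. So there is no in-paper proof to compare against, and I can only assess your sketch against the cited construction.

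Your sketch has the right skeleton --- the rank-one computation showing the normalized adjacency block is $\tfrac1d\sqrt{\bb}\sqrt{\aa}^{\,T}$ with singular value $1$, the spectrum $\{0,1^{(2n-2)},2\}$, and the plan of substituting explicit bipartite expanders of degree $\Theta(\epsilon^{-2})$ after reducing to near-uniform weights --- and this is indeed the family of argument used in \cite{kyng2015sparsified}. But the accounting step is broken. If you bucket the left and right vertices separately by weight, the product graph places weight $\bigl(\sum_{i\in s}\aa_i\bigr)\bigl(\sum_{j\in t}\bb_j\bigr)/d$ between \emph{every} pair $(s,t)$ of nonempty buckets, and all of these cross terms must be represented to obtain a $(1\pm\epsilon)$ Loewner bound. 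Your claim that the number of relevant bucket pairs is $O(\nnz(\aa)+\nnz(\bb))$ is unjustified: each left vertex must receive edges into every nonempty right bucket, so its degree is the expander degree times the number of right weight classes, which scales with $\log(\omega^{\max}/\omega^{\min})$ --- a quantity the lemma does not assume is bounded --- rather than being absorbed into $\epsilon^{-4}$. Moreover, a bucket pair is generally biregular with unequal side sizes and unequal total weights, so a $k$-regular bipartite expander does not directly fit. The cited construction sidesteps both issues by splitting each vertex into near-uniform weight units and overlaying a \emph{single} constant-degree bipartite expander on the full unit set before collapsing; that is where the clean $O((\nnz(\aa)+\nnz(\bb))\epsilon^{-4})$ bound on edges and time comes from. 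Finally, your sketch aims for exact preservation of the degrees $\aa,\bb$, which is more than the quoted lemma promises and more than it delivers: the routine \patchbipartite{} and \Cref{lem:bip_patch_upper} exist precisely because the sparsifier of \Cref{lem:g16} only preserves degrees up to a $(1\pm\epsilon)$ factor.
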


The routine $\sparsebipartite()$ (Algorithm \ref{alg:sparsebipartite}) combines \textsc{WeightedBipartiteExpander}() from \Cref{lem:g16} with the greedy patching procedure \patchbipartite() (Algorithm \ref{alg:sparsebipartite}) to construct an $\epsilon$-bipartite-sparsifier. We first show the desired runtime and preservation of degrees. 

\begin{algorithm}
\caption{\sparsebipartite($\aa, \bb, \epsilon$) and \patchbipartite($\aa, \bb, \dd^A, \dd^B$)}
\label{alg:sparsebipartite}
\SetKwFunction{algo}{\sparsebipartite}\SetKwFunction{proc}{\patchbipartite}
\SetKwProg{myalg}{Algorithm}{}{}
\SetKwProg{myproc}{Procedure}{}{}
\myalg{\algo{$\aa, \bb, \epsilon$}}{
$\epsilon' = \epsilon/128$\\
$H_1 = \textsc{WeightedBipartiteExpander}(\aa, \bb, \epsilon')$\\
$H_2 = H_1/(1 + \epsilon')$ \\
$\dd = \diag(\DD_{H_2})$ \\
$\LL_R = \patchbipartite(\aa, \bb, \dd(1:n), \dd((n+1):2n))$\\
\Return{$\LL_{H_2} + \LL_R$}
}

\myproc{\proc{$\aa, \bb, \dd^A, \dd^B$}}{
$\tilde{\aa} = \aa - \dd^A$\\
$\tilde{\bb} = \bb - \dd^B$\\
$\LL_R = \veczero^{2n \times 2n}$\\
\While{$\tilde{\aa} \neq \veczero$}{
Let $i$ and $j$ be arbitrary such that $\tilde{\aa}(i) > 0$ and $\tilde{\bb}(j) > 0$. \\
$w = \min\{\tilde{\aa}(i), \tilde{\bb}(j)\}$\\
$\tilde{\aa}(i) = \tilde{\aa}(i) - w$\\
$\tilde{\bb}(j) = \tilde{\bb}(j) - w$\\
$\LL_R = \LL_R + w(\ee_{i} - \ee_{j + n})(\ee_{i} - \ee_{j + n})^T$
}
\Return{$\LL_R$}
}
\end{algorithm}

\begin{lemma}
The routine $\LL_{\tilde{G}(\aa, \bb)} = \sparsebipartite(\aa, \bb, \epsilon)$ with $\aa, \bb \in R^n_{\geq 0}$ so that $\norm{\aa}_1 = \norm{\bb}_1$, and $\epsilon \leq 1/2$ terminates in time $O((nnz(\aa) + \nnz(\bb))/\epsilon^4)$ and ensures $\DD_{\tilde{G}(\aa, \bb)} = \DD_{G(\aa, \bb)}$.
\label{lem:runtime_deg}
\end{lemma}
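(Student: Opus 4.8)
The plan is to handle the two assertions separately --- the $O((\nnz(\aa)+\nnz(\bb))/\epsilon^4)$ running time, and the degree identity $\DD_{\tilde{G}(\aa,\bb)} = \DD_{G(\aa,\bb)}$ --- with the latter reducing to showing that the call to \patchbipartite{} is well-posed and that the total mass it routes through each vertex is exactly the degree deficit left by $\LL_{H_2}$.

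For the running time I would chain the three phases of \Cref{alg:sparsebipartite}. By \Cref{lem:g16}, since $\epsilon' = \epsilon/128 \le 1/2$, the call $\textsc{WeightedBipartiteExpander}(\aa,\bb,\epsilon')$ runs in $O((\nnz(\aa)+\nnz(\bb))\epsilon'^{-4}) = O((\nnz(\aa)+\nnz(\bb))\epsilon^{-4})$ time and $H_1$ has $O((\nnz(\aa)+\nnz(\bb))\epsilon^{-4})$ edges; forming $H_2 = H_1/(1+\epsilon')$ and reading off $\dd = \diag(\DD_{H_2})$ costs $O(|E(H_1)|)$. For \patchbipartite{}, the key observation is that each iteration of the while-loop zeroes at least one currently-positive entry of $\tilde{\aa}$ or $\tilde{\bb}$ (whichever attains the minimum $w$), so it runs for at most $\nnz(\tilde{\aa}) + \nnz(\tilde{\bb})$ iterations; since $\mathrm{supp}(\tilde{\aa}) \subseteq \mathrm{supp}(\aa) \cup \mathrm{supp}(\dd(1{:}n))$ and $|\mathrm{supp}(\dd)| \le 2|E(H_2)| = O((\nnz(\aa)+\nnz(\bb))\epsilon^{-4})$, this is $O((\nnz(\aa)+\nnz(\bb))\epsilon^{-4})$ iterations. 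Keeping a list (with a moving pointer) of the positive coordinates of $\tilde{\aa}$ and of $\tilde{\bb}$, and working throughout with sparse representations, makes each iteration cost $O(1)$ amortized, so the whole routine --- hence all of \sparsebipartite{} --- runs in $O((\nnz(\aa)+\nnz(\bb))\epsilon^{-4})$ time.

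For the degree identity, the first and most delicate point is that $\DD_{H_2} \preceq \DD_{G(\aa,\bb)}$ entrywise --- equivalently $\deg_{H_2}(v) \le \aa(v)$ for each left vertex $v$ and $\deg_{H_2}(v) \le \bb(v)$ for each right vertex $v$; this is precisely the reason for the rescaling by $1/(1+\epsilon')$. I would obtain a one-sided degree bound by evaluating the spectral guarantee of \Cref{lem:g16} at the indicator vectors $\ee_v$ (using that $G(\aa,\bb)$ has left-degrees exactly $\aa$ and right-degrees exactly $\bb$), which bounds the degrees of $H_1$, hence of $H_2 = H_1/(1+\epsilon')$, by those of $G(\aa,\bb)$; here I would double-check the precise degree guarantee that $\textsc{WeightedBipartiteExpander}$ delivers, and if the two-sided spectral sandwich alone leaves a $(1-\epsilon'^2)^{-1}$ overshoot, note that the slightly smaller rescaling factor $1-\epsilon'$ removes it. Granting $\DD_{H_2}\preceq\DD_{G(\aa,\bb)}$, we get $\tilde{\aa} = \aa - \dd(1{:}n) \ge \veczero$ and $\tilde{\bb} = \bb - \dd((n{+}1){:}2n) \ge \veczero$; moreover $H_2$ is bipartite, so its total edge weight equals both $\norm{\dd(1{:}n)}_1$ and $\norm{\dd((n{+}1){:}2n)}_1$, and combined with the hypothesis $\norm{\aa}_1 = \norm{\bb}_1$ this gives $\norm{\tilde{\aa}}_1 = \norm{\tilde{\bb}}_1$.

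Finally I would run the loop invariant for \patchbipartite{}: every iteration subtracts equal mass $w$ from one coordinate of $\tilde{\aa}$ and one of $\tilde{\bb}$ without making either negative, so $\tilde{\aa},\tilde{\bb} \ge \veczero$ and $\norm{\tilde{\aa}}_1 = \norm{\tilde{\bb}}_1$ are maintained; together with the iteration bound above this forces $\tilde{\aa} = \tilde{\bb} = \veczero$ at termination (whenever $\norm{\tilde{\aa}}_1 > 0$ there exist $i,j$ with $\tilde{\aa}(i),\tilde{\bb}(j) > 0$, so the loop has not stopped). Consequently the total mass routed out of left vertex $i$ equals its initial value $\aa(i) - \dd(i)$, which is exactly the $i$-th diagonal entry of $\LL_R = \sum_k w_k(\ee_{i_k} - \ee_{j_k+n})(\ee_{i_k} - \ee_{j_k+n})^T$, and symmetrically the $(j{+}n)$-th diagonal entry of $\LL_R$ is $\bb(j) - \dd(j{+}n)$. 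Hence the diagonal of $\LL_{\tilde{G}(\aa,\bb)} = \LL_{H_2} + \LL_R$ equals $\dd + (\tilde{\aa};\tilde{\bb}) = (\aa;\bb)$, i.e.\ $\DD_{\tilde{G}(\aa,\bb)} = \DD_{G(\aa,\bb)}$. The main obstacle is the degree-domination step $\DD_{H_2}\preceq\DD_{G(\aa,\bb)}$: everything else is bookkeeping, but that step is what the $1/(1+\epsilon')$ rescaling exists to guarantee, and it hinges on extracting the right (possibly one-sided) degree bound from \Cref{lem:g16}.
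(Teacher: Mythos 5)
Your proposal is correct and follows essentially the same route as the paper: bound the runtime via \Cref{lem:g16} plus the observation that each iteration of \patchbipartite{} zeroes a coordinate of $\tilde{\aa}$ or $\tilde{\bb}$, establish $\DD_{H_2} \preceq \DD_{G(\aa,\bb)}$ by testing the spectral sandwich against indicator vectors, and then check that the patching loop exactly fills the degree deficit using the invariant $\norm{\tilde{\aa}}_1 = \norm{\tilde{\bb}}_1$. Your caution about the orientation of the sandwich in \Cref{lem:g16} is warranted — read literally it gives $\deg_{H_1}(v) \le \deg_{G(\aa,\bb)}(v)/(1-\epsilon')$, so the $1/(1+\epsilon')$ rescaling leaves a $(1-\epsilon'^2)^{-1}$ overshoot that the paper glosses over; as you note, this is a harmless normalization fix (rescale by $1-\epsilon'$ instead) that does not affect the rest of the argument.
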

\begin{proof}
Since for any graph $G$ we have $\ee_v^T \LL_G \ee_v = \deg_G(v)$ for all $v$ it is easy to see that the spectral approximation of \textsc{WeightedBipartiteExpander}() from \Cref{lem:g16} preserves degrees in the sense that for all $v$
\begin{align*}
    (1 - \epsilon') \deg_{G(\aa, \bb)}(v) \leq \deg_{H_1} \leq (1 + \epsilon') \deg_{G(\aa, \bb)}.
\end{align*}
From this it is immediate that $\deg_{H_2}(v) \leq \deg_{G(\aa, \bb)}(v)$. Then the routine \patchbipartite() iteratively computes the Laplacian of a bipartite patching graph so that $\deg_R(v) = \deg_{G(\aa, \bb)}(v) - \deg_{H_2}(v)$ for all $v$. It is clear that both $H_2$ and $G(\aa, \bb)$ being bipartite ensures the invariant that $\norm{\tilde{\aa}}_1 = \norm{\tilde{\bb}}_1$ throughout the execution of \patchbipartite() and $\tilde{\aa} \neq \veczero \implies \tilde{\bb} \neq \veczero$. Since each edge $(u,v)$ added to $R$ either ensures $\tilde{\aa}(u) = 0$ or $\tilde{\bb}(v) = 0$ the routine terminates after at most $\nnz(\aa) + \nnz(\bb)$ edges are added since $\nnz(\tilde{\aa}) + \nnz(\tilde{\bb}) \leq \nnz(\aa) + \nnz(\bb)$ initially. The result follows from these observations and \Cref{lem:g16}.
\end{proof}

To conclude this subsection we would like to use \Cref{lem:g16} to show that $\sparsebipartite()$ (Algorithm \ref{alg:sparsebipartite}) computes an $\epsilon$-bipartite-sparsifier of $\LL_{G(\aa, \bb)}$. The remaining obstruction is the quantification of the error introduced by the \patchbipartite() (Algorithm \ref{alg:sparsebipartite}). Our analysis of this error relies on the high expansion of $G(\aa, \bb)$.

\begin{figure}[h]
    \centering
    \includegraphics[width=8cm]{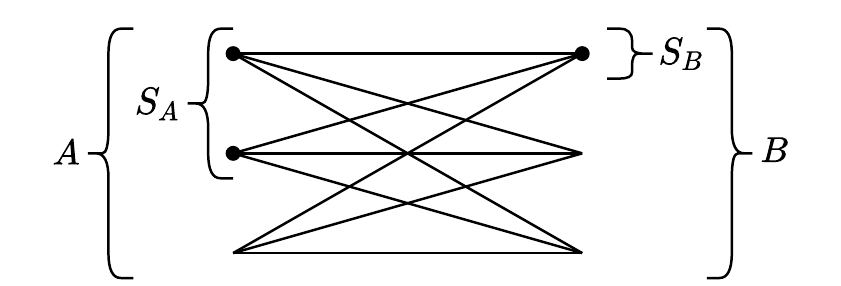}
    \caption{This figure depicts a bipartite graph $G(\aa, \bb)$, where the set $S$ as defined in \Cref{lem:bipart_prod_exp} consits of the vertices highlighted with black dots. Then the set is split into parts $S_A$ and $S_B$ as indicated.}
    \label{fig:bipartite_defs}
\end{figure}

\begin{lemma}
$G(\aa, \bb)$ is a $1/2$-expander.
\label{lem:bipart_prod_exp}
\end{lemma}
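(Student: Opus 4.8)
The plan is to verify the conductance lower bound directly from the definition, using that $G(\aa,\bb)$ is a weighted \emph{complete} bipartite graph so that the weight of any cut factorizes over the two sides and the whole claim collapses to a single application of AM--GM. Write the vertex set as $A\cup B$ with $A$ the $\aa$-side and $B$ the $\bb$-side. Reading the diagonal blocks $\diag(\aa),\diag(\bb)$ of $\LL_{G(\aa,\bb)}$ off \Cref{def:biprod}, the $i$-th vertex of $A$ has degree $\aa(i)$, the $j$-th vertex of $B$ has degree $\bb(j)$, and the edge joining them has weight $\aa(i)\bb(j)/d$; hence $\vol_G(A)=\vol_G(B)=\norm{\aa}_1=\norm{\bb}_1=d$ and $\vol_G(V)=2d$. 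I may assume $\aa,\bb$ are entrywise positive: a coordinate equal to $0$ is an isolated vertex and deleting it changes no conductance of a nonempty proper subset of positive volume.

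Fix $\emptyset\subset S\subset V$ and, following \Cref{fig:bipartite_defs}, set $S_A=S\cap A$, $S_B=S\cap B$, $\alpha=\vol_G(S_A)=\sum_{i\in S_A}\aa(i)$ and $\beta=\vol_G(S_B)=\sum_{j\in S_B}\bb(j)$, so that $\vol_G(S)=\alpha+\beta$ and $\vol_G(V\setminus S)=2d-\alpha-\beta$. Since $\Phi_G(S)=\Phi_G(V\setminus S)$, I would assume without loss of generality that $\alpha+\beta\le d$, so the denominator in $\Phi_G(S)$ is exactly $\alpha+\beta$. An edge of $G(\aa,\bb)$ runs from some $i\in A$ to some $j\in B$ and is cut precisely when exactly one of its endpoints lies in $S$; summing the two crossing contributions ($i\in S_A,\ j\notin S_B$ and $i\notin S_A,\ j\in S_B$) and using $\norm{\aa}_1=\norm{\bb}_1=d$ gives the exact identity
\begin{align*}
\delta_G(S)=\frac{\alpha(d-\beta)+(d-\alpha)\beta}{d}=(\alpha+\beta)-\frac{2\alpha\beta}{d}.
\end{align*}

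It then remains to show $\delta_G(S)\ge\tfrac12(\alpha+\beta)$, i.e.\ $\tfrac12(\alpha+\beta)\ge\tfrac{2\alpha\beta}{d}$, i.e.\ $d(\alpha+\beta)\ge 4\alpha\beta$; this is immediate from $4\alpha\beta\le(\alpha+\beta)^2\le d(\alpha+\beta)$, where the first step is AM--GM and the second uses $\alpha+\beta\le d$. Hence $\Phi_G(S)\ge\tfrac12$ for every nonempty proper $S$, which is the claim. I do not expect a genuine obstacle here: the only points needing a sentence of care are the symmetry reduction to $\alpha+\beta\le d$, the harmless removal of zero-weight (isolated) vertices, and the bookkeeping that pins down the cut weight as exactly $(\alpha+\beta)-2\alpha\beta/d$.
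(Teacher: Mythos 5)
Your proof is correct and is essentially identical to the paper's: both compute the cut weight and volume in terms of the side-masses $\alpha=t^A_{S_A}$, $\beta=t^B_{S_B}$, reduce the conductance bound to $d(\alpha+\beta)\ge 4\alpha\beta$, and close with AM--GM together with the volume assumption $\alpha+\beta\le d$. The only (harmless) addition is your explicit handling of zero-weight coordinates, which the paper leaves implicit.
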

\begin{proof}
Consider any cut $\emptyset \subset S \subset V$. Without loss of generality, let $\vol_{G(\aa, \bb)}(S) \leq \vol_{G(\aa, \bb)}(V \setminus S)$. We distinguish between the two parts $A$ and $B$ of the bipartite graph and let $S_A \subseteq A$ and $S_B \subseteq B$ so that $S_A \cup S_B = S$. We first introduce some terminology that will prove useful:
\begin{align*}
    d := \norm{\aa}_1 (= \norm{\bb}_1) \hspace{20pt} t^A_{S_A} := \sum_{v \in S_A} \aa(v) \hspace{20pt} t^B_{S_B} := \sum_{v \in S_B} \bb(v).
\end{align*}
See \Cref{fig:bipartite_defs} for a drawing of the situation. Then we have 
\begin{align*}
    \delta_{G(\aa,\bb)}(S) &= \sum_{u \in S_A} \sum_{v \in B \setminus S_B} \frac{\aa(u)\bb(v)}{\norm{\aa}_1} + \sum_{v \in S_B} \sum_{u \in A \setminus S_a} \frac{\aa(u)\bb(v)}{\norm{\aa}_1} \\
    &= \frac{1}{d}(t^A_{S_A}(d - t^B_{S_B}) + t^B_{S_B}(d - t^A_{S_A}))
\end{align*}
and 
\begin{align*}
    \vol_{G(\aa,\bb)}(S) &= \sum_{u \in S_A} \sum_{v \in B}  \frac{\aa(u)\bb(v)}{\norm{\aa}_1} + \sum_{v \in S_B} \sum_{u \in A} \frac{\aa(u)\bb(v)}{\norm{\aa}_1} \\
    &= t^A_{S_A} + t^B_{S_B}.
\end{align*}
Finally we show
\begin{align*}
    \frac{\delta_{G(\aa,\bb)}(S)}{\vol_{G(\aa,\bb)}(S)} \geq 1/2 
    \iff (t^A_{S_A}(d - t^B_{S_B}) + t^B_{S_B}(d - t^A_{S_A})) & \geq \frac{d}{2} (t^A_{S_A} + t^B_{S_B}) \\
    \iff \frac{d}{2} (t^A_{S_A} + t^B_{S_B}) &\geq 2 t^B_{S_B} t^A_{S_A}.
\end{align*} 
Using $\vol_{G(\aa,\bb)}(S) \leq \vol_{G(\aa,\bb)}(V \setminus S)$ and $d = \vol_{G(\aa,\bb)}(V)$ we have $d \geq t^A_{S_A} + t^B_{S_B}$ and thus
\begin{align*}
    d(t^A_{S_A} + t^B_{S_B}) \geq (t^A_{S_A})^2 + (t^A_{S_A})^2 + 2 t^A_{S_A}t^B_{S_B} \geq 4 t^A_{S_A}t^B_{S_B}
\end{align*}
by since for any real numbers $a$ and $b$ we have $a^2 + b^2 \geq 2ab$. 
\end{proof}

Using the expansion of $\LL_{G(\aa, \bb)}$, we give a spectral upper bound on the error by patching next. 

\begin{lemma}
In the context of \sparsebipartite($\aa, \bb, \epsilon$) with $\aa, \bb \in \R^n_{\geq 0}$, $\norm{\aa}_1 = \norm{\bb}_1$, we have $\LL_R \preceq 128 \epsilon' \LL_{G(\aa, \bb)}$.
\label{lem:bip_patch_upper}
\end{lemma}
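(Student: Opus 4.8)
The plan is to push the bound through the diagonal degree matrix $\DD_R$ of the patching graph $R$ produced by \patchbipartite{}, in three reductions: (i) $\LL_R \preceq 2\DD_R$ for any graph; (ii) $\DD_R$ is only an $O(\epsilon')$-fraction of $\DD_{G(\aa,\bb)}$ because patching merely repairs the tiny degree deficit left by the spectral sparsifier; and (iii) $\DD_{G(\aa,\bb)}$ is controlled by $\LL_{G(\aa,\bb)}$ up to a rank-one term because $G(\aa,\bb)$ is an expander (\Cref{lem:bipart_prod_exp}), and that rank-one term can be removed by exploiting the common kernel of $\LL_R$ and $\LL_{G(\aa,\bb)}$.

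\textbf{Steps (i) and (ii).} For step (i), $\xx^T\LL_R\xx=\sum_{(u,v)\in E(R)}\omega(u,v)(x_u-x_v)^2\le 2\sum_{(u,v)}\omega(u,v)(x_u^2+x_v^2)=2\xx^T\DD_R\xx$, so $\LL_R\preceq 2\DD_R$. For step (ii), recall from the construction of \patchbipartite{} (as used in \Cref{lem:runtime_deg}) that $R$ is bipartite with $\DD_R = \DD_{G(\aa,\bb)}-\DD_{H_2}\succeq \veczero$, where $H_2 = H_1/(1+\epsilon')$ and $H_1=\textsc{WeightedBipartiteExpander}(\aa,\bb,\epsilon')$. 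Plugging $\ee_v$ into the spectral sandwich of \Cref{lem:g16} gives $\deg_{G(\aa,\bb)}(v)\le(1+\epsilon')\deg_{H_1}(v)$, hence $\deg_{H_2}(v)=\deg_{H_1}(v)/(1+\epsilon')\ge \deg_{G(\aa,\bb)}(v)/(1+\epsilon')^2$, so $\deg_R(v)=\deg_{G(\aa,\bb)}(v)-\deg_{H_2}(v)\le\bigl(1-(1+\epsilon')^{-2}\bigr)\deg_{G(\aa,\bb)}(v)\le 3\epsilon'\deg_{G(\aa,\bb)}(v)$ for $\epsilon'\le 1$. Thus $\DD_R\preceq 3\epsilon'\DD_{G(\aa,\bb)}$, and with (i), $\LL_R\preceq 6\epsilon'\DD_{G(\aa,\bb)}$.

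\textbf{Step (iii).} Since $G(\aa,\bb)$ is a $1/2$-expander, \Cref{col:approximation_via_expander} with $\Phi=1/2$ yields $\DD_{G(\aa,\bb)}-\frac{\dd\dd^T}{\norm{\dd}_1}\preceq 16\,\LL_{G(\aa,\bb)}$, where $\dd=\diag(\DD_{G(\aa,\bb)})$. Both $\LL_R$ and $\LL_{G(\aa,\bb)}$ annihilate $\vecone$ on the left and right, so for an arbitrary $\yy$ we may set $\xx=\yy-\frac{\yy^T\dd}{\norm{\dd}_1}\vecone$; then $\xx^T\dd=0$, and $\yy^T\LL_R\yy=\xx^T\LL_R\xx$, $\yy^T\LL_{G(\aa,\bb)}\yy=\xx^T\LL_{G(\aa,\bb)}\xx$. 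Therefore
\[
\yy^T\LL_R\yy = \xx^T\LL_R\xx \le 6\epsilon'\,\xx^T\DD_{G(\aa,\bb)}\xx = 6\epsilon'\,\xx^T\Bigl(\DD_{G(\aa,\bb)} - \frac{\dd\dd^T}{\norm{\dd}_1}\Bigr)\xx \le 96\epsilon'\,\xx^T\LL_{G(\aa,\bb)}\xx = 96\epsilon'\,\yy^T\LL_{G(\aa,\bb)}\yy,
\]
so $\LL_R\preceq 96\epsilon'\LL_{G(\aa,\bb)}\preceq 128\epsilon'\LL_{G(\aa,\bb)}$, as claimed.

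\textbf{Main obstacle.} Steps (i)–(ii) are routine; the delicate point is step (iii): the expander bound only controls $\DD_{G(\aa,\bb)}$ modulo the PSD rank-one correction $\frac{\dd\dd^T}{\norm{\dd}_1}$, which sits on the wrong side and cannot be dropped. The resolution — shifting the test vector along the shared kernel direction $\vecone$ so as to make it $\dd$-orthogonal rather than $\vecone$-orthogonal — is what makes everything close; one must check $\dd^T\vecone=\norm{\dd}_1\neq 0$ for this shift to exist, and that $\LL_R\vecone=\veczero$ so the shift does not change the quadratic forms. One should also be careful about the direction of the degree estimate from \Cref{lem:g16}, since it is exactly the inequality $\DD_{H_2}\preceq\DD_{G(\aa,\bb)}$ (used in \Cref{lem:runtime_deg}) that makes $\DD_R=\DD_{G(\aa,\bb)}-\DD_{H_2}$ nonnegative and the patching graph well-defined.
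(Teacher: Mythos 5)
Your proof is correct and rests on the same three ingredients as the paper's: the degree-deficit bound $\DD_R \preceq O(\epsilon')\,\DD_{G(\aa,\bb)}$ extracted from \Cref{lem:g16}, the expander bound of \Cref{col:approximation_via_expander} (via \Cref{lem:bipart_prod_exp}), and the shift of the test vector along the common kernel direction $\vecone$ so that it becomes $\dd$-orthogonal, which kills the rank-one correction $\dd\dd^T/\norm{\dd}_1$. The only substantive difference is the middle step: the paper routes the argument through the asymmetric variational form of \Cref{lem:b2} and then the row/column-sum bound of \Cref{lem:b4} to control $\norm{\diag(\dd)^{+/2}\LL_R\diag(\dd)^{+/2}}_2$, whereas you use the elementary symmetric facts $\LL_R \preceq 2\DD_R$ and $\DD_R \preceq 3\epsilon'\DD_{G(\aa,\bb)}$ and manipulate quadratic forms directly in the Loewner order. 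Since $\LL_R$ is symmetric PSD, your route is a bit more direct and yields the slightly better constant $96$ in place of $128$; the two middle steps are really the same estimate in different clothing, as $\norm{\LL_R\DD_{G(\aa,\bb)}^{+}}_1$ is bounded by twice the maximum degree ratio. A small point in your favour: your shift $\xx = \yy - (\yy^T\dd/\norm{\dd}_1)\vecone$ uses the correct normalization for $\dd$-orthogonality (the paper's analogous shifts are written with $\norm{\dd}_2$, which appears to be a typo).
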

\begin{proof}
As observed in the proof of \Cref{lem:runtime_deg} we have $\deg_R(v) = \deg_{G(\aa, \bb)}(v) -  \deg_{H_2}(v)$ for all $v$. Since $\deg_{H_2}(v) \geq \frac{1 - \epsilon'}{1 + \epsilon'}\deg_{G(\aa, \bb)}(v)$ by \Cref{lem:g16} we have 
\begin{align}
    \deg_R(v) \leq \deg_{G(\aa, \bb)} - \frac{1 - \epsilon'}{1 + \epsilon'}\deg_{G(\aa, \bb)}(v) \leq 4\epsilon' \deg_{G(\aa, \bb)}(v).
    \label{eq:degbound}
\end{align}
In the following, let $\dd = \diag(\DD_{G(\aa, \bb)})$ be the degree vector of $G(\aa, \bb)$. We use \Cref{lem:b2} and arrive at
\begin{align}
    \norm{\LL_{G(\aa, \bb)}^{+/2} \LL_R \LL_{G(\aa, \bb)}^{+/2}}_2 = 2 \max_{\xx, \yy \neq \veczero} \frac{\xx^T \LL_R \yy}{\xx^T \LL_{G(\aa, \bb)} \xx + \yy^T \LL_{G(\aa, \bb)} \yy}
    \label{eq:upper_bound_norm_spectral_patch_bipartite}
\end{align}
Let $\xx$ and $\yy$ be maximizing the right hand side of  \eqref{eq:upper_bound_norm_spectral_patch_bipartite} and $\xx, \yy \perp \vecone$. Then, $\xx' = \xx - \frac{\xx^T \dd}{\norm{\dd}_2}\vecone$ and $\yy' = \yy - \frac{\yy^T \dd}{\norm{\dd}_2}\vecone$ are also maximizing since the all-ones vector is in the kernel of Laplacian matrices. By \Cref{col:approximation_via_expander} and \Cref{lem:bipart_prod_exp} we have $\frac{1}{16} (\diag(\dd) - \frac{\dd \dd^T}{\norm{\dd}_1}) \preceq \LL_{G(\aa, \bb)}$. We calculate
\begin{align*}
    \norm{\LL_{G(\aa, \bb)}^{+/2} \LL_R \LL_{G(\aa, \bb)}^{+/2}}_2 &= 2 \frac{\xx'^T \LL_R \yy'}{\xx'^T \LL_{G(\aa, \bb)} \xx' + \yy'^T \LL_{G(\aa, \bb)} \yy'} \\
    &\leq 32 \frac{\xx'^T \LL_R \yy'}{\xx'^T (\diag(\dd) - \frac{\dd \dd^T}{\norm{\dd}_1}) \xx' + \yy'^T (\diag(\dd) - \frac{\dd \dd^T}{\norm{\dd}_1}) \yy'} \\
    &= 32 \frac{\xx'^T \LL_R \yy'}{\xx'^T \diag(\dd) \xx' + \yy'^T \diag(\dd) \yy'} \\
    &\leq 16 \norm{\diag(\dd)^{+/2} \LL_R \diag(\dd)^{+/2}}
\end{align*}
where the last line follows from another application of \Cref{lem:b2}. By \Cref{lem:b4} and \eqref{eq:degbound} we have 
\begin{align*}
    \norm{\diag(\dd)^{+/2} \LL_R \diag(\dd)^{+/2}} \leq \norm{\LL_R \DD_{G(\aa, \bb)}}_1 \leq 8\epsilon'.
\end{align*}
We conclude 
\begin{align*}
    \LL_R \preceq 128 \epsilon' \LL_{G(\aa, \bb)}.
\end{align*}
\end{proof}

Finally, we assemble the pieces and show that \sparsebipartite() sparsifies $G(\aa, \bb)$ while preserving degrees. 

\begin{lemma}
There is a routine $\sparsebipartite(\aa, \bb, \epsilon)$ that given $\aa, \bb \in \R^n_{\geq 0}$ and a parameter $\epsilon \leq 1/2$ computes in $O((\nnz(a) + \nnz(\bb)) \epsilon^{-4})$ time an $\epsilon$-bipartite sparsifier $\LL_{\tilde{G}(\aa, \bb)}$ of $\LL_{G(\aa, \bb)}$.
\label{lem:full_bipartite_approx}
\end{lemma}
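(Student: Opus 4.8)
The plan is to simply assemble the three ingredients already developed: \Cref{lem:runtime_deg} (running time and exact degree preservation), \Cref{lem:g16} (the high-accuracy spectral bipartite sparsifier), and \Cref{lem:bip_patch_upper} (the spectral upper bound $\LL_R \preceq 128\epsilon' \LL_{G(\aa, \bb)}$ on the patch). Recall that $\sparsebipartite(\aa,\bb,\epsilon)$ sets $\epsilon' = \epsilon/128$, forms $H_1 = \textsc{WeightedBipartiteExpander}(\aa,\bb,\epsilon')$, rescales to $H_2 = H_1/(1+\epsilon')$, and returns $\LL_{\tilde{G}(\aa,\bb)} = \LL_{H_2} + \LL_R$ where $\LL_R = \patchbipartite(\ldots)$. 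According to \Cref{def:high_accuracy}, what I need to verify is: (i) the running time $O((\nnz(\aa)+\nnz(\bb))\epsilon^{-4})$, (ii) the edge count $\nnz(\LL_{\tilde{G}(\aa,\bb)}) = O((\nnz(\aa)+\nnz(\bb))\epsilon^{-4})$, and (iii) the two-sided bound $(1-\epsilon)\LL_{\tilde{G}(\aa,\bb)} \preceq \LL_{G(\aa,\bb)} \preceq (1+\epsilon)\LL_{\tilde{G}(\aa,\bb)}$.

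\textbf{Running time and sparsity.} The running time is exactly \Cref{lem:runtime_deg}. For the edge count I would note $\nnz(\LL_{H_2}) = \nnz(\LL_{H_1}) = O((\nnz(\aa)+\nnz(\bb))(\epsilon')^{-4}) = O((\nnz(\aa)+\nnz(\bb))\epsilon^{-4})$ since $\epsilon' = \epsilon/128$, and (as in the proof of \Cref{lem:runtime_deg}) $\patchbipartite$ adds at most $\nnz(\aa)+\nnz(\bb)$ edges since each added edge zeroes a coordinate of $\tilde{\aa}$ or $\tilde{\bb}$; summing gives the claimed bound.

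\textbf{The spectral bounds.} For the upper bound I would chain \Cref{lem:g16} with $\LL_R \succeq \veczero$: since $\LL_{G(\aa,\bb)} \preceq (1+\epsilon')\LL_{H_1} = (1+\epsilon')^2 \LL_{H_2} \preceq (1+\epsilon')^2(\LL_{H_2}+\LL_R) = (1+\epsilon')^2 \LL_{\tilde{G}(\aa,\bb)}$, and $(1+\epsilon')^2 \leq 1+3\epsilon' \leq 1+\epsilon$, the right inequality follows. For the lower bound I would again use \Cref{lem:g16}: $(1-\epsilon')\LL_{H_1} \preceq \LL_{G(\aa,\bb)}$ combined with $\LL_{H_1} = (1+\epsilon')\LL_{H_2}$ gives $(1-(\epsilon')^2)\LL_{H_2} \preceq \LL_{G(\aa,\bb)}$, while \Cref{lem:bip_patch_upper} gives $\LL_R \preceq 128\epsilon'\LL_{G(\aa,\bb)} = \epsilon\LL_{G(\aa,\bb)}$ (this is precisely why $\epsilon' = \epsilon/128$ is chosen). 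Adding, $\LL_{\tilde{G}(\aa,\bb)} = \LL_{H_2}+\LL_R \preceq \bigl(\tfrac{1}{1-(\epsilon')^2}+\epsilon\bigr)\LL_{G(\aa,\bb)}$, and it remains to check $\tfrac{1}{1-(\epsilon')^2}+\epsilon \leq \tfrac{1}{1-\epsilon}$, which holds for $\epsilon \leq 1/2$ since $\tfrac{1}{1-(\epsilon')^2} \leq 1+2(\epsilon')^2 \leq 1+\epsilon^2/128^2$ whereas $\tfrac{1}{1-\epsilon} \geq 1+\epsilon+\epsilon^2$.

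\textbf{Main difficulty.} There is no real obstacle left: the conceptual content lives in \Cref{lem:bip_patch_upper} (the expansion argument via \Cref{lem:bipart_prod_exp} and \Cref{col:approximation_via_expander} bounding the patching error) and in \Cref{lem:g16}. The only delicate point in this final step is the constant bookkeeping — verifying that rescaling by $1/(1+\epsilon')$ pushes $\LL_{H_2}$ below $\LL_{G(\aa,\bb)}$ by enough slack that the nonnegative patch $\LL_R$, which the choice $\epsilon' = \epsilon/128$ keeps at size $\leq \epsilon\LL_{G(\aa,\bb)}$, still fits inside the relative-error budget $\tfrac{1}{1-\epsilon}$.
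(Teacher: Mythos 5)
Your proposal is correct and follows essentially the same route as the paper: combine \Cref{lem:runtime_deg} for runtime and edge count, \Cref{lem:g16} for the two-sided bound on $\LL_{H_2}$, and \Cref{lem:bip_patch_upper} for the patch, then check constants using $\epsilon' = \epsilon/128$. Your explicit verification of the final inequalities in the form required by \Cref{def:high_accuracy} is, if anything, slightly more careful than the paper's terse conclusion.
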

\begin{proof}
The runtime and degree preservation is clear from \Cref{lem:runtime_deg}. Further, we have 
\begin{align*}
    \frac{1 - \epsilon'}{1 + \epsilon'} \LL_{G(\aa, \bb)} \preceq \LL_{H_2} \preceq \LL_{H_2} + \LL_R = \LL_H
\end{align*}
and $\LL_{H_2} \preceq \LL_{G(\aa, \bb)}$ by \Cref{lem:g16}. From \Cref{lem:bip_patch_upper} we conclude
\begin{align*}
        \frac{1 - \epsilon'}{1 + \epsilon'} \LL_{G(\aa, \bb)} \preceq \LL_H \preceq (1 + 128\epsilon')\LL_{G(\aa, \bb)}.
\end{align*}
The desired approximation follows since $\epsilon' = \epsilon / 128$.
\end{proof}

\subsection{Sparsifying Directed Product Graphs via Bipartite Product Graphs}

The subroutine \sparseproduct() (Algorithm \ref{alg:sparseproduct}) exploits relationships between directed graphs and bipartite graphs, allowing us to reduce our problem to the sparsification problem on bipartite product graphs we solved in the previous subsection. Our object of interest in this subsection is defined as follows.

\begin{definition}
For $\aa \in \R^n_{\geq 0}$ and $\bb \in \R^n_{\geq 0}$ positive vectors so that $\norm{\aa}_1 = \norm{\bb}_1 = d$, let 
\begin{align*}
    \LL_{\dir{G}(\aa, \bb)} := \diag(\bb) - \frac{\aa \bb^T}{d}
\end{align*} 
denote the directed product graph of $\aa$ and $\bb$.
\label{def:dirprod}
\end{definition}

Inspecting $\LL_{G(\aa, \bb)}$ from \Cref{def:biprod} yields a natural approach to sparsifying $\LL_{\dir{G}(\aa, \bb)}$: Compute a $\epsilon$-bipartite sparsifier $\LL_H$ and then simply retrieve the adjacency matrix from its bottom left block. This is exactly what \sparseproduct() (Algorithm \ref{alg:sparseproduct}) does. Somewhat surprisingly, this approach directly yields a high accuracy approximation of $\LL_{\dir{G}(\aa, \bb)}$ with a mere constant factor loss in approximation. The argument relies on the fact that $G(\aa, \bb)$ is a constant expander (\Cref{lem:bipart_prod_exp}), which would be a serious obstruction for extending this strategy to sparsifying general directed graphs. For a graphical explanation of the relationship between directed graphs and bipartite graphs see \Cref{fig:dir_bipartie}.

\begin{figure}[h]
    \centering
    \includegraphics[width=8cm]{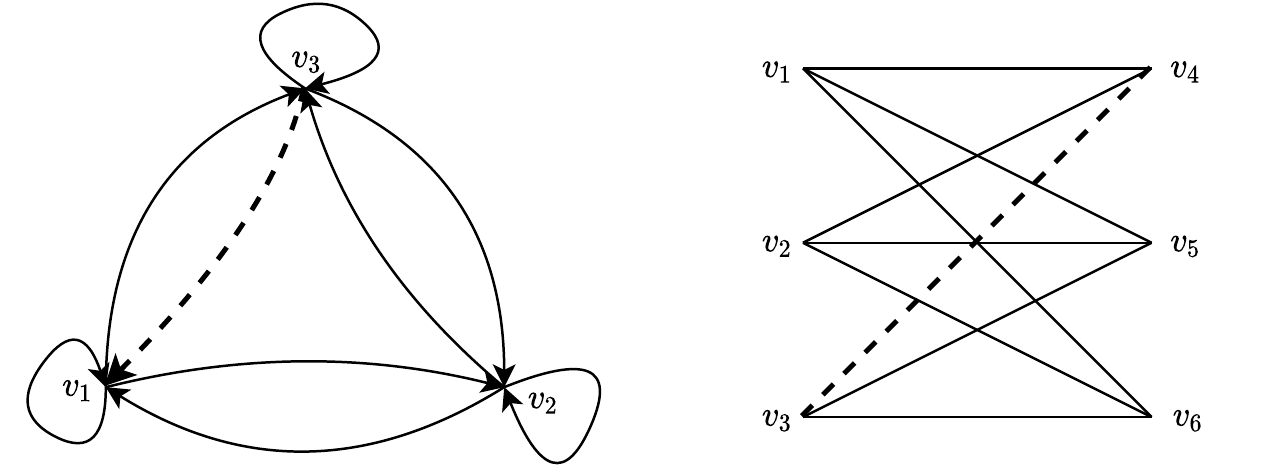}
    \caption{A directed graph naturally has an associated bipartite graph on twice as many vertices. The directed edge $(i, j)$ is then associated with the undirected edge $(i, j + |V|)$. For example, in the figure the two dashed edges are associated.}
    \label{fig:dir_bipartie}
\end{figure}

\begin{algorithm}
\caption{\sparseproduct($\aa, \bb, \epsilon$)}
$d = \norm{\aa}_1 (= \norm{\bb})$ \\
$\AA_{\tilde{G}(\aa, \bb)} = \sparsebipartite(\aa, \bb, \epsilon/128)$\\
$\AA^T_{\dir{\tilde{G}}(\aa, \bb)} = \AA_{\tilde{G}(\aa, \bb)}((n+1):2n, 1:n)$ \\
\Return{$\AA_{\dir{\tilde{G}}(\aa, \bb)}$}
\label{alg:sparseproduct}
\end{algorithm}

\begin{lemma}
For $\aa, \bb \in \R^n_{\geq 0}$ so that $\norm{\aa}_1 = \norm{\bb}_1$ and $\epsilon \leq 1/2$ the routine \sparseproduct($\aa, \bb, \epsilon$) (Algorithm \ref{alg:sparseproduct})  yields an $\epsilon$-graph-approximation $\LL_{\dir{\tilde{G}}(\aa, \bb)}$ of $\LL_{\dir{G}(\aa, \bb)}$ with $\nnz(\LL_{\dir{\tilde{G}}(\aa, \bb)}) = O((\nnz(\aa) + \nnz(\bb))\epsilon^{-4})$ in time $O((\nnz(\aa) + \nnz(\bb))\epsilon^{-4})$. The in- and out-degrees of $\dir{\tilde{G}}(\aa, \bb)$ and $\dir{G}(\aa, \bb)$ are identical. 
\label{lem:product}
\end{lemma}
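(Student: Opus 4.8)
The plan is to reduce everything to the bipartite sparsification guarantee of \Cref{lem:full_bipartite_approx}. Observe that \sparseproduct($\aa,\bb,\epsilon$) runs \sparsebipartite($\aa,\bb,\epsilon/128$), which by \Cref{lem:full_bipartite_approx} returns an $(\epsilon/128)$-bipartite-sparsifier $\LL_{\tilde{G}(\aa,\bb)}$ of $\LL_{G(\aa,\bb)}$ in time $O((\nnz(\aa)+\nnz(\bb))\epsilon^{-4})$, with $\nnz(\LL_{\tilde{G}(\aa,\bb)}) = O((\nnz(\aa)+\nnz(\bb))\epsilon^{-4})$ and $\DD_{\tilde{G}(\aa,\bb)} = \DD_{G(\aa,\bb)}$; it then defines $\AA^T_{\dir{\tilde{G}}(\aa,\bb)}$ as the bottom-left block of the adjacency matrix of $\tilde{G}(\aa,\bb)$, exactly as $\AA^T_{\dir{G}(\aa,\bb)} = \tfrac{1}{d}\aa\bb^T$ is the bottom-left block of the adjacency of $G(\aa,\bb)$ in \Cref{def:biprod}. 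The running time and sparsity claims are then immediate, and the in- and out-degree identities follow because degree preservation of \sparsebipartite forces the row sums and column sums of the extracted block to equal those of the corresponding block of $G(\aa,\bb)$, which are precisely the in- and out-degrees of $\dir{G}(\aa,\bb)$.

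The main content is the $\epsilon$-graph-approximation bound. Since $\dir{G}(\aa,\bb)$ and $\dir{\tilde{G}}(\aa,\bb)$ have identical out-degrees, they share the same diagonal $\DD = \diag(\bb)$, so $\LL_{\dir{\tilde{G}}(\aa,\bb)} - \LL_{\dir{G}(\aa,\bb)} = \AA^T_{\dir{G}(\aa,\bb)} - \AA^T_{\dir{\tilde{G}}(\aa,\bb)}$; because \sparsebipartite preserves all $2n$ degrees, $\LL_{G(\aa,\bb)}$ and $\LL_{\tilde{G}(\aa,\bb)}$ have the same diagonal and bipartite structure, and hence this difference is, up to sign, exactly the bottom-left block of $\LL_{G(\aa,\bb)} - \LL_{\tilde{G}(\aa,\bb)}$; in particular $\vecone$ lies in both its left and right kernel, and since the product graphs are connected all the kernels appearing in \Cref{def:mat_approx} and \Cref{lem:b2} coincide with $\mathrm{span}(\vecone)$. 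I would then invoke \Cref{lem:b2} with $\MM = \NN = \UU_{\LL_{\dir{G}(\aa,\bb)}}$ to obtain
\begin{align*}
\norm{\UU_{\LL_{\dir{G}(\aa,\bb)}}^{+/2}\bigl(\LL_{\dir{\tilde{G}}(\aa,\bb)} - \LL_{\dir{G}(\aa,\bb)}\bigr)\UU_{\LL_{\dir{G}(\aa,\bb)}}^{+/2}}_2 = 2\max_{\xx,\yy\neq\veczero}\frac{\xx^T(\LL_{\dir{\tilde{G}}(\aa,\bb)} - \LL_{\dir{G}(\aa,\bb)})\yy}{\xx^T\UU_{\LL_{\dir{G}(\aa,\bb)}}\xx + \yy^T\UU_{\LL_{\dir{G}(\aa,\bb)}}\yy},
\end{align*}
and bound the numerator: writing $\zz = (\yy;\xx)\in\R^{2n}$, the block identification above gives $\xx^T(\LL_{\dir{\tilde{G}}(\aa,\bb)} - \LL_{\dir{G}(\aa,\bb)})\yy = \pm\tfrac12\,\zz^T(\LL_{G(\aa,\bb)} - \LL_{\tilde{G}(\aa,\bb)})\zz$, which the $(\epsilon/128)$-bipartite bound controls by $O(\epsilon)\cdot\zz^T\LL_{G(\aa,\bb)}\zz$. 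Using \Cref{lem:bipart_prod_exp} (that $G(\aa,\bb)$ is a $1/2$-expander) together with \Cref{col:approximation_via_expander}, I would replace $\LL_{G(\aa,\bb)}$ by $\LL_{G(\dd)}$ with $\dd = (\bb;\aa)$ at the cost of a constant factor, and conclude by showing $\zz^T\LL_{G(\dd)}\zz$ is at most a constant times $\xx^T\UU_{\LL_{\dir{G}(\aa,\bb)}}\xx + \yy^T\UU_{\LL_{\dir{G}(\aa,\bb)}}\yy$ through a direct computation from $\UU_{\LL_{\dir{G}(\aa,\bb)}} = \diag(\bb) - \tfrac{1}{2d}(\aa\bb^T + \bb\aa^T)$ and Cauchy--Schwarz. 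Collecting all constants against the $\epsilon/128$ factor yields the required bound $\epsilon$.

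The step I expect to be the main obstacle is this last comparison between the product-graph quadratic form $\zz\mapsto\zz^T\LL_{G(\dd)}\zz$ and the directed symmetrization $\UU_{\LL_{\dir{G}(\aa,\bb)}}$: unlike the Eulerian case, $\dir{G}(\aa,\bb)$ need not be Eulerian, so $\UU_{\LL_{\dir{G}(\aa,\bb)}}$ is not itself a graph Laplacian and cannot simply be read off as a sub-block of $\LL_{G(\aa,\bb)}$; the comparison (and indeed the well-definedness of the quantities in \Cref{def:mat_approx} and \Cref{lem:b2}) must be handled carefully on the image subspace $\mathrm{span}(\vecone)^\perp = \im(\LL_{\dir{G}(\aa,\bb)})$, exploiting the rank-one product structure of $\aa\bb^T$ to cancel the cross terms. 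Everything else — the degree bookkeeping, the successive applications of \Cref{lem:b2}, and the constant accounting — is routine once this comparison is in hand.
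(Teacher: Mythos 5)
Your high-level strategy coincides with the paper's: reduce everything to \Cref{lem:full_bipartite_approx}, identify $\LL_{\dir{G}(\aa,\bb)}-\LL_{\dir{\tilde{G}}(\aa,\bb)}$ with an off-diagonal block of $\LL_{G(\aa,\bb)}-\LL_{\tilde{G}(\aa,\bb)}$, and transfer quadratic forms using the $1/2$-expansion of $G(\aa,\bb)$ via \Cref{col:approximation_via_expander}; the runtime, sparsity and degree bookkeeping are fine. The genuine gap sits exactly where you flag "the main obstacle", and it is not a matter of handling things carefully on $\Span(\vecone)^\perp$: your reference matrix $\UU_{\LL_{\dir{G}(\aa,\bb)}}=\diag(\bb)-\tfrac{1}{2d}(\aa\bb^T+\bb\aa^T)$ is in general \emph{not} positive semidefinite when $\aa\neq\bb$ (already for $n=2$, $\aa=(2,0)^T$, $\bb=(0,2)^T$ it is the matrix with rows $(0,-1)$ and $(-1,2)$, which has negative determinant), so $\UU^{+/2}$ is undefined in the paper's convention, \Cref{def:mat_approx} is unsatisfiable with it, and \Cref{lem:b2} — which requires $\MM,\NN$ symmetric PSD — cannot be invoked with $\MM=\NN=\UU_{\LL_{\dir{G}(\aa,\bb)}}$. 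In particular the final comparison you aim for, bounding $\zz^T\LL_{G(\dd)}\zz$ by a constant times $\xx^T\UU_{\LL_{\dir{G}(\aa,\bb)}}\xx+\yy^T\UU_{\LL_{\dir{G}(\aa,\bb)}}\yy$, is hopeless as stated since the right-hand side can be negative. The statement must be (and in the paper is) proved with respect to the undirectification Laplacian $\LL_{\mathcal{U}(\dir{G}(\aa,\bb))}=\tfrac{1}{2}\bigl(\diag(\aa)+\diag(\bb)\bigr)-\tfrac{1}{2d}(\aa\bb^T+\bb\aa^T)$, which is a genuine PSD graph Laplacian; this is also exactly what the proof of \Cref{lem:squaring} consumes, since $\sum_{i}\LL_{\mathcal{U}(\dir{G}_i)}=\UU_{\LL_{\dir{G}^2}}$ (the square graph is Eulerian even though each rank-one product piece is not).

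With that correction, the remaining work is the denominator comparison you left open, and the paper's mechanism for it is worth internalizing: lift $\xx,\yy$ to $(\xx;\xx),(\yy;\yy)\in\R^{2n}$, shift by multiples of $\vecone$ to make them orthogonal to the degree vector $\dd$ of $G(\aa,\bb)$ so that $\LL_{G(\dd)}$-forms collapse to $\diag(\dd)$-forms, use that diagonal forms only decrease when coordinates are zeroed to pass to vectors supported on one side of the bipartition, and return to $\LL_{G(\aa,\bb)}$-forms via $\LL_{G(\aa,\bb)}\preceq 2\DD_{G(\aa,\bb)}$ and \Cref{col:approximation_via_expander}. Your folding of the bilinear form into the single quadratic form $\zz^T(\LL_{G(\aa,\bb)}-\LL_{\tilde{G}(\aa,\bb)})\zz$ with $\zz=(\yy;\xx)$ and the Loewner bound is a legitimate, slightly cleaner variant of the paper's two applications of \Cref{lem:b2} for the numerator, but it does not substitute for the missing (and, with your choice of reference matrix, false) denominator step.
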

\begin{proof}
The runtime and sparsity is clear from \Cref{lem:full_bipartite_approx}. The preservation of in- and out-degrees follows from the preservation of degrees in the biaprtite graph. It remains to show that 
\begin{align*}
    \norm{\LL_{\mathcal{U}(\dir{G}(\aa, \bb))}^{+/2} (\LL_{\dir{G}(\aa, \bb)} - \LL_{\dir{\tilde{G}}(\aa, \bb)}) \LL_{\mathcal{U}(\dir{G}(\aa, \bb))}^{+/2}} \leq \epsilon.
\end{align*}
By \Cref{lem:b2} we have
\begin{align}
   \norm{\LL_{\mathcal{U}(\dir{G}(\aa, \bb))}^{+/2} (\LL_{\dir{G}(\aa, \bb)} - \LL_{\dir{\tilde{G}}(\aa, \bb)}) \LL_{\mathcal{U}(\dir{G}(\aa, \bb))}^{+/2}} = 2 \max_{\xx, \yy \neq \veczero} \frac{\xx^T (\LL_{\dir{G}(\aa, \bb)} - \LL_{\dir{\tilde{G}}(\aa, \bb)}) \yy}{\xx^T \LL_{\mathcal{U}(\dir{G}(\aa, \bb))} \xx + \yy^T \LL_{\mathcal{U}(\dir{G}(\aa, \bb))} \yy}.
   \label{eq:prod_dir_b2}
\end{align}
Since $\LL_{\dir{G}(\aa, \bb)}$ and $\LL_{\dir{\tilde{G}}(\aa, \bb)}$ have identical in- and out-degrees, it is clear that $\vecone$ is in both the left and right kernel of $(\LL_{\dir{G}(\aa, \bb)} - \LL_{\dir{\tilde{G}}(\aa, \bb)})$
We let $\xx, \yy \perp \vecone$ be maximizing the right hand side of \eqref{eq:prod_dir_b2}. Further, we let $\dd := \diag(\DD_{G(\aa, \bb)})$ and
\begin{align*}
 &\hat{\xx} := 
\begin{pmatrix}
    \xx \\
    \xx
\end{pmatrix}
\hspace{20pt}
\hat{\yy} := 
\begin{pmatrix}
    \yy \\
    \yy 
\end{pmatrix} 
\hspace{20pt}
\hat{\xx}' := \hat{\xx} - \frac{\hat{\xx}^T \dd}{\norm{\dd}_2}\vecone_{2n} 
\hspace{20pt}
\hat{\yy}' := \hat{\yy} - \frac{\hat{\yy}^T \dd}{\norm{\dd}_2}\vecone_{2n} 
\\
&\xx' := \xx - \frac{\hat{\xx}^T \dd}{\norm{\dd}_2}\vecone_{n}
\hspace{20pt}
\yy' := \yy - \frac{\hat{\yy}^T \dd}{\norm{\dd}_2}\vecone_{n} 
\hspace{20pt}
\tilde{\xx} = \begin{pmatrix}
    \veczero \\
    \xx'
\end{pmatrix}
\hspace{20pt}
\tilde{\yy} = \begin{pmatrix}
    \yy' \\
     \veczero 
\end{pmatrix}.
\end{align*}
Then we have
\begin{align*}
    \norm{\LL_{\mathcal{U}(\dir{G}(\aa, \bb))}^{+/2} (\LL_{\dir{G}(\aa, \bb)} - \LL_{\dir{\tilde{G}}(\aa, \bb)}) \LL_{\mathcal{U}(\dir{G}(\aa, \bb))}^{+/2}} &= 2 \frac{\xx^T (\LL_{\dir{G}(\aa, \bb)} - \LL_{\dir{\tilde{G}}(\aa, \bb)}) \yy}{\xx^T \LL_{\mathcal{U}(\dir{G}(\aa, \bb))} \xx + \yy^T \LL_{\mathcal{U}(\dir{G}(\aa, \bb))} \yy} \\
    & = 2 \frac{\xx'^T (\LL_{\dir{G}(\aa, \bb)} - \LL_{\dir{\tilde{G}}(\aa, \bb)}) \yy'}{\xx^T \LL_{\mathcal{U}(\dir{G}(\aa, \bb))} \xx + \yy^T \LL_{\mathcal{U}(\dir{G}(\aa, \bb))} \yy} \\
    &= 4 \frac{\tilde{\xx}^T (\LL_{G(\aa, \bb)} - \LL_{\tilde{G}(\aa, \bb)}) \tilde{\yy}}{\hat{\xx}'^T  \LL_{G(\aa, \bb)} \hat{\xx}' + \hat{\yy}'^T \LL_{G(\aa, \bb)}  \hat{\yy}'} \\
    &\stackrel{i)}{\leq} 64  \frac{\tilde{\xx}^T (\LL_{G(\aa, \bb)} - \LL_{\tilde{G}(\aa, \bb)}) \tilde{\yy}}{\hat{\xx}'^T  \diag(\dd) \hat{\xx}' + \hat{\yy}'^T \diag(\dd) \hat{\yy}'} \\
    &\stackrel{ii)}{\leq} 64  \frac{\tilde{\xx}^T (\LL_{G(\aa, \bb)} - \LL_{\tilde{G}(\aa, \bb)}) \tilde{\yy}}{\tilde{\xx}^T  \diag(\dd) \tilde{\xx} + \tilde{\yy}^T \diag(\dd) \tilde{\yy}}
\end{align*}
where i) holds since $\LL_{G(\dd)} \preceq 16 \LL_{G(\aa, \bb)}$ by \Cref{col:approximation_via_expander} and $\xx'^T\LL_{G(\dd)}\xx' = \xx'^T\DD_{G(\dd)}\xx'$ by the choice of $\xx'$. Analogously we have $\yy'^T \diag(\dd)\yy' \leq 16 \yy'\LL_{G(\aa, \bb)}\yy'$. Further, ii) holds since setting parts of the vectors $\xx'$ and $\yy'$ to zero only decreases the value of the quadratic form with a positive diagonal matrix.  
Using $\LL_{G(\aa,\bb)} \preceq 2 \DD_{G(\aa,\bb)} = 2\diag(\dd)$ and \Cref{lem:b2} we conclude
\begin{align*}
    64 \frac{\tilde{\xx}^T (\LL_{G(\aa, \bb)} - \LL_{\dir{\tilde{G}}(\aa, \bb)}) \tilde{\yy}}{\tilde{\xx}^T  \diag(\dd) \tilde{\xx} + \tilde{\yy}^T \diag(\dd) \tilde{\yy}} \leq 128 \frac{\tilde{\xx}^T (\LL_{G(\aa, \bb)} - \LL_{\dir{\tilde{G}}(\aa, \bb)}) \tilde{\yy}}{\tilde{\xx}^T  \LL_{G(\aa, \bb)} \tilde{\xx} + \tilde{\yy}^T \LL_{G(\aa, \bb)}  \tilde{\yy}} \\
    = 128 \norm{\LL_{G(\aa, \bb)}^{+/2}(\LL_{G(\aa, \bb)} - \LL_{\dir{\tilde{G}}(\aa, \bb)})\LL_{G(\aa, \bb)}^{+/2}}_2 \leq \epsilon
\end{align*}
where the last inequality follows from \Cref{lem:full_bipartite_approx} since we call \sparsebipartite() with accuracy $\epsilon/128$. This concludes the proof. 
\end{proof}

Finally, we use \Cref{lem:product} to prove \Cref{lem:squaring}.

\begin{proof}[Proof of \Cref{lem:squaring}]
By \Cref{lem:product} each individual call to \sparseproduct() conserves in- and out-degrees, and thus $\LL_{\tilde{\dir{G}}^2}$ and $\LL_{\dir{G}^2}$ are Eulerian Laplacians with the same in- and out-degrees. Further, by \Cref{lem:product}, we have that 
\begin{align*}
    \norm{\LL_{\mathcal{U}(\dir{G}_i)}^{+/2}(\LL_{\dir{G}_i} - \LL_{\dir{H}_i}) \LL_{\mathcal{U}(G_i)}^{+/2}}_2 \leq \epsilon
\end{align*}
By \Cref{lem:b2} we conclude that for all $\xx, \yy \neq \veczero$
\begin{align*}
    \xx^T(\LL_{\dir{G}_i} - \LL_{\dir{H}_i})\yy \leq \frac{\epsilon}{2} (\xx^T\LL_{\mathcal{U}(\dir{G}_i)}\xx + \yy^T \LL_{\mathcal{U}(\dir{G}_i)} \yy).
\end{align*}
Finally, using \Cref{lem:b2} again, we have for some vectors $\xx, \yy$
\begin{align*}
    \norm{\UU_{\LL_{\dir{G}^2}}^{+/2}(\LL_{\dir{G}^2} - \LL_{\tilde{\dir{G}}^2})\UU_{\LL_{\dir{G}^2}}^{+/2}}_2 &= \norm{\UU_{\LL_{\dir{G}}^2}^{+/2}(\sum_{i = 1}^n \LL_{\dir{G}_i} - \LL_{\dir{H}_i})\UU_{\LL_{\dir{G}^2}}^{+/2}}_2 \\
    &= 2 \frac{\xx^T(\sum_{i = 1}^n \LL_{\dir{G}_i} - \LL_{\dir{H}_i})\yy}{\xx^T \UU_{\LL_{\dir{G}^2}} \xx + \yy^T \UU_{\LL_{\dir{G}^2}} \yy} \\
    &\leq \epsilon \frac{\sum_{i = 1}^n \left( \xx^T\LL_{\mathcal{U}(\dir{G}_i)}\xx + \yy^T \LL_{\mathcal{U}(\dir{G}_i)} \yy \right)}{\xx^T \UU_{\LL_{\dir{G}^2}} \xx + \yy^T \UU_{\LL_{\dir{G}^2}} \yy}
    = \epsilon
\end{align*}
where we used that $\sum_{i = 1}^n \LL_{\mathcal{U}(\dir{G}_i)} = \UU_{\LL_{\dir{G}^2}}$. Clearly, this also shows that 

The runtime and sparsity directly follow from \Cref{lem:product}.
\end{proof}

\section{Deterministic Square Sparsifier Chains}
\label{sec:chains}

In this section we define the collection of $\tilde{O}(1)$ sparse matrices our algorithm operates on. This collection will be computed once, not recursively, and relies on square sparsifier chains.  

\begin{definition}[Square Sparsifier Chain, Definition 4.6 in \cite{cohen2016almostlineartime}]
    We call sequences of matrices $\mathcal{S} = \AAn_0, \AAn_1 ..., \AAn_d \in \R^{n \times n}$ a square sparsifier chain of length $d$ with parameter $0 < \alpha \leq \frac{1}{2}$ and error $\epsilon \leq \frac{1}{2}$ (or a $(d, \epsilon, \alpha)$-square chain for short) if under the definitions $\LL_i = \II - \AAn_i$ and $\AAn^{(\alpha)} = \alpha \II + (1 - \alpha)\AAn_i$ the following hold
    \begin{enumerate}
        \item $\norm{\AAn_i}_2 \leq 1$ for all $i$.
        \item $\II - \AAn_i$ is an $\epsilon$-approximation of $\II - (\AAn_i^{(\alpha)})^2$ for all $i \geq 1$.
        \item $\ker(\LL_i) = \ker(\LL_i^T) = \ker(\LL_j) = \ker(\LL_j^T) = \ker(\UU_{\LL_i}) = \ker(\UU_{\LL_j})$ for all $i, j$. 
    \end{enumerate}
    \label{def:sparifier_chain}
\end{definition}

Since we cannot afford to run our global sparsification routine after each squaring operation while still achieving an almost-linear runtime, we compute chains of depth $d = O((\log n)^{\delta})$ for $\delta = 1/3$ and then globally sparsify. These global sparsifications will naturally correspond to branching points of our recursive algorithm, since we have to rectify the loss due to global sparsification. We first define pseudoinverse chains. 

\begin{definition}[Pseudoinverse Chain]
\label{def:pseudoinverse_chain}
We call a collection $\mathcal{C} = \{\dir{G}_0^{(i)}, \dir{G}_1^{(i)}, \dir{G}_2^{(i)}, \dir{G}_3^{(i)}, \mathcal{S}^{(i)} \}_{i = 0}^{k}$ with $\mathcal{S}^{(i)} = \{\AAn^{(i)}_0, \AAn^{(i)}_1, ..., \AAn^{(i)}_d\}$ a $(k, d, \alpha, \epsilon, \gamma)$-pseudoinverse chain if
\begin{enumerate}
    \item $\mathcal{S}^{(i)}$ is a $(d, \epsilon, \alpha)$-square chain for all $i$. 
    \item $\dir{G}_0^{(i)}, \dir{G}_1^{(i)}, \dir{G}_2^{(i)}, \dir{G}_3^{(i)}$ is a $\gamma$-quadruple for all $i$. 
    \item $\frac{1}{1 + \frac{\beta}{\eta}} \DD_{\dir{G}}^{+/2} \AA^T_{\dir{G}_3^{(i)}} \DD_{\dir{G}}^{+/2} = \AAn_0^{(i)}$ for all $i$.
    \item $\DD_{\dir{G}}^{+/2} \AA^T_{\dir{G}_0^{(i+1)}} \DD_{\dir{G}}^{+/2} = \AAn_d^{(i)}$ for all $i < k$.
\end{enumerate}
where $\dir{G} = \dir{G}_0^{(0)}$ and $\frac{\eta}{\beta}$ is as in \Cref{alg:global_spars} and $\DD_{\dir{G}_0^{(i)}} = \DD_{\dir{G}}$ for all $i$.
\end{definition}

\begin{algorithm}
\caption{\chain($\dir{G}, k, d, \alpha, \epsilon, \gamma$) and \squarechain($\dir{\hat{G}}^{(i)}, d, \alpha, \epsilon$)}
\label{alg:square_chain}
\SetKwFunction{algo}{\chain}\SetKwFunction{proc}{\squarechain}
\SetKwProg{myalg}{Algorithm}{}{}
\SetKwProg{myproc}{Procedure}{}{}
\myalg{\algo{$\dir{G}_0^{(0)} = \dir{G}, k, d, \alpha, \epsilon, \gamma$}}{
\For{$i = 0, ..., k$}{
    $\dir{G}_1^{(i)}, \dir{G}_2^{(i)}, \dir{G}_3^{(i)} = \sparsify(\dir{G}_0^{(i)}, \gamma)$ \tcp*{See \Cref{alg:global_spars}.} 
    \tcp{Scale back degrees by $\frac{\eta}{\beta}$ before building the next chain}
    $\dir{G}_0^{(i + 1)}, \mathcal{S}^{(i)} = \squarechain(\frac{1}{1 + \frac{\beta}{\eta}}  \dir{G}^{(i)}_3, d, \alpha, \epsilon)$
}
\Return{$\mathcal{C} = \{\dir{G}_0^{(i)}, \dir{G}_1^{(i)}, \dir{G}_2^{(i)}, \dir{G}_3^{(i)}, \mathcal{S}^{(i)} \}_{i = 0}^{k}$} 
}

\myproc{\proc{$\dir{G}^{(i,0)} = \frac{1}{1 + \frac{\beta}{\eta}} \dir{G}^{(i)}_3, d, \alpha, \epsilon$}}{
\For{$j = 1, ..., d $}{
    Let $G_D$ denote the graph with adjacency matrix $\DD_{\dir{G}}$, only consisting of self-loops.\\
    $\dir{G}^{(i,j)} = \sparsesquare((1 - \alpha)\dir{G}^{(i,j - 1)} + \alpha G_D, \epsilon)$ \tcp*{See \Cref{alg:sparsesquare}.}
}
\tcp{We use $\DD_{\dir{G}} = \DD_{\dir{G}^{(i, 0)}} = \DD_{\dir{G}^{(i, j)}}$ here since we preserve degrees.}
\Return{$\dir{G}^{(i + 1)}_0 = \dir{G}^{(i, d)}, \mathcal{S}^{(i)} = \{\AAn^{(i)}_j = \DD_{\dir{G}}^{+/2}\AA_{\dir{G}^{(i,j)}}^T\DD_{\dir{G}}^{+/2}\}_{j = 0}^d$}
}
\end{algorithm}

We then state some previous work that will be useful for showing that \Cref{alg:square_chain} creates a pseudoinverse chain. The following lemma shows that $\epsilon$-approximations behave well under diagonal rescalings.

\begin{lemma}[Lemma 3.7 in \cite{cohen2016almostlineartime}]
\label{lem:37}
If $\tilde{\AA} \in \R^{n \times n}$ is an $\epsilon$-approximation of $\AA \in \R^{n \times n}$ and $\MM \in \R^{n \times n}$ satisfies $\ker(\MM^T) \subseteq \ker(\UU_A)$, then $\MM^T \tilde{\AA} \MM$ is an $\epsilon$-approximation of $\MM^T \AA \MM$.
\end{lemma}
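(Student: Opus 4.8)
The plan is to unfold \Cref{def:mat_approx} for the pair $\MM^T\tilde{\AA}\MM$, $\MM^T\AA\MM$ and verify its two conditions, reducing each to the corresponding fact already assumed for $\tilde{\AA},\AA$ via the variational identity of \Cref{lem:b2}. First I would record the algebraic identity $\UU_{\MM^T\AA\MM} = \tfrac12(\MM^T\AA\MM + \MM^T\AA^T\MM) = \MM^T\UU_{\AA}\MM$, which is symmetric and, since $\UU_{\AA}$ is positive semidefinite, also positive semidefinite; this settles the first half of condition~1. Writing $\DDelta := \tilde{\AA}-\AA$, the perturbation between the conjugated matrices is exactly $\MM^T\DDelta\MM$, with transpose $\MM^T\DDelta^T\MM$.

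Next I would establish the kernel containment in condition~1. If $\xx\in\ker(\MM^T\UU_{\AA}\MM)$ then $(\MM\xx)^T\UU_{\AA}(\MM\xx)=0$, so positive semidefiniteness gives $\MM\xx\in\ker(\UU_{\AA})$; by condition~1 applied to $\tilde{\AA},\AA$ this yields $\MM\xx\in\ker(\DDelta)\cap\ker(\DDelta^T)$, hence $\MM^T\DDelta\MM\xx=0$ and $\MM^T\DDelta^T\MM\xx=0$. Thus $\ker(\UU_{\MM^T\AA\MM})\subseteq\ker(\MM^T\DDelta\MM)\cap\ker((\MM^T\DDelta\MM)^T)$. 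In particular, setting $\NN:=\MM^T\UU_{\AA}\MM$, this is exactly the kernel condition needed to invoke \Cref{lem:b2} with both of its positive semidefinite arguments equal to $\NN$ and middle argument $\MM^T\DDelta\MM$.

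For condition~2, that application of \Cref{lem:b2}, followed by the substitution $\uu=\MM\xx$, $\vv=\MM\yy$, gives
\begin{align*}
\norm{\NN^{+/2}(\MM^T\DDelta\MM)\NN^{+/2}}_2 &= 2\max_{\xx,\yy\neq 0}\frac{(\MM\xx)^T\DDelta(\MM\yy)}{(\MM\xx)^T\UU_{\AA}(\MM\xx) + (\MM\yy)^T\UU_{\AA}(\MM\yy)} \\
&= 2\max_{\uu,\vv\in\im(\MM)}\frac{\uu^T\DDelta\vv}{\uu^T\UU_{\AA}\uu + \vv^T\UU_{\AA}\vv},
\end{align*}
where the convention $0/0=0$ is harmless because a vanishing denominator forces $\uu,\vv\in\ker(\UU_{\AA})\subseteq\ker(\DDelta)\cap\ker(\DDelta^T)$, hence a vanishing numerator. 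Now the hypothesis $\ker(\MM^T)\subseteq\ker(\UU_{\AA})$ lets me replace the constrained maximum by the unconstrained one over all of $\R^n$: decomposing any $\uu$ orthogonally as $\uu_1+\uu_2$ with $\uu_1\in\im(\MM)$ and $\uu_2\in\ker(\MM^T)\subseteq\ker(\UU_{\AA})\subseteq\ker(\DDelta^T)$, the bilinear form $\uu^T\DDelta\vv$ and the quadratic form $\uu^T\UU_{\AA}\uu$ are unchanged when $\uu$ is replaced by $\uu_1$ (and likewise for $\vv$). Finally, invoking \Cref{lem:b2} once more --- this time for $\UU_{\AA}$ itself, which is legitimate since $\ker(\UU_{\AA})\subseteq\ker(\DDelta)\cap\ker(\DDelta^T)$ --- identifies this maximum with $\tfrac12\norm{\UU_{\AA}^{+/2}\DDelta\UU_{\AA}^{+/2}}_2\leq\tfrac{\epsilon}{2}$, which is condition~2. (Without the extra hypothesis on $\MM^T$ one still gets ``$\leq$'' between the two maxima, which already suffices.)

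I do not anticipate a genuine obstacle: the proof is a careful chase through kernels and pseudoinverses. The only place that needs attention is making sure the two invocations of \Cref{lem:b2} are legitimate --- i.e.\ checking the kernel inclusions $\ker(\NN)\subseteq\ker(\MM^T\DDelta\MM)\cap\ker((\MM^T\DDelta\MM)^T)$ and $\ker(\UU_{\AA})\subseteq\ker(\DDelta)\cap\ker(\DDelta^T)$ --- together with correctly handling the $0/0$ convention when passing between the variational forms; the hypothesis $\ker(\MM^T)\subseteq\ker(\UU_{\AA})$ is precisely what makes restricting the second variational problem to $\im(\MM)$ harmless.
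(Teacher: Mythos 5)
Your proof is correct. The paper itself states this lemma without proof, importing it verbatim from \cite{cohen2016almostlineartime}, so there is no in-paper argument to compare against; judged on its own terms, your verification of both conditions of \Cref{def:mat_approx} is sound: the identity $\UU_{\MM^T\AA\MM}=\MM^T\UU_{\AA}\MM$, the kernel chase via positive semidefiniteness, and the two invocations of \Cref{lem:b2} (whose kernel hypotheses you correctly check) all go through, and the factors of $2$ cancel as they should. For reference, the original proof in \cite{cohen2016almostlineartime} takes a slightly different route: it factors the conjugated error as
\[
\UU_{\MM^T\AA\MM}^{+/2}\MM^T\UU_{\AA}^{1/2}\cdot\UU_{\AA}^{+/2}(\tilde{\AA}-\AA)\UU_{\AA}^{+/2}\cdot\UU_{\AA}^{1/2}\MM\,\UU_{\MM^T\AA\MM}^{+/2}
\]
and observes that the outer factor $\UU_{\AA}^{1/2}\MM\,\UU_{\MM^T\AA\MM}^{+/2}$ has operator norm at most $1$, which gives condition~2 by submultiplicativity; your variational argument via \Cref{lem:b2} buys the same bound and, as you note, makes it transparent that the hypothesis $\ker(\MM^T)\subseteq\ker(\UU_{\AA})$ is only needed to turn the inequality between the constrained and unconstrained maxima into an equality --- the inequality alone already suffices for the conclusion.
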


And the next lemma will be used to show the first point in \Cref{def:sparifier_chain}. 

\begin{lemma}[Lemma 4.7 in \cite{cohen2016almostlineartime}]
If for $\AA \in \R^{n \times n}$ and $\DD = \diag(\AA\vecone)$ the matrix $\LL = \DD - \AA^T$ is an Eulerian Laplacian associated with a strongly connected graph then, $\norm{\DD^{-1/2} \AA^T \DD^{-1/2}}_2 \leq 1$ and $\ker(\LL) = \ker(\LL^T) = \ker(\UU_{\LL})$
\label{lem:lapprop47}
\end{lemma}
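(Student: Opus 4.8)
The plan is to treat the two conclusions separately, since the operator-norm bound and the kernel identity rely on essentially disjoint structure. First I would record a trivial preliminary: strong connectivity forces every vertex to have positive out-degree, so $\DD$ is invertible and $\DD^{-1/2}=\DD^{+/2}$ is well defined; and $\AA$ is the nonnegative adjacency matrix of the underlying graph, so all row/column sums below are sums of nonnegative entries.

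For $\norm{\DD^{-1/2}\AA^T\DD^{-1/2}}_2\le 1$, I would simply invoke \Cref{lem:b4} with the diagonal matrix $\DD$ and $\MM=\AA^T$, giving
\[
\norm{\DD^{-1/2}\AA^T\DD^{-1/2}}_2\le\max\left\{\norm{\AA^T\DD^{-1}}_1,\ \norm{\AA\DD^{-1}}_1\right\}.
\]
The first term is $1$ for \emph{any} directed Laplacian: column $i$ of $\AA^T\DD^{-1}$ has entries $\AA_{ij}/\DD_{ii}$, so its $\ell_1$-norm is $\DD_{ii}^{-1}\sum_j\AA_{ij}=1$. For the second term, column $i$ of $\AA\DD^{-1}$ has entries $\AA_{ji}/\DD_{ii}$, so its $\ell_1$-norm is $\DD_{ii}^{-1}\sum_j\AA_{ji}=\DD_{ii}^{-1}(\AA^T\vecone)_i$, i.e. the in-degree of $i$ over its out-degree. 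Here the Eulerian hypothesis enters: $\AA^T\vecone=\DD\vecone$, so this ratio is $1$ and $\norm{\AA\DD^{-1}}_1=1$ as well. Taking the maximum finishes this part. (The same bound also follows by a direct computation: with $\yy=\DD^{-1/2}\xx$ and Cauchy--Schwarz applied row-by-row, $\norm{\DD^{-1/2}\AA\yy}_2^2\le\sum_j\yy_j^2(\AA^T\vecone)_j=\yy^T\DD\yy=\norm{\xx}_2^2$, again using $\AA^T\vecone=\DD\vecone$.)

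For the kernel identity, I would first observe that $\UU_{\LL}=\DD-\tfrac12(\AA+\AA^T)$ is exactly the Laplacian $\LL_{\mathcal{U}(\dir{G})}$: the off-diagonal entries match by definition of $\mathcal{U}(\dir{G})$, and the diagonal entries match because the Eulerian property gives $\DD_{ii}=\sum_j\AA_{ij}=\sum_j\tfrac12(\AA_{ij}+\AA_{ji})$. Since $\dir{G}$ is strongly connected, $\mathcal{U}(\dir{G})$ is a connected weighted undirected graph, so $\UU_{\LL}$ is PSD with $\ker(\UU_{\LL})=\operatorname{span}(\vecone)$. Next, for every $\xx$ the scalar $\xx^T\LL\xx$ equals its transpose $\xx^T\LL^T\xx$, hence $\xx^T\LL\xx=\xx^T\UU_{\LL}\xx$; so $\LL\xx=\veczero$ implies $\xx^T\UU_{\LL}\xx=0$, and PSD-ness of $\UU_{\LL}$ then forces $\UU_{\LL}\xx=\veczero$. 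This gives $\ker(\LL)\subseteq\ker(\UU_{\LL})$, and the identical argument for $\LL^T$ (using $\UU_{\LL^T}=\UU_{\LL}$) gives $\ker(\LL^T)\subseteq\ker(\UU_{\LL})$. For the reverse direction, $\LL\vecone=\veczero$ by the Eulerian hypothesis and $\vecone^T\LL=\veczero$ for every directed Laplacian, so $\operatorname{span}(\vecone)=\ker(\UU_{\LL})\subseteq\ker(\LL)\cap\ker(\LL^T)$. Chaining the inclusions yields $\ker(\LL)=\ker(\LL^T)=\ker(\UU_{\LL})=\operatorname{span}(\vecone)$.

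I do not expect a genuine obstacle: the statement is routine given \Cref{lem:b4} and basic facts about undirected Laplacians. The only points that need care are (i) pinpointing where the Eulerian balance $\AA^T\vecone=\DD\vecone$ upgrades the second $\ell_1$-norm bound from an inequality to the equality $1$, and (ii) noting that the reverse kernel inclusions are tight only because $\ker(\UU_{\LL})$ is one-dimensional, which is precisely where strong connectivity — rather than the Eulerian property alone — is used.
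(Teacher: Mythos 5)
Your proof is correct. Note that the paper itself does not prove this statement --- it is imported verbatim as Lemma 4.7 of \cite{cohen2016almostlineartime} --- so there is no in-paper argument to compare against; your derivation is a valid self-contained one, and it follows the standard route: \Cref{lem:b4} plus the Eulerian balance $\AA^T\vecone=\DD\vecone$ gives both column-sum bounds equal to $1$, and the identification $\UU_{\LL}=\LL_{\mathcal{U}(\dir{G})}$ together with connectivity of the undirectification and the identity $\xx^T\LL\xx=\xx^T\UU_{\LL}\xx$ pins all three kernels to $\operatorname{span}(\vecone)$. Your closing remarks correctly isolate where the Eulerian hypothesis versus strong connectivity are each used.
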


Given these results, we first show a lemma about the subroutine \squarechain() in \Cref{alg:square_chain}. 

\begin{lemma}
Given a strongly connected Eulerian graph $\frac{1}{1 + \frac{\beta}{\eta}} \dir{G}^{(i)}_3$ with $m = \tilde{O}(n)$ edges and degree matrix $\DD_{\dir{G}}$, as well as parameters $\alpha \in (0,1)$ and $\delta \in (0, 1/2)$ the subroutine $\dir{G}^{(i + 1)}_0, \mathcal{S}^{(i)} = \squarechain\left(\frac{1}{1 + \frac{\beta}{\eta}} \dir{G}^{(i)}_3, d = \Theta((\log n)^\delta), \alpha, \epsilon = \frac{1}{30 \cdot \Exp(5d)} \right)$ computes a $(d, \epsilon, \alpha)$-square chain $\mathcal{S}^{(i)}$ and a graph $\dir{G}^{(i + 1)}_0$ with degree matrix $\DD_{\dir{G}}$ so that 
\begin{enumerate}
    \item $\nnz(\AAn_j^{(i)}) = n^{1 + o(1)}$ for $j = 0, ..., d$.
    \item $\DD_{\dir{G}}^{+/2} \AA^T_{\dir{G}^{(i,0)}} \DD_{\dir{G}}^{+/2} = \AAn_0^{(i)}$ and $\frac{1}{1 + \frac{\beta}{\eta}} \DD_{\dir{G}}^{+/2} \AA^T_{\dir{G}^{(i)}_3} \DD_{\dir{G}}^{+/2} = \AAn_d^{(i)}$ 
\end{enumerate}
\label{lem:chain}
\end{lemma}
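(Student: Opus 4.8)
The plan is to run an induction on the squaring index $j = 0, 1, \dots, d$ (with $i$ fixed throughout), maintaining the invariant that $\dir{G}^{(i,j)}$ is a strongly connected Eulerian graph whose out-degree matrix equals $\DD_{\dir{G}}$ exactly. This invariant is what makes $\LL_{\dir{G}^{(i,j)}} = \DD_{\dir{G}} - \AA^T_{\dir{G}^{(i,j)}}$ a genuine Eulerian Laplacian of a strongly connected graph, and what makes $\AAn_j^{(i)} = \DD_{\dir{G}}^{+/2}\AA^T_{\dir{G}^{(i,j)}}\DD_{\dir{G}}^{+/2}$ its normalized adjacency matrix in the usual sense. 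The base case $j=0$ is exactly the hypothesis: $\dir{G}^{(i,0)} = \frac{1}{1+\beta/\eta}\dir{G}^{(i)}_3$ is strongly connected Eulerian with degree matrix $\DD_{\dir{G}}$ and $\tilde{O}(n)$ edges. For the inductive step, $\dir{H} := (1-\alpha)\dir{G}^{(i,j-1)} + \alpha G_D$ is strongly connected (since $\alpha < 1$ preserves every edge of $\dir{G}^{(i,j-1)}$ and $G_D$ only adds self-loops) and Eulerian, with degree matrix $(1-\alpha)\DD_{\dir{G}} + \alpha\DD_{\dir{G}} = \DD_{\dir{G}}$; applying \Cref{lem:squaring} to $\dir{H}$ yields $\dir{G}^{(i,j)} = \tilde{\dir{H}}^2$, again strongly connected Eulerian with the same degree matrix, which closes the induction.

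Given the invariant, I would verify the three defining properties of a $(d,\epsilon,\alpha)$-square chain (\Cref{def:sparifier_chain}). Property $1$, $\norm{\AAn_j^{(i)}}_2 \le 1$, and property $3$, equality of $\ker(\II-\AAn_j^{(i)})$, $\ker((\II-\AAn_j^{(i)})^T)$ and $\ker(\UU_{\II-\AAn_j^{(i)}})$ together with their independence of $j$, both follow from \Cref{lem:lapprop47} applied to $\LL_{\dir{G}^{(i,j)}} = \DD_{\dir{G}} - \AA^T_{\dir{G}^{(i,j)}}$: it gives $\norm{\DD_{\dir{G}}^{+/2}\AA^T_{\dir{G}^{(i,j)}}\DD_{\dir{G}}^{+/2}}_2 \le 1$ and $\ker(\LL_{\dir{G}^{(i,j)}}) = \ker(\LL_{\dir{G}^{(i,j)}}^T) = \ker(\UU_{\LL_{\dir{G}^{(i,j)}}}) = \Span(\vecone)$, and conjugating by the fixed invertible diagonal $\DD_{\dir{G}}^{+/2}$ turns the latter into $\Span(\DD_{\dir{G}}^{1/2}\vecone)$, which does not depend on $j$. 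For property $2$ I would spell out the graph/matrix dictionary: since $\dir{H}$ has degree matrix $\DD_{\dir{G}}$, its normalized adjacency matrix is $\DD_{\dir{G}}^{+/2}\big((1-\alpha)\AA^T_{\dir{G}^{(i,j-1)}} + \alpha\DD_{\dir{G}}\big)\DD_{\dir{G}}^{+/2} = (1-\alpha)\AAn_{j-1}^{(i)} + \alpha\II =: \AAn_{j-1}^{(i),(\alpha)}$, and the exact square $\dir{H}^2$ satisfies $\DD_{\dir{G}}^{+/2}\LL_{\dir{H}^2}\DD_{\dir{G}}^{+/2} = \II - (\AAn_{j-1}^{(i),(\alpha)})^2$. \Cref{lem:squaring} guarantees that $\LL_{\dir{G}^{(i,j)}}$ is an $\epsilon$-approximation of $\LL_{\dir{H}^2}$, and \Cref{lem:37} with $\MM = \DD_{\dir{G}}^{+/2}$ (whose kernel is trivial, hence contained in $\ker(\UU_{\LL_{\dir{H}^2}})$) transfers this to: $\II - \AAn_j^{(i)} = \DD_{\dir{G}}^{+/2}\LL_{\dir{G}^{(i,j)}}\DD_{\dir{G}}^{+/2}$ is an $\epsilon$-approximation of $\II - (\AAn_{j-1}^{(i),(\alpha)})^2$, which is exactly property $2$.

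For the sparsity bound (point $1$ of the lemma), let $N_j := \nnz(\AA_{\dir{G}^{(i,j)}})$. Adding the self-loops of $\alpha G_D$ increases the nonzero count by at most $n$, and \Cref{lem:squaring} multiplies it by $O(\epsilon^{-4})$, so $N_j \le C\,\epsilon^{-4}(N_{j-1} + n)$ for an absolute constant $C$. Unrolling from $N_0 = \tilde{O}(n)$ gives $N_j \le (C\epsilon^{-4})^j\,\tilde{O}(n)$; substituting $\epsilon^{-1} = 30\,\Exp(5d)$ and $j \le d = \Theta((\log n)^\delta)$ gives $N_j \le \Exp\!\big(O(d) + 20 d^2\big)\cdot \tilde{O}(n) = \Exp\!\big(O((\log n)^{2\delta})\big)\cdot\tilde{O}(n)$, which is $n^{1+o(1)}$ precisely because $2\delta < 1$; the same bound holds for $\nnz(\AAn_j^{(i)})$ since conjugation by the positive-definite diagonal $\DD_{\dir{G}}^{+/2}$ does not change the support. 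Point $2$ of the lemma is then immediate from the definitions: $\AAn_0^{(i)} = \DD_{\dir{G}}^{+/2}\AA^T_{\dir{G}^{(i,0)}}\DD_{\dir{G}}^{+/2} = \frac{1}{1+\beta/\eta}\DD_{\dir{G}}^{+/2}\AA^T_{\dir{G}^{(i)}_3}\DD_{\dir{G}}^{+/2}$ using $\dir{G}^{(i,0)} = \frac{1}{1+\beta/\eta}\dir{G}^{(i)}_3$, and $\AAn_d^{(i)} = \DD_{\dir{G}}^{+/2}\AA^T_{\dir{G}^{(i,d)}}\DD_{\dir{G}}^{+/2}$ with $\dir{G}^{(i,d)}$ returned as $\dir{G}^{(i+1)}_0$.

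The conceptual content is modest, but two points demand care. First, the invariant that the degree matrix stays exactly $\DD_{\dir{G}}$ through every squaring and every added block of self-loops is used silently everywhere — in invoking \Cref{lem:lapprop47}, in identifying graph-squaring with squaring of the normalized adjacency matrix, and in transferring the $\epsilon$-graph-approximation of \Cref{lem:squaring} to an $\epsilon$-approximation of $\II-\AAn_j^{(i)}$ via $\DD_{\dir{G}}^{+/2}$-conjugation — so this bookkeeping is where a careless argument would break. Second, and this is the real quantitative obstacle, the innocuous factor $\epsilon^{-4}$ lost per step compounds over the $d$ steps into $\epsilon^{-4d} = \Exp(\Theta(d^2))$, so keeping the chain sparse forces $\Exp(\Theta(d^2)) = n^{o(1)}$, i.e.\ $d = o(\sqrt{\log n})$; the choice $d = \Theta((\log n)^\delta)$ with $\delta < 1/2$ is exactly what makes this go through, and is the reason the chain length cannot be pushed all the way to $\Theta(\log n)$.
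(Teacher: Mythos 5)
Your proposal is correct and follows essentially the same route as the paper: induction along the chain maintaining the exact degree matrix $\DD_{\dir{G}}$, \Cref{lem:lapprop47} and \Cref{lem:37} for the square-chain properties, and the same $\nnz$ recursion whose unrolling gives $\Exp(O((\log n)^{2\delta}))\cdot\tilde{O}(n)=n^{1+o(1)}$ because $2\delta<1$. The one place you lean on a cited statement more than the paper does is strong connectivity of the squared graph: you invoke the blanket claim in \Cref{lem:squaring}, whereas the paper's proof points out that the square of $(1-\alpha)\dir{G}^{(i,j-1)}+\alpha G_D$ contains a scaled copy of $\dir{G}^{(i,j-1)}$ as a subgraph via the self-loops, which is the actual reason the square remains strongly connected (squaring alone does not preserve strong connectivity, e.g.\ for an even-length directed cycle).
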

\begin{proof}
Notice that $\DD_{\dir{G}} = \DD_{\dir{G}^{(i,0)}} =  \DD_{\dir{G}^{(i,d)}}$ since degrees are exactly preserved by \sparsesquare(). We first show that $\mathcal{S}^{(i)}$ is a $(d, \epsilon, \alpha)$-square chain. Since the graph $(1 - \alpha)\dir{G}^{(i,j - 1)} + \alpha G_D$ has the adjacency matrix $(1 - \alpha)\AA_{\dir{G}^{(i,j - 1)}} + \alpha \DD_{\dir{G}^{(i,j - 1)}}$, its square contains a multiple of $\dir{G}^{(i,j - 1)}$ as a subgraph and is therefore strongly connected, if $\dir{G}^{(i,j - 1)}$ is strongly connected. Then by \Cref{lem:squaring} $\dir{G}^{(i,j)}$ is strongly connected. Together with \Cref{lem:37} and \Cref{lem:lapprop47} this yields the properties of a $(d, \epsilon, \alpha)$-square chain by induction. It remains to show sparsity. 

Initially, we have that $\nnz(\AAn^{(i)}_0) = \tilde{O}(n)$. By \Cref{lem:squaring} we have 
\begin{align*}
    \nnz(\AAn^{(i)}_{j + 1}) \leq C \nnz(\AAn^{(i)}_{j})/\epsilon^4 =  30C \cdot \Exp(20 (\log n)^\delta) \cdot \nnz(\AAn^{(i)}_{j})
\end{align*} 
for some constant $C$. By induction, we conclude
\begin{align*}
\nnz(\AAn^{(i)}_{j}) \leq (30C)^j \cdot \Exp(20j(\log n)^{\delta}) \cdot \nnz(\AAn^{(i)}_0)  \leq \Exp(O((\log n)^{2\delta})) \cdot \tilde{O}(n) = n^{1 + o(1)}
\end{align*}
where we used that $\delta < 1/2$ is a constant.
\end{proof}

Next we show that \chain() in \Cref{alg:square_chain} deterministically computes a pseudoinverse chain in almost linear time, using the machinery we developed. 

\begin{lemma}
For constants $\delta, \gamma \in (0, 1/2)$, $\gamma < \delta$, $k \geq 1$ , $\alpha = 1/4$ and a strongly connected Eulerian graph $\dir{G}$ with $m$ edges the algorithm $\mathcal{C} = \chain(\dir{G}, k, d = O((\log n)^{\delta}), \alpha, \epsilon = 1/(30 \cdot \Exp(5d)), \gamma)$ computes a $(k, d, \alpha, \epsilon, \gamma)$-pseudoinverse chain $\mathcal{C}$ in time $m^{1 + o(1)}$ so that 
\begin{enumerate}
    \item $|E(\dir{G}^{(0)}_j)| \leq 2m + \tilde{O}(n)$ for $j = 1, 2$, $|E(\dir{G}^{(0)}_3)| = \tilde{O}(n)$. 
    \item $|E(\dir{G}^{(i)}_j)| = n^{1+o(1)}$ for $i > 0$, $j = 0, 1, 2, 3$. 
    \item $\nnz(\AAn^{(i)}_{j}) = n^{1+o(1)}$ for all $i, j$. 
\end{enumerate}
\label{lem:pseudoinverse_chain}
\end{lemma}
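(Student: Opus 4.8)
The plan is to prove the statement by induction on the outer loop index $i$ of \chain(): I will show that if the $i$-th iteration receives a strongly connected Eulerian graph $\dir{G}_0^{(i)}$ with degree matrix $\DD_{\dir{G}}$, then it produces a valid quadruple $\dir{G}_0^{(i)},\dots,\dir{G}_3^{(i)}$, a valid square chain $\mathcal{S}^{(i)}$, and a graph $\dir{G}_0^{(i+1)}$ that is again strongly connected Eulerian with degree matrix $\DD_{\dir{G}}$, so the induction continues. Nearly all of the content is already in hand: \Cref{lem:globalspars} says $\sparsify(\dir{G}_0^{(i)},\gamma)$ outputs a $(\gamma,\beta,\eta)$-quadruple with $\beta=\Exp(O((\log n)^\gamma))$ and $\eta=\Exp(-3(\log n)^\gamma)$ in time almost-linear in $|E(\dir{G}_0^{(i)})|$, and \Cref{lem:chain} says $\squarechain(\tfrac{1}{1+\beta/\eta}\dir{G}_3^{(i)},d,\alpha,\epsilon)$ with $d=\Theta((\log n)^\delta)$, $\alpha=1/4$, $\epsilon=1/(30\Exp(5d))$ outputs a $(d,\epsilon,\alpha)$-square chain $\mathcal{S}^{(i)}$ together with a next graph $\dir{G}^{(i+1)}_0$ of degree matrix $\DD_{\dir{G}}$, provided its input graph is strongly connected Eulerian, has $\tilde{O}(n)$ edges, degree matrix $\DD_{\dir{G}}$, and polynomially bounded edge-weight ratio. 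So the proof reduces to (a) checking that these hypotheses hold at the start of every iteration, (b) reading off the four items of \Cref{def:pseudoinverse_chain}, and (c) the edge-count and runtime accounting.

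For (b), items~1 and~2 of \Cref{def:pseudoinverse_chain} are literally the conclusions of \Cref{lem:chain} and \Cref{lem:globalspars}. Items~3 and~4 are immediate from the code in \Cref{alg:square_chain}: \squarechain() sets $\AAn^{(i)}_j=\DD_{\dir{G}}^{+/2}\AA^T_{\dir{G}^{(i,j)}}\DD_{\dir{G}}^{+/2}$ with $\dir{G}^{(i,0)}=\tfrac{1}{1+\beta/\eta}\dir{G}^{(i)}_3$ and $\dir{G}^{(i,d)}=\dir{G}^{(i+1)}_0$, so substituting gives $\AAn_0^{(i)}=\tfrac{1}{1+\beta/\eta}\DD_{\dir{G}}^{+/2}\AA^T_{\dir{G}^{(i)}_3}\DD_{\dir{G}}^{+/2}$ (item~3) and $\DD_{\dir{G}}^{+/2}\AA^T_{\dir{G}^{(i+1)}_0}\DD_{\dir{G}}^{+/2}=\AAn_d^{(i)}$ (item~4). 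The inductive invariant to maintain is $\DD_{\dir{G}_0^{(i)}}=\DD_{\dir{G}}$ (together with strong connectivity, hence common kernel $\Span(\vecone)$ for all Laplacians in play): \sparsify() rescales the in/out-degrees of $\dir{G}_3^{(i)}$ by exactly $1+\beta/\eta$ relative to $\dir{G}_0^{(i)}$ (item~4 of \Cref{def:quadruple}), the explicit $\tfrac{1}{1+\beta/\eta}$ rescaling in \chain() undoes exactly this, and \sparsesquare() preserves degrees exactly (\Cref{lem:squaring}); likewise \sparsify() keeps $\dir{G}_3^{(i)}$ strongly connected and \sparsesquare() keeps squares strongly connected, so $\dir{G}^{(i+1)}_0$ is again strongly connected Eulerian with degree matrix $\DD_{\dir{G}}$. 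Throughout I also invoke the convention recorded after \Cref{lem:globalspars} that $\log(\omega^{\max}/\omega^{\min})=\tilde{O}(1)$ for every graph the algorithm touches, which both fixes $\beta=\tilde{O}(1)\cdot\Exp(2(\log n)^\gamma)$ and supplies the poly-weight-ratio hypothesis of \Cref{lem:chain}.

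For (c), the key observation is that the $n^{1+o(1)}$ blow-up from repeated squaring never compounds across the $k$ outer rounds, because each round's global sparsification resets the edge count to $\tilde{O}(n)$. Concretely, for $i=0$ item~3 of \Cref{def:quadruple} gives $|E(\dir{G}^{(0)}_j)|\le 2m+\tilde{O}(n)$ for $j=1,2$ and $|E(\dir{G}^{(0)}_3)|=\tilde{O}(n)$; since $\tfrac{1}{1+\beta/\eta}\dir{G}_3^{(0)}$ has the same (hence $\tilde{O}(n)$) edge count, \Cref{lem:chain} yields $\nnz(\AAn^{(0)}_j)=n^{1+o(1)}$ for all $j$, and (degrees being fixed) $|E(\dir{G}^{(1)}_0)|=n^{1+o(1)}$. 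For $i\ge 1$, feeding the $n^{1+o(1)}$-edge graph $\dir{G}^{(i)}_0$ into \sparsify() gives, again by item~3 of \Cref{def:quadruple}, $|E(\dir{G}^{(i)}_j)|\le 2n^{1+o(1)}+\tilde{O}(n)=n^{1+o(1)}$ for $j=0,1,2$ and $|E(\dir{G}^{(i)}_3)|=\tilde{O}(n)$, so $\tfrac{1}{1+\beta/\eta}\dir{G}_3^{(i)}$ again has $\tilde{O}(n)$ edges (re-establishing the hypothesis of \Cref{lem:chain}) and \Cref{lem:chain} again gives $\nnz(\AAn^{(i)}_j)=n^{1+o(1)}$ and $|E(\dir{G}^{(i+1)}_0)|=n^{1+o(1)}$. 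For the runtime, each round costs one \sparsify() call --- $m^{1+o(1)}$ for $i=0$ and $n^{1+o(1)}$ for $i\ge 1$ by \Cref{lem:globalspars} --- plus $d$ calls to \sparsesquare(), each of cost $O(\nnz^{1+o(1)}/\epsilon^4)$ with $\nnz=n^{1+o(1)}$ and $\epsilon^{-4}=\Exp(O((\log n)^\delta))=n^{o(1)}$ by \Cref{lem:squaring}, hence $n^{1+o(1)}$ per round; summing over the $k=n^{o(1)}$ rounds (and using $m\ge n$) gives total time $m^{1+o(1)}$.

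The main obstacle is not conceptual but the bookkeeping in (b): making the degree-normalization invariant $\DD_{\dir{G}_0^{(i)}}=\DD_{\dir{G}}$ survive the loop, which forces one to line up the $1+\beta/\eta$ degree rescaling baked into the quadruple against the $\tfrac{1}{1+\beta/\eta}$ rescaling applied in \chain() and the exact degree-preservation of \sparsesquare(), and to confirm that the hypotheses of \Cref{lem:chain} (strongly connected Eulerian input, $\tilde{O}(n)$ edges, degree matrix $\DD_{\dir{G}}$, poly-bounded weights) genuinely hold at the top of every iteration, where the "global sparsification resets the edge count" observation does the heavy lifting. All of the genuinely hard analysis has already been localized to \Cref{lem:globalspars} and \Cref{lem:chain}.
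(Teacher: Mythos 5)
Your proposal is correct and follows exactly the route the paper takes: the paper's own proof is the one-line statement that the lemma ``follows directly from \Cref{lem:chain} and \Cref{lem:globalspars}.'' You have simply filled in the bookkeeping (the degree-normalization invariant, the edge-count reset after each global sparsification, and the runtime sum) that the paper leaves implicit, and all of it checks out against \Cref{def:quadruple}, \Cref{def:pseudoinverse_chain}, and \Cref{alg:square_chain}.
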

\begin{proof}
Follows directly from \Cref{lem:chain} and \Cref{lem:globalspars}. 
\end{proof}

Before concluding with a lemma about the improvement in condition number of a pseudoinverse chain, we show a convenient corollary about the quadruplets introduced in \Cref{sec:globspars} and state a lemma about the improvement in condition number due to squaring. 

\begin{corollary}
Given a $(\gamma, \beta = \Exp(O((\log n)^{\gamma})), \eta = \Exp(-3(\log n)^{\gamma}))$-quadruple $\dir{G}_0, \dir{G}_1, \dir{G}_2, \dir{G}_3$ for $\gamma \in (0,1)$ constant, we have 
\begin{align*}
   \Exp(-O((\log n)^{\gamma})) \lambda_*(\DD_{\dir{G}_0}^{+/2}\LL_{\dir{G}_0}\DD_{\dir{G}_0}^{+/2}) \leq  \frac{\eta}{\beta} \lambda_*(\DD_{\dir{G}_0}^{+/2}\LL_{\dir{G}_3}\DD_{\dir{G}_0}^{+/2})
\end{align*}
\label{lem:quadruple_loss}
\end{corollary}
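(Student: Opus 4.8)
The plan is to chain the spectral comparisons recorded in the quadruple definition together with the three condition-number lemmas of Section~\ref{sec:globspars}. Concretely, I would first compare $\LL_{\dir{G}_3}$ to the undirectification $\UU_{\LL_{\dir{G}_3}}$ and then walk the chain $\UU_{\LL_{\dir{G}_3}} \approx \UU_{\LL_{\dir{G}_2}} \approx \UU_{\LL_{\dir{G}_1}} \approx \UU_{\LL_{\dir{G}_0}}$, each step costing a subpolynomial factor $\Exp(O((\log n)^{\gamma}))$ by point 2 of \Cref{def:quadruple} (equivalently by \Cref{obs:partial_sym_cond}, \Cref{lem:dirspars_cond}, and \Cref{lem:undirspars_cond}). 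The only subtlety is the role of the degree normalization $\DD_{\dir{G}_0}$ versus the actual degree matrices of $\dir{G}_3$: since point 4 of \Cref{def:quadruple} gives $\DD_{\dir{G}_0} = (1 + \tfrac{\beta}{\eta}) \DD_{\dir{G}_3}$ and degrees are preserved throughout squaring (so $\DD_{\dir{G}_0}$ is the common normalizer in \Cref{def:pseudoinverse_chain}), this only introduces a scalar factor $(1+\tfrac{\beta}{\eta})$ that I carry along explicitly; it is harmless since $\tfrac{\eta}{\beta}$ appears on the right-hand side of the claimed inequality.

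The key steps, in order, are: (i) observe $\UU_{\LL_{\dir{G}_3}} \preceq \Exp(O((\log n)^{\gamma})) \UU_{\LL_{\dir{G}_0}}$ by telescoping point 2 of \Cref{def:quadruple} over $i = 1,2,3$ (three subpolynomial factors multiply to one subpolynomial factor); (ii) note that for a directed Eulerian Laplacian, the symmetrized quadratic form dominates the Laplacian's quadratic form in the relevant sense — more precisely, $\lambda_*(\DD^{+/2}\LL_{\dir H}\DD^{+/2})$ for an Eulerian $\dir H$ is controlled by $\lambda_*(\DD^{+/2}\UU_{\LL_{\dir H}}\DD^{+/2})$; one standard way to see this is that $\UU_{\LL_{\dir H}}$ and $\LL_{\dir H}$ have the same kernel and $\xx^\top \LL_{\dir H} \xx = \xx^\top \UU_{\LL_{\dir H}} \xx$ for all real $\xx$ (the skew part contributes nothing to the real quadratic form), so the smallest nonzero eigenvalue of the symmetric part lower-bounds (up to the $\DD$ conjugation) the relevant quantity for $\LL_{\dir{G}_0}$; (iii) convert the Loewner comparison of symmetrizations into a comparison of smallest nonzero eigenvalues after conjugating by $\DD_{\dir{G}_0}^{+/2}$, using that conjugation by a fixed PSD matrix and the monotonicity $\AA \preceq \BB \Rightarrow \lambda_*(\AA) \leq \lambda_*(\BB)$ when the kernels agree (which they do here, all graphs being strongly connected on the same vertex set); (iv) absorb the $(1+\tfrac{\beta}{\eta})$ degree-rescaling factor and the $\tfrac{\eta}{\beta}$ on the right, noting $\tfrac{\eta}{\beta} \cdot (1 + \tfrac{\beta}{\eta}) \geq 1$, and fold all accumulated $\Exp(O((\log n)^{\gamma}))$ factors into the single $\Exp(-O((\log n)^{\gamma}))$ on the left.

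I expect the main obstacle to be step (ii): getting the right comparison between $\lambda_*$ of a \emph{directed} Laplacian conjugated by the degree matrix and $\lambda_*$ of its symmetrization, since $\DD_{\dir{G}_0}^{+/2}\LL_{\dir{G}_0}\DD_{\dir{G}_0}^{+/2}$ is not symmetric and "$\lambda_*$" must be interpreted carefully (as the smallest singular value on $\im(\LL_{\dir{G}_0})$, or via the Hermitian part). The cleanest route is probably to work throughout with the symmetric quantities $\lambda_*(\DD_{\dir{G}_0}^{+/2}\UU_{\LL_{\dir{H}}}\DD_{\dir{G}_0}^{+/2})$ and only at the very end invoke the identity $\xx^\top \LL_{\dir{H}}\xx = \xx^\top \UU_{\LL_{\dir H}}\xx$ together with $\UU_{\LL_{\dir{G}_0}} = \LL_{\mathcal{U}(\dir{G}_0)}$; if the intended reading of $\lambda_*$ in the corollary is already "$\lambda_*$ of the symmetric part" then step (ii) collapses entirely and the proof is a pure telescoping of Loewner inequalities, which would be routine. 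Everything else — the telescoping and the bookkeeping of subpolynomial factors — is mechanical given \Cref{obs:partial_sym_cond}, \Cref{lem:dirspars_cond}, \Cref{lem:undirspars_cond}, and point 4 of \Cref{def:quadruple}.
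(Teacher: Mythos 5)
Your proposal is correct and follows essentially the same route as the paper: telescope point 2 of \Cref{def:quadruple} to compare $\UU_{\LL_{\dir{G}_0}}$ with $\UU_{\LL_{\dir{G}_3}}$, conjugate by the fixed matrix $\DD_{\dir{G}_0}^{+/2}$, read off the smallest nonzero eigenvalues via the real quadratic-form identity $\xx^\top\LL\xx = \xx^\top\UU_{\LL}\xx$, and absorb $\tfrac{\eta}{\beta} \geq \Exp(-O((\log n)^{\gamma}))$ into the left-hand constant. The extra $(1+\tfrac{\beta}{\eta})$ degree-rescaling bookkeeping in your step (iv) is unnecessary since both sides of the claim are normalized by the same $\DD_{\dir{G}_0}$, but it does no harm.
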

\begin{proof}
By the second property of \Cref{def:quadruple} we have 
\begin{align*}
    \Exp(-O((\log n)^{\gamma})) \DD_{\dir{G}_0}^{+/2}\LL_{\dir{G}_0}\DD_{\dir{G}_0}^{+/2} \preceq \DD_{\dir{G}_0}^{+/2}\LL_{\dir{G}_3}\DD_{\dir{G}_0}^{+/2}.
\end{align*}
Since $\frac{\eta}{\beta} \geq \Exp(-O((\log n)^{\gamma}))$ we conclude the lemma. 
\end{proof}

\begin{lemma}[Lemma 4.9 in \cite{cohen2016almostlineartime}]
\label{lem:chain_prop49}
For length $d \geq 1$, parameter $\alpha = 1/4$ and error $\epsilon \in (1, 1/2)$ and a $(d, \alpha, \epsilon)$-chain $\AAn_0, \AAn_1, ..., \AAn_d$ the following hold:
\begin{enumerate}
    \item $\kappa(\II - \AAn_i, \II = \AAn_{i - 1}) \leq 21$ for all $i = 1, ..., d$.
    \item $\lambda_*(\II - \AA_d) \geq \min\{1/4,\lambda_*(\II - \AA_0) \cdot ((1 - \epsilon)1.25)^d\}$.
\end{enumerate}
\end{lemma}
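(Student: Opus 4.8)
The plan is to derive both parts from a single algebraic identity for the squaring operation, together with the structural guarantees of a square sparsifier chain (\Cref{def:sparifier_chain}) and the definition of $\epsilon$-approximation (\Cref{def:mat_approx}). Throughout, for a square matrix $\BB$ write $\UU_\BB=\tfrac12(\BB+\BB^T)$ and $\SS_\BB=\tfrac12(\BB-\BB^T)$, so $\BB=\UU_\BB+\SS_\BB$, $\SS_\BB^2\preceq\veczero$, $-\SS_\BB^2\preceq\norm{\BB}_2^2\II$, and (the cross terms cancel) $\UU_{\BB^2}=\UU_\BB^2+\SS_\BB^2$. Since $\II-\AAn_i$ is not symmetric, I read $\lambda_*(\II-\AAn_i)$ as $\lambda_*(\UU_{\II-\AAn_i})$ and $\kappa(\II-\AAn_i,\II-\AAn_{i-1})$ as the ratio of the largest to the smallest value of $\xx^T\UU_{\II-\AAn_i}\xx/(\xx^T\UU_{\II-\AAn_{i-1}}\xx)$ over $\xx\notin\ker$, in line with \cite{cohen2016almostlineartime}. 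The engine is the identity, valid for any $\AAn$ and $\alpha=1/4$, $\AAn^{(\alpha)}=\tfrac14\II+\tfrac34\AAn$:
$$\II-(\AAn^{(\alpha)})^2=\tfrac{3}{16}(\II-\AAn)(5\II+3\AAn),\qquad\text{so}\qquad\UU_{\II-(\AAn^{(\alpha)})^2}=\tfrac{3}{16}\bigl(5\II-2\UU_\AAn-3\UU_\AAn^2-3\SS_\AAn^2\bigr).$$
Fix a unit vector $\xx\notin\ker(\UU_{\II-\AAn})$ and abbreviate $t=\xx^T\UU_\AAn\xx$, $s=\norm{\UU_\AAn\xx}_2^2$, $u=\norm{\SS_\AAn\xx}_2^2$. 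Three elementary facts, all coming from $\norm{\AAn}_2\le1$ (\Cref{def:sparifier_chain}), will be used: $t\le\sqrt s$ (so $s\ge t^2$); $\norm{\AAn\xx}_2^2+\norm{\AAn^T\xx}_2^2=2(s+u)\le2$, so $s+u\le1$; and $t\in[-1,1]$ with $1-t=\xx^T\UU_{\II-\AAn}\xx>0$. From the identity, $\xx^T\UU_{\II-(\AAn^{(\alpha)})^2}\xx=\tfrac{3}{16}(5-2t-3s+3u)$.

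For Part~1 I would bound this ratio from both sides. Using $u\le1-s$ and then $s\ge t^2$ gives $\tfrac{3}{16}(5-2t-3s+3u)\le\tfrac{3}{16}(8-2t-6s)\le\tfrac{3}{16}(8-2t-6t^2)=\tfrac38(3t+4)(1-t)\le\tfrac{21}{8}(1-t)$; using $u\ge0$ and $s\le1$ gives $\tfrac{3}{16}(5-2t-3s+3u)\ge\tfrac{3}{16}(2-2t)=\tfrac38(1-t)$. Hence $\tfrac38\le\xx^T\UU_{\II-(\AAn^{(\alpha)})^2}\xx/(\xx^T\UU_{\II-\AAn}\xx)\le\tfrac{21}{8}$, i.e. $\kappa(\II-(\AAn_{i-1}^{(\alpha)})^2,\II-\AAn_{i-1})\le7$. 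Since $\II-\AAn_i$ is an $\epsilon$-approximation of $\II-(\AAn_{i-1}^{(\alpha)})^2$, taking symmetric parts in condition~2 of \Cref{def:mat_approx} yields $(1-\epsilon)\UU_{\II-(\AAn_{i-1}^{(\alpha)})^2}\preceq\UU_{\II-\AAn_i}\preceq(1+\epsilon)\UU_{\II-(\AAn_{i-1}^{(\alpha)})^2}$ (the matching kernels make this a genuine Loewner inequality on the full image), so $\kappa(\II-\AAn_i,\II-\AAn_{i-1})\le\tfrac{1+\epsilon}{1-\epsilon}\cdot7\le21$ for $\epsilon\le\tfrac12$.

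For Part~2 I would instead discard $-3\SS_\AAn^2\succeq\veczero$, obtaining $\UU_{\II-(\AAn^{(\alpha)})^2}\succeq\tfrac{3}{16}(\II-\UU_\AAn)(5\II+3\UU_\AAn)$, a polynomial in the symmetric matrix $\UU_\AAn$. Since $\xx\perp\ker(\UU_{\II-\AAn})$ forces every eigencomponent of $\xx$ in the eigenbasis of $\UU_\AAn$ to have eigenvalue $\mu\in[-1,\,1-\lambda_*(\UU_{\II-\AAn})]$, we get $\xx^T\UU_{\II-(\AAn^{(\alpha)})^2}\xx\ge\tfrac{3}{16}\bigl(\min_{\mu\in[-1,1-\lambda_*]}(1-\mu)(5+3\mu)\bigr)\norm{\xx}_2^2$. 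The parabola $(1-\mu)(5+3\mu)$ is concave, hence minimized at an endpoint: value $4$ at $\mu=-1$, and $\lambda_*(8-3\lambda_*)$ at $\mu=1-\lambda_*$, of which a one-line computation gives $\tfrac{3}{16}\lambda_*(8-3\lambda_*)\ge\tfrac54\lambda_*$ whenever $\lambda_*\le\tfrac49$ (and the bound stays above a universal constant otherwise). Thus $\lambda_*(\UU_{\II-(\AAn_{i-1}^{(\alpha)})^2})\ge\min\{\tfrac14,\tfrac54\lambda_*(\UU_{\II-\AAn_{i-1}})\}$; combined with the factor $1-\epsilon$ lost in the $\epsilon$-approximation passing from $\II-(\AAn_{i-1}^{(\alpha)})^2$ to $\II-\AAn_i$, an induction over $i=1,\dots,d$ (tracking the $\min$, and using that once $\lambda_*$ exceeds a constant it can only stay above a constant) gives $\lambda_*(\UU_{\II-\AAn_d})\ge\min\{\tfrac14,((1-\epsilon)1.25)^d\lambda_*(\UU_{\II-\AAn_0})\}$.

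The step I expect to be the real obstacle is the upper bound in Part~1: handling the antisymmetric contribution $-3\SS_\AAn^2$ crudely (via $-3\SS_\AAn^2\preceq3\II$) gives a bound that diverges as $\xx\to\ker(\UU_{\II-\AAn})$, and recovering a clean constant requires the observation that $\norm{\AAn}_2\le1$ applied to \emph{both} $\AAn$ and $\AAn^T$ forces $s+u\le\norm{\xx}_2^2$, which is precisely what makes the numerator carry a factor $(1-t)$ cancelling the denominator. Spotting the factorization $\II-(\AAn^{(1/4)})^2=\tfrac{3}{16}(\II-\AAn)(5\II+3\AAn)$ is the other prerequisite, but verifying it is routine, and with both in hand Part~2 is a comparatively mechanical extremal computation on a quadratic.
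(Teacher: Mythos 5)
First, note that the paper does not prove this statement at all: it is imported verbatim (with transcription typos such as ``$\epsilon\in(1,1/2)$'' and ``$\II=\AAn_{i-1}$'') from \cite{cohen2016almostlineartime}, so there is no in-paper proof to compare against and your argument must stand on its own. Your factorization $\II-(\AAn^{(1/4)})^2=\tfrac{3}{16}(\II-\AAn)(5\II+3\AAn)$, the identity $\UU_{\AAn^2}=\UU_{\AAn}^2+\SS^2$ for the antisymmetric part $\SS$, and all of Part~1 check out: the observation that $\norm{\AAn}_2\le1$ applied to both $\AAn$ and $\AAn^T$ forces $s+u\le1$ is exactly what extracts the factor $(1-t)$, and the resulting constants $\tfrac38\le\cdot\le\tfrac{21}{8}$ and $\tfrac{1+\epsilon}{1-\epsilon}\cdot7\le21$ are correct. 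The per-step estimate in Part~2, $\lambda_*(\UU_{\II-(\AAn^{(\alpha)})^2})\ge\tfrac{3}{16}\min\{4,\lambda_*(8-3\lambda_*)\}$, is also correct.

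The gap is the closing induction of Part~2. Writing $L_i=\lambda_*(\UU_{\II-\AAn_i})$ and $c=(1-\epsilon)\cdot1.25$, the case that has to be carried is $L_0c^d\ge1/4$, where the asserted bound is exactly $1/4$; your phrase ``once $\lambda_*$ exceeds a constant it can only stay above a constant'' is doing all the work there, and it is false for your per-step bound in the upper part of the allowed range of $\epsilon$. From $L_{i-1}=1/4$ your bound gives only $L_i\ge(1-\epsilon)\tfrac{3}{16}\cdot\tfrac14(8-\tfrac34)=(1-\epsilon)\tfrac{87}{256}$, which drops below $1/4$ as soon as $\epsilon>1-\tfrac{64}{87}\approx0.26$, so the invariant $L_i\ge\min\{1/4,L_0c^i\}$ is not preserved. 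This is not slack in your estimate: taking every $\AAn_i$ to be a scalar (so $\SS=0$ and the per-step bound is an equality on the bottleneck eigenvector) and every approximation tight, i.e.\ $\II-\AAn_i=(1-\epsilon)\bigl(\II-(\AAn_{i-1}^{(1/4)})^2\bigr)$, realizes the recursion $L_i=(1-\epsilon)\tfrac{3}{16}L_{i-1}(8-3L_{i-1})$ exactly, and for $\epsilon=0.3$, $L_0=2$, $d=9$ this gives $L_9\approx0.245<1/4=\min\{1/4,L_0c^9\}$. So either the constants as transcribed here are slightly off or $\epsilon$ must be restricted; your induction does close — provided you carry the sharper form $(1-\epsilon)\tfrac{3}{16}L(8-3L)$ through the case $L_{i-1}\ge1/4$ rather than the weakened $\min\{1/4,\tfrac54L\}$ — whenever $\epsilon\le1-\tfrac{64}{87}$, which comfortably covers the regime in which the paper invokes the lemma ($\epsilon=\tfrac{1}{30\Exp(5d)}$). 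You should state that restriction (or weaken the constant in the $\min$) and replace the hand-wave by the explicit case analysis $L_{i-1}<1/4$, $L_{i-1}\in[1/4,4/9]$, $L_{i-1}>4/9$.
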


Finally, we show that a deep enough pseudoinverse chain reaches a constant condition number. 

\begin{lemma}
Given $\gamma, \delta, \epsilon, \alpha, \mathcal{C}$ and $\dir{G}$ as in \Cref{lem:pseudoinverse_chain} where we set $k = \Theta(\log(1/\hat{\lambda}_*)/(\log n)^\delta)$ for $1/\mathit{poly}(n) \leq \hat{\lambda}_* \leq \lambda_*(\DD_{\dir{G}}^{+/2} \UU_{\LL_{\dir{G}}} \DD_{\dir{G}}^{+/2})$ we have $\lambda_*(\DD_{\dir{G}}^{+/2} \UU_{\LL_{\dir{G}_0^{(k)}}} \DD_{\dir{G}}^{+/2}) \geq 1/4$
\label{lem:eigvals_const_lower}
\end{lemma}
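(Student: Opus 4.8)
The plan is to track the single scalar $m_i := \lambda_*(\DD_{\dir{G}}^{+/2}\UU_{\LL_{\dir{G}_0^{(i)}}}\DD_{\dir{G}}^{+/2})$ along the chain for $i = 0,1,\dots,k$; recall that $\DD_{\dir{G}_0^{(i)}} = \DD_{\dir{G}}$ by \Cref{def:pseudoinverse_chain}, so $m_0 = \lambda_*(\DD_{\dir{G}}^{+/2}\UU_{\LL_{\dir{G}}}\DD_{\dir{G}}^{+/2}) \geq \hat{\lambda}_*$ by hypothesis, and the goal is $m_k \geq 1/4$. The core of the argument is the one-step recursion $m_{i+1} \geq \min\{1/4,\, \Exp(c\,(\log n)^\delta)\,m_i\}$ for a fixed constant $c>0$, which I would establish first and then iterate.

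To prove the recursion, consider iteration $i$ of \chain{} (\Cref{alg:square_chain}), which performs a global sparsification followed by a length-$d$ squaring chain. First, $\dir{G}_0^{(i)},\dir{G}_1^{(i)},\dir{G}_2^{(i)},\dir{G}_3^{(i)}$ is a $(\gamma,\beta,\eta)$-quadruple with $\beta = \Exp(O((\log n)^\gamma))$ and $\eta = \Exp(-3(\log n)^\gamma)$ (item 2 of \Cref{def:pseudoinverse_chain} together with \Cref{lem:globalspars}), so \Cref{lem:quadruple_loss} gives $\Exp(-O((\log n)^\gamma))\,m_i \leq \tfrac{\eta}{\beta}\,\lambda_*(\DD_{\dir{G}}^{+/2}\UU_{\LL_{\dir{G}_3^{(i)}}}\DD_{\dir{G}}^{+/2})$. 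Writing $\dir{G}^{(i,0)} := \tfrac{1}{1+\beta/\eta}\dir{G}_3^{(i)}$ for the graph fed into \squarechain{}, we have $\DD_{\dir{G}^{(i,0)}} = \DD_{\dir{G}}$ and $\AAn_0^{(i)} = \DD_{\dir{G}}^{+/2}\AA^T_{\dir{G}^{(i,0)}}\DD_{\dir{G}}^{+/2}$ (item 3 of \Cref{def:pseudoinverse_chain}), hence $\II - \AAn_0^{(i)} = \DD_{\dir{G}}^{+/2}\LL_{\dir{G}^{(i,0)}}\DD_{\dir{G}}^{+/2}$ and $\lambda_*(\II - \AAn_0^{(i)}) = \tfrac{1}{1+\beta/\eta}\lambda_*(\DD_{\dir{G}}^{+/2}\UU_{\LL_{\dir{G}_3^{(i)}}}\DD_{\dir{G}}^{+/2}) \geq \tfrac12\tfrac{\eta}{\beta}\lambda_*(\DD_{\dir{G}}^{+/2}\UU_{\LL_{\dir{G}_3^{(i)}}}\DD_{\dir{G}}^{+/2}) \geq \Exp(-O((\log n)^\gamma))\,m_i$. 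Second, $\mathcal{S}^{(i)}$ is a $(d,\epsilon,\alpha)$-square chain with $\alpha=1/4$ and $\epsilon = 1/(30\,\Exp(5d)) \leq 1/30$, so \Cref{lem:chain_prop49}(2) gives $m_{i+1} = \lambda_*(\II - \AAn_d^{(i)}) \geq \min\{1/4,\, ((1-\epsilon)\cdot 1.25)^d\,\lambda_*(\II-\AAn_0^{(i)})\} \geq \min\{1/4,\, 1.2^d\,\lambda_*(\II-\AAn_0^{(i)})\}$, where we used $\II - \AAn_d^{(i)} = \DD_{\dir{G}}^{+/2}\LL_{\dir{G}_0^{(i+1)}}\DD_{\dir{G}}^{+/2}$ (item 4 of \Cref{def:pseudoinverse_chain}, with $\dir{G}^{(i,d)} = \dir{G}_0^{(i+1)}$). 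Chaining the two bounds, $m_{i+1} \geq \min\{1/4,\, 1.2^d\,\Exp(-O((\log n)^\gamma))\,m_i\}$, and since $d = \Theta((\log n)^\delta)$ with $\delta > \gamma$ the multiplicative factor equals $\Exp(\Theta((\log n)^\delta) - O((\log n)^\gamma)) \geq \Exp(c\,(\log n)^\delta)$ for a constant $c>0$ and all large $n$.

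It remains to iterate the recursion. If $m_j \geq 1/4$ for some $j \leq k$ then, as $\Exp(c(\log n)^\delta)\geq 1$, the recursion forces $m_i \geq 1/4$ for all $i \geq j$, in particular $m_k \geq 1/4$. Otherwise $m_i < 1/4$ for every $i<k$, so the recursion reads $m_{i+1}\geq \Exp(c(\log n)^\delta)\,m_i$, giving $m_k \geq \Exp(c\,k\,(\log n)^\delta)\,m_0 \geq \Exp(c\,k\,(\log n)^\delta)\,\hat{\lambda}_*$. Because $\hat{\lambda}_* \geq 1/\mathrm{poly}(n)$ we have $\ln(1/(4\hat{\lambda}_*)) = O(\log n)$, so for $k = \Theta(\log(1/\hat{\lambda}_*)/(\log n)^\delta)$ with a sufficiently large hidden constant the exponent $c\,k\,(\log n)^\delta$ is at least $\ln(1/(4\hat{\lambda}_*))$ and hence $m_k \geq 1/4$.

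The only step beyond bookkeeping — and thus the crux — is that the geometric gain $\Exp(\Theta((\log n)^\delta))$ from the $d$ squarings strictly dominates the geometric loss $\Exp(-O((\log n)^\gamma))$ from one global sparsification; this is exactly why the chain is built with $\delta > \gamma$ and why squarings are batched into blocks of length $d = \Theta((\log n)^\delta)$ between global sparsifications. Everything else amounts to keeping the rescalings straight — $\dir{G}_0^{(i)}$, $\dir{G}_3^{(i)}$, $\dir{G}^{(i,0)} = \tfrac{1}{1+\beta/\eta}\dir{G}_3^{(i)}$ and $\dir{G}_0^{(i+1)} = \dir{G}^{(i,d)}$ all share the degree matrix $\DD_{\dir{G}}$ after normalization — and to the convention that $\lambda_*(\II-\AAn_j^{(i)})$ means $\lambda_*$ of the symmetric part, i.e.\ of $\DD_{\dir{G}}^{+/2}\UU_{\LL_{\dir{G}^{(i,j)}}}\DD_{\dir{G}}^{+/2}$.
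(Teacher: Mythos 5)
Your proposal is correct and follows essentially the same route as the paper: combine \Cref{lem:quadruple_loss} (loss of $\Exp(-O((\log n)^\gamma))$ per global sparsification) with \Cref{lem:chain_prop49} (gain of $((1-\epsilon)1.25)^d$ per squaring block), observe that the net per-round factor is $\Exp(\Omega((\log n)^\delta))$ because $\delta>\gamma$, and iterate $k$ times. Your write-up is in fact somewhat more careful than the paper's, e.g.\ in tracking the $\tfrac{1}{1+\beta/\eta}$ rescaling and in handling the $\min\{1/4,\cdot\}$ explicitly when iterating the recursion.
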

\begin{proof}
By \Cref{lem:quadruple_loss} we have for every $i = 0, 1, ..., k$
\begin{align*}
    \Exp(-O((\log n)\gamma)) \cdot \lambda_*(\DD_{\dir{G}}^{+/2} \LL_{\dir{G}^{(i)}_0} \DD_{\dir{G}}^{+/2}) \leq \frac{\eta}{\beta} \cdot \lambda_*(\DD_{\dir{G}}^{+/2} \LL_{\dir{G}^{(i)}_3} \DD_{\dir{G}}^{+/2})
\end{align*}
and by \Cref{lem:chain_prop49} we have 
\begin{align*}
    \min\{ 1.125^d \lambda_*(\frac{\eta}{\beta}\DD_{\dir{G}}^{+/2} \LL_{\dir{G}^{(i)}_3} \DD_{\dir{G}}^{+/2}), 1/4\} \leq \lambda_*(\DD_{\dir{G}}^{+/2} \LL_{\dir{G}^{(i + 1)}_0} \DD_{\dir{G}}^{+/2}).
\end{align*}
Therefore each of the $k$ chains improves the minimum eigenvalue by $\frac{1.125^d}{\Exp(O((\log n)^\gamma))} = \Exp(\Omega((\log n)^\delta))$ for $\delta > \lambda$ and $d = \Theta((\log n)^\delta)$ big enough. After $k = \Theta(\log(\hat{\lambda}_*)/d)$ iterations this ensures the threshold $1/4$ is reached.  
\end{proof}

\section{A Deterministic Squaring Solver}
\label{sec:finalSec}

In this section, we build our recursive squaring solver on top of the pseudoinverse chain constructed in the previous section. It is inspired by and closely follows the solver presented in \cite{cohen2016almostlineartime}. In our algorithm, the global sparsifications of the pseudoinverse chain correspond to branching points of the recursion. This is necessary since we incur relatively high sub-polynomial losses when globally sparsifying. We state the main theorem of this section. 

\begin{theorem}[Deterministic Eulerian Solver]
\label{thm:eul_square_solver}
Given the Laplacian $\LL_{\dir{G}}$ of a strongly connected Eulerian graph $\dir{G}$ with $m$ edges and polynomially bounded edge weights, a parameter $\epsilon \in (0,1)$ and a lower bound $\hat{\lambda}_*$ on $\lambda_*(\DD_{\dir{G}}^{+/2} \LL_{\dir{G}} \DD_{\dir{G}}^{+/2})$ so that $\hat{\lambda}_* \geq 1/\mathit{poly}(n)$ the algorithm \solveEulerian($\LL_{\dir{G}}, \epsilon, \hat{\lambda}_*$) (\Cref{alg:solve_Eulerian}) deterministically computes a vector $\xx$ so that 
\begin{align*}
    \norm{\xx - \LL_{\dir{G}}^+ \bb}_{\UU_{\LL_{\dir{G}}}} \leq \epsilon \norm{\LL_{\dir{G}}^+ \bb}_{\UU_{\LL_{\dir{G}}}}
\end{align*}
in time $m^{1 + o(1)} \cdot \log \frac{1}{\epsilon}$ for a vector $\bb \in \im(\LL_{\dir{G}})$.
\end{theorem}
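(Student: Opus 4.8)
The plan is to run the recursive preconditioned‑Richardson solver of \cite{cohen2016almostlineartime} on top of a \emph{deterministic} pseudoinverse chain, using our crude global sparsification (\Cref{lem:globalspars}) and our deterministic squaring (\Cref{lem:squaring}) in place of the randomized routines. First I would invoke \Cref{lem:pseudoinverse_chain}: fix $\delta = 1/3$, a small constant $\gamma\in(0,\delta)$ (say $\gamma = 1/10$), $\alpha = 1/4$, $d = \Theta((\log n)^{\delta})$, $\epsilon_{\mathrm{chain}} = 1/(30\,\Exp(5d))$, and $k = \Theta(\log(1/\hat\lambda_*)/(\log n)^{\delta})$; since $\hat\lambda_*\ge 1/\mathrm{poly}(n)$ this gives $k = O((\log n)^{2/3})$. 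Then $\mathcal C = \chain(\dir{G},k,d,\alpha,\epsilon_{\mathrm{chain}},\gamma)$ is computed deterministically in $m^{1+o(1)}$ time, every $\AAn^{(i)}_j$ and every $\dir{G}^{(i)}_j$ has $n^{1+o(1)}$ non‑zeros (with the exceptions $\dir{G}^{(0)}_1,\dir{G}^{(0)}_2$ of size $\le 2m+\tilde O(n)$), and by \Cref{lem:eigvals_const_lower} the choice of $k$ forces $\lambda_*(\DD_{\dir{G}}^{+/2}\UU_{\LL_{\dir{G}^{(k)}_0}}\DD_{\dir{G}}^{+/2})\ge 1/4$, i.e.\ the bottom of the chain is well‑conditioned.

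Next I would build, by downward induction on $i=k,k-1,\dots,0$, a linear operator $\ZZ^{(i)}$ that is a $(1-\Omega(1))$‑approximate pseudoinverse of $\LL_{\dir{G}^{(i)}_0}$ with respect to $\UU_{\LL_{\dir{G}^{(i)}_0}}$. For the base case $i=k$: since $\|\AAn\|_2\le 1$ and $\lambda_*\ge 1/4$, a constant number of preconditioned‑Richardson iterations (\Cref{lem:richardson}) with the trivial preconditioner $\DD_{\dir{G}}^{-1}$ and a constant step size contracts geometrically, the contraction bound coming from \Cref{lem:b9} and \Cref{lem:b2} and the constant spectral bounds on $\DD_{\dir{G}}^{+/2}\LL_{\dir{G}^{(k)}_0}\DD_{\dir{G}}^{+/2}$. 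For $i<k$ I would compose three reductions. \textbf{(a) Quadruple ascent.} $\dir{G}^{(i)}_0,\dots,\dir{G}^{(i)}_3$ is a $\gamma$‑quadruple, so by item~1 of \Cref{def:quadruple} each $\LL^+_{\dir{G}^{(i)}_{j+1}}$ is a $\bigl(1-1/\Exp(O((\log n)^\gamma))\bigr)$‑approximate pseudoinverse of $\LL_{\dir{G}^{(i)}_j}$ in the $\UU_{\LL_{\dir{G}^{(i)}_{j+1}}}$‑norm; three nested preconditioned‑Richardson loops (\Cref{lem:richardson}), each of $N = \Exp(O((\log n)^\gamma))$ iterations, turn a solver for $\LL_{\dir{G}^{(i)}_3}$ into one for $\LL_{\dir{G}^{(i)}_2}$, then $\LL_{\dir{G}^{(i)}_1}$, then $\LL_{\dir{G}^{(i)}_0}$, while the norm‑equivalences of item~2 of \Cref{def:quadruple}, absorbed by a constant number of extra iterations per layer, keep every loop contracting. \textbf{(b) Normalization.} By item~3 of \Cref{def:pseudoinverse_chain} and $\DD_{\dir{G}^{(i)}_0}=\DD_{\dir{G}}$, solving in $\LL_{\dir{G}^{(i)}_3}$ reduces, after conjugation by $\DD_{\dir{G}}^{\pm 1/2}$ and the scalar $1/(1+\beta/\eta)$, to applying $(\II-\AAn^{(i)}_0)^+$. \textbf{(c) Square‑chain descent.} Since $\mathcal S^{(i)}$ is a $(d,\epsilon_{\mathrm{chain}},\alpha)$‑square chain, iterating the squaring identity \eqref{eq:main_id} over the self‑looped matrices $\AAn^{(i),(\alpha)}_j$ and using item~2 of \Cref{def:sparifier_chain} writes $(\II-\AAn^{(i)}_0)^+$ as the telescoping product $\ZZ$ from the overview, built from the $\AAn^{(i)}_j$ and $(\II-\AAn^{(i)}_d)^+$, whose accumulated spectral error is $O(\epsilon_{\mathrm{chain}}\Exp(O(d)))<1/4$ by the choice of $\epsilon_{\mathrm{chain}}$; and by item~4 of \Cref{def:pseudoinverse_chain}, $\II-\AAn^{(i)}_d$ is the normalized Laplacian of $\dir{G}^{(i+1)}_0$, so $(\II-\AAn^{(i)}_d)^+$ is implemented by $\ZZ^{(i+1)}$ (conjugated by $\DD_{\dir{G}}^{\pm 1/2}$), whose error composes at constant overhead, the $\UU$‑norms along a square chain being equivalent up to a constant by \Cref{lem:chain_prop49}. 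Composing (a)–(c) gives $\ZZ^{(i)}$, and at the top $O(\log(1/\epsilon)+(\log n)^\gamma)$ preconditioned‑Richardson iterations with $\ZZ^{(0)}$ produce $\xx$ with $\|\xx-\LL_{\dir{G}}^+\bb\|_{\UU_{\LL_{\dir{G}}}}\le\epsilon\|\LL_{\dir{G}}^+\bb\|_{\UU_{\LL_{\dir{G}}}}$.

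For the runtime, one application of $\ZZ^{(i)}$ costs $O(N^3)$ applications of $\ZZ^{(i+1)}$ plus $n^{1+o(1)}$ (at level $0$, $m^{1+o(1)}$) work for matrix–vector products with the level‑$i$ matrices, whose sizes are controlled by \Cref{lem:pseudoinverse_chain}. Unrolling a recursion of depth $k$, the number of leaf invocations is $(O(N^3))^k = \Exp(O((\log n)^\gamma\cdot(\log n)^{2/3})) = \Exp(O((\log n)^{\gamma+2/3}))$, which is $n^{o(1)}$ since $\gamma<1/3$; multiplying by the per‑node cost and the outer $O(\log(1/\epsilon))$ iterations, and adding the $m^{1+o(1)}$ chain construction, yields total time $m^{1+o(1)}\log(1/\epsilon)$.

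I expect the main obstacle to be the error bookkeeping of steps (a)–(c). The notion of approximate pseudoinverse does not compose across different $\UU$‑norms, so at each of the three quadruple layers and at the seam between the quadruple and the square chain one must check, in the \emph{correct} norm, that the inner solver is accurate enough for the next Richardson loop to still contract and for the composed operator to remain a $(1-\Omega(1))$‑approximate pseudoinverse. This is exactly where partial symmetrization is essential: it makes the crude $1-\Exp(-O((\log n)^\gamma))$ quadruple approximations robust enough for this to go through, whereas it would fail for generic Eulerian approximations. The squaring identity, the chain construction and the condition‑number accounting can otherwise be imported from \cite{cohen2016almostlineartime} with the randomized sparsifiers replaced by \Cref{lem:globalspars} and \Cref{lem:squaring}.
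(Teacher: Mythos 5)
Your proposal follows essentially the same route as the paper: construct the pseudoinverse chain deterministically (\Cref{lem:pseudoinverse_chain}), terminate at a well-conditioned bottom level via \Cref{lem:eigvals_const_lower} and a plain Richardson base case, and then recursively compose the quadruple preconditioners, the diagonal normalization, and the square-chain telescoping product (\Cref{lem:squared_pseudoinverse}), boosting each layer with preconditioned Richardson and handling the norm changes via the second item of \Cref{def:quadruple} together with \Cref{lem:approxpseudprop412}, exactly as the paper does in \Cref{lem:rec_pseudo}; the branching-factor runtime accounting also matches \Cref{lem:runtime_solver}. The approach and the error bookkeeping you flag as the main obstacle are handled in the paper precisely as you outline, so the proposal is correct and not materially different.
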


The following lemma re-interprets preconditioned Richardson in terms of augmenting the quality of a pseudoinverse. 

\begin{lemma}[Pseudoinverse Improvement, Lemma 4.4 in \cite{cohen2016almostlineartime}]
If $\ZZ$ is an $\epsilon$-approximate-pseudoinverse of $\MM$ with respect to $\UU$, for $\epsilon \in (0, 1)$, $\bb \in \im(\MM)$ and $N \geq 0$, then $\xx_N = \textsc{PreconRichardson}(\MM, \ZZ, \bb, 1, N)$ computes $\xx_N = \ZZ_N \bb$ for some matrix $\ZZ_N$ only depending on $\ZZ$, $\MM$ and $N$, such that $\ZZ_N$ is an $\epsilon^N$-approximate pseudoinverse of $\MM$ with respect to $\UU$. 
\label{lem:pseudoinverse_improvement}
\end{lemma}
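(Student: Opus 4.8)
The plan is to unroll the preconditioned Richardson recursion (\Cref{alg:richardson} with step size $\eta = 1$) and reduce the whole statement to one clean operator identity. By \Cref{lem:richardson} the output is $\xx_N = \ZZ_N \bb$ for a matrix $\ZZ_N$ depending only on $\ZZ,\MM,N$; unrolling $\xx_{i+1} = (\II - \ZZ\MM)\xx_i + \ZZ\bb$ from $\xx_0 = \veczero$ gives the explicit form $\ZZ_N = \sum_{k=0}^{N-1}(\II - \ZZ\MM)^k \ZZ$. From the hypothesis that $\ZZ$ is an $\epsilon$-approximate-pseudoinverse of $\MM$ with respect to $\UU$ we record the kernel conditions $\ker(\UU) \subseteq \ker(\MM) = \ker(\MM^T) = \ker(\ZZ) = \ker(\ZZ^T)$, and hence $\im(\ZZ) = \im(\ZZ^T) = \im(\MM) = \im(\MM^T)$. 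These will be used repeatedly.

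The core step is the identity $\II_{\im(\MM)} - \ZZ_N\MM = (\II_{\im(\MM)} - \ZZ\MM)^N$. To establish it, write $P := \II_{\im(\MM)}$ for the orthogonal projection onto $\im(\MM)$ and $E := P - \ZZ\MM$. Using $\im(\ZZ) = \im(\MM)$ one sees $E$ maps $\R^n$ into $\im(\MM)$, and using $\ker(\MM) = \ker(\MM^T) = \ker(P)$ one sees that both $P$ and $\ZZ\MM$ vanish on $\ker(\MM)$, so $E$ vanishes there too. Thus $E$ and $\II - P$ act on complementary invariant subspaces, giving $E(\II - P) = (\II - P)E = 0$ and therefore $(\II - \ZZ\MM)^k = (E + (\II-P))^k = E^k + (\II - P)$ for $k \geq 1$. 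Substituting this and $\ZZ\MM = P - E$ into $\ZZ_N\MM = \sum_{k=0}^{N-1}(\II - \ZZ\MM)^k(P - E)$ makes the sum telescope to $P - E^N$, which rearranges to the claimed identity. Submultiplicativity of the operator norm then yields $\norm{\II_{\im(\MM)} - \ZZ_N\MM}_{\UU\to\UU} = \norm{E^N}_{\UU\to\UU} \leq \norm{E}_{\UU\to\UU}^N \leq \epsilon^N$.

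It remains to check that $\ZZ_N$ meets the kernel requirements of the approximate-pseudoinverse definition, i.e. $\ker(\MM) = \ker(\MM^T) = \ker(\ZZ_N) = \ker(\ZZ_N^T)$. The inclusions $\ker(\MM)\subseteq\ker(\ZZ_N)$ and $\ker(\MM^T)\subseteq\ker(\ZZ_N^T)$ are immediate from the explicit sums for $\ZZ_N$ and $\ZZ_N^T$ (each term kills $\ker(\ZZ)=\ker(\MM)$, respectively $\ker(\ZZ^T)=\ker(\MM^T)$). For the reverse inclusion, if $\vv \in \im(\MM)\cap\ker(\ZZ_N)$, write $\vv = \MM\ww$ with $\ww\in\im(\MM)$; the identity above gives $\ww = E^N\ww$, but $\norm{E^N\ww}_\UU \leq \epsilon^N\norm{\ww}_\UU$ with $\epsilon < 1$ forces $\norm{\ww}_\UU = 0$, hence $\ww \in \ker(\UU)\cap\im(\MM) \subseteq \ker(\MM)\cap\im(\MM) = \{0\}$ and $\vv = 0$; the argument for $\ZZ_N^T$ is symmetric. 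Combining with the norm bound, $\ZZ_N$ is an $\epsilon^N$-approximate pseudoinverse of $\MM$ with respect to $\UU$.

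The step I expect to be the main obstacle is the bookkeeping in the operator identity: carefully distinguishing the full identity $\II$ that appears in the Richardson iteration from the projection $\II_{\im(\MM)}$ appearing in the definition of approximate pseudoinverse, and invoking the kernel hypotheses to argue that $E$ and $\II - P$ decouple so that the telescoping is valid. Once that identity is in hand, the norm bound (submultiplicativity), the explicit linear form of $\ZZ_N$, and the kernel verifications are all routine.
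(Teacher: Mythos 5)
The paper itself does not prove this lemma: it is imported verbatim as Lemma~4.4 of \cite{cohen2016almostlineartime}, so there is no in-paper proof to compare against. Your argument is the standard one for that lemma, and its core is correct and complete: unrolling Richardson with $\eta=1$ gives $\ZZ_N=\sum_{k=0}^{N-1}(\II-\ZZ\MM)^k\ZZ$; the kernel hypotheses let you split $\II-\ZZ\MM$ into $E=\II_{\im(\MM)}-\ZZ\MM$ (acting on $\im(\MM)$) plus the projection onto $\ker(\MM)$, which makes the sum telescope to $\II_{\im(\MM)}-\ZZ_N\MM=(\II_{\im(\MM)}-\ZZ\MM)^N$; submultiplicativity of $\norm{\cdot}_{\UU\to\UU}$ then gives the $\epsilon^N$ bound.

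The one soft spot is the last kernel check. Your argument that $\ker(\ZZ_N)\subseteq\ker(\MM)$ is fine, but the claim that ``the argument for $\ZZ_N^T$ is symmetric'' is not literally true: mirroring your argument would require controlling $\MM\ZZ_N=\II_{\im(\MM)}-(\II_{\im(\MM)}-\MM\ZZ)^N$, i.e.\ a bound on $\norm{\II_{\im(\MM)}-\MM\ZZ}_{\UU\to\UU}$, and the definition of approximate pseudoinverse only bounds $\ZZ\MM$, not $\MM\ZZ$. The correct route uses the identity you already have: since $\ZZ_N\MM=\II_{\im(\MM)}-E^N$ and $\norm{E^N}_{\UU\to\UU}\le\epsilon^N<1$, with $\norm{\cdot}_{\UU}$ a genuine norm on $\im(\MM)$ (because $\ker(\UU)\subseteq\ker(\MM)$), the operator $\ZZ_N\MM$ is invertible on, hence surjective onto, $\im(\MM)$; therefore $\im(\MM)\subseteq\im(\ZZ_N)$ and $\ker(\ZZ_N^T)=\im(\ZZ_N)^{\perp}\subseteq\im(\MM)^{\perp}=\ker(\MM^T)$. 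With that one-line repair the proof is complete for $N\ge 1$ (the $N=0$ case is degenerate in the original statement as well, since $\ZZ_0=\veczero$ cannot satisfy the kernel conditions).
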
 

Our algorithm frequently uses preconditioned Richardson to boost the quality of pseudoinverses. Following \cite{cohen2016almostlineartime} we let $\ZZ_N = \richardson(\MM, \ZZ, \cdot ,\eta, N)$ denote the implicit matrix $\ZZ_N$ as in \Cref{lem:pseudoinverse_improvement}. To apply it to a vector $\bb$, we need to apply the (possibly implicit) matrices $\MM$ and $\ZZ$ a total of $N$ times. Before we show the main technical lemma of this section, we reference previous work that we use in the proof. We first state a lemma about solving well conditioned linear equations with Richardson which we will use to establish the base case of our recursion

\begin{lemma}[Lemma 4.5 in \cite{cohen2016almostlineartime}]
Let $\MM \in \R^{n \times n}$ such that $\UU_{\MM}$ is PSD with $\ker(\MM) = \ker(\MM^T)$. Then for parameters $\eta \leq \lambda_*(\UU_{\MM})/\norm{M}_2^2$ and $N > 0$ we have that the resulting matrix $\ZZ_N = \richardson(\MM, \eta \II_{\im(\MM)}, \cdot, 1, N)$ is an $\Exp(-N\eta\lambda_*(\UU_{\MM})/2)$-approximate pseudoinverse. 
\label{lem:buildpred45}
\end{lemma}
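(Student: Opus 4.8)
The plan is to recognize this as the convergence analysis of preconditioned Richardson run with the trivial (scaled‑identity) preconditioner, and to reduce it to a single‑step contraction estimate in the $\UU_\MM$‑operator norm. Concretely, I would apply \Cref{lem:richardson} with the matrix $\MM$, preconditioner $\ZZ \defeq \eta\,\II_{\im(\MM)}$, step size $1$, and weight matrix $\UU \defeq \UU_\MM$. The kernel hypotheses of that lemma hold here: $\ker(\ZZ) = \ker(\ZZ^\top) = \im(\MM)^{\perp} = \ker(\MM^\top) = \ker(\MM)$ using $\ker(\MM) = \ker(\MM^\top)$, and $\ker(\UU_\MM) \supseteq \ker(\MM)$ is automatic (with equality in the setting where the lemma is invoked — for $\MM = \II - \AAn_i^{(\alpha)}$ this is exactly \Cref{lem:lapprop47}). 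Since every column of $\MM$ lies in $\im(\MM)$ we have $\II_{\im(\MM)}\MM = \MM$, so the iteration matrix is $\II_{\im(\MM)} - \eta\MM$; unrolling $\xx_{i+1} = \xx_i + \eta(\bb - \MM\xx_i)$ from $\xx_0 = \veczero$ shows the linear operator $\ZZ_N$ produced by $N$ steps satisfies $\II_{\im(\MM)} - \ZZ_N\MM = (\II_{\im(\MM)} - \eta\MM)^N$ and inherits the left/right kernels of $\ZZ$. Hence, by the definition of an approximate pseudoinverse and submultiplicativity of the operator norm, it suffices to prove the one‑step bound $\norm{\II_{\im(\MM)} - \eta\MM}_{\UU_\MM \to \UU_\MM} \le \Exp\!\big(-\tfrac12\eta\,\lambda_*(\UU_\MM)\big)$, after which $\norm{\II_{\im(\MM)} - \ZZ_N\MM}_{\UU_\MM\to\UU_\MM} \le \Exp(-N\eta\lambda_*(\UU_\MM)/2)$.

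For the one‑step bound I would use $1 - t \le \Exp(-t)$ and instead prove $\norm{\II_{\im(\MM)} - \eta\MM}_{\UU_\MM\to\UU_\MM}^2 \le 1 - \eta\,\lambda_*(\UU_\MM)$. Fixing $\xx \in \im(\UU_\MM)$ and using that a real scalar equals its transpose (so $\xx^\top\MM^\top\UU_\MM\xx = \xx^\top\UU_\MM\MM\xx$), expand
\begin{align*}
    \norm{(\II_{\im(\MM)} - \eta\MM)\xx}_{\UU_\MM}^2 = \norm{\xx}_{\UU_\MM}^2 - 2\eta\,\xx^\top\UU_\MM\MM\xx + \eta^2\norm{\MM\xx}_{\UU_\MM}^2 .
\end{align*}
Then I would plug in the algebraic identity $\UU_\MM\MM + \MM^\top\UU_\MM = \UU_{\MM^2} + \MM^\top\MM$ (which gives $2\,\xx^\top\UU_\MM\MM\xx = \xx^\top\UU_{\MM^2}\xx + \norm{\MM\xx}_2^2$) and the bound $\norm{\MM\xx}_{\UU_\MM}^2 \le \norm{\UU_\MM}_2\norm{\MM\xx}_2^2 \le \norm{\MM}_2\norm{\MM\xx}_2^2$ (using $\norm{\UU_\MM}_2 \le \norm{\MM}_2$), to get
\begin{align*}
    \norm{(\II_{\im(\MM)} - \eta\MM)\xx}_{\UU_\MM}^2 \le \norm{\xx}_{\UU_\MM}^2 - \eta\,\xx^\top\UU_{\MM^2}\xx - \eta\big(1 - \eta\norm{\MM}_2\big)\norm{\MM\xx}_2^2 ,
\end{align*}
where $\eta \le \lambda_*(\UU_\MM)/\norm{\MM}_2^2 \le 1/\norm{\MM}_2$ keeps the last coefficient nonnegative (using also $\lambda_*(\UU_\MM)\le\norm{\UU_\MM}_2\le\norm{\MM}_2$). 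Writing $\MM = \UU_\MM + \SS$ with $\SS \defeq \tfrac12(\MM - \MM^\top)$ skew‑symmetric gives $\UU_{\MM^2} = \UU_\MM^2 - \SS^\top\SS$, hence $\xx^\top\UU_{\MM^2}\xx = \norm{\UU_\MM\xx}_2^2 - \norm{\SS\xx}_2^2$ and $\norm{\MM\xx}_2^2 = \norm{\UU_\MM\xx}_2^2 + 2\ip{\UU_\MM\xx,\SS\xx} + \norm{\SS\xx}_2^2$; substituting these and using $\ip{\xx,\SS\xx} = 0$, $\norm{\UU_\MM\xx}_2 \ge \lambda_*(\UU_\MM)\norm{\xx}_2$ and $\norm{\xx}_{\UU_\MM}^2 \le \norm{\MM}_2\norm{\xx}_2^2$ reduces everything to a short quadratic inequality in the quantities $\norm{\UU_\MM\xx}_2$, $\norm{\SS\xx}_2$ and $\ip{\UU_\MM\xx,\SS\xx}$, which closes under $\eta\norm{\MM}_2^2 \le \lambda_*(\UU_\MM)$.

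The hard part will be this last estimate, i.e. the contraction bound for the \emph{asymmetric} matrix $\MM$: the cross term $\xx^\top\UU_\MM\MM\xx$ carries the skew part $\SS$ and is not on its own bounded below by a multiple of $\norm{\xx}_{\UU_\MM}^2$, so one cannot simply drop the $\eta^2$‑term and must exploit that the step‑size restriction $\eta \le \lambda_*(\UU_\MM)/\norm{\MM}_2^2$ is \emph{quadratic} in $1/\norm{\MM}_2$ — this is precisely what lets the contribution of $\SS$ be absorbed by the $\norm{\UU_\MM\xx}_2^2$ terms, and it is the only place the restriction is used with full force (the rest of the argument only needs $\eta \le 1/\norm{\MM}_2$). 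A minor additional point is to check that $\ker(\UU_\MM) = \ker(\MM) = \ker(\MM^\top)$ in the regime of application, so that \Cref{lem:richardson} and the definition of approximate pseudoinverse can be applied with $\UU = \UU_\MM$.
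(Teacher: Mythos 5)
There is a genuine gap, and it sits at the very start: the reference matrix for the approximate-pseudoinverse property. The paper states this lemma without proof (it is quoted from \cite{cohen2016almostlineartime}), and its statement indeed omits the ``with respect to'' matrix; the correct completion — both in the source and in the only place the paper uses it, the base case of \Cref{lem:rec_pseudo}, where constant condition number ($\lambda_* \geq 1/4$, $\norm{\II - \AAn}_2 \leq 2$) is checked precisely so that an identity-norm guarantee transfers — is that $\ZZ_N$ is an $\Exp(-N\eta\lambda_*(\UU_\MM)/2)$-approximate pseudoinverse of $\MM$ \emph{with respect to} $\II_{\im(\MM)}$. You instead try to prove the contraction in the $\UU_\MM \rightarrow \UU_\MM$ norm, and that stronger statement is false. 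Take $\MM = \UU + \SS$ with $\UU = \diag(1,\delta)$ and $\SS = \left(\begin{smallmatrix} 0 & s \\ -s & 0 \end{smallmatrix}\right)$, $s$ large; all hypotheses hold ($\UU_\MM = \UU \succeq 0$, $\ker(\MM) = \ker(\MM^T) = \{0\}$). For $\xx = (1,-1)^T$ one gets $\xx^T \UU \MM \xx = 1 + \delta^2 - s(1-\delta) < 0$, so $\norm{(\II - \eta\MM)\xx}_{\UU}^2 = \norm{\xx}_{\UU}^2 - 2\eta\, \xx^T\UU\MM\xx + \eta^2 \norm{\MM\xx}_{\UU}^2 > \norm{\xx}_{\UU}^2$ for \emph{every} $\eta > 0$; hence $\norm{\II_{\im(\MM)} - \eta\MM}_{\UU_\MM \rightarrow \UU_\MM} > 1$ and no step size gives the claimed one-step bound in that norm (this is the same phenomenon the paper exhibits in \Cref{sec:cycle}). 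Consequently the ``short quadratic inequality'' you assert ``closes under $\eta\norm{\MM}_2^2 \le \lambda_*(\UU_\MM)$'' is exactly what this example violates, so the final step of your argument cannot be repaired within the $\UU_\MM$-norm. (Separately, your intermediate relaxation $\norm{\MM\xx}_{\UU_\MM}^2 \le \norm{\MM}_2\norm{\MM\xx}_2^2$ is already too lossy: for $\MM = \epsilon\II + \SS$ with $\norm{\SS}_2 \gg \epsilon$ the displayed upper bound exceeds $\norm{\xx}_{\UU_\MM}^2$, even though for that matrix the $\UU_\MM$-contraction happens to hold.)

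With the correct reference matrix the proof is short, and your reduction skeleton is right: unrolling Richardson gives $\II_{\im(\MM)} - \ZZ_N\MM = (\II_{\im(\MM)} - \eta\MM)^N$, so it suffices to bound one step — but in the Euclidean norm. For $\xx \in \im(\MM)$,
\begin{align*}
    \norm{(\II_{\im(\MM)} - \eta\MM)\xx}_2^2 = \norm{\xx}_2^2 - 2\eta\,\xx^T\UU_\MM\xx + \eta^2\norm{\MM\xx}_2^2 \le \left(1 - 2\eta\lambda_*(\UU_\MM) + \eta^2\norm{\MM}_2^2\right)\norm{\xx}_2^2 \le \left(1 - \eta\lambda_*(\UU_\MM)\right)\norm{\xx}_2^2,
\end{align*}
using $\eta\norm{\MM}_2^2 \le \lambda_*(\UU_\MM)$ and $\xx^T\UU_\MM\xx \ge \lambda_*(\UU_\MM)\norm{\xx}_2^2$ on $\im(\MM)$ (this is where one needs $\ker(\UU_\MM) = \ker(\MM) = \ker(\MM^T)$, which holds in the regime of application by \Cref{lem:lapprop47}); then $\sqrt{1-t} \le \Exp(-t/2)$ and submultiplicativity finish. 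Note that in the Euclidean norm the cross term is exactly the quadratic form of the symmetric part, so the skew part $\SS$ never appears and none of the $\UU_{\MM^2}$ bookkeeping is needed; the passage to the $\UU$-type norms used downstream is paid for at the point of use via \Cref{lem:approxpseudprop412}, at constant cost because the recursion leaf has constant condition number.
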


Then we state a lemma that formalises obtaining preconditioners via squaring as introduced in the overview.

\begin{lemma}[Lemma 4.13 in \cite{cohen2016almostlineartime}]
Let the sequence $\AAn_0, ..., \AAn_d$ be an $(d, \epsilon, \alpha)$-chain as specified in \Cref{def:sparifier_chain}, with $\epsilon \leq 1/2$ and $\alpha \leq 1/4$. Using the notation from \Cref{def:sparifier_chain}, consider the matrix 
\begin{align*}
    \bar{\ZZ}_{i, i + \Delta} = (1 - \alpha)^{\Delta} \left( \II - \AAn_{i + \Delta} \right)^+ \left(\II + \AAn_{i + \Delta - 1} \right) \cdots \left(\II + \AAn_{i} \right),
\end{align*}
for any i, $\Delta \geq 0$. Then $\bar{\ZZ}_{i, i + \Delta}$ is an $(\Exp(5 \Delta) \cdot \epsilon)$-approximate pseudoinverse of $\II - \AAn_i$ with respect to $\II - \UU_{\AAn_i}$.
\label{lem:squared_pseudoinverse}
\end{lemma}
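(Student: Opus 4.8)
The plan is to prove the claim by induction on $\Delta$, peeling off one approximate squaring step at a time. The base case $\Delta=0$ is immediate: $\bar{\ZZ}_{i,i}=(\II-\AAn_i)^+$ is the exact pseudoinverse of $\II-\AAn_i$, hence a $0$-approximate pseudoinverse, and $0\le \Exp(0)\cdot\epsilon$. For the inductive step I would use the recursive factorization
\begin{align*}
  \bar{\ZZ}_{i,i+\Delta}=(1-\alpha)\,\bar{\ZZ}_{i+1,\,i+\Delta}\,(\II+\AAn_i^{(\alpha)}),
\end{align*}
which reduces everything to a single-step claim: if $\ZZ$ is a $\delta$-approximate pseudoinverse of $\II-\AAn_{i+1}$ with respect to $\UU_{\II-\AAn_{i+1}}$, then $(1-\alpha)\ZZ(\II+\AAn_i^{(\alpha)})$ is an $(\Exp(5)\cdot(\delta+\epsilon))$-approximate pseudoinverse of $\II-\AAn_i$ with respect to $\UU_{\II-\AAn_i}$. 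Iterating this $\Delta$ times from $\delta=0$ produces a geometric sum of order $\Exp(5\Delta)\cdot\epsilon$; since the genuine per-level constant is in fact bounded by a small absolute constant (consecutive chain matrices are spectrally comparable, $\kappa(\II-\AAn_{i+1},\II-\AAn_i)=O(1)$ by \Cref{lem:chain_prop49}), there is ample slack to absorb the series into the clean bound $\Exp(5\Delta)\cdot\epsilon$.

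For the single-step claim the starting point is the exact, Peng--Spielman-type identity
\begin{align*}
  (\II-\AAn_i)^+=(1-\alpha)\,(\II-(\AAn_i^{(\alpha)})^2)^+\,(\II+\AAn_i^{(\alpha)})\qquad\text{on }\im(\II-\AAn_i),
\end{align*}
which follows from \eqref{eq:main_id} applied to $\AAn_i^{(\alpha)}$ together with $\II-\AAn_i^{(\alpha)}=(1-\alpha)(\II-\AAn_i)$ and the factorization $\II-(\AAn_i^{(\alpha)})^2=(\II-\AAn_i^{(\alpha)})(\II+\AAn_i^{(\alpha)})$, whose two factors commute since both are polynomials in $\AAn_i$. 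In particular, if $\ZZ$ were the exact pseudoinverse of the \emph{true} square $\II-(\AAn_i^{(\alpha)})^2$, the product $(1-\alpha)\ZZ(\II+\AAn_i^{(\alpha)})$ would equal $(\II-\AAn_i)^+$ exactly. Using $(\II+\AAn_i^{(\alpha)})(\II-\AAn_i)=\tfrac{1}{1-\alpha}(\II-(\AAn_i^{(\alpha)})^2)$, the pseudoinverse residual of the claimed preconditioner simplifies to
\begin{align*}
  \II_{\im}-(1-\alpha)\,\ZZ\,(\II+\AAn_i^{(\alpha)})(\II-\AAn_i)=\II_{\im}-\ZZ\,(\II-(\AAn_i^{(\alpha)})^2),
\end{align*}
where $\II_{\im}$ is the projection onto the common image of the chain matrices. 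I would then split $\II-(\AAn_i^{(\alpha)})^2=(\II-\AAn_{i+1})+\EE$ with $\EE=\AAn_{i+1}-(\AAn_i^{(\alpha)})^2$, so that the residual equals $(\II_{\im}-\ZZ(\II-\AAn_{i+1}))-\ZZ\EE$: the first summand is controlled by the hypothesis (it is $\le\delta$ in the $\UU_{\II-\AAn_{i+1}}$ norm), and the second by the chain property of \Cref{def:sparifier_chain}, which via \Cref{def:mat_approx} gives $\norm{\UU^{+/2}\EE\UU^{+/2}}_2\le\epsilon$ for $\UU=\UU_{\II-(\AAn_i^{(\alpha)})^2}$, together with a spectral bound on $\ZZ$ itself (available because $\delta<1$). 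Assembling these with \Cref{lem:b9}, \Cref{lem:b2} and \Cref{lem:b4} — the same toolkit used throughout the paper — converts everything into the single target norm $\UU_{\II-\AAn_i}\to\UU_{\II-\AAn_i}$ and yields the single-step bound.

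The main obstacle, and essentially the only delicate point, is the norm bookkeeping in the \emph{asymmetric} setting: since $\AAn_i$ is not symmetric, the approximation notion of \Cref{def:mat_approx} is neither symmetric nor transitive, so one must track precisely which of the three symmetrized matrices $\UU_{\II-\AAn_i}$, $\UU_{\II-(\AAn_i^{(\alpha)})^2}$, $\UU_{\II-\AAn_{i+1}}$ a given residual is measured against, and pay the right constant when passing between them. The two quantitative inputs are: (i) $\UU_{\II-\AAn_{i+1}}$ and $\UU_{\II-(\AAn_i^{(\alpha)})^2}$ agree up to a $(1\pm\epsilon)$ factor; and (ii) $\UU_{\II-(\AAn_i^{(\alpha)})^2}$ and $\UU_{\II-\AAn_i}$ are comparable up to an absolute constant, because at the level of eigenvalues $\frac{1-(\alpha+(1-\alpha)\lambda)^2}{1-\lambda}=(1-\alpha)(1+\alpha+(1-\alpha)\lambda)$ stays in a bounded range for $\lambda\in[-1,1]$ when $\alpha=1/4$. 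Verifying that the resulting per-level blow-up is at most $\Exp(5)$, and that the $\epsilon$-errors add (with geometric weights) rather than compound multiplicatively, is where the care lies; everything else is routine manipulation with the lemmas already in hand.
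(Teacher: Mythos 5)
This lemma is imported verbatim from \cite{cohen2016almostlineartime} (their Lemma 4.13); the paper contains no proof of it, so there is nothing in-paper to compare against. Your sketch is a faithful reconstruction of the argument in the source: induction on $\Delta$ via the factorization $\bar{\ZZ}_{i,i+\Delta}=(1-\alpha)\,\bar{\ZZ}_{i+1,i+\Delta}(\II+\AAn_i^{(\alpha)})$, the exact identity $(\II+\AAn_i^{(\alpha)})(\II-\AAn_i)=\tfrac{1}{1-\alpha}(\II-(\AAn_i^{(\alpha)})^2)$ to reduce the residual to $\II_{\im}-\ZZ(\II-(\AAn_i^{(\alpha)})^2)$, the split into the inductive error plus $\ZZ\EE$ with $\EE=\AAn_{i+1}-(\AAn_i^{(\alpha)})^2$ controlled by property 2 of \Cref{def:sparifier_chain}, and constant-factor norm changes justified by \Cref{lem:chain_prop49} together with \Cref{lem:approxpseudprop412}. (Incidentally, the product in the lemma as restated here should carry the $(\alpha)$ superscripts, as it does in \peel{}; you silently use the corrected form, which is the right one.)

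The only substantive gap is that the single-step bound is asserted rather than derived: the claim that one peeling step costs at most $\Exp(5)(\delta+\epsilon)$ \emph{is} the quantitative content of the lemma, and your ``ample slack'' argument presupposes it. To close it you need two concrete ingredients you do not spell out: a bound of the form $\norm{\UU_{\MM}^{1/2}\MM^{+}\UU_{\MM}^{1/2}}_2\le 1$ for $\MM=\II-\AAn_{i+1}$ (so that $\norm{\ZZ\EE}$ can be bounded via $\ZZ\EE=\ZZ\MM\cdot\MM^{+}\EE$ and the hypothesis $\norm{\II_{\im}-\ZZ\MM}\le\delta$), and the explicit constants in the two norm conversions $\UU_{\II-(\AAn_i^{(\alpha)})^2}\to\UU_{\II-\AAn_{i+1}}$ (a $(1\pm\epsilon)$ factor from the matrix approximation) and $\UU_{\II-\AAn_{i+1}}\to\UU_{\II-\AAn_i}$ (a $\sqrt{\kappa}=O(1)$ factor from \Cref{lem:chain_prop49}). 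Unrolling the resulting recursion $\delta_{j+1}\le C\bigl(\delta_j+(1+\delta_j)O(\epsilon)\bigr)$ with $C$ an absolute constant does land under $\Exp(5\Delta)\cdot\epsilon$, so your outline is correct and would become a complete proof once these computations are written out.
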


Finally, we need two lemmas about composing approximate pseudoinverses and changing norms. 

\begin{lemma}[Lemma 4.10 in \cite{cohen2016almostlineartime}]
If matrix $\ZZ$ is an $\epsilon$-approximate pseudoinverse of $\MM$ with respect to $\UU$, and $\widetilde{\MM}^+$ is an $\epsilon'$-approximate pseudoinverse of $\ZZ^+$ with respect to $\UU$, and has the same left and right kernels as $\MM$ and $\ZZ$, then $\widetilde{\MM}^+$ is an $(\epsilon + \epsilon' + \epsilon \epsilon')$-approximate pseudoinverse of $\MM$ with respect to $\UU$.
\label{lem:comp410}
\end{lemma}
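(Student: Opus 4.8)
The plan is to reduce the statement to one short algebraic identity relating the operators $\MM$, $\ZZ$, $\ZZ^+$ and $\widetilde{\MM}^+$, and then bound the relevant operator norm term by term. Write $\PP := \II_{\im(\MM)}$. First I would unpack the kernel hypotheses: from $\ker(\MM) = \ker(\MM^T) = \ker(\ZZ) = \ker(\ZZ^T)$ one gets $\im(\MM) = \im(\MM^T) = \im(\ZZ) = \im(\ZZ^T) = \im(\ZZ^+)$, so in particular $\II_{\im(\ZZ^+)} = \PP$ (hence ``$\widetilde{\MM}^+$ is an $\epsilon'$-approximate pseudoinverse of $\ZZ^+$'' unfolds to $\norm{\PP - \widetilde{\MM}^+\ZZ^+}_{\UU \rightarrow \UU} \le \epsilon'$), and the Moore--Penrose relations collapse to $\ZZ\ZZ^+ = \ZZ^+\ZZ = \PP$, $\ZZ^+\PP = \ZZ^+$, $\PP\ZZ = \ZZ$ and $\ZZ^+\ZZ\MM = \MM$. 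I would also record that $\PP$, $\ZZ$, $\ZZ^+$, $\MM$ and $\widetilde{\MM}^+$ all annihilate $\ker(\UU)$ (from $\ker(\UU)\subseteq\ker(\MM)$ together with the assumption that $\widetilde{\MM}^+$ shares the left and right kernels of $\MM$ and $\ZZ$); this last fact is what makes $\norm{\cdot}_{\UU \rightarrow \UU}$ submultiplicative on the products below, and it also gives the kernel part of the conclusion for free.

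Next I would establish the identity
\begin{align*}
    \PP - \widetilde{\MM}^+\MM \;=\; (\PP - \widetilde{\MM}^+\ZZ^+) \;+\; \widetilde{\MM}^+\ZZ^+(\PP - \ZZ\MM),
\end{align*}
which is a one-line expansion using $\ZZ^+\PP = \ZZ^+$ and $\ZZ^+\ZZ\MM = \MM$. Morally this says that $\widetilde{\MM}^+$ preconditions $\MM$ well because $\ZZ$ preconditions $\MM$ well (so $\ZZ\MM \approx \PP$, equivalently $\MM \approx \ZZ^+$) and $\widetilde{\MM}^+$ preconditions $\ZZ^+$ well. Substituting $\widetilde{\MM}^+\ZZ^+ = \PP - (\PP - \widetilde{\MM}^+\ZZ^+)$ and using $\PP\ZZ\MM = \ZZ\MM$ turns this into
\begin{align*}
    \PP - \widetilde{\MM}^+\MM \;=\; (\PP - \widetilde{\MM}^+\ZZ^+) + (\PP - \ZZ\MM) - (\PP - \widetilde{\MM}^+\ZZ^+)(\PP - \ZZ\MM).
\end{align*}
Applying $\norm{\cdot}_{\UU \rightarrow \UU}$ with the triangle inequality and submultiplicativity, together with the two hypotheses $\norm{\PP - \widetilde{\MM}^+\ZZ^+}_{\UU \rightarrow \UU} \le \epsilon'$ and $\norm{\PP - \ZZ\MM}_{\UU \rightarrow \UU} \le \epsilon$, yields $\norm{\PP - \widetilde{\MM}^+\MM}_{\UU \rightarrow \UU} \le \epsilon' + \epsilon + \epsilon\epsilon'$, which is the claimed bound.

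The step I expect to be the main (though mild) obstacle is the submultiplicativity of $\norm{\cdot}_{\UU \rightarrow \UU}$ applied to the product $(\PP - \widetilde{\MM}^+\ZZ^+)(\PP - \ZZ\MM)$: since $\norm{\AA}_{\UU \rightarrow \UU} = \norm{\UU^{1/2}\AA\UU^{+/2}}_2$, and this is not a genuine matrix norm in general, one has to verify that the left factor $\PP - \widetilde{\MM}^+\ZZ^+$ vanishes on $\ker(\UU)$ so that $\UU^{+/2}\UU^{1/2}$ can be inserted between the two factors; this holds because $\PP$ annihilates $\ker(\UU)$ (as $\ker(\UU)\subseteq\ker(\MM^T) = \ker(\PP)$) and so does $\widetilde{\MM}^+\ZZ^+$. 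Everything else is routine pseudoinverse algebra combined with the kernel equalities built into the notion of approximate pseudoinverse, so no new ideas beyond the identity above are needed.
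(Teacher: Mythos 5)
Your argument is correct: the identity $\PP - \widetilde{\MM}^+\MM = (\PP - \widetilde{\MM}^+\ZZ^+) + (\PP - \ZZ\MM) - (\PP - \widetilde{\MM}^+\ZZ^+)(\PP - \ZZ\MM)$ checks out using $\ZZ^+\ZZ = \PP$ and $\PP\MM = \MM$, and you correctly flag the one delicate point, namely that submultiplicativity of $\norm{\cdot}_{\UU\to\UU}$ needs the left factor to annihilate $\ker(\UU)$, which the kernel hypotheses supply. The paper itself states this lemma without proof, importing it from \cite{cohen2016almostlineartime}; your derivation is essentially the standard composition argument given there (the same first-order expansion with the same $\epsilon\epsilon'$ cross term), so nothing further is needed.
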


\begin{lemma}[Lemma 4.12 in \cite{cohen2016almostlineartime}]
\label{lem:approxpseudprop412}
Let $\ZZ, \MM, \UU \in \R^{n \times n}$ be matrices such that $\UU$ is PSD, and $\ker(\ZZ) = \ker(\ZZ^T) = \ker(\MM) = \ker(\MM^T) \supseteq \ker(\UU)$. Then the following holds:
\begin{enumerate}
    \item (Preserved under right multiplication) Let $\CC \in \R^{n \times n}$ such that both $\CC$ and $\CC^T$ are invariant on the kernel of $\MM$ in the sense that $\xx \in \ker(\MM)$ if and only if $\CC \xx \in \ker(\MM)$, and similarly for $\CC^{\perp}$. Then $\ZZ$ is an $\epsilon$-approximate pseudoinverse for $\CC \MM$ with respect to $\UU$ if and only if $\ZZ \CC$ is an $\epsilon$-approximate pseudoinverse  for $\MM$ with respect to $\UU$.
    \item (Approximately preserved under norm change) If $\ZZ$ is an $\epsilon$-approximate pseudoinverse for $\MM$ with respect to $\UU$, then for any PSD matrix $\widetilde{\UU}$, such that $\ker(\widetilde{\UU}) = \ker(\UU)$, $\ZZ$ is an $(\epsilon \sqrt{\kappa(\widetilde{\UU}, \UU)})$-approximate pseudoinverse of $\MM$ with respect to $\tilde{\UU}$.
\end{enumerate}
\end{lemma}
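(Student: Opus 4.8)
The plan is to unfold the definition of an $\epsilon$-approximate pseudoinverse — a kernel-alignment requirement together with the operator bound $\norm{\II_{\im(\MM)} - \ZZ\MM}_{\UU \to \UU} \le \epsilon$ — and to establish the two items by direct linear algebra, the only real work being careful bookkeeping of kernels, images, and pseudoinverse-induced projections on the (possibly nontrivial) shared null space. Throughout I use the preliminaries' identity $\norm{\BB}_{\HH\to\HH} = \norm{\HH^{1/2}\BB\HH^{+/2}}_2$ and the standing hypothesis $\ker(\MM) = \ker(\MM^T) = \ker(\ZZ) = \ker(\ZZ^T) \supseteq \ker(\UU)$, which gives $\im(\MM) = \ker(\MM)^\perp = \im(\ZZ)$.

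\textbf{Item 1.} First I would settle the kernel accounting. The hypotheses are $\xx \in \ker(\MM) \iff \CC\xx \in \ker(\MM)$ and $\xx \in \ker(\MM) \iff \CC^T\xx \in \ker(\MM)$, which in particular give $\ker(\CC), \ker(\CC^T) \subseteq \ker(\MM)$. From these I would derive
\[
\ker(\CC\MM) = \ker((\CC\MM)^T) = \ker(\ZZ\CC) = \ker((\ZZ\CC)^T) = \ker(\MM).
\]
The recurring step is that a vector in $\im(\MM) \cap \ker(\MM)$ (resp.\ $\im(\ZZ)\cap\ker(\ZZ)$) must vanish: e.g.\ $\CC\MM\xx = 0$ forces $\MM\xx \in \ker(\CC)\subseteq\ker(\MM)$ while also $\MM\xx \in \im(\MM) = \ker(\MM)^\perp$, so $\MM\xx = 0$; for $\ker(\ZZ\CC)$ one uses $\ZZ\CC\xx=0 \iff \CC\xx\in\ker(\ZZ)=\ker(\MM) \iff \xx\in\ker(\MM)$. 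These identities show, first, that the kernel-alignment condition holds for the pair $(\ZZ, \CC\MM)$ with respect to $\UU$ iff it holds for $(\ZZ\CC, \MM)$, and second, that $\im(\CC\MM) = \ker((\CC\MM)^T)^\perp = \im(\MM)$, hence $\II_{\im(\CC\MM)} = \II_{\im(\MM)}$. The claim is then immediate from associativity $(\ZZ\CC)\MM = \ZZ(\CC\MM)$:
\[
\norm{\II_{\im(\MM)} - (\ZZ\CC)\MM}_{\UU \to \UU} = \norm{\II_{\im(\CC\MM)} - \ZZ(\CC\MM)}_{\UU \to \UU},
\]
so one side is at most $\epsilon$ exactly when the other is.

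\textbf{Item 2.} Write $\AA := \II_{\im(\MM)} - \ZZ\MM$. I would first record $\ker(\AA) \supseteq \ker(\MM) = \ker(\UU) = \ker(\widetilde{\UU})$ and $\im(\AA) \subseteq \im(\MM) = \im(\UU) = \im(\widetilde{\UU})$ (using $\im(\ZZ) = \ker(\ZZ^T)^\perp = \im(\MM)$). Consequently the orthogonal projector $\UU^{+/2}\UU^{1/2} = \II_{\im(\MM)}$ is a two-sided identity on $\AA$, so it can be inserted:
\[
\widetilde{\UU}^{1/2}\,\AA\,\widetilde{\UU}^{+/2} = \bigl(\widetilde{\UU}^{1/2}\UU^{+/2}\bigr)\,\bigl(\UU^{1/2}\,\AA\,\UU^{+/2}\bigr)\,\bigl(\UU^{1/2}\widetilde{\UU}^{+/2}\bigr).
\]
Submultiplicativity of $\norm{\cdot}_2$ together with $\norm{\UU^{1/2}\AA\,\UU^{+/2}}_2 = \norm{\AA}_{\UU\to\UU} \le \epsilon$ then gives $\norm{\AA}_{\widetilde{\UU}\to\widetilde{\UU}} \le \epsilon\cdot\norm{\widetilde{\UU}^{1/2}\UU^{+/2}}_2\cdot\norm{\UU^{1/2}\widetilde{\UU}^{+/2}}_2$, and it remains to identify the outer product with $\sqrt{\kappa(\widetilde{\UU},\UU)}$. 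Using $\norm{\BB}_2^2 = \norm{\BB\BB^T}_2$ and the fact that $\XX\YY$ and $\YY\XX$ have the same nonzero eigenvalues — applied after restricting to the common image $\im(\UU) = \im(\widetilde{\UU})$, on which $\UU, \widetilde{\UU}$ are positive definite — one gets $\norm{\widetilde{\UU}^{1/2}\UU^{+/2}}_2^2 = \norm{\UU^{+/2}\widetilde{\UU}\,\UU^{+/2}}_2$ and $\norm{\UU^{1/2}\widetilde{\UU}^{+/2}}_2^2 = \norm{\UU^{1/2}\widetilde{\UU}^{+}\UU^{1/2}}_2 = 1/\lambda_*\!\bigl(\UU^{+/2}\widetilde{\UU}\,\UU^{+/2}\bigr)$, the last equality because $\UU^{1/2}\widetilde{\UU}^{+}\UU^{1/2}$ is the Moore--Penrose inverse of $\UU^{+/2}\widetilde{\UU}\,\UU^{+/2}$ on that subspace. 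Multiplying, the squared outer product equals $\kappa\!\bigl(\UU^{+/2}\widetilde{\UU}\,\UU^{+/2}\bigr) = \kappa(\widetilde{\UU},\UU)$, where the final equality uses that a matrix and its pseudoinverse have the same condition number (hence $\kappa(\UU,\widetilde{\UU}) = \kappa(\widetilde{\UU},\UU)$). The kernel-alignment condition transfers verbatim since $\ker(\widetilde{\UU}) = \ker(\UU)$, completing the argument.

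\textbf{Main obstacle.} There is no conceptual difficulty; the subtlety is entirely in keeping the pseudoinverse and projection manipulations rigorous on the shared null space. One must consistently invoke $\ker(\MM) = \ker(\MM^T) = \ker(\ZZ) = \ker(\ZZ^T) \supseteq \ker(\UU) = \ker(\widetilde{\UU})$ to justify that every factor kills the common kernel and maps into the common image $\im(\MM)$, so that inserted projectors such as $\UU^{+/2}\UU^{1/2}$ are genuine identities on the relevant factors, and so that the cyclic-eigenvalue and pseudoinverse identities — transparent for full-rank factors — remain valid after restriction to $\im(\MM)$. I expect nothing beyond this careful restriction.
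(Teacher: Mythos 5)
Your proof is correct. Note that the paper itself gives no proof of this statement --- it is imported verbatim as Lemma 4.12 of \cite{cohen2016almostlineartime} --- and your argument is essentially the standard one behind that lemma: part 1 by kernel/image bookkeeping plus associativity of $(\ZZ\CC)\MM = \ZZ(\CC\MM)$, part 2 by inserting the projector $\UU^{+/2}\UU^{1/2} = \II_{\im(\UU)}$ around $\II_{\im(\MM)} - \ZZ\MM$ and bounding the two change-of-basis factors, whose product of norms is exactly $\sqrt{\kappa(\widetilde{\UU},\UU)}$. One small inaccuracy: you assert $\ker(\MM) = \ker(\UU) = \ker(\widetilde{\UU})$ and $\im(\MM) = \im(\UU)$, whereas the hypothesis only gives $\ker(\UU) \subseteq \ker(\MM)$ (hence $\im(\MM) \subseteq \im(\UU)$); fortunately these containments, together with $\ker(\UU)=\ker(\widetilde{\UU})$, are exactly what is needed for $\II_{\im(\UU)}$ to act as a two-sided identity on $\II_{\im(\MM)} - \ZZ\MM$ and for the pseudoinverse identification in your condition-number computation, so the argument goes through unchanged.
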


\begin{algorithm}
\caption{\solve($\mathcal{C} = \{\dir{G}_0^{(i)}, \dir{G}_1^{(i)}, \dir{G}_2^{(i)}, \dir{G}_3^{(i)}, \mathcal{S}^{(i)}\}_{i = l }^{k}, \epsilon$) and \peel()}
\label{alg:solve}
\tcp{We let $\dir{G}_0^{(l)} = \dir{G}$ in these algorithms.}
\SetKwFunction{algo}{\solve}\SetKwFunction{proc}{\peel}
\SetKwProg{myalg}{Algorithm}{}{}
\SetKwProg{myproc}{Procedure}{}{}
\myalg{\algo{$\mathcal{C} = \{\dir{G}_0^{(i)}, \dir{G}_1^{(i)}, \dir{G}_2^{(i)}, \dir{G}_3^{(i)}, \mathcal{S}^{(i)}\}_{i = l }^{k}, \epsilon$}}{
\uIf{$l = k$}{
    \tcp{Leaf of Recursion}
    $\AAn = \DD_{\dir{G}}^{+/2}\AA^T_{\dir{G}_0^{(k)}}\DD_{\dir{G}}^{+/2}$ \\
    $\ZZ_0^{(k)} = \richardson(\II - \AAn, \frac{1}{16}\II_{\im(\II - \AAn)}, \cdot, 1, 256 \log(1/\epsilon))$ \\
}\Else{
    $\ZZ_0^{(l)} = \peel(\{\dir{G}_j^{(l)}\}_{j = 0}^3, \mathcal{S}^{(l)}, \mathcal{C}' = \{\dir{G}_0^{(i)}, \dir{G}_1^{(i)}, \dir{G}_2^{(i)}, \dir{G}_3^{(i)}, \mathcal{S}^{(i)}\}_{i = l+1}^{k}, \epsilon)$ \\
}
\Return{$\ZZ_0^{(l)}$}
}

\myproc{\proc{$\{\dir{G}_j^{(l)}\}_{j = p}^3, \mathcal{S}^{(l)}, \mathcal{C}' = \{\dir{G}_0^{(i)}, \dir{G}_1^{(i)}, \dir{G}_2^{(i)}, \dir{G}_3^{(i)}, \mathcal{S}^{(i)}\}_{i = l+1}^{k}, \epsilon$}}{
\uIf{$p = 3$}{
\tcp{We have $\mathcal{S}^{(l)} = \AAn_0^{(l)}, ..., \AAn_d^{(l)}$ and we let $\AAn_t^{(l, 1/4)} = \frac{3}{4}\AAn_t^{(l)} + \frac{1}{4}\II $}
    $\ZZ^{(l + 1)}_0 = \solve(\mathcal{C}', \frac{1}{30 \cdot \Exp(5d)})$ \\
    $\ZZ = (3/4)^{d + 1}\ZZ^{(l + 1)}_0 (\II + \AAn_{d-1}^{(l, 1/4)}) \cdots (\II + \AAn_{0}^{(l, 1/4)})$\\ 
    $\ZZ^{(l)}_{3} = \richardson(\DD_{\dir{G}}^{+/2}\LL_{\dir{G}_p^{(l)}}\DD_{\dir{G}}^{+/2}, (1 + \frac{\beta}{\eta})\ZZ,\cdot, 1, \log(1/\epsilon))$
}\Else{
    $\ZZ^{(l)}_{p+1} = \peel(\{\dir{G}_j^{(l)}\}_{j = p + 1}^3, \mathcal{S}^{(l)}, \mathcal{C}', \frac{1}{\Exp(\Theta((\log n)^{\gamma}))})$ \\
    $\ZZ^{(l)}_{p} = \richardson(\DD_{\dir{G}}^{+/2}\LL_{\dir{G}_p^{(l)}}\DD_{\dir{G}}^{+/2}, \ZZ^{(l)}_{p+1},\cdot, 1, \Exp(\Theta((\log n)^{\gamma})) \cdot \log(1 / \epsilon))$
}
\Return{$\ZZ^{(l)}_{p}$}
}
\end{algorithm}

These results, together with the properties of pseudoinverse chains, allow us to show the following main technical lemma of this section, which establishes correctness of the algorithm \solve() (\Cref{alg:solve}). 

\begin{lemma}
Given the tail $\mathcal{C} = \{\dir{G}_0^{(i)}, \dir{G}_1^{(i)}, \dir{G}_2^{(i)}, \dir{G}_3^{(i)}, \mathcal{S}^{(i)}\}_{i = l }^{k}$ of a $(k , d = \Theta((\log n)^{1/3}), \alpha = 1/4, \epsilon_0 = \frac{1}{30 \Exp(5d)}, \gamma = 1/10)$-pseudoinverse chain as in \Cref{lem:eigvals_const_lower}, and a parameter $\epsilon \in (0, 1/2)$ the algorithm \solve($\mathcal{C}, \epsilon$) computes an $\epsilon$-approximate pseudoinverse $\ZZ_0^{(l)}$ of $\DD^{+/2}_{\dir{G}} \LL_{\dir{G}} \DD^{+/2}_{\dir{G}}$ with respect to $\DD^{+/2}_{\dir{G}} \UU_{\LL_{\dir{G}}} \DD^{+/2}_{\dir{G}}$ where $\DD_{\dir{G}} = \DD_{\dir{G}_0^{(l)}}$.
\label{lem:rec_pseudo}
\end{lemma}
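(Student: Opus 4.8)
The plan is to prove the statement, for every accuracy $\epsilon \in (0,1/2)$, by induction on the length $k-l$ of the chain tail, where the inductive step itself proceeds by a nested downward induction on $p\in\{3,2,1,0\}$ asserting that for every $\epsilon\in(0,1/2)$ the call \peel($\{\dir{G}_j^{(l)}\}_{j=p}^{3},\mathcal S^{(l)},\mathcal C',\epsilon$) returns an $\epsilon$-approximate pseudoinverse of $\MM_p:=\DD_{\dir{G}}^{+/2}\LL_{\dir{G}_p^{(l)}}\DD_{\dir{G}}^{+/2}$ with respect to $\UU_p:=\DD_{\dir{G}}^{+/2}\UU_{\LL_{\dir{G}_p^{(l)}}}\DD_{\dir{G}}^{+/2}$; the case $p=0$ of the innermost induction is the lemma, since $\dir{G}_0^{(l)}=\dir{G}$ and $\DD_{\dir{G}_0^{(l)}}=\DD_{\dir{G}}$. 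Before starting I would record the kernel and conditioning facts used uniformly below: every graph occurring in the pseudoinverse chain is strongly connected (by \Cref{lem:globalspars} and \Cref{lem:chain}), so \Cref{lem:lapprop47} gives $\ker(\MM_p)=\ker(\MM_p^{\trp})=\ker(\UU_p)$ and likewise for $\II-\AAn_j^{(l)}$; property~2 of \Cref{def:quadruple} bounds $\kappa(\UU_p,\UU_{p+1})\le\Exp(O((\log n)^{\gamma}))$; and conjugating the quadruple guarantees by the fixed diagonal $\DD_{\dir{G}}^{+/2}$ transports $\epsilon$-approximate-pseudoinverse relations between the $\LL$'s to the corresponding relations between the $\MM$'s, by the argument of \Cref{lem:37}.

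For the base case $l=k$, \solve() runs \richardson on $\MM_0=\II-\AAn$, where $\AAn=\DD_{\dir{G}}^{+/2}\AA_{\dir{G}_0^{(k)}}^{\trp}\DD_{\dir{G}}^{+/2}$ (using $\DD_{\dir{G}_0^{(k)}}=\DD_{\dir{G}}$), with step size $\tfrac1{16}$ and $256\log(1/\epsilon)$ iterations. By \Cref{lem:lapprop47} we have $\norm{\AAn}_2\le1$, hence $\norm{\MM_0}_2\le2$, and by \Cref{lem:eigvals_const_lower} applied with the prescribed $k$ we have $\lambda_*(\UU_{\MM_0})=\lambda_*(\UU_0)\ge1/4\ge\tfrac1{16}\norm{\MM_0}_2^2$, so \Cref{lem:buildpred45} applies and produces an $\Exp(-256\log(1/\epsilon)\cdot\tfrac1{128})=\epsilon^2$-approximate pseudoinverse, which is also $\epsilon$-approximate, with respect to $\UU_{\MM_0}=\UU_0$; this is the $l=k$ claim and no norm change is needed.

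For the inductive step, assume \solve($\mathcal C',\cdot$) is correct for the tail $\mathcal C'$ from $l+1$. The innermost base case $p=3$: the call \solve($\mathcal C',\tfrac1{30\Exp(5d)}$) returns, by the outer hypothesis, a $\tfrac1{30\Exp(5d)}$-approximate pseudoinverse $\ZZ_0^{(l+1)}$ of $\DD_{\dir{G}}^{+/2}\LL_{\dir{G}_0^{(l+1)}}\DD_{\dir{G}}^{+/2}$, which by property~4 of \Cref{def:pseudoinverse_chain} equals $\II-\AAn_d^{(l)}$. Using the square-chain identity of \Cref{lem:squared_pseudoinverse} along the chain $\mathcal S^{(l)}$ from index $d$ down to $0$, \Cref{lem:comp410} to swap in $\ZZ_0^{(l+1)}$ for the exact $(\II-\AAn_d^{(l)})^{+}$, \Cref{lem:approxpseudprop412}(1) for the right-multiplications by the $(\II+\AAn_j^{(l,1/4)})$ factors, and the choice $\epsilon_0=\tfrac1{30\Exp(5d)}$ so that $\Exp(5d)\epsilon_0\le\tfrac16$, I would show the telescoped matrix $\ZZ=(3/4)^{d+1}\ZZ_0^{(l+1)}(\II+\AAn_{d-1}^{(l,1/4)})\cdots(\II+\AAn_0^{(l,1/4)})$ is an approximate pseudoinverse of $\II-\AAn_0^{(l)}$ with quality bounded below $1$ by a constant. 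Properties~3 of \Cref{def:pseudoinverse_chain} and~4 of \Cref{def:quadruple} (together with the degree rescaling built into \sparsify) identify $\MM_3$ with a fixed positive scalar multiple of $\II-\AAn_0^{(l)}$, so the explicit factor $(1+\tfrac\beta\eta)$ in the algorithm turns $\ZZ$ into an approximate pseudoinverse of $\MM_3$ of the same quality with respect to $\UU_3$ (absorbing the bounded norm change via \Cref{lem:approxpseudprop412}(2)); $\log(1/\epsilon)$ \richardson steps then boost it to $\epsilon$-approximate by \Cref{lem:pseudoinverse_improvement}. For the innermost step $p<3$: the recursive \peel() call returns an $\tfrac1{\Exp(\Theta((\log n)^{\gamma}))}$-approximate pseudoinverse $\ZZ^{(l)}_{p+1}$ of $\MM_{p+1}$ with respect to $\UU_{p+1}$; property~1 of \Cref{def:quadruple}, transported to the $\MM$'s, makes $\MM_{p+1}^{+}$ a $(1-\tfrac1{\Exp(O((\log n)^{\gamma}))})$-approximate pseudoinverse of $\MM_p$ with respect to $\UU_{p+1}$, so \Cref{lem:comp410} gives the same bound for $\ZZ^{(l)}_{p+1}$, and \Cref{lem:approxpseudprop412}(2) together with $\kappa(\UU_p,\UU_{p+1})\le\Exp(O((\log n)^{\gamma}))$ keeps it strictly below $1$ with respect to $\UU_p$; finally \Cref{lem:pseudoinverse_improvement} with $\Exp(\Theta((\log n)^{\gamma}))\log(1/\epsilon)$ iterations drives $(1-\tfrac1{\Exp(O((\log n)^{\gamma}))})^N$ below $\epsilon$, completing the innermost induction and hence, at $p=0$, the outer step.

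The step I expect to be the main obstacle is the norm bookkeeping across the four \peel() layers and the square-chain walk: \Cref{lem:comp410} composes approximate pseudoinverses only when both are measured in a common $\UU$, but each \peel() layer switches the reference matrix from $\UU_{p+1}$ to $\UU_p$ (and within the $p=3$ case one also moves between $\II-\UU_{\AAn_j^{(l)}}$ for varying $j$ and then to $\UU_3$), so \Cref{lem:approxpseudprop412}(2) must be invoked repeatedly and one must check that the product of the resulting $\sqrt{\kappa(\cdot,\cdot)}\le\Exp(O((\log n)^{\gamma}))$ factors, combined with the approximation errors accumulated along the square chain, stays strictly below $1$ --- which is exactly why the prescribed iteration counts carry the $\Exp(\Theta((\log n)^{\gamma}))$ overheads, and why those overheads must remain sub-polynomial (this last point feeds into, but is not needed for, the present correctness statement). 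A secondary and more mechanical task is to verify the shared-kernel hypotheses of every invoked lemma at each step, which as noted above reduces to strong connectivity of all graphs in the chain and \Cref{lem:lapprop47}.
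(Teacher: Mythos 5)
Your proposal is correct and follows essentially the same route as the paper's proof: an outer induction on $k-l$ with base case handled by \Cref{lem:eigvals_const_lower} and \Cref{lem:buildpred45}, and a nested downward induction on $p$ through \peel(), using \Cref{lem:squared_pseudoinverse} at $p=3$, and \Cref{lem:comp410}, \Cref{lem:approxpseudprop412} and \Cref{lem:pseudoinverse_improvement} at $p<3$. The only difference is that you track the norm-change and kernel bookkeeping more explicitly than the paper does, which is a fair place to flag as the delicate step.
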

\begin{proof}
We use that $\DD_{\dir{G}} = \DD_{\dir{G}_0^{(i)}}$ for all $i$ by \Cref{def:pseudoinverse_chain}. The proof is by induction on $k - l$. 
\begin{itemize}
    \item \emph{Base case:} $k - l = 0$. We land in the if case of the algorithm \solve() and denote $\II - \AAn = \DD_{\dir{G}}^{+/2} \LL_{\dir{G}_0^{(k)}} \DD_{\dir{G}}^{+/2}$ as in the algorithm. Then we have $\lambda_*(\II - \UU_{\AAn}) \geq 1/4$ by \Cref{lem:eigvals_const_lower}, which is exactly what we set out to achieve with squaring. By \Cref{lem:lapprop47} we further have $\norm{\II - \AAn}_2^2 \leq 4$. Then the base case follows directly from \Cref{lem:buildpred45}. 
    \item \emph{Step case:} $k - l > 0$. We land in the else case of the algorithm \solve() and let $\mathcal{C}' = \{\dir{G}_0^{(i)}, \dir{G}_1^{(i)}, \dir{G}_2^{(i)}, \dir{G}_3^{(i)}, \mathcal{S}^{(i)}\}_{i = l + 1}^{k}$. Then the induction hypothesis is that $\solve(\mathcal{C}', \epsilon)$ returns an $\epsilon$-approximate pseudoinverse $\ZZ_0^{(l + 1)}$ of $\DD^{+/2}_{\dir{G}} \LL_{\dir{G}_0^{(l+1)}} \DD^{+/2}_{\dir{G}}$ with respect to $\DD^{+/2}_{\dir{G}} \UU_{\LL_{\dir{G}_0^{(l+1)}}} \DD^{+/2}_{\dir{G}}$. To establish the step case we rely on a second, nested induction of constant depth, showing the correctness of the subroutine \peel(). Namely, we show that $\ZZ_p^{(l)} = \peel(\{\dir{G}_j^{(l)}\}_{j = p}^3, \mathcal{S}^{(l)}, \mathcal{C}', \epsilon)$ returns an $\epsilon$-approximate pseudoinverse $\ZZ_p^{(l)}$ of $\DD^{+/2}_{\dir{G}} \LL_{\dir{G}_p^{(l)}} \DD^{+/2}_{\dir{G}}$ with respect to $\DD^{+/2}_{\dir{G}} \UU_{\LL_{\dir{G}_p^{(l)}}}\DD^{+/2}_{\dir{G}}$ for $p = 0, 1, 2, 3$. The nested induction is on $3 - p$. 
    \begin{itemize}
        \item \emph{Nested base case:} $3 - p = 0$. We land in the if case of the procedure \peel(). Noting that $\II - \AAn_d^{(l)} = \DD^{+/2}_{\dir{G}} \LL_{\dir{G}_0^{(l+1)}} \DD^{+/2}_{\dir{G}}$ by \Cref{def:pseudoinverse_chain}, we have that $\ZZ^{(l + 1)}_0$ is an $\frac{1}{30 \cdot \Exp(5d)}$-approximate pseudoinverse of $\II - \AAn_d^{(l)}$ with respect to $\II - \UU_{\AAn_d^{(l)}}$ by the induction hypothesis of the main induction. Then, by \Cref{lem:squared_pseudoinverse} $\ZZ$ is an $1/30$ approximate pseudoinverse of $\II - \AAn_0^{(l)}$ with respect to $\II - \UU_{\AAn_0^{(l)}}$. Recall that $\II - \AAn_0^{(l)} = (1 + \frac{\beta}{\eta})^{-1} \DD^{+/2}_{\dir{G}} \LL_{\dir{G}_3^{(l)}} \DD^{+/2}_{\dir{G}}$. By \Cref{lem:pseudoinverse_improvement} the matrix $\ZZ_3^{(l)} = \peel(\{\dir{G}_3^{(l)}\}_{j = p}^3, \mathcal{S}^{(l)}, \mathcal{C}', \epsilon)$ is an $\epsilon$-approximate pseudoinverse of $\DD^{+/2}_{\dir{G}} \LL_{\dir{G}_3^{(l)}} \DD^{+/2}_{\dir{G}}$ with respect to $\DD^{+/2}_{\dir{G}} \UU_{\LL_{\dir{G}_3^{(l)}}}\DD^{+/2}_{\dir{G}}$ since the factor $(1 + \frac{\beta}{\eta})$ cancels. This establishes the base case.  
        \item \emph{Nested step case:} $p < 3$. We land in the else case of the procedure \peel(). By the induction hypothesis $\ZZ_{p+1}^{(l)}$ is a $\left(\epsilon_1 = \frac{1}{\Exp(\Theta((\log n)^{\gamma}))}\right)$-approximate pseudoinverse of $\DD^{+/2}_{\dir{G}} \LL_{\dir{G}_{p + 1}^{(l)}} \DD^{+/2}_{\dir{G}}$ with respect to $\DD^{+/2}_{\dir{G}} \UU_{\LL_{\dir{G}_{p + 1}^{(l)}}}\DD^{+/2}_{\dir{G}}$. Further, we have that $\DD_{\dir{G}}^{1/2}\LL^+_{\dir{G}_{p+1}^{(l)}}\DD_{\dir{G}}^{1/2}$ is an $\left(\epsilon_2 = 1 - \frac{1}{\Exp(O((\log n)^{\gamma}))} \right)$-approximate pseudoinverse of $\DD_{\dir{G}}^{1/2}\LL^+_{\dir{G}_{p}^{(l)}}\DD_{\dir{G}}^{1/2}$ with respect to $\DD_{\dir{G}}^{1/2}\UU_{\LL^+_{\dir{G}_{p + 1}^{(l)}}}\DD_{\dir{G}}^{1/2}$ by \Cref{def:quadruple}. Combining the two with \Cref{lem:comp410} yields that $\ZZ_{p + 1}^{(l)}$ is an $\left(\epsilon_1 + \epsilon_2 + \epsilon_1 \epsilon_2 = 1 - \frac{1}{\Exp(O((\log n)^{\gamma}))}\right)$-approximate pseudoinverse of $\DD_{\dir{G}}^{1/2}\LL^+_{\dir{G}_{p}^{(l)}}\DD_{\dir{G}}^{1/2}$ with respect to $\DD_{\dir{G}}^{1/2}\UU_{\LL^+_{\dir{G}_{p + 1}^{(l)}}}\DD_{\dir{G}}^{1/2}$, where we choose $\epsilon_{1}$ small enough. By \Cref{lem:pseudoinverse_improvement}, $\Exp(\Theta((\log n)^{\gamma}) \log(1/\epsilon))$ iterations of preconditioned Richardson suffice to reduce the error to $\epsilon$. However, the resulting approximate pseudoinverse is with respect to $\DD_{\dir{G}}^{1/2}\UU_{\LL^+_{\dir{G}_{p + 1}^{(l)}}}\DD_{\dir{G}}^{1/2}$ instead of $\DD_{\dir{G}}^{1/2}\UU_{\LL^+_{\dir{G}_{p}^{(l)}}}\DD_{\dir{G}}^{1/2}$. By the second point in \Cref{def:quadruple} and \Cref{lem:approxpseudprop412} another factor of $\tilde{O}(1)$ in the iteration count suffices to translate between norms. This factor can be subsumed in the iteration count $\Exp(\Theta((\log n)^{\gamma}) \log(1/\epsilon))$. Therefore, $\ZZ_{p}^{(l)}$ is an $\epsilon$-approximate pseudoinverse of $\DD_{\dir{G}}^{1/2}\LL^+_{\dir{G}_{p}^{(l)}}\DD_{\dir{G}}^{1/2}$ with respect to $\DD_{\dir{G}}^{1/2}\UU_{\LL^+_{\dir{G}_{p}^{(l)}}}\DD_{\dir{G}}^{1/2}$, which concludes the step case. 
    \end{itemize}
    The nested induction establishes that $\ZZ^{(l)}_0$ is an $\epsilon$-approximate pseudoinverse of $\DD^{+/2}_{\dir{G}} \LL_{\dir{G}_0^{(l)}} \DD^{+/2}_{\dir{G}}$ with respect to $\DD^{+/2}_{\dir{G}} \UU_{\LL_{\dir{G}_0^{(l)}}}\DD^{+/2}_{\dir{G}}$. This concludes the step case, since \solve($\mathcal{C} = \{\dir{G}_0^{(i)}, \dir{G}_1^{(i)}, \dir{G}_2^{(i)}, \dir{G}_3^{(i)}, \mathcal{S}^{(i)}\}_{i = l }^{k}, \epsilon$) returns $\ZZ^{(l)}_0$. The lemma follows by induction. 
\end{itemize}
\end{proof}

\begin{algorithm}
\caption{\solveEulerian($\LL_{\dir{G}}, \bb, \epsilon, \hat{\lambda}_*$)}
\label{alg:solve_Eulerian}
$d = \Theta((\log n)^{1/3})$; $k = C \log (1/\hat{\lambda}_*)/d \in O((\log n)^{2/3})$; $\epsilon_0 = \frac{1}{30 \cdot \Exp(5d)}$\\
$\mathcal{C} = \chain(\dir{G}, k, d, \delta = 1/3, \epsilon_0, \gamma = 1/10)$ \\
$\ZZ = \solve(\mathcal{C}, 1/10)$ \\
$\hat{\ZZ} = \richardson(\DD_{\dir{G}}^{+/2} \LL_{\dir{G}} \DD_{\dir{G}}^{+/2} , \ZZ, \cdot, 1, \log(1/\epsilon))$ \\
\Return{$\DD_{\dir{G}}^{+/2} \hat{\ZZ} \DD_{\dir{G}}^{+/2} \bb$}
\end{algorithm}

Before we conclude with the main theorem of this section, we analyse the runtime of \solveEulerian($\LL_{\dir{G}}, \bb, \epsilon, \hat{\lambda}_*$).

\begin{lemma}
\label{lem:runtime_solver}
Given $\hat{\lambda}_* \geq 1/poly(n)$, the Laplacian of an Eulerian graph $\dir{G}$ with $n$-vertices and $m$-edges, a vector $\bb \in \R^n$ so that $\bb \in \im(\LL_{\dir{G}})$ and a parameter $\epsilon \in (0,1)$, \solveEulerian($\LL_{\dir{G}}, \bb, \epsilon, \hat{\lambda}_*$) runs in time $m^{1 + o(1)} \log(1/\epsilon)$.  
\end{lemma}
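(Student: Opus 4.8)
The plan is to unwind the recursion in \solveEulerian, \solve and \peel, charge every matrix-vector product against a matrix whose sparsity is controlled by \Cref{lem:pseudoinverse_chain}, and check that the accumulated blow-up stays $n^{o(1)}$ for the parameters $d = \Theta((\log n)^{1/3})$, $k = O((\log n)^{2/3})$, $\gamma = 1/10$. Throughout I use that a strongly connected $n$-vertex graph has $m = \Omega(n)$ edges, so $n^{1+o(1)} \le m^{1+o(1)}$.

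First I would account for \Cref{alg:square_chain}: by \Cref{lem:pseudoinverse_chain}, $\chain(\dir{G}, k, d, \ldots)$ runs in time $m^{1+o(1)}$ and returns a chain in which the ``top'' quadruple graphs satisfy $|E(\dir{G}_j^{(0)})| \le 2m + \tilde O(n)$ for $j = 1,2$ and $|E(\dir{G}_0^{(0)})| = m$, while every other $\dir{G}_j^{(i)}$ with $i \ge 1$ has $n^{1+o(1)}$ edges and every $\AAn_j^{(i)}$ has $n^{1+o(1)}$ nonzeros. Consequently a single matrix-vector product with any matrix occurring inside \solve or \peel costs $n^{1+o(1)}$ whenever $i \ge 1$, and $O(m)$ when $i = 0$; in either case it is at most $m^{1+o(1)}$.

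Next I would set up a recurrence for $T(l)$, the time to apply the implicit operator $\ZZ_0^{(l)}$ returned by $\solve$ on the length-$(k-l)$ tail to one vector. Tracing \peel for $p = 0, 1, 2$, each of these levels performs $\Exp(\Theta((\log n)^\gamma)) \cdot \log(1/\epsilon^{(p)})$ preconditioned-Richardson iterations, where the demanded accuracy $\epsilon^{(p)}$ is always one of the fixed values $1/10$, $\tfrac{1}{\Exp(\Theta((\log n)^\gamma))}$, or $\epsilon_0 = \tfrac{1}{30\Exp(5d)}$ — these do not cascade down the recursion, so every $\log(1/\epsilon^{(p)})$ factor is bounded by $O((\log n)^{1/3})$. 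The $p = 3$ level performs $O((\log n)^\gamma)$ iterations, each applying the operator $\ZZ = (3/4)^{d+1}\ZZ_0^{(l+1)}(\II + \AAn_{d-1}^{(l,1/4)})\cdots(\II + \AAn_0^{(l,1/4)})$, i.e. one application of $\ZZ_0^{(l+1)}$ and $d$ sparse matrix-vector products. Composing the four \peel levels, one application of $\ZZ_0^{(l)}$ requires $N := \Exp(\Theta((\log n)^\gamma))$ applications of $\ZZ_0^{(l+1)}$ together with $N$ sparse matrix-vector products (absorbing all polylogarithmic factors into $N$). The leaf $l = k$ is one call to $\richardson$ with $256\log(1/\epsilon_0) = \Theta(d)$ iterations against the $n^{1+o(1)}$-sparse matrix $\II - \AAn$, so $T(k) = n^{1+o(1)}$. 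Hence $T(l) \le N\,T(l+1) + N\cdot m^{1+o(1)}$.

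Unrolling gives $T(0) \le N^k T(k) + kN^k m^{1+o(1)}$, and the only genuinely delicate point is $N^k = \Exp(\Theta(k(\log n)^\gamma)) = n^{o(1)}$, which holds because $k = O((\log n)^{1-\delta})$ and $\gamma < \delta$ (both satisfied by $\gamma = 1/10$, $\delta = 1/3$). Thus $T(0) = m^{1+o(1)}$. Finally I would assemble \Cref{alg:solve_Eulerian}: building $\mathcal{C}$ costs $m^{1+o(1)}$; $\solve(\mathcal{C}, 1/10)$ only constructs the implicit operator $\ZZ = \ZZ_0^{(0)}$; the step $\hat\ZZ = \richardson(\DD_{\dir{G}}^{+/2}\LL_{\dir{G}}\DD_{\dir{G}}^{+/2}, \ZZ, \cdot, 1, \log(1/\epsilon))$ performs $\log(1/\epsilon)$ iterations, each applying $\DD_{\dir{G}}^{+/2}\LL_{\dir{G}}\DD_{\dir{G}}^{+/2}$ ($O(m)$ work) and the implicit operator $\ZZ$, which costs $T(0) = m^{1+o(1)}$ per application by the ``linear operator'' part of \Cref{lem:richardson}; the concluding multiplication by $\DD_{\dir{G}}^{+/2}$ is $O(n)$. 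Summing, the total is $m^{1+o(1)}\log(1/\epsilon)$. The main obstacle is the bookkeeping of the recurrence — in particular confirming that the demanded accuracies do not cascade (so all $\log(1/\cdot)$ factors remain polylogarithmic) and that the product of the $k$ branching factors $N$ stays $n^{o(1)}$ for the chosen exponents.
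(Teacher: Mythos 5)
Your proposal is correct and follows essentially the same route as the paper's proof: bound the per-level branching factor by $\Exp(\Theta((\log n)^{\gamma}))$, note there are $O(k)=O((\log n)^{2/3})$ levels with a constant number of \peel stages each, conclude the total branching is $n^{o(1)}$ since $\gamma=1/10$, and multiply by the $m^{1+o(1)}$ cost of each matrix--vector product and the final $\log(1/\epsilon)$ Richardson iterations. Your version is somewhat more careful than the paper's (the explicit recurrence for $T(l)$ and the check that the demanded accuracies do not cascade down the recursion), but the underlying argument is identical.
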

\begin{proof}
The pseudoinverse chain is built in $m^{1+o(1)}$ time by \Cref{lem:pseudoinverse_chain}. By \Cref{def:pseudoinverse_chain} all matrices involved have at most $m^{1+o(1)}$ edges, and Richardson only ever multiplies with these matrices. Therefore it suffices to show that the total branching of our recursion \solve() is $n^{o(1)}$. There are $4$ branching points for each level $l \in [k]$ for $k = \Theta((\log n)^{2/3})$ in the procedure $\peel$() in \Cref{alg:solve}. Each of these branches by a factor at most $f = \Exp(\Theta((\log n)^{1/10}))$. Therefore the total branching is at most $f^k \leq O(1) \cdot \Exp((\log n)^{4/5}) = n^{o(1)}$. Finally \solveEulerian() contributes another factor of $O(\log(1/\epsilon))$. This concludes the proof.
\end{proof}

Finally, we put the pieces together and show \Cref{thm:eul_square_solver}.

\begin{proof}[Proof of \Cref{thm:eul_square_solver}]
The runtime follows from \Cref{lem:runtime_solver}. $\mathcal{C}$ is a pseudoinverse chain by \Cref{lem:pseudoinverse_chain}. Then, by \Cref{lem:rec_pseudo} we have that $\ZZ$ is a $1/10$ approximate pseudoinverse of $\DD^{+/2}_{\dir{G}} \LL_{\dir{G}} \DD^{+/2}_{\dir{G}}$ with respect to $\DD^{+/2}_{\dir{G}} \UU_{\LL_{\dir{G}}} \DD^{+/2}_{\dir{G}}$. Therefore $\hat{\ZZ}$ in \solveEulerian() is an $\epsilon$-approximate pseudoinverse of $\DD^{+/2}_{\dir{G}} \LL_{\dir{G}} \DD^{+/2}_{\dir{G}}$ with respect to $\DD^{+/2}_{\dir{G}} \UU_{\LL_{\dir{G}}} \DD^{+/2}_{\dir{G}}$ by \Cref{lem:pseudoinverse_improvement}. The result follows from \Cref{lem:richardson} and the diagonal rescaling. 
\end{proof}

\clearpage

\bibliographystyle{alpha}
\bibliography{refs}

\newcommand{\etalchar}[1]{$^{#1}$}
\begin{thebibliography}{CKP{\etalchar{+}}16b}

\bibitem[AJSS19]{doi:10.1137/1.9781611975482.85}
AmirMahdi Ahmadinejad, Arun Jambulapati, Amin Saberi, and Aaron Sidford.
\newblock {\em Perron-Frobenius Theory in Nearly Linear Time: Positive
  Eigenvectors, M-matrices, Graph Kernels, and Other Applications}, pages
  1387--1404.
\newblock 2019.

\bibitem[AKM{\etalchar{+}}20]{akm20}
AmirMahdi Ahmadinejad, Jonathan Kelner, Jack Murtagh, John Peebles, Aaron
  Sidford, and Salil Vadhan.
\newblock High-precision estimation of random walks in small space.
\newblock In {\em 2020 IEEE 61st Annual Symposium on Foundations of Computer
  Science (FOCS)}, pages 1295--1306, 2020.

\bibitem[BSS12]{bss12}
Joshua Batson, Daniel~A. Spielman, and Nikhil Srivastava.
\newblock Twice-ramanujan sparsifiers.
\newblock {\em SIAM Journal on Computing}, 41(6):1704--1721, 2012.

\bibitem[CGL{\etalchar{+}}20a]{chuzhoy20detconf}
Julia Chuzhoy, Yu~Gao, Jason Li, Danupon Nanongkai, Richard Peng, and
  Thatchaphol Saranurak.
\newblock A deterministic algorithm for balanced cut with applications to
  dynamic connectivity, flows, and beyond.
\newblock In Sandy Irani, editor, {\em 61st {IEEE} Annual Symposium on
  Foundations of Computer Science, {FOCS} 2020, Durham, NC, USA, November
  16-19, 2020}, pages 1158--1167. {IEEE}, 2020.

\bibitem[CGL{\etalchar{+}}20b]{chuzhoy2020deterministic}
Julia Chuzhoy, Yu~Gao, Jason Li, Danupon Nanongkai, Richard Peng, and
  Thatchaphol Saranurak.
\newblock A deterministic algorithm for balanced cut with applications to
  dynamic connectivity, flows, and beyond, 2020.

\bibitem[CKK{\etalchar{+}}18]{cohen2018solving}
Michael~B. Cohen, Jonathan Kelner, Rasmus Kyng, John Peebles, Richard Peng,
  Anup~B. Rao, and Aaron Sidford.
\newblock Solving directed laplacian systems in nearly-linear time through
  sparse lu factorizations.
\newblock In {\em 2018 IEEE 59th Annual Symposium on Foundations of Computer
  Science (FOCS)}, pages 898--909, 2018.

\bibitem[CKP{\etalchar{+}}14]{CohenKPPR14}
Michael~B. Cohen, Rasmus Kyng, Jakub~W. Pachocki, Richard Peng, and Anup~B.
  Rao.
\newblock Preconditioning in expectation.
\newblock {\em CoRR}, abs/1401.6236, 2014.

\bibitem[CKP{\etalchar{+}}16a]{cohen16fasterconference}
Michael~B. Cohen, Jonathan Kelner, John Peebles, Richard Peng, Aaron Sidford,
  and Adrian Vladu.
\newblock Faster algorithms for computing the stationary distribution,
  simulating random walks, and more.
\newblock In {\em 2016 IEEE 57th Annual Symposium on Foundations of Computer
  Science (FOCS)}, pages 583--592, 2016.

\bibitem[CKP{\etalchar{+}}16b]{cohen2016almostlineartime}
Michael~B. Cohen, Jonathan~A. Kelner, John Peebles, Richard Peng, Anup~B. Rao,
  Aaron Sidford, and Adrian Vladu.
\newblock Almost-linear-time algorithms for markov chains and new spectral
  primitives for directed graphs.
\newblock {\em CoRR}, abs/1611.00755, 2016.

\bibitem[CKP{\etalchar{+}}16c]{cohen16faster}
Michael~B. Cohen, Jonathan~A. Kelner, John Peebles, Richard Peng, Aaron
  Sidford, and Adrian Vladu.
\newblock Faster algorithms for computing the stationary distribution,
  simulating random walks, and more.
\newblock {\em CoRR}, abs/1608.03270, 2016.

\bibitem[CKP{\etalchar{+}}17]{cohen2016almostlineartimeconference}
Michael~B. Cohen, Jonathan Kelner, John Peebles, Richard Peng, Anup~B. Rao,
  Aaron Sidford, and Adrian Vladu.
\newblock Almost-linear-time algorithms for markov chains and new spectral
  primitives for directed graphs.
\newblock In {\em Proceedings of the 49th Annual ACM SIGACT Symposium on Theory
  of Computing}, STOC 2017, page 410–419, New York, NY, USA, 2017.
  Association for Computing Machinery.

\bibitem[GR99]{gr1999bipartiteness}
Oded Goldreich and Dana Ron.
\newblock A sublinear bipartiteness tester for bounded degree graphs.
\newblock {\em Combinatorica}, 19(3):335--373, 1999.

\bibitem[JS21]{js21}
Arun Jambulapati and Aaron Sidford.
\newblock Ultrasparse ultrasparsifiers and faster laplacian system solvers.
\newblock In {\em Proceedings of the 2021 ACM-SIAM Symposium on Discrete
  Algorithms (SODA)}, pages 540--559, 2021.

\bibitem[KLP{\etalchar{+}}15]{kyng2015sparsified}
Rasmus Kyng, Yin~Tat Lee, Richard Peng, Sushant Sachdeva, and Daniel~A.
  Spielman.
\newblock Sparsified cholesky and multigrid solvers for connection laplacians.
\newblock {\em CoRR}, abs/1512.01892, 2015.

\bibitem[KLP{\etalchar{+}}16]{kyng2015sparsifiedconference}
Rasmus Kyng, Yin~Tat Lee, Richard Peng, Sushant Sachdeva, and Daniel~A.
  Spielman.
\newblock Sparsified cholesky and multigrid solvers for connection laplacians.
\newblock In {\em Proceedings of the Forty-Eighth Annual ACM Symposium on
  Theory of Computing}, STOC '16, page 842–850, New York, NY, USA, 2016.
  Association for Computing Machinery.

\bibitem[KMP11]{kmp11}
Ioannis Koutis, Gary~L. Miller, and Richard Peng.
\newblock A nearly-m log n time solver for sdd linear systems.
\newblock In {\em 2011 IEEE 52nd Annual Symposium on Foundations of Computer
  Science}, pages 590--598, 2011.

\bibitem[KMP14]{kmp14_1}
Ioannis Koutis, Gary~L. Miller, and Richard Peng.
\newblock Approaching optimality for solving sdd linear systems.
\newblock {\em SIAM Journal on Computing}, 43(1):337--354, 2014.

\bibitem[KOSZ13]{kos13}
Jonathan~A. Kelner, Lorenzo Orecchia, Aaron Sidford, and Zeyuan~Allen Zhu.
\newblock A simple, combinatorial algorithm for solving sdd systems in
  nearly-linear time.
\newblock In {\em Proceedings of the Forty-Fifth Annual ACM Symposium on Theory
  of Computing}, STOC '13, page 911–920, New York, NY, USA, 2013. Association
  for Computing Machinery.

\bibitem[KS16]{ks16}
Rasmus Kyng and Sushant Sachdeva.
\newblock Approximate gaussian elimination for laplacians - fast, sparse, and
  simple.
\newblock In {\em 2016 IEEE 57th Annual Symposium on Foundations of Computer
  Science (FOCS)}, pages 573--582, 2016.

\bibitem[KVV04]{kvv2004clusterings}
Ravi Kannan, Santosh Vempala, and Adrian Vetta.
\newblock On clusterings: Good, bad and spectral.
\newblock {\em J. ACM}, 51(3):497–515, may 2004.

\bibitem[Pee19]{Peebles19:thesis}
John Peebles.
\newblock Fast spectral primitives for directed graphs.
\newblock PhD thesis, Massachusetts Institute of Technology, 2019.

\bibitem[PS14]{ps14}
Richard Peng and Daniel~A. Spielman.
\newblock An efficient parallel solver for sdd linear systems.
\newblock In {\em Proceedings of the Forty-Sixth Annual ACM Symposium on Theory
  of Computing}, STOC '14, page 333–342, New York, NY, USA, 2014. Association
  for Computing Machinery.

\bibitem[PS21]{peng2021sparsified}
Richard Peng and Zhuoqing Song.
\newblock Sparsified block elimination for directed laplacians.
\newblock {\em CoRR}, abs/2111.10257, 2021.

\bibitem[PS22]{peng2021sparsifiedconference}
Richard Peng and Zhuoqing Song.
\newblock Sparsified block elimination for directed laplacians.
\newblock In {\em Proceedings of the 54th Annual ACM SIGACT Symposium on Theory
  of Computing}, STOC 2022, page 557–567, New York, NY, USA, 2022.
  Association for Computing Machinery.

\bibitem[RV05]{rv05}
Eyal Rozenman and Salil Vadhan.
\newblock Derandomized squaring of graphs.
\newblock In Chandra Chekuri, Klaus Jansen, Jos{\'e} D.~P. Rolim, and Luca
  Trevisan, editors, {\em Approximation, Randomization and Combinatorial
  Optimization. Algorithms and Techniques}, pages 436--447, Berlin, Heidelberg,
  2005. Springer Berlin Heidelberg.

\bibitem[SS11]{ss11}
Daniel~A. Spielman and Nikhil Srivastava.
\newblock Graph sparsification by effective resistances.
\newblock {\em SIAM Journal on Computing}, 40(6):1913--1926, 2011.

\bibitem[ST04]{storiginal}
Daniel~A. Spielman and Shang-Hua Teng.
\newblock Nearly-linear time algorithms for graph partitioning, graph
  sparsification, and solving linear systems.
\newblock In {\em Proceedings of the Thirty-Sixth Annual ACM Symposium on
  Theory of Computing}, STOC '04, page 81–90, New York, NY, USA, 2004.
  Association for Computing Machinery.

\bibitem[ST11]{st_part_2}
Daniel~A. Spielman and Shang-Hua Teng.
\newblock Spectral sparsification of graphs.
\newblock {\em SIAM Journal on Computing}, 40(4):981--1025, 2011.

\bibitem[ST13]{st_part_1}
Daniel~A. Spielman and Shang-Hua Teng.
\newblock A local clustering algorithm for massive graphs and its application
  to nearly linear time graph partitioning.
\newblock {\em SIAM Journal on Computing}, 42(1):1--26, 2013.

\bibitem[ST14]{st_part_3}
Daniel~A. Spielman and Shang-Hua Teng.
\newblock Nearly linear time algorithms for preconditioning and solving
  symmetric, diagonally dominant linear systems.
\newblock {\em SIAM Journal on Matrix Analysis and Applications},
  35(3):835--885, 2014.

\end{thebibliography}

\clearpage

\appendix

\section{Linear Algebra}

\begin{fact}
For a block diagonal matrix $\AA = \begin{pmatrix}
\AA_1 & &\\
& \ddots & \\
& & \AA_k
\end{pmatrix}$ we have $\norm{\AA}_2 = \max_i\{\norm{\AA_i}_2\}$
\label{fact:blockdiag}
\end{fact}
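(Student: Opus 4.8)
\textbf{Proof proposal for \Cref{fact:blockdiag}.}

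The plan is to prove the two inequalities $\norm{\AA}_2 \geq \max_i \norm{\AA_i}_2$ and $\norm{\AA}_2 \leq \max_i \norm{\AA_i}_2$ separately, using the variational characterization $\norm{\AA}_2 = \max_{\xx \neq \veczero} \norm{\AA \xx}_2 / \norm{\xx}_2$. Throughout, write a vector $\xx \in \R^n$ in block form $\xx = (\xx_1, \dots, \xx_k)$ conforming to the block structure of $\AA$, so that $\AA \xx = (\AA_1 \xx_1, \dots, \AA_k \xx_k)$ and hence $\norm{\AA \xx}_2^2 = \sum_{i=1}^k \norm{\AA_i \xx_i}_2^2$ while $\norm{\xx}_2^2 = \sum_{i=1}^k \norm{\xx_i}_2^2$.

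For the lower bound, fix the index $i^\star$ achieving $\max_i \norm{\AA_i}_2$ and let $\xx_{i^\star}$ be a unit vector with $\norm{\AA_{i^\star} \xx_{i^\star}}_2 = \norm{\AA_{i^\star}}_2$; take $\xx$ to be the block vector that equals $\xx_{i^\star}$ in block $i^\star$ and $\veczero$ elsewhere. Then $\norm{\AA \xx}_2 = \norm{\AA_{i^\star} \xx_{i^\star}}_2 = \norm{\AA_{i^\star}}_2$ and $\norm{\xx}_2 = 1$, so $\norm{\AA}_2 \geq \norm{\AA_{i^\star}}_2 = \max_i \norm{\AA_i}_2$. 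For the upper bound, take any $\xx \neq \veczero$ and estimate
\begin{align*}
    \norm{\AA \xx}_2^2 = \sum_{i=1}^k \norm{\AA_i \xx_i}_2^2 \leq \sum_{i=1}^k \norm{\AA_i}_2^2 \norm{\xx_i}_2^2 \leq \left(\max_i \norm{\AA_i}_2^2\right) \sum_{i=1}^k \norm{\xx_i}_2^2 = \left(\max_i \norm{\AA_i}_2\right)^2 \norm{\xx}_2^2.
\end{align*}
Dividing by $\norm{\xx}_2^2$ and taking the supremum over $\xx$ gives $\norm{\AA}_2 \leq \max_i \norm{\AA_i}_2$. Combining the two bounds yields the claim.

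There is no serious obstacle here; the only point requiring a sentence of care is that the blocks $\AA_i$ need not be square, so one should phrase everything in terms of the appropriate domain and codomain dimensions of each $\AA_i$ and note that the block decomposition of both the domain $\R^n$ and the codomain is orthogonal, which is exactly what makes $\norm{\AA\xx}_2^2$ and $\norm{\xx}_2^2$ split as sums over blocks. Since the paper only ever applies this to symmetric (square) blocks, even that remark can be kept brief.
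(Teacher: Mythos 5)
Your proof is correct and takes essentially the same route as the paper's: both rely on the variational characterization of the spectral norm and on splitting a test vector into blocks so that $\norm{\AA\xx}_2^2$ and $\norm{\xx}_2^2$ decompose as sums over blocks. Yours is in fact slightly more complete — you treat general $k$ directly rather than via induction from the $k=2$ case, and you explicitly establish the lower bound $\norm{\AA}_2 \geq \max_i \norm{\AA_i}_2$, which the paper's argument leaves implicit.
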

\begin{proof}
Consider the matrix $\AA = \begin{pmatrix}
\AA_1  &\\
& \AA_2
\end{pmatrix}$. Then $\norm{\AA}_2 = \max_{\vv: \norm{\vv} = 1} \norm{\AA \vv}_2$ Now every $\vv$ is decomposable into $a\xx$ and $b\yy$ so that $\AA \vv = \begin{pmatrix} a \AA_1 \xx \\ b \AA_2 \yy\end{pmatrix}$ and $\norm{x}_2 = 1$, $\norm{\yy}_2 = 1$ and $a^2 + b^2 = 1$. But then $\norm{\begin{pmatrix} a \AA_1 \xx \\ b \AA_2 \yy\end{pmatrix}}_2^2 = a^2 \norm{\AA_1 \xx}^2_2 + b^2 \norm{\AA_2 \yy} \leq (a^2 + b^2) \max\{\norm{\AA_1 \xx}^2_2 ,\norm{\AA_2 \yy}_2^2\}$. For $k > 2$ the fact follows by induction.
\end{proof}

\section{Degree Preserving Undirected Sparsification}
\label{sec:deg_pres_undir_spars}

We adapt the sparsification routine provided in Section 6.3 of \cite{chuzhoy2020deterministic} to be degree preserving. We first state their result.  

\begin{theorem}[Corollary 6.4 in \cite{chuzhoy2020deterministic}]
There is a deterministic algorithm called \\ $\textsc{SpectralSparsify}(G, r)$ that, given an undirected n-node m-edge graph $G = (V, E, \omega)$ with integer edge weights $\omega$ bounded by $U$, and a parameter $1 \leq r \leq O(\log m)$, computes a $(\log m)^{C \cdot r^2}$-spectral sparsifier $H$ for $G$, with $|E(H)| \leq O(n \log n \log U)$ for some constant $C$, in time
\begin{align*}
        O(m^{1 + O(1/r)}  \cdot (\log m)^{O(r^2)} \log U).
\end{align*}

\label{thm:sparsify_chuzhoy}
\end{theorem}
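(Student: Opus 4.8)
Since this statement is quoted verbatim from \cite{chuzhoy2020deterministic} (their Corollary 6.4), in our paper it carries no new proof — we will simply cite it. It is nevertheless worth recording how one would argue it, because the same three ingredients drive the degree-preserving adaptation in \Cref{lem:sparsify_undir} that we actually build. The plan is a three-layer construction in which the layers compose \emph{without multiplying the spectral error}, because at every layer the pieces being sparsified occupy disjoint portions of the graph.

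First, I would strip the weights: a standard bucketing of $E$ into $O(\log U)$ geometric weight classes (this is where the $\log U$ factor comes from) reduces the problem to the unweighted case, since $\LL$ is additive over edge-disjoint subgraphs and sparsifiers of the classes union to a sparsifier of $G$ with the same ratio — the approximation ratio is preserved, not degraded, under this sum. Second, for an unweighted class I would run a \emph{layered} expander decomposition: apply \Cref{thm:exp_dec_det} to get a $(1/2,\Phi)$-decomposition, set aside the at most $\tfrac12 m$ crossing edges, and recurse on them, so that after $O(\log m)$ layers every edge lies inside some induced expander $G[V_p]$. The expanders in one layer are vertex-disjoint, so by block-diagonality (\Cref{fact:blockdiag}) the cost of replacing a whole layer by sparsifiers of its parts is the \emph{maximum} per-part error, not the sum; the same holds across the $O(\log m)$ layers and the $O(\log U)$ classes, which are edge-disjoint. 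Hence the global spectral error is, up to constants, the worst per-expander error, while the edge count only multiplies by $O(\log m\cdot\log U)$.

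Third — the heart of the matter — I would sparsify a single $\Phi$-expander $G[V_p]$ with degree vector $\dd_p$. The lever is \Cref{lem:expand_approx}/\Cref{col:approximation_via_expander}: $\LL_{G[V_p]}$ is $O(1/\Phi^2)$-approximated by $\LL_{G(\dd_p)} = \diag(\dd_p) - \dd_p\dd_p^T/\norm{\dd_p}_1$, and more generally by the Laplacian of \emph{any} $\Omega(1)$-expander on $V_p$ with (nearly) the same degree sequence. So it suffices to deterministically build such a sparse expander: bucket the vertices of $V_p$ by degree into $O(\log n)$ groups, replace each group's near-regular contribution by a constant-degree explicit expander on the same number of vertices with the right scaling, and union. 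This gives $O(|V_p|)$ edges per part, hence $O(n)$ per layer and $O(n\log n\log U)$ in total, with error $O(1/\Phi^2) = (\log m)^{O(r^2)}$ via the conductance bound $\Phi = \Omega(1/(\log m)^{O(r^2)})$ of \Cref{thm:exp_dec_det}; the running time is dominated by the $O(\log m\cdot\log U)$ expander-decomposition calls, each costing $m^{1+O(1/r)}(\log m)^{O(r^2)}$.

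The genuinely hard step will be the third one — producing, deterministically, a sparse expander matching a prescribed and wildly non-uniform degree sequence while keeping the edge count near-linear — and this is precisely the machinery \cite{chuzhoy2020deterministic} develops in full. We will not reprove it; instead we invoke this corollary as a black box and, in \Cref{lem:sparsify_undir}, post-process its output with a greedy degree-repair step (the same patching idea as in \patchbipartite) so that the undirected sparsifier can be taken \emph{exactly} degree-preserving, which is the only strengthening our solver actually requires.
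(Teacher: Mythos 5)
You are right that this statement is imported verbatim as Corollary 6.4 of \cite{chuzhoy2020deterministic} and carries no proof in this paper beyond the citation, which is exactly how the paper treats it. Your accompanying sketch of the bucketing/layered-expander-decomposition/explicit-expander argument is a fair account of what the cited work does, but none of it is needed here — citing the result as a black box, as you propose, is the paper's approach.
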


\begin{algorithm}
\caption{\specspardeg($G, \gamma$)}
\label{alg:specsparsdeg}
$r = ((\log n)^{\gamma/2}/\sqrt{2C})/(\log \log m)$ \\
let $\omega_{min} = \min_{e \in E(G)} \omega(e)$ denote the minimum weight \\
let $\hat{G}$ be  \\
$H = \textsc{SpectralSparsify}(G, r)$ \\
$\AA_{\tilde{G}} = \AA_H/\Exp(\frac{1}{2} \cdot (\log n)^{\gamma}) + \underbrace{\DD_G - \DD_H/\Exp(\frac{1}{2} \cdot (\log n)^{\gamma})}_{\text{self loops}}$ \\
\Return{$\tilde{G}$}
\end{algorithm} 

To make the sparsification degree preserving, we simply add self loops. 

\begin{lemma}(Degree Preserving Sparsification, \Cref{lem:sparsify_undir} restated)
There is a deterministic algorithm \\ $\specspardeg(G, \gamma)$ that given a parameter $\gamma \in (0,1)$ and an undirected graph $G = (V, E, \omega)$ with $n$ vertices and $m$ edges such that that $P := \frac{\max_{e \in E} \omega(e)}{\min_{e \in E} \omega(e)} = poly(n)$ computes $\tilde{G}$ satsifying
\begin{enumerate}
    \item $\Exp(-(\log n)^{\gamma}) \LL_G \preceq \LL_{\tilde{G}} \preceq \Exp((\log n)^{\gamma}) \LL_G$
    \item $\nnz(\AA) = \tilde{O}(n)$
\end{enumerate}
in time
\begin{align*}
    \tilde{O}(m^{1 + O(1/(\log n)^{\gamma/2})}\cdot (\log m)^{O((\log n)^\gamma)}) = m^{1 + o(1)}.    
\end{align*}
The graph $\tilde{G}$ has self loops and exactly the same degrees as $G$. 
\label{lem:sparsify_undir_app}
\end{lemma}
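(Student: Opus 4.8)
The plan is to obtain this statement as a corollary of \Cref{thm:sparsify_chuzhoy}, run with the parameter $r$ fixed in \Cref{alg:specsparsdeg}, together with the elementary observation that self-loops do not change a graph Laplacian, so one may add exactly the self-loops needed to repair degrees at no spectral cost. A short preprocessing step is needed because \Cref{thm:sparsify_chuzhoy} assumes integer, polynomially bounded weights: I would scale every weight of $G$ by $c := n^{10}/\omega_{\min}$ and round to the nearest integer, obtaining a graph $\hat G$ with integer weights in $[\tfrac12 n^{10},\, n^{10}P]$, so $U := \max_e\omega_{\hat G}(e)=\mathrm{poly}(n)$ and $\log U = O(\log n)$. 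Each rounded weight is $c$ times the original up to a factor $(1\pm n^{-9})$, hence $(1-n^{-9})\,c\,\LL_G\preceq\LL_{\hat G}\preceq(1+n^{-9})\,c\,\LL_G$, a distortion negligible against the $\Exp(\pm(\log n)^\gamma)$ slack. For $n$ below an absolute constant the lemma is trivial, so assume $n$ large; deleting isolated vertices gives $\log n=O(\log m)$, and with $\log\log m=O(\log\log n)$ the choice $r=(\log n)^{\gamma/2}/(\sqrt{2C}\log\log m)$ satisfies $1\le r\le O(\log m)$, as \Cref{thm:sparsify_chuzhoy} requires.

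Next I would run \Cref{thm:sparsify_chuzhoy} on $\hat G$ with this $r$ to get a $\kappa$-spectral sparsifier $H$ with $\kappa=(\log m)^{Cr^2}$ and $|E(H)|=O(n\log n\log U)$. The choice of $r$ gives $Cr^2\log\log m=(\log n)^\gamma/(2\log\log m)\le\tfrac12(\log n)^\gamma$, so $\kappa\le s:=\Exp(\tfrac12(\log n)^\gamma)$. The algorithm returns $\tilde G$ with $\AA_{\tilde G}=\tfrac1{cs}\AA_H+\bigl(\DD_G-\tfrac1{cs}\DD_H\bigr)$, the diagonal term being self-loop weights; these are nonnegative since the sparsifier guarantee yields $\deg_H(v)=\ee_v^\top\LL_H\ee_v\le\kappa\,\ee_v^\top\LL_{\hat G}\ee_v\le cs\,\deg_G(v)$ (up to the harmless rounding factor, absorbed by taking $s$ slightly above $\kappa$). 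As self-loops contribute nothing to a Laplacian, $\LL_{\tilde G}=\tfrac1{cs}\LL_H$, and combining $\tfrac1\kappa\LL_{\hat G}\preceq\LL_H\preceq\kappa\LL_{\hat G}$ with $\kappa\le s$ and $\LL_{\hat G}\approx c\,\LL_G$ gives
\[
\Exp(-(\log n)^\gamma)\,\LL_G\;\preceq\;\tfrac1{s^2}\LL_G\;\preceq\;\LL_{\tilde G}\;\preceq\;\LL_G\;\preceq\;\Exp((\log n)^\gamma)\,\LL_G,
\]
which is point 1.

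For the remaining claims: by construction $\deg_{\tilde G}(v)=\tfrac1{cs}\deg_H(v)+\deg_G(v)-\tfrac1{cs}\deg_H(v)=\deg_G(v)$, so degrees are preserved exactly and $\tilde G$ carries self-loops as asserted; and $\nnz(\AA_{\tilde G})\le\nnz(\AA_H)+n\le O(n\log n\log U)+n=O(n\log^2 n)=\tilde O(n)$, which is point 2. The running time is that of \Cref{thm:sparsify_chuzhoy}, namely $O(m^{1+O(1/r)}(\log m)^{O(r^2)}\log U)$; substituting $1/r=O(\log\log m/(\log n)^{\gamma/2})$, $r^2=O((\log n)^\gamma/(\log\log m)^2)$, and $\log U=O(\log n)$ yields the stated bound, which is $m^{1+o(1)}$ because $(\log n)^{-\gamma/2}\to0$.

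I expect the only genuinely delicate point to be keeping the three rescalings straight---the rounding factor $(1\pm n^{-9})$, the scaling $c=n^{10}/\omega_{\min}$, and the division by $s=\Exp(\tfrac12(\log n)^\gamma)$---and in particular checking that the self-loop weights $\DD_G-\tfrac1{cs}\DD_H$ are nonnegative, which is exactly what forces dividing by some $s\ge\kappa$ rather than by something smaller; everything else is routine substitution, with matching the precise exponents in the stated runtime being the fussiest but immediate part.
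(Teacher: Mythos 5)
Your proof is correct and follows essentially the same route as the paper's: invoke Corollary~6.4 of Chuzhoy et al.\ with $r=(\log n)^{\gamma/2}/(\sqrt{2C}\log\log m)$ so that $(\log m)^{Cr^2}\le\Exp(\tfrac12(\log n)^\gamma)$, rescale the sparsifier down by that factor so its degrees are dominated by those of $G$, and repair the degrees with self-loops, which leave the Laplacian unchanged. The one place you go beyond the paper is the explicit scale-and-round preprocessing to meet the integer-weight hypothesis of the cited theorem; the paper's pseudocode gestures at this (a dangling ``let $\hat G$ be'' line) but never carries it out, so your added step is a welcome completion rather than a deviation.
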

\begin{proof}
The runtime directly follows from the runtime of \Cref{thm:sparsify_chuzhoy}. Further, since for our choice of $r$ we have $(\log m)^{C \cdot r^2} \leq \Exp(\frac{1}{2} \cdot (\log n)^{\gamma})$, we obtain
\begin{align*}
    \Exp(-\frac{1}{2} \cdot (\log n)^{\gamma}) \LL_G \preceq \LL_H \preceq \Exp(\frac{1}{2} \cdot (\log n)^{\gamma}) \LL_G.
\end{align*}
This directly allows us to conclude that for all $v$: $\deg_G(v) \leq \deg_{\tilde{G}}$, and thus the self loops we add are valid and ensure the preservation of degrees. 
Finally, we have
\begin{align*}
    \Exp(-(\log n)^{\gamma}) \LL_G \preceq \LL_{\tilde{G}} \preceq \LL_G
\end{align*}
since self loops cancel and thus do not change the directed Laplacian. This proofs the desired approximation and concludes our proof. 
\end{proof}

\newcommand{\score}{\operatorname{score}}

\section{Sketching the Cholesky Solver}
\label{sec:spars_cholesky}

Very recently, new techniques for analysing the error accumulation in sparsified-cholesky-solvers for directed Laplacians lead to an algorithm with almost optimal dependence on the runtime of the sparsifiation routine \cite{peng2021sparsified}. In this section, we sketch that our deterministic sparsification routines can also be used to derandomize that framework. We first summarize the framework of \cite{kyng2015sparsified, peng2021sparsified}. 

\subsection{Sparsified-Cholesky for Directed Laplacians}

Given a bi-partition $(F, C)$-of the vertex set $V$, the block Cholesky decomposition of an Eulerian Laplacian $\LL = \LL_{\dir{G}}$ is given by
\begin{align*}
    \LL = \begin{pmatrix}
    \II & \veczero \\
    \LL_{C F} \LL_{F F}^{-1} & \II
    \end{pmatrix} 
    \cdot 
    \begin{pmatrix}
    \LL_{F F} & \veczero \\
    \veczero & \schurto{\LL}{F}
    \end{pmatrix} 
    \cdot 
    \begin{pmatrix}
    \II & \LL_{F F}^{-1} \LL_{F C}\\
    \veczero & \II
    \end{pmatrix} 
\end{align*}
where $\schurto{\LL}{F} := \LL_{CC} - \LL_{CF} \LL_{FF}^{-1}\LL_{F C}$. The algorithm of \cite{peng2021sparsified} selects a $\rho$-RCDD (row-column-diagonally-dominant) block $\LL_{FF}$, and then computes the above decomposition, where the inverse of $\LL_{FF}$ is not explicitly computed. Given that it is easy to approximately invert $\rho$-RCDD blocks using iterative schemes, the main obstruction to apply the inverse of $\LL$ is to apply the inverse of $\schurto{\LL}{F}$. 

The augmented matrix view of partial block elimination introduced by \cite{peng2021sparsified} shows that $\schurto{\LL}{F}$ can be explicitly approximated using $O(\log \log n)$-approximate squaring steps with moderate accuracy $\epsilon \approx \frac{1}{\log \log n}$. Since $\schurto{\LL}{F}$ is another Eulerian Laplacian, and through careful patching its explicitly computed sparse approximation retains this property, the above decomposition can be iterated until $C$ has constant size. Since $|F| = \Omega(|F| + |C|)$, $\Theta(\log n)$ steps suffice. 

\subsection{Derandomizing the Sparsified-Cholesky Solver}

There are three randomized pieces in \cite{peng2021sparsified}. 

\begin{itemize}
    \item The \emph{sparsified squaring} routines from \cite{cohen2016almostlineartime} are used to approximate $\schurto{\LL}{F}$. We replace these calls with our deterministic sparsified squaring routine (see \Cref{sec:squaring}). 
    \item In \cite{peng2021sparsified} \emph{global sparsification} is invoked at the start, and after each squaring to avoid any build up of density. We cannot match this strategy using our global sparsification techniques. Therefore, we only occasionally globally sparsify and recurse (see \Cref{sec:globspars} and \Cref{sec:chains}).
    \item The routine for selecting a $\rho$-RCDD subset is randomized \cite{kyng2015sparsified, peng2021sparsified}. We show that this can be done deterministically in \Cref{sec:rho_rcdd}.
\end{itemize}

\paragraph{Schur complement chains. } Since we run our sparsified squaring algorithm without directly following up with global sparsification, the density increases by a factor of $\frac{1}{\epsilon^d}$ after $d$ block Cholesky decomposition steps. Therefore, as in the squaring framework, we cannot afford to go to full depth $\Theta(\log n)$, but have to limit the depth to say $\Theta((\log n)^{1/2})$, such that $\frac{1}{\epsilon^d} = n^{o(1)}$. Then, we invoke our global sparsification technique, and continue on the globally sparsified schur complement. 

\paragraph{The recursive algorithm.} We apply our decomposition just like \cite{peng2021sparsified}, but whenever we reach a global sparsification point, we have to recursively branch to rectify the error this induced. If we set the global sparsification error to $\Exp(O((\log n)^{1/10}))$ as in the squaring algorithm, we obtain a branching factor of $\Exp(O((\log n)^{1/10}))$ and $\Theta((\log n)^{1/2})$ depth. Therefore, the total branching is bounded by $n^{o(1)}$ and the algorithm runs in almost linear time, since all the matrices involved have an almost linear amount of entries if we globally sparsify at the start. 

\subsection{Deterministically Finding a $\rho$-RCDD Subset}
\label{sec:rho_rcdd}

Given a directed and Eulerian Graph $\dir{G} = (V, E, \omega)$ on $n$ vertices, we aim to find a set $S \subseteq V$ so that $|S| > \Omega(n)$ and for each vertex $s$ in $S$
\begin{align*}
    \sum_{(v, s) \in E, v \notin S} \omega(v, s) \geq \rho \deg_{\dir{G}}^-(s)
\end{align*}
and
\begin{align*}
    \sum_{(s, v) \in E, v \notin S} \omega(s, v) \geq \rho \deg_{\dir{G}}^+(s).
\end{align*}
Namely, a $\rho$-fraction of its (weighted) in-neighbours and a $\rho$-fraction of its (weighted) out-neighbours are not in $S$. Such a set $S$ is called a $\rho$-RCDD subset, since it corresponds to a $\rho$-RCDD block of the Eulerian Laplacian $\LL_{\dir{G}}$. See \Cref{fig:rcdd} for an illustration. 

\begin{figure}[h]
    \centering
    \includegraphics[width=4cm]{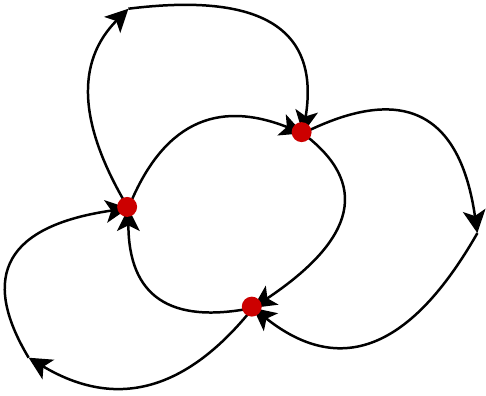}
    \caption{The dark red vertices form a $1/2$-RCDD subset $S$, since for each dark red vertex, half the out-edges leave $S$, and half the in-edges come from outside $S$. }
    \label{fig:rcdd}
\end{figure}

\paragraph{One Condition suffices. }

We first reduce the problem to finding an algorithm that satisfies one of the two conditions. 

\begin{lemma}
Given an algorithm for finding a subset $S$ of $V$ for a directed graph $\dir{G}$ such that
\begin{align*}
    \sum_{(v, s) \in E, v \notin S} \omega((v, s)) \geq \rho \deg_{\dir{G}}^-(s)
\end{align*}
and $|S| = \Omega(n)$ we can find an $\rho$-RCDD subset $S'$ with $|S'| = \Omega(n)$.
\label{lem:red}
\end{lemma}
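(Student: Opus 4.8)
\textbf{Proof proposal for \Cref{lem:red}.}

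The plan is to run the given one-sided algorithm twice, once on $\dir{G}$ and once on the reversal $\dir{G}^{\mathrm{rev}}$ obtained by flipping every edge direction (keeping weights). Since $\dir{G}$ is Eulerian, $\deg^+_{\dir{G}}(v) = \deg^-_{\dir{G}}(v)$ for every $v$, and the reversal of an Eulerian graph is again Eulerian; moreover $\deg^-_{\dir{G}^{\mathrm{rev}}}(v) = \deg^+_{\dir{G}}(v)$. Applying the hypothesized algorithm to $\dir{G}$ yields $S_1$ with $|S_1| = \Omega(n)$ such that every $s \in S_1$ receives a $\rho$-fraction of its weighted in-edges from outside $S_1$; applying it to $\dir{G}^{\mathrm{rev}}$ yields $S_2$ with $|S_2| = \Omega(n)$ such that every $s \in S_2$ receives a $\rho$-fraction of its weighted in-edges (in $\dir{G}^{\mathrm{rev}}$) from outside $S_2$, which by definition of the reversal means every $s \in S_2$ sends a $\rho$-fraction of its weighted out-edges (in $\dir{G}$) to outside $S_2$.

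The natural candidate is $S' = S_1 \cap S_2$. The out-condition is immediate: for $s \in S' \subseteq S_2$, a $\rho$-fraction of its out-weight leaves $S_2 \supseteq S'$, hence also leaves $S'$ (more edges leave the smaller set). Symmetrically the in-condition holds since $S' \subseteq S_1$. Thus $S'$ is $\rho$-RCDD. The one remaining point is the size bound: intersection of two linear-sized sets need not be linear in general. I would handle this by \emph{iterating}: since both conditions are monotone under shrinking (removing vertices from a valid one-sided set preserves the bound for the remaining vertices, as it only increases the boundary weight), repeatedly apply the one-sided algorithm on the induced subgraph of the current set, alternating sides. Concretely, restrict to $S_1$, run the reversal-algorithm on $\dir{G}[S_1]$ to get $S_2 \subseteq S_1$ with $|S_2| = \Omega(|S_1|)$ satisfying the out-condition relative to $S_2$; then $S_2$ still satisfies the in-condition relative to $S_1 \supseteq S_2$, hence relative to $S_2$. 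This yields $|S_2| = \Omega(|S_1|) = \Omega(n)$ directly without an intersection. Care is needed that $\dir{G}[S_1]$ is still Eulerian — it need not be — so instead one should run the one-sided algorithm as a black box that only needs the \emph{in-degree} structure (or pass the original degree vector $\deg^-_{\dir{G}}$ as the target, since the bound is stated against $\deg^-_{\dir{G}}(s)$, not the induced in-degree), noting that edges leaving $S_1$ only help.

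The main obstacle I anticipate is precisely this bookkeeping around running the second invocation: ensuring that the size bound compounds multiplicatively rather than forcing an intersection, and confirming that the one-sided guarantee is robust enough to be applied on a vertex-induced subproblem while still measured against the original degrees. Once that is set up correctly, the monotonicity observations above close the argument cleanly; everything else is routine.
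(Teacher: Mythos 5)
Your proposal is correct and, after discarding the intersection idea, lands on exactly the paper's argument: run the algorithm once, restrict to the induced subgraph on the resulting set, reverse the edges, and run it again, using monotonicity of both one-sided conditions under shrinking. You are in fact more careful than the paper's two-sentence proof about the degree bookkeeping (induced versus original degrees, and that edges leaving $S_1$ only help), which the paper silently glosses over.
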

\begin{proof}
Use the algorithm to find set $S$. Then look at the induced subgraph $\dir{G}' = \dir{G}[S]$. Reverse the direction of all edges in $\dir{G}'$, and use the algorithm again. 
\end{proof}

\paragraph{Satisfying the in-edge Condition. }

Given the previous section, we are left with the task of eliminating a constant weighted fraction of the in-edges. To do so, we define a potential for every set $S \subset V$.

\begin{definition}
Let $\Psi(S) = \sum_{v \in S}  \sum_{(u, v) \in E, u \in S} \frac{\omega((u, v))}{\deg_{\dir{G}}^-(v)}$.
\end{definition}

We first make a simple observation. 

\begin{lemma}
Given a set $S$ such that $\Psi(S)/|S| \leq \frac{1}{2}$, there is a subset $S' \subseteq S$ with $|S'| \geq \frac{1}{4} |S|$ and for each $s \in S'$
\begin{align*}
        \sum_{(v, s) \in E, v \notin S} \omega((v, s)) \geq \frac{1}{4} \deg_{\dir{G}}^-(s)
\end{align*}
\label{lem:pot_to_set}
\end{lemma}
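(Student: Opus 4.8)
\textbf{Proof plan for \Cref{lem:pot_to_set}.}

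The plan is to argue by an averaging (Markov-type) argument on the per-vertex ``interior fraction.'' For each $v \in S$ define $f(v) := \sum_{(u,v) \in E,\, u \in S} \frac{\omega((u,v))}{\deg^-_{\dir G}(v)}$, the fraction of the weighted in-degree of $v$ that comes from inside $S$. By definition $\Psi(S) = \sum_{v \in S} f(v)$, so the hypothesis $\Psi(S)/|S| \le \tfrac12$ says exactly that the average of $f(v)$ over $v \in S$ is at most $\tfrac12$. Note that $0 \le f(v) \le 1$ for every $v$, since the interior in-weight cannot exceed the total in-degree $\deg^-_{\dir G}(v)$.

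First I would apply Markov's inequality to the nonnegative quantity $f(v)$ on the uniform distribution over $S$: the number of $v \in S$ with $f(v) > \tfrac34$ is at most $\frac{\Psi(S)}{3/4} \le \frac{|S|/2}{3/4} = \frac{2}{3}|S|$. Hence the complementary set $S' := \{\, v \in S : f(v) \le \tfrac34 \,\}$ has $|S'| \ge |S| - \tfrac23 |S| = \tfrac13 |S| \ge \tfrac14 |S|$, which gives the claimed size bound. (One could also phrase this as: if more than $\tfrac34 |S|$ vertices had $f(v) > \tfrac14$, that still would not contradict anything, so the right threshold to cut at is $\tfrac34$, giving slack $\tfrac14$ both in the size and in the degree fraction.)

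Then I would check the in-edge condition for $s \in S'$. For such $s$ we have $f(s) \le \tfrac34$, i.e. the weighted in-edges of $s$ coming from inside $S$ total at most $\tfrac34 \deg^-_{\dir G}(s)$. Therefore the weighted in-edges coming from outside $S$ satisfy
\begin{align*}
    \sum_{(v,s) \in E,\, v \notin S} \omega((v,s)) \;=\; \deg^-_{\dir G}(s) \;-\; \sum_{(v,s) \in E,\, v \in S} \omega((v,s)) \;\ge\; \deg^-_{\dir G}(s) - \tfrac34 \deg^-_{\dir G}(s) \;=\; \tfrac14 \deg^-_{\dir G}(s),
\end{align*}
which is exactly the asserted inequality. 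Since $S' \subseteq S$, the edges from $V \setminus S$ are in particular a subset of those from $V \setminus S'$ counted in the eventual RCDD condition, but here we only need the stated bound with $v \notin S$, so nothing further is required.

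There is no real obstacle here; the only point to be careful about is the bookkeeping of ``$v \notin S$'' versus ``$v \notin S'$'' (we keep the outer set $S$ in the condition, which is what the statement asks and what is convenient when this lemma is later combined with \Cref{lem:red} and the analogous out-edge argument), and making sure the constants line up so that both the size loss and the degree-fraction are exactly $1/4$. The choice of cutoff $3/4$ for $f(v)$ is what simultaneously delivers $|S'| \ge \tfrac14|S|$ (via Markov with the $\tfrac12$ average) and the $\tfrac14 \deg^-$ slack, so I would present it in that order: fix the cutoff, bound the bad set by Markov, then read off both conclusions.
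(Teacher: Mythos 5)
Your proposal is correct and follows essentially the same route as the paper's proof: both apply a Markov/averaging argument to the per-vertex interior in-degree fraction with cutoff $3/4$, then translate $f(s)\le 3/4$ into the stated $\tfrac14\deg^-_{\dir G}(s)$ bound. Your version is in fact slightly more explicit (and gives the stronger size bound $|S'|\ge \tfrac13|S|$) than the paper's, which only asserts the $\tfrac14$-fraction directly.
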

\begin{proof}
We have 
\begin{align*}
    \Psi(S)  = \sum_{v \in S}  \sum_{(v, u) \in E, u \in S} \frac{\omega((u, v))}{\deg_{\dir{G}}^-(v)} &\leq \frac{1}{2}|S| \\
    \frac{1}{|S|} \sum_{v \in S}  \sum_{(v, u) \in E, u \in S} \frac{\omega((u, v))}{\deg_{\dir{G}}^-(v)} &\leq \frac{1}{2}.
\end{align*} 
and we aim to show that there exists a set $S' \subseteq S$ so that for $s \in S'$
\begin{align*}
    \sum_{(s, u) \in E, u \in S} \frac{\omega((u, s))}{\deg_{\dir{G}}^-(s)} &\leq \frac{3}{4}.
\end{align*}
But since the average value is below $\frac{1}{2}$ at least a $\frac{1}{4}$-fraction of the values 
\begin{align*}
    \sum_{(v, u) \in E, u \in S} \frac{\omega((u, v))}{\deg_{\dir{G}}^-(v)}
\end{align*}
are below $\frac{3}{4}$ which shows our claim. 
\end{proof}

We are left with having to construct a set $S$ as in \Cref{lem:pot_to_set}. We define an importance score for each vertex, which we will use as a greedy criterion. 

\begin{definition}
Let $\score(v, S) = \sum_{(v, u) \in E, u \in S} \frac{\omega((v, u))}{\deg_{\dir{G}}^-(u)} - \sum_{(u, v) \in E, u \notin S} \frac{\omega((u, v))}{\deg^-(v)}$ for $v \in S$.
\end{definition}

\begin{lemma}
For $S \subseteq V$ we have
\begin{align*}
    \sum_{v \in S} \score(v, S) = \Psi(S) 
\end{align*}
\end{lemma}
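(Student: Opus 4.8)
This is a direct computation identity that unpacks the two definitions. We want to show $\sum_{v \in S} \score(v, S) = \Psi(S)$, where
\begin{align*}
    \score(v, S) = \sum_{(v, u) \in E, u \in S} \frac{\omega((v, u))}{\deg_{\dir{G}}^-(u)} - \sum_{(u, v) \in E, u \notin S} \frac{\omega((u, v))}{\deg_{\dir{G}}^-(v)}
\end{align*}
and $\Psi(S) = \sum_{v \in S} \sum_{(u, v) \in E, u \in S} \frac{\omega((u, v))}{\deg_{\dir{G}}^-(v)}$. First I would sum the first term of $\score(v, S)$ over all $v \in S$: this gives $\sum_{v \in S} \sum_{(v, u) \in E, u \in S} \frac{\omega((v,u))}{\deg_{\dir{G}}^-(u)}$, which is a sum over all edges $(v, u)$ with both endpoints in $S$, weighted by $\omega((v,u))/\deg_{\dir{G}}^-(u)$. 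Reindexing this sum by grouping on the head $u$ rather than the tail $v$, it equals $\sum_{u \in S} \sum_{(v,u) \in E, v \in S} \frac{\omega((v,u))}{\deg_{\dir{G}}^-(u)}$, which is exactly $\Psi(S)$.

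So it remains to show that $\sum_{v \in S}$ of the \emph{second} term of $\score(v,S)$ vanishes. But the second term summed over $v \in S$ is $\sum_{v \in S} \sum_{(u,v) \in E, u \notin S} \frac{\omega((u,v))}{\deg_{\dir{G}}^-(v)}$, which counts edges entering $S$ from outside — this is generally \emph{not} zero. Hence the identity as literally stated cannot hold unless there is a hidden hypothesis; I expect that the intended statement uses the Eulerian property, or that the score definition is meant to be read so that both in- and out-contributions across the cut appear and cancel. The natural fix: since $\dir{G}$ is Eulerian, $\deg_{\dir{G}}^-(v) = \deg_{\dir{G}}^+(v)$, and one can split $\Psi(S)$ using $\deg_{\dir{G}}^-(v) = \sum_{(u,v)\in E, u \in S}\omega((u,v)) + \sum_{(u,v)\in E, u\notin S}\omega((u,v))$ — but this introduces the cross-cut in-edges, not removes them. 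The cleaner route is to observe that for an Eulerian graph the total weight of edges crossing from $S$ to $V\setminus S$ equals the total weight crossing from $V\setminus S$ to $S$; combined with the normalization by \emph{in}-degrees this should make the boundary terms telescope.

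Concretely, I would write $\sum_{v \in S}\score(v,S) = \sum_{(v,u) \in E, v,u \in S} \frac{\omega((v,u))}{\deg_{\dir{G}}^-(u)} - \sum_{(u,v)\in E, v \in S, u \notin S}\frac{\omega((u,v))}{\deg_{\dir{G}}^-(v)}$, identify the first sum as $\Psi(S)$ by the reindexing above, and then argue the second sum is $0$ — which forces it to be the case that the score's boundary term is over out-edges leaving $S$ weighted by the \emph{head's} in-degree (so that summing over $v$ again groups by a head inside $S$ and merges with $\Psi(S)$), OR that we should simply take the statement with the natural reading where the negative term contributes nothing because it is an empty sum in the regime considered. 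Given the surrounding lemmas (\Cref{lem:pot_to_set}, \Cref{lem:red}), the useful content is that $\score$ is a local reweighting of $\Psi$ whose total is conserved.

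\textbf{Main obstacle:} reconciling the boundary term. The proof is two lines of reindexing \emph{if} the boundary contributions genuinely cancel; the hard part is verifying that they do — i.e., pinning down precisely why $\sum_{v \in S}\sum_{(u,v)\in E, u\notin S}\frac{\omega((u,v))}{\deg_{\dir{G}}^-(v)}$ is cancelled within $\score$. I expect this hinges on the score being defined with both an "out-to-$S$" positive part and an "in-from-outside" negative part that, when summed over all $v \in S$, respectively over-count and under-count the same edges of $\dir{G}[S]$ relative to $\Psi(S)$ by exactly compensating amounts, using that each internal edge $(v,u)$ of $\dir{G}[S]$ appears once as an out-edge of $v$ and, crucially, the cross-cut edges appear with opposite sign in the reindexing. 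I would carefully track each edge class (internal to $S$, leaving $S$, entering $S$, external) through both sums and confirm the cancellation edge-class by edge-class.
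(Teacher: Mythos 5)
Your direct computation is sound and you should trust it rather than look for a hidden cancellation. After reindexing by the head of each edge, the first sum in $\sum_{v\in S}\operatorname{score}(v,S)$ is exactly $\Psi(S)$, and the subtracted boundary term is $B(S):=\sum_{v\in S}\sum_{(u,v)\in E,\,u\notin S}\omega((u,v))/\deg^-_{\dir{G}}(v)$. The step you left open is to \emph{evaluate} $B(S)$ instead of hoping it vanishes: since $\sum_{(u,v)\in E}\omega((u,v))/\deg^-_{\dir{G}}(v)=1$ for every vertex $v$, splitting each such sum into the $u\in S$ and $u\notin S$ parts gives $B(S)=|S|-\Psi(S)$, hence
\[
\sum_{v\in S}\operatorname{score}(v,S)\;=\;\Psi(S)-B(S)\;=\;2\Psi(S)-|S|.
\]
This coincides with $\Psi(S)$ only when $\Psi(S)=|S|$ (e.g.\ $S=V$, where $\Psi(V)=n$), so the lemma as stated is false for general $S$; no appeal to Eulerianness or to a cut-balance argument can rescue it. A concrete counterexample is the unit-weight directed triangle $1\to2\to3\to1$ with $S=\{1,2\}$: there $\Psi(S)=1$ but $\sum_{v\in S}\operatorname{score}(v,S)=0=2\Psi(S)-|S|$. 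The paper's own proof is an induction on $|S|$ and contains a sign error that conceals exactly the discrepancy $B(S)$: between the first and second displayed lines of the induction step, the term $-\sum_{(l,u)\in E,\,u\in S}\omega(l,u)/\deg^-_{\dir{G}}(u)$ silently becomes $+\sum_{(l,u)\in E,\,u\in S}\omega(l,u)/\deg^-_{\dir{G}}(u)$.

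So the verdict is: you correctly diagnosed that the identity cannot hold as written, but you stopped one step short of the clean resolution, which is the closed form $B(S)=|S|-\Psi(S)$ and the corrected identity $2\Psi(S)-|S|$. For what it is worth downstream, the corrected identity still makes the greedy argument viable in spirit --- inside the while loop of \textsc{FindDD} the average score is $2\Psi(S)/|S|-1>0$, so the maximizing vertex $l$ has positive score and $\Psi(S\setminus\{l\})=\Psi(S)-\operatorname{score}(l,S)-1<\Psi(S)-1$ --- but the specific constants claimed in \Cref{lem:sizered} and used in \Cref{lem:size_s} no longer follow and would need to be re-tuned (e.g.\ by raising the while-loop threshold on $\Psi(S_i)/|S_i|$).
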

\begin{proof}
We prove the lemma by induction. Initially, we have 
\begin{align*}
    \Psi(V) &= \sum_{v \in V}  \sum_{(u, v) \in E} \frac{\omega((u, v))}{\deg_{\dir{G}}^-(v)} = n = \sum_{v \in V} \score(v, V).
\end{align*}
Now lets assume our lemma holds for $S$, and we show it holds for $S \setminus \{l\}$.
\begin{align*}
    \Psi(S \setminus \{l\}) &= \Psi(S) - \sum_{(u, l) \in E, u \in S} \frac{\omega(u, l)}{\deg^-(l)} - \sum_{(l, u) \in E, u \in S} \frac{\omega(l, u)}{\deg_{\dir{G}}^-(u)} \\
    &= \sum_{v \in S} \score(v, S) - \sum_{(u, l) \in E, u \in S} \frac{\omega(u, l)}{\deg^-(l)} + \sum_{(l, u) \in E, u \in S} \frac{\omega(l, u)}{\deg_{\dir{G}}^-(u)} \\
    &= \sum_{v \in S \setminus \{l\}} \score(v, S \setminus \{ l \}).
\end{align*}
Therefore the claim holds by induction on the size of $S$. 
\end{proof}

Next we introduce our greedy algorithm based on the scores. 

\begin{algorithm}
\caption{\textsc{FindDD}($\dir{G}$)}
\label{alg:finddd}
$S_0 = V$ \\
$i = 0$ \\
\While{$\Psi(S_i) > 0.5 |S_i|$}{
 $v_i = \arg \max_{v \in S_i} \score(v, S_i)$ \\
 $S_{i + 1} = S_i \setminus \{ v_i \}$ \\
 $i = i + 1$ \\
}
$S = S_{i}$; $S' = S$ \\
\For{$v \in S'$}{
    \If{$\sum_{(v, u) \in E, u \in S_{i}} \frac{\omega((u, v))}{\deg_{\dir{G}}^-(v)} > \frac{3}{4}$}{
        $S' = S' \setminus \{ v \}$
    }
}
\Return{$S'$}
\end{algorithm}

\begin{lemma}
If Algorithm \ref{alg:finddd} terminates it returns a set $S'$ so that for all $v \in S$
\begin{align*}
    \sum_{(v, u) \in E, u \in S_{i}} \frac{\omega((u, v))}{\deg_{\dir{G}}^-(v)} \leq \frac{3}{4}
\end{align*}
and $|S'| \geq \frac{1}{4} |S_i|$. 
\label{lem:size_sub}
\end{lemma}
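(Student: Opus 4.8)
The plan is to unpack what the algorithm does in its two phases. In the first \texttt{while}-loop, the algorithm repeatedly removes the highest-score vertex until $\Psi(S_i) \le \tfrac12 |S_i|$. When this loop terminates with final set $S_i$, we are exactly in the hypothesis of \Cref{lem:pot_to_set}: $\Psi(S_i)/|S_i| \le \tfrac12$. That lemma guarantees the existence of a subset $S' \subseteq S_i$ with $|S'| \ge \tfrac14 |S_i|$ such that every $s \in S'$ satisfies $\sum_{(v,s)\in E, v \notin S_i} \omega((v,s)) \ge \tfrac14 \deg^-_{\dir G}(s)$. The second phase of the algorithm (the \texttt{for}-loop) simply realizes this subset \emph{explicitly}: it starts with $S' = S_i$ and removes exactly those $v$ for which $\sum_{(v,u)\in E, u \in S_i} \tfrac{\omega((u,v))}{\deg^-_{\dir G}(v)} > \tfrac34$. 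So the output $S'$ is precisely $\{v \in S_i : \sum_{(v,u)\in E, u\in S_i} \tfrac{\omega((u,v))}{\deg^-_{\dir G}(v)} \le \tfrac34\}$.

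First I would verify the stated bound on $S'$ holds by construction: by definition of the \texttt{for}-loop, every surviving $v \in S'$ satisfies $\sum_{(v,u)\in E, u\in S_i} \tfrac{\omega((u,v))}{\deg^-_{\dir G}(v)} \le \tfrac34$, which is the first claim verbatim. Next I would establish $|S'| \ge \tfrac14 |S_i|$ by a counting/averaging argument mirroring the proof of \Cref{lem:pot_to_set}: since $\Psi(S_i) = \sum_{v \in S_i}\big(\sum_{(v,u)\in E, u\in S_i} \tfrac{\omega((u,v))}{\deg^-_{\dir G}(v)}\big) \le \tfrac12 |S_i|$, the average of the quantities $q_v := \sum_{(v,u)\in E, u\in S_i} \tfrac{\omega((u,v))}{\deg^-_{\dir G}(v)}$ over $v \in S_i$ is at most $\tfrac12$. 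A Markov-type inequality then shows that at least $\tfrac14 |S_i|$ of the $v$ have $q_v \le \tfrac34$ (if more than $\tfrac34|S_i|$ vertices had $q_v > \tfrac34$, the average would exceed $\tfrac34 \cdot \tfrac34 > \tfrac12$ — actually one should be slightly careful: if a $\tfrac34$-fraction have value $>\tfrac34$ the average is $> \tfrac34 \cdot \tfrac34 = \tfrac{9}{16} > \tfrac12$, contradiction, so strictly fewer than $\tfrac34$-fraction exceed $\tfrac34$, hence at least a $\tfrac14$-fraction are $\le \tfrac34$). Those are exactly the vertices kept in $S'$, giving $|S'| \ge \tfrac14 |S_i|$.

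I do not anticipate a serious obstacle here; the lemma is essentially bookkeeping that ties the algorithm to \Cref{lem:pot_to_set}. The one point requiring a little care is the averaging inequality constants — making sure the $\tfrac12$ threshold in the loop and the $\tfrac34$ cutoff in the \texttt{for}-loop are compatible to yield the $\tfrac14$-fraction bound, exactly as in the proof of \Cref{lem:pot_to_set}. The statement is conditional ("if Algorithm \ref{alg:finddd} terminates"), so I would not need to argue termination or bound $|S_i|$ relative to $n$ in this particular lemma; that (the $|S_i| = \Omega(n)$ guarantee needed for the eventual application) would be handled by a separate termination/progress analysis of the \texttt{while}-loop using the score-based greedy criterion.
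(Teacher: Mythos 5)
Your proposal is correct and matches the paper's argument: the paper's proof simply cites the algorithm's description together with \Cref{lem:pot_to_set}, whose proof is exactly the Markov-type averaging you spell out (average of the $q_v$ at most $\tfrac12$ upon exiting the \texttt{while}-loop, hence at least a $\tfrac14$-fraction satisfy $q_v \le \tfrac34$, and the \texttt{for}-loop keeps precisely those vertices). You have merely unfolded the cited lemma's argument in place, which is fine.
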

\begin{proof}
Immediately follows from the description of Algorithm \ref{alg:finddd} and \Cref{lem:pot_to_set}. 
\end{proof}

\begin{lemma}
For all $i$: $\Psi(S_i) \leq \Psi(S_{i - 1}) - \frac{3}{2}$
\label{lem:sizered}
\end{lemma}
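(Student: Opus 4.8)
The plan is to derive one exact identity for how the potential drops when a single vertex is removed, and then combine it with the greedy choice and the \textbf{while}-loop condition. Fix $i\ge 1$ for which $S_i$ is defined; then the loop in \Cref{alg:finddd} executed its $(i-1)$-th iteration, so $v_{i-1}=\argmax_{v\in S_{i-1}}\score(v,S_{i-1})$, $S_i=S_{i-1}\setminus\{v_{i-1}\}$, and, crucially, $\Psi(S_{i-1})>\tfrac12|S_{i-1}|$.

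First I would prove the elimination identity: for every $S\subseteq V$ and every $l\in S$,
\[
  \Psi(S)-\Psi(S\setminus\{l\}) \;=\; 1+\score(l,S).
\]
Expanding the definition of $\Psi$, the summands of $\Psi(S)$ that disappear when $l$ is deleted are exactly those indexed by an edge having $l$ as its head or its tail, so (assuming $\dir G$ has no self-loops, which we may assume since self-loops change neither $\LL_{\dir G}$ nor the RCDD condition)
\[
  \Psi(S)-\Psi(S\setminus\{l\})
  = \sum_{(u,l)\in E,\,u\in S}\frac{\omega(u,l)}{\deg^-_{\dir G}(l)}
  + \sum_{(l,u)\in E,\,u\in S}\frac{\omega(l,u)}{\deg^-_{\dir G}(u)}.
\]
Since $\dir G$ is strongly connected and Eulerian, $\deg^-_{\dir G}(l)>0$ and $\sum_{(u,l)\in E}\omega(u,l)=\deg^-_{\dir G}(l)$, so the first sum equals $1-\sum_{(u,l)\in E,\,u\notin S}\tfrac{\omega(u,l)}{\deg^-_{\dir G}(l)}$. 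Substituting this and comparing with the definition of $\score(l,S)$ yields the identity.

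Then I would conclude by averaging. By the identity $\sum_{v\in S}\score(v,S)=\Psi(S)$ established above, the maximum score over $S_{i-1}$ is at least the average, so $\score(v_{i-1},S_{i-1})\ge \Psi(S_{i-1})/|S_{i-1}|>\tfrac12$ by the loop condition. Plugging $l=v_{i-1}$ into the elimination identity gives
\[
  \Psi(S_i)=\Psi(S_{i-1})-1-\score(v_{i-1},S_{i-1})<\Psi(S_{i-1})-\tfrac32,
\]
which is (even slightly stronger than) the claimed bound. The only genuine obstacle is the bookkeeping in the elimination identity — correctly isolating the summands of $\Psi(S)$ that vanish and matching the split of the in-edge sum (edges from $S$ versus from $V\setminus S$) against the two terms of $\score$; the side conditions (no self-loops, positive in-degrees) are harmless and need only a one-line remark.
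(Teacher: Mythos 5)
Your proposal is correct and follows essentially the same route as the paper: the exact elimination identity $\Psi(S)-\Psi(S\setminus\{l\})=1+\score(l,S)$, combined with the max-at-least-average argument and the while-loop condition $\Psi(S_{i-1})>\tfrac12|S_{i-1}|$. You merely spell out the bookkeeping (splitting the in-edge sum into $u\in S$ versus $u\notin S$ and the self-loop caveat) that the paper's terser proof leaves implicit.
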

\begin{proof}
For any $S$ we have 
\begin{align*}
    \Psi(S \setminus \{l\}) &= \Psi(S) - \sum_{(u, l) \in E, u \in S} \frac{\omega(u, l)}{\deg_{\dir{G}}^-(l)} - \sum_{(l, u) \in E, u \in S} \frac{\omega(l, u)}{\deg_{\dir{G}}^-(u)} \\
    &= \Psi(S) - \score(l, S) - 1
\end{align*}
and thus the lemma follows from the maximum being at least the average. 
\end{proof}

\begin{lemma}
For the set $S$ as in Algorithm \ref{alg:finddd} we have $|S| \geq \frac{1}{2}|V|$.
\label{lem:size_s}
\end{lemma}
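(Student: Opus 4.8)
\textbf{Proof proposal for \Cref{lem:size_s}.}

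The plan is to bound the number of iterations $T$ of the while-loop in \Cref{alg:finddd} from above, since then $|S| = |S_T| = n - T$. First I would observe that the potential $\Psi$ is a nonnegative quantity: every summand $\omega((u,v))/\deg^-_{\dir{G}}(v)$ is nonnegative, so $\Psi(S_i) \geq 0$ for all $i$. Next, \Cref{lem:sizered} tells us that each loop iteration decreases the potential by at least $\frac{3}{2}$, so after $i$ iterations $\Psi(S_i) \leq \Psi(S_0) - \frac{3}{2} i = n - \frac{3}{2}i$ (using the computation $\Psi(V) = n$ already recorded in the proof of the lemma relating $\score$ and $\Psi$).

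The key step is then to combine this with the loop guard. The loop only continues from $S_i$ while $\Psi(S_i) > \frac{1}{2}|S_i|$, in particular $\Psi(S_i) > 0$, while the iteration guard also gives $\Psi(S_i) \le n - \frac32 i$. More carefully: suppose the loop runs for $T$ iterations, producing $S_0, \dots, S_T$. For each $i < T$ the guard holds at $S_i$, so $\Psi(S_i) > \frac{1}{2}|S_i| = \frac{1}{2}(n-i) \geq 0$. Combining with $\Psi(S_i) \leq n - \frac{3}{2}i$ we get $\frac{1}{2}(n-i) < n - \frac{3}{2}i$, i.e. $i < n$ — not yet enough. So I would instead use the sharper route: from $\Psi(S_{T-1}) > \frac12 |S_{T-1}| = \frac12(n-(T-1))$ and $\Psi(S_{T-1}) \le n - \frac32(T-1)$ we obtain $\frac12(n - T + 1) < n - \frac32(T-1)$, which rearranges to $T - 1 < \frac{n + (n-T+1)/2 \cdot 0 \dots}{}$ — this still only gives a bound linear in $n$ with the wrong constant. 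The cleanest argument is simply: $\Psi(S_i) \ge 0$ always, and $\Psi(S_i) \le n - \frac32 i$, so the loop can run at most $i \le \frac{2n}{3}$ times before $\Psi$ would be forced negative; hence $T \le \frac{2n}{3}$ and $|S| = n - T \ge \frac{n}{3}$. To reach the claimed $|S| \ge \frac12 n$ I would sharpen using the loop guard: the loop is only entered at step $i$ if $\Psi(S_i) > \frac12(n-i)$, so at termination we need the guard to \emph{fail}, but more usefully, for all $i$ strictly before termination, $n - \frac32 i \ge \Psi(S_i) > \frac12(n-i)$, giving $i < \frac{n}{2}$; therefore $T \le \frac{n}{2}$ and $|S| = n - T \ge \frac{n}{2}$, as desired.

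So the steps in order are: (1) record $\Psi(S_0) = \Psi(V) = n$ and $\Psi(S_i) \geq 0$; (2) apply \Cref{lem:sizered} inductively to get $\Psi(S_i) \le n - \frac32 i$; (3) use the loop guard $\Psi(S_i) > \frac12 |S_i| = \frac12(n - i)$ valid for every $i$ at which the loop body executes; (4) combine (2) and (3) to force $i < n/2$ on every executed iteration, hence the number of iterations $T$ satisfies $T \le n/2$; (5) conclude $|S| = |S_T| = n - T \ge n/2$. The main obstacle — really just a bookkeeping subtlety — is being careful about the off-by-one in the loop: one must apply the guard at the last index $i$ for which the body runs, not at the terminal index, and match this against the telescoped potential bound with the same index. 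Once the indices are aligned the constants work out to exactly $\frac12$.
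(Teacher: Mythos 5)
Your final argument (steps (1)--(5)) is correct, and it is the same potential-telescoping strategy as the paper: start from $\Psi(V)=n$, apply \Cref{lem:sizered} to get $\Psi(S_i)\le n-\tfrac32 i$, and derive a bound on the number of eliminated vertices. The one substantive difference is which lower bound on $\Psi$ you play this against, and here your version is actually the right one. The paper argues by contradiction that $\Psi(S_{n/2})\le \Psi(V)-\tfrac32 n<0$, but after $n/2$ iterations the telescoped decrease is only $\tfrac32\cdot\tfrac n2=\tfrac{3n}{4}$, so $\Psi(S_{n/2})\le \tfrac n4$, which is not negative; as you yourself observe, nonnegativity of $\Psi$ alone only yields $|S|\ge n/3$. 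To recover the constant $\tfrac12$ one must use the full loop guard $\Psi(S_i)>\tfrac12|S_i|=\tfrac12(n-i)$ at every executed iteration, which combined with $\Psi(S_i)\le n-\tfrac32 i$ forces $i<n/2$ --- exactly your step (4). (Equivalently, in the paper's contradiction framing: at $i=n/2$ the guard demands $\Psi(S_{n/2})>\tfrac n4$ while the telescoped bound gives $\Psi(S_{n/2})\le\tfrac n4$, so the loop has already stopped.) The only blemish is a trivial integrality issue for odd $n$ ($T-1<n/2$ gives $|S|\ge(n-1)/2$ rather than $n/2$), which is immaterial to how the lemma is used; the exploratory false starts in your middle paragraph should of course be cut from a final write-up.
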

\begin{proof}
Assume the contrary. Then more than $n/2$ vertices got eliminated, but in iteration $n/2$
\begin{align*}
    \Psi(S_{n/2}) \leq \Psi(V) - \frac{3}{2} n < 0.
\end{align*}
So the while loop must have stopped then, which is a contradiction.
\end{proof}

\begin{lemma}
We can deterministically find a $\frac{1}{4}$-RCDD subset of an $n$-vertex $m$-edge Eulerian graph $\dir{G}$ with at least $\frac{1}{64}n$ vertices in $\tilde{O}(m)$ time. 
\end{lemma}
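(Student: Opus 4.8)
The plan is to recognize Algorithm~\ref{alg:finddd} (\textsc{FindDD}) as a deterministic, near-linear-time instantiation of the one-sided subroutine demanded by \Cref{lem:red}, and then appeal to \Cref{lem:red} to upgrade it to a genuine two-sided ($\frac14$-RCDD) subset. We may assume $\dir{G}$ has no isolated vertices (they satisfy both RCDD conditions vacuously and can be re-inserted at the end), so that $\deg^-_{\dir{G}}(v)>0$ for every $v$ by Eulerianness and $\Psi(\cdot)$ is well defined.

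First I would check that \textsc{FindDD} terminates in $O(n)$ iterations: since $\Psi(S_i)\ge 0$ for all $i$ and $\Psi(V)=n$, \Cref{lem:sizered} forces the while loop to halt within $\frac23 n$ steps, and \Cref{lem:size_s} guarantees the set $S_i$ at that point has $|S_i|\ge\frac12 n$. Then \Cref{lem:size_sub} gives that the returned $S'$ has $|S'|\ge\frac14|S_i|\ge\frac18 n$ and that every $v\in S'$ obeys $\sum_{(u,v)\in E,\,u\in S_i}\frac{\omega((u,v))}{\deg^-_{\dir{G}}(v)}\le\frac34$; using $\deg^-_{\dir{G}}(v)=\sum_{(u,v)\in E}\omega((u,v))$ and $S'\subseteq S_i$, this rearranges to $\sum_{(u,v)\in E,\,u\notin S'}\omega((u,v))\ge\sum_{(u,v)\in E,\,u\notin S_i}\omega((u,v))\ge\frac14\deg^-_{\dir{G}}(v)$. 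Hence \textsc{FindDD} is exactly an algorithm of the type assumed in \Cref{lem:red}, with $\rho=\frac14$ and output size $\frac18 n=\Omega(n)$.

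Next I would plug this into \Cref{lem:red}: run \textsc{FindDD} on $\dir{G}$ to obtain $T_1$ with $|T_1|\ge\frac18 n$, reverse the edges of $\dir{G}[T_1]$, and run \textsc{FindDD} again to obtain $T_2\subseteq T_1$ with $|T_2|\ge\frac18|T_1|\ge\frac1{64}n$. Following the arithmetic in the proof of \Cref{lem:red}: the in-edge condition inherited from the first call survives the shrinkage $T_2\subseteq T_1$ (enlarging the complement can only increase the boundary weight), while the second call guarantees that for each $s\in T_2$ a $\frac14$-fraction of the $\dir{G}[T_1]$-out-weight of $s$ leaves $T_2$; writing $\deg^+_{\dir{G}}(s)=A+B$ with $A$ the out-weight of $s$ leaving $T_1$ and $B$ the out-weight of $s$ staying in $T_1$, we get $\sum_{(s,u)\in E,\,u\notin T_2}\omega((s,u))\ge A+\frac14 B\ge\frac14\deg^+_{\dir{G}}(s)$. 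Thus $T_2$ is $\frac14$-RCDD with $|T_2|\ge\frac1{64}n$, as claimed.

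For the running time, the main point is that each while-loop iteration can be implemented in $\tilde{O}(\deg_{\dir{G}}(v_i))$ time: keep the current potential $\Psi(S_i)$ and all scores $\score(v,S_i)$ stored explicitly together with a max-heap keyed by score; deleting $v_i$ updates the potential via $\Psi(S_{i+1})=\Psi(S_i)-\score(v_i,S_i)-1$ (the identity used to prove \Cref{lem:sizered}) and changes the score of only the $O(\deg_{\dir{G}}(v_i))$ in- and out-neighbours of $v_i$, each by an explicit $O(1)$ additive amount, so the heap is refreshed with $O(\deg_{\dir{G}}(v_i))$ updates of cost $O(\log n)$. Summing over all deletions gives $O(m\log n)$ for the main loop, $O(m)$ for the final filtering pass, and a further $\tilde{O}(m)$ for the one reversed rerun inside \Cref{lem:red}. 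I expect the bulk of the care to go into this incremental-maintenance argument — pinning down exactly which scores change on a deletion and that the heap operations stay within budget — and into the bookkeeping that composes the two \textsc{FindDD} calls through \Cref{lem:red} so that the in- and out-edge conditions hold simultaneously on $T_2$; the remainder is a direct citation of \Cref{lem:sizered}, \Cref{lem:size_s}, \Cref{lem:size_sub}, \Cref{lem:pot_to_set} and \Cref{lem:red}.
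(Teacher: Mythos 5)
Your proposal is correct and follows essentially the same route as the paper, which simply cites \Cref{lem:size_s}, \Cref{lem:size_sub} and the proof of \Cref{lem:red}; you have merely filled in the termination bound, the $A+\tfrac14 B$ composition of the two \textsc{FindDD} calls, and the heap-based running-time argument that the paper leaves implicit.
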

\begin{proof}
Follows from \Cref{lem:size_s}, \Cref{lem:size_sub} and the proof of \Cref{lem:red}. 
\end{proof}

\section{Preconditioning a Cycle with its Transpose Fails}
\label{sec:cycle}

In this section we show that preconditioning the Laplacian of a length $5$ cycle with its transpose (the directed Laplacian of the graph with edges reversed) does not lead to converging behaviour on some inputs. To this end, we consider the Laplacian
\begin{align*}
    \LL = \begin{pmatrix}
        1 & & & & -1 \\
        -1 & 1 & & & \\ 
        & -1 & 1 & & \\
        & & -1 & 1 & \\
        & & & -1 & 1
    \end{pmatrix}
\end{align*}
of said cycle. Computing the eigenvalues of $(\LL^T)^+ \LL$ yields one eigenvalue $\lambda' \approx -0.3 - 0.95i$ (See \Cref{fig:eigvals}). We conclude that $\II_{\im(\LL)} - \eta (\LL^T)^+ \LL$ has an eigenvalue that is approximately $1 + \eta \cdot 0.3 + \eta \cdot  0.95i$, which is strictly larger than $1$ in magnitude for any step size $\eta > 0$ \footnote{A similar argument can show that no complex step size $\eta$ leads to a converging behaviour by realizing that $(\LL^T)^+ \LL$ has an eigenvalue in each of the quadrants of the convex plane (See \Cref{fig:eigvals}).}. From this we conclude that $\rho(\II_{\im(\LL)} - \eta (\LL^T)^+ \LL) > 1$. For $\EE = \II_{\im(\LL)} - \eta (\LL^T)^+ \LL$ preconditioned Richardson approximates 
\begin{align*}
    \LL^{+} \bb = (\II_{im(\LL)} + \EE + \EE^2 + \EE^3 \dots)(\LL^T)^+ \bb
\end{align*}
by truncating the sum above. From our previous derivation we know that there is a complex vector $\vv = (\LL^T)^+ \hat{\bb}$ (note that $\LL$ and $\LL^{T}$ and their inverses all have the same image) for which the sum does not converge. To show that this behaviour is also exhibited by a real vector, we decompose $\vv = \vv_1 + i\vv_2$ into its real and imaginary part and conclude
\begin{align*}
    \EE \vv = \EE \vv_1 + i\EE \vv_2.
\end{align*}
Notice that $\EE$ is a real valued matrix, and thus $\EE \vv_1$ is the real part of $\EE \vv$. However, from this we can conclude that if $\EE^k \vv$ is large for some $k$, then either $\EE^k \vv_1$ or $\EE^k \vv_2$ must be large by the triangle inequality. 

\begin{figure}[h]
    \centering
    \includegraphics[width=7.5cm]{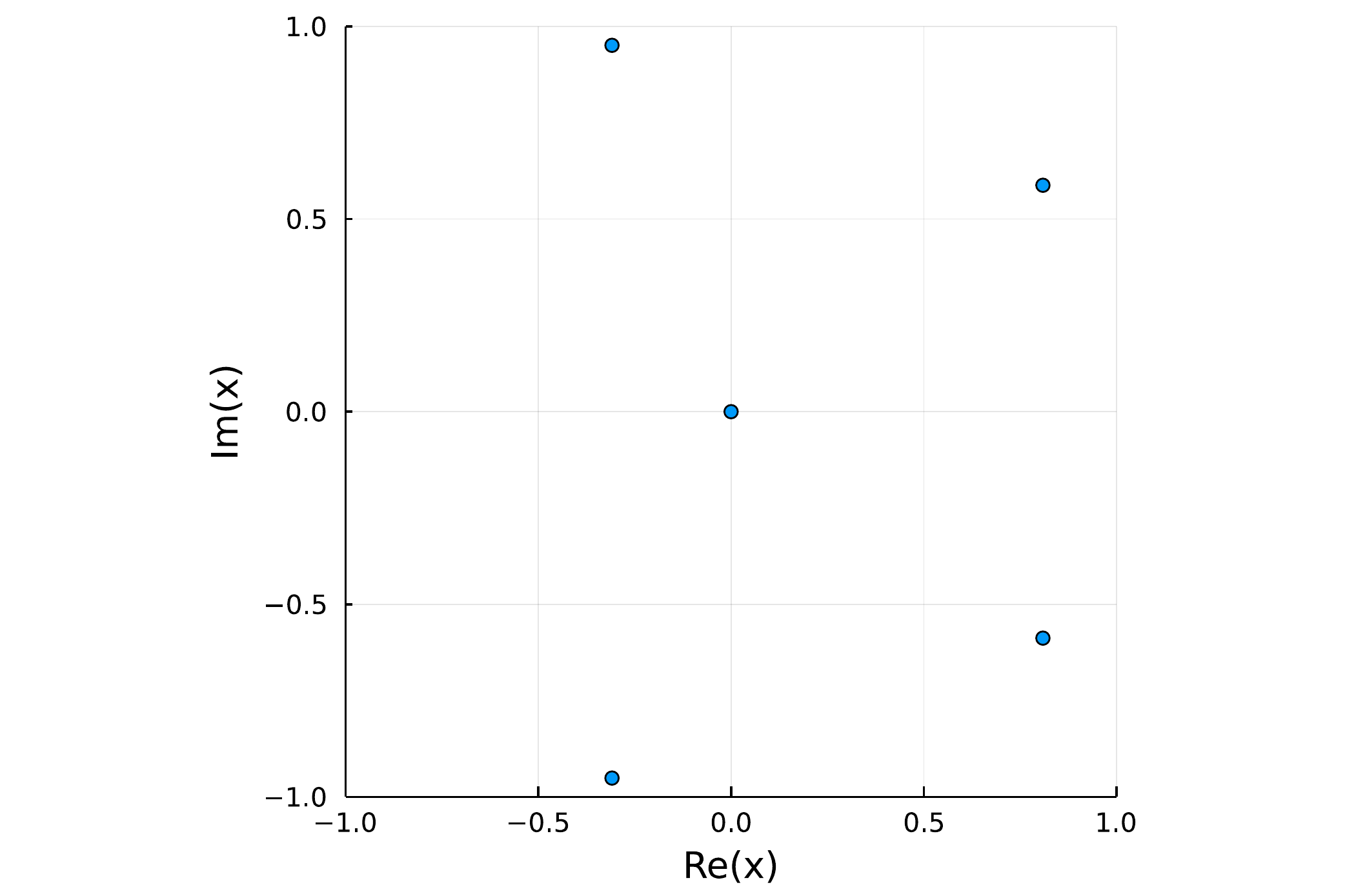}
    \caption{Plot of the eigenvalues of $(\LL^T)^+ \LL$, where $\LL$ is the Laplacian of a directed cycle of length $5$ with unit edge weight.}
    \label{fig:eigvals}
\end{figure}

Since $\LL^T$ is the Laplacian of the same cycle as $\LL$ with reversed edge directions this shows that there are eulerian Laplacians that cannot be used as preconditioners with any step size. This is a significant obstruction for developing a notion of high error sparsification for directed graphs.

\end{document}